\newtheorem{observation}{Observation}
\newcolumntype{x}[1]{>{\centering\arraybackslash\hspace{0pt}}p{#1}}
\algnewcommand\algorithmicforeach{\textbf{for each}}
\newcommand{\kibitz}[2]{\ifnum\Comments=1{\color{#1}{#2}}\fi}
\pgfplotsset{compat=newest}
\newcommand{\lcp}{\textrm{LCP}(q,A)}
\newcommand{\I}{\mathcal{I}}
\newcommand{\J}{\mathcal{J}}
\newcommand{\N}{\mathcal{N}} 
\newcommand{\HH}{\mathcal{H}}
\newcommand{\barI}{\bar{\mathcal{I}}}
\newcommand{\on}{\texttt{on}\xspace}
\newcommand{\off}{\texttt{off}\xspace}
\newcommand{\tp}{\textrm{Th}}
\newcommand{\rLL}{(\mathsf{L}\mathsf{L})}
\newcommand{\rHL}{(\mathsf{H}\mathsf{L})}
\newcommand{\rLH}{(\mathsf{L}\mathsf{H})}
\newcommand{\rHH}{(\mathsf{H}\mathsf{H})}
\newcommand{\e}{T}
\newcommand{\pe}{\pi_{\mathsf{ex}}(0)}
\newcommand{\R}{\mathsf{R}}
\newcommand{\LL}{\mathcal{L}}
\newcommand{\floor}[1]{\lfloor{#1}\rfloor}
\let\oldmarginpar\marginpar
\renewcommand{\marginpar}[1]{\oldmarginpar{\raggedright #1}}
\def\Hy@Warning#1{}
\newtheorem{theorem}{Theorem}
\newtheorem{lemma}{Lemma}
\newtheorem{proposition}{Proposition}
\newtheorem{corollary}{Corollary}
\newtheorem{remark}{Remark}
\newtheorem{example}{Example}
\newtheorem{definition}{Definition}
\g@addto@macro\normalsize{%
	\setlength\abovedisplayskip{6pt}
	\setlength\belowdisplayskip{6pt}
	\setlength\abovedisplayshortskip{4pt}
	\setlength\belowdisplayshortskip{5pt} 
}
\newcommand*{\addFileDependency}[1]{%
\typeout{(#1)}%
\@addtofilelist{#1}
\IfFileExists{#1}{}{\typeout{No file #1.}}
}\makeatother
\begin{document}

\RUNAUTHOR{ Sun, Liu and Yan}

\RUNTITLE{On-Off Systems with Strategic Customers}

\TITLE{On-Off Systems with Strategic Customers}

\ARTICLEAUTHORS{

\AUTHOR{Yanwei Sun}

\AFF{Imperial College Business School, Imperial College London, \EMAIL{yanwei@imperial.ac.uk}}

\AUTHOR{Zhe Liu}

\AFF{Imperial College Business School, Imperial College London, \EMAIL{zhe.liu@imperial.ac.uk}}

\AUTHOR{Chiwei Yan}

\AFF{Department of Industrial Engineering and Operations Research,
University of California, Berkeley, \EMAIL{chiwei@berkeley.edu}}
} %

\ABSTRACT{%

Motivated by applications such as urban traffic control and make-to-order systems, we study a fluid model of a single-server, on-off system that can accommodate multiple queues. The server visits each queue in order: when a queue is served, it is “on”, and when the server is serving another queue or transitioning between queues, it is “off”. Customers arrive over time, observe the state of the system, and decide whether to join. We consider two regimes for the formation of the on and off durations. In the \emph{exogenous} setting, each queue’s on and off durations are predetermined. We explicitly characterize the equilibrium outcome in closed form and give a compact linear program to compute the optimal on-off durations that maximizes total reward collected from serving customers. In the \emph{endogenous} setting, the durations depend on customers’ joining decisions under an exhaustive service policy where the server never leaves a non-empty queue. We show that an optimal policy in this case extends service beyond the first clearance for \emph{at most} one queue. Using this property, we introduce a closed-form procedure that computes an optimal policy in no more than $2n$ steps for a system with $n$ queues.

}%

\KEYWORDS{on-off systems; fluid queueing games; 
dynamic congestion games;
optimization with 
equilibrium constraints.
}

\maketitle

\section{Introduction}
\label{sec:intro}

In many service systems, the server operates in an on-off manner, alternating between active service (\on) and periods of inactivity (\off). 
For example, a ride-hailing electric vehicle (EV) appears \on when actively transporting customers and \off while charging. %
This on-off model also applies when multiple
queues share a single server: 
one queue's \off status can happen when the server is \on with another queue or transitioning between queues.
For example, at a traffic intersection, one direction of traffic moves (\on) while the others wait (\off). Under traditional traffic lights, these \on and \off intervals are predetermined, whereas at roundabouts or under adaptive controls, one direction may be served exhaustively before switching. %
Likewise, in a make-to-order manufacturing system with a single machine, only one product line can be processed at a time (\on), forcing the others to wait (\off) until the machine becomes available.

Customers, referring to arriving agents in the above examples, are often strategic: they join the system only if their expected utility, which decreases with waiting time, outweighs outside options. In addition, one customer's decision to join affects the system state and influences the utilities of subsequent arrivals. %
At a traffic intersection, drivers may reroute when facing potentially heavy congestion. In a make-to-order manufacturing system, if production lead times appear excessive, customers may switch to another manufacturer.

To capture this strategic interaction, we study a fluid model with multiple queues,  where both arrivals and services are modeled as continuous flows.
We consider two distinct sources of the \on-\off durations of each queue: \emph{exogenous} \on-\off durations that do not depend on customers’ joining strategies, and \emph{endogenous} \on-\off durations influenced by customers' joining strategies under an \textit{exhaustive service policy} (where the server never leaves a non-empty queue). 
To be specific,
\begin{enumerate}
    \item in the exogenous scenario, both the \on and \off durations of each queue are predetermined (by the planner) and remain fixed, regardless of the customers' joining strategies;
    \item in the endogenous scenario, the planner determines \textit{post-clearance durations}---additional time that the server serves a queue after it initially becomes empty. The resulting \on and \off durations of each queue then depend on customers' joining strategies.
\end{enumerate}
We mainly focus on maximizing the long-run average throughput, that is, the number of customers served per unit time. However, all our results extend naturally to maximizing the long-run average reward, where each queue can yield a different reward per served customer.

In the exogenous setting, the only externality at play is negative: when earlier arrivals at the focal queue choose to join, they increase the waiting time for later arrivals. 
For the endogenous setting, both negative and positive externalities arise: while early joiners still impose additional delays on those who come later, their presence ensures the queue remains non-empty, which in turn keeps the server engaged with the queue. 

Interestingly, in both settings, we find that at equilibrium, customers do \emph{not} follow a queue-length-thresholds joining strategy. 
This is in sharp contrast to the literature (see, e.g., \citealt{economou_2008_ORL_markovian,guo_2011_strategic_behaviors_markovian_vacation,manou_2024_MSOM_strategic_transportation}). 
Instead, customers' equilibrium joining behavior follows a pattern of \emph{herding cycles}, where a period in which all arriving customers join is followed by a period in which none do, and this sequence repeats indefinitely. The durations of joining and not-joining periods together precisely make up one full cycle of the system.

We completely characterize the unique equilibrium joining strategy as well as the \textit{equilibrium outcome}, i.e., the queueing dynamics at equilibrium, given any exogenous \on-\off durations. We find that, when varying the \on-\off durations, there are potentially nine distinct equilibrium outcomes. Roughly speaking, when the waiting patience is relatively low, the equilibrium outcome is exhaustive---that is, the server never leaves a non-empty queue at the equilibrium outcome. In contrast, when the waiting patience is relatively high, the equilibrium outcome is non-exhaustive because customers are willing to wait through multiple cycles. 
Building on the equilibrium results, we give a compact linear programming formulation to find optimal exogenous \on-\off durations.

In the \textit{endogenous} regime, we simplify the problem of characterizing the equilibrium outcome into a linear complementarity problem (LCP). The constraint coefficients of this LCP exhibit a $Z$-matrix structure (i.e., non-positive off-diagonal elements) and additional box constraints. We present a simple algorithm with completely closed-form updates that finds the unique equilibrium outcome in at most $n$ steps for a system with $n$ queues. 
Moreover, we show that 
at one of the optimal exhaustive service policies, with the possible exception of \textit{at most one queue}, the server immediately departs from a queue as soon as it becomes empty.

Exploiting this observation, we provide an algorithm again with all closed-form updates that computes an optimal exhaustive service policy in at most $2n$ iterations for a system with $n$ queues. %
Finally, when all queues have the same service rate, we show that an optimal outcome from the exogenous regimes can also be implemented by the optimal exhaustive service policy that requires significantly fewer parameters.

\smallskip

\noindent\textbf{Organization of the Paper.} 
\Cref{sec:relatedwork} reviews the related literature. In \Cref{sec:model}, we present the model setup. We first analyze the exogenous regime in \Cref{sec:exogenous_regime}, with the equilibrium analysis in \Cref{sec:equilibrium_analysis_fixed_duration} and the planner's optimal policy in \Cref{sec:opt_fixed_duration}. We then turn to the endogenous regime in \Cref{sec:exhaustive}, following a similar structure with the equilibrium analysis in \Cref{subsec:equilibrium_analysis} and the optimal policy in \Cref{sec:opt_ex}. The connection between exogenous and endogenous regimes is discussed in \Cref{subsec:connection}.
\Cref{sec:closing} concludes the paper. Auxiliary results are provided in Appendix \ref{app_sec:auxiliary_result}, and all proofs are collected in Appendix \ref{app:pf}. For clarity of exposition, the main text focuses on throughput maximization, with the extension to general reward maximization presented in Appendix \ref{app_sec:reward_maximization}.

\section{Related Work}
\label{sec:relatedwork}

The study of customers' decentralized (strategic) behavior in service systems was pioneered by \cite{naor_1969_regulation}, who analyzed a single-server system with an observable queue. Since then, this topic has drawn increasing attention.  %
We refer readers to the survey books by \cite{hassin_2003_book_queuegame_toqueue} and \cite{hassin_2016_book_rational_queueing} for comprehensive reviews. In particular, our work is most related to vacation systems and dynamic congestion games.

\smallskip

\noindent\textit{Vacation Systems.}  
On-off systems are also known as \emph{vacation queues} or \emph{queues with removable servers} \citep{tian_2006_book_vacation_systems}. When there are multiple queues with setup/switchover times/costs, the system is typically referred to as a \textit{polling system}, where the \off duration of a focal queue is equal to the \on durations of other queues plus their switchover times.
For the review on polling systems \textit{without} considering strategic customers, see \cite{boon_2011_polling_survey} and \cite{borst_2018_polling_survey}.
Our work encompasses both single and multiple queues.
To the best of our knowledge, no existing research has studied multi-queue vacation systems with strategic customers. A few papers have considered strategic behaviors in single-queue vacation systems, as discussed below.

Regarding exogenous durations, \citet{economou_2008_ORL_markovian} analyze a Markovian system where both \on and \off durations are exponentially distributed. \citet{economou_2022_EJOR_markovian_reneging} extend this framework to incorporate customer reneging behaviors under a Markovian setting. \citet{logothetis_2023_fluid} study the same problem with fluid arrivals and services but still assume exponentially distributed \on-\off durations.
Regarding the endogenous setting,
\citet{burnetas_2007_QS_vacation_setuptimes} consider a system in which the server adopts an exhaustive service policy, never taking a vacation until the queue is empty. When a new customer arrives, the server returns to work after an exponentially distributed setup time. \citet{guo_2011_strategic_behaviors_markovian_vacation} extend the above setting to allow a general threshold, i.e., the server returns once the queue length reaches a certain threshold. 
They also discuss how to set the threshold to maximize social welfare.
\citet{manou_2014_OR_transportation_station} and \citet{manou_2024_MSOM_strategic_transportation} study customers' joining decisions in the (aforementioned) public transportation contexts. They consider exhaustive scenarios, either due to customers abandoning if not boarding the immediately arriving bus or because buses have sufficient capacity. 
Our work uses a fluid model that is \textit{not} memoryless, necessitating a different approach of analysis. 
We also study both exogenous and endogenous \on-\off regimes.
Finally, unlike prior studies that primarily focus on social welfare maximization, our objective is to maximize throughput (reward).

\smallskip

\noindent\textit{Dynamic Congestion Games.}  
Our work is also closely related to dynamic congestion games or \emph{Nash flows over time}. It is a time-dependent extension of the static congestion game to
the continuous-time evolution of travelers (or particles) from sources to sinks in a capacitated network. This field has drawn increasing attention in recent years \citep{koch2011nash, olver_2022_FOCS_long_term_nashflows, cominetti_2022_OR_long_term_dynamic_congestion_game,olver_2023_FOCS_convergence_nashflows}. Our work resembles previous studies in how delays are captured using a similar fluid model (see the Vickrey bottleneck model \citep{vickrey1969congestion} in dynamic congestion game literature). However, those studies typically assume that intersections can process traffic from all directions simultaneously. 
In contrast, our study is the first to relax this assumption by examining a single intersection that only admits one direction of flow at a time, naturally giving rise to an \on-\off\ pattern on a focal path. Such a relaxation is particularly important in practice, as traffic congestion often originates from bottlenecks at intersections.

\section{Model and Preliminaries}
\label{sec:model}

We consider a fluid system consisting of \( n \geq 1 \) queues, denoted by the set \( \N := \{1, \dots, n\} \). 
Each queue follows a first-come-first-served discipline. There is one server and it can only serve one queue at a time. We assume the server operates under a \emph{non-idling} policy, meaning it never remains idle at a queue when there are customers waiting to be served. %
For each queue $i \in \N$, the server's status alternates between active (``\on'') and suspension (``\off''). 
An \off period of queue $i$ refers to any time when the server is not actively serving queue $i$. This includes periods when the server is attending to other queues or any switchover times required for transitioning between queues. When $n\geq 2$, we assume that the service order is fixed and the server visits each queue exactly once in one cycle. Then, without loss of generality,  we assume that the switchover time depends only on the departing queue, not the destination.
Let $\tau_i \geq 0$ denote the switchover time from queue $i$ to any other queue, and define $\tau = (\tau_i)_{i \in \N}$. We assume that $1^\top \tau > 0$.\footnote{The case where \( 1^\top \tau = 0 \), meaning there are no switchover times, is not particularly interesting because the optimal policy would involve the server constantly switching between queues, spending only an infinitesimally short time at each queue.}

For each queue $i \in \N$, infinitesimal customers arrive at rate $\lambda_i>0$ and are served at rate $\mu_i>0$. To ease the exposition, we assume $\mu_i>\lambda_i$ for all $i \in \N$ in the main body and defer the results of the other case to Appendix~\ref{app_sec:lambda_i_greater_mu_i}.
Let $\rho_i:=\lambda_i/\mu_i$ be the utilization rate of queue $i$ and $\rho=(\rho_i)_{i\in \N}$. Without loss of generality, we normalize customer utility for not joining the queue to be zero. 
Customers are sensitive to delay. Specifically, if a customer in queue $i$ has a waiting time $w$, her utility $u_i(w)$ is strictly decreasing in $w$, and $\lim_{w\to+\infty}u_i(w) <0$ and $u_i(0)>0$. 
As a consequence, a customer of queue $i$ joins only if her waiting time is smaller than $\theta_i := u_i^{-1}(0)$. We thus call $\theta_i$ \emph{waiting patience} of customers in queue $i$.

Upon arrival, a customer observes the \emph{state} of the system, denoted as \( s(t) = (s_i(t))_{i=1}^n \),\footnote{%
All our results remain valid if customers can only observe the state of their own queues. 
This is because the system is deterministic and observing $s_i$ allows one to infer the states of other queues given a strategy profile.} 
where \( s_i(t) \) represents the state of queue \( i \) at time \( t \). The state is defined as \( s_i(t) := (q_i(t), \iota_i(t), e_i(t)) \), where \( q_i(t) \geq 0 \) is the queue length of queue \( i \), \( \iota_i(t) \in \{0,1\} \) indicates whether the server is \off (\(0\)) or \on (\(1\)) for queue \( i \), and \( e_i(t) \) represents the time elapsed in the current \on-\off state \( \iota_i(t) \) at queue \( i \). Let $\mathbb{S}$ be the set of all possible states. Customers of queue $i$ adopt a symmetric joining strategy $f_i: \mathbb{S} \to [0,1]$, which maps the observed state to a joining probability.

We consider two regimes, which differ in whether customers' strategies impact \on-\off durations of the queues.

\begin{enumerate}
    \item \textit{Exogenous Regime}: 
    Both the \on and \off durations of each queue are predetermined by the planner %
    irrespective of customers’ joining strategies.
    Let $L_i$ and $\bar{L}_i$ be the \on and \off duration of queue $i$, respectively. 
    \item \textit{Endogenous Regime}: 
    The planner implements an \emph{exhaustive service policy} characterized by the \emph{post-clearance durations} $T = (T_i)_{i=1}^n$, where $T_i$ represents the additional time the server continues serving queue $i$ after it initially becomes empty. If $T_i=0$, the server leaves immediately upon clearing queue $i$. In this regime, the \on-\off durations of a queue depend on the customers’ joining strategies, as they determine how long the server actually stays at each queue. We restrict to exhaustive service policies since it is commonly assumed and studied in the literature (see, e.g., \citealt{burnetas_2007_QS_vacation_setuptimes,guo_2011_strategic_behaviors_markovian_vacation,manou_2014_OR_transportation_station,manou_2024_MSOM_strategic_transportation}).
\end{enumerate}

We now proceed to define customers' equilibrium strategies under both regimes. Let $f:=(f_i)_{i=1}^n$ be a strategy profile of customers of all queues. Denote by $W_i^{f}(s)$ the waiting time of a customer of queue $i$ arriving at state $s$ under strategy profile $f$.

\begin{definition}[Equilibrium]
\label{def:eq}
A joining strategy profile $f=(f_i)_{i=1}^n$ forms an equilibrium if and only if for any queue $i\in \N$ and system state $s \in \mathbb{S}$, 
\begin{align}
\nonumber
    f_i(s)=
    \begin{cases}
        1, & \text{if } W_i^{f}(s) < \theta_i, \\
        0, & \text{if } W_i^{f}(s) > \theta_i,
    \end{cases}
\end{align}
and $f_i(s)$ can take any value within $[0,1]$ if $W_i^{f}(s)=\theta_i$.%
\end{definition}

Our first main objective is to understand the customers’ equilibrium joining strategy and the corresponding
\textit{equilibrium outcome}, namely the queueing dynamics under the customers’ equilibrium strategy. Our second main objective is to characterize the planner’s optimal decisions in each regime in terms of maximizing the long-run average throughput, i.e., the number of customers served per unit time. All our results can be extended, with minor modifications, to settings with reward-maximization objectives, where the reward per customer differs by queues. To keep the exposition concise, we focus on throughput maximization throughout the main body.
The results and discussions about reward maximization are provided in Appendix \ref{app_sec:reward_maximization}.

\section{Exogenous Regime}
\label{sec:exogenous_regime}

In this section, we focus on the exogenous regime. First, we analyze the customers' equilibrium joining strategies in \Cref{sec:equilibrium_analysis_fixed_duration}. Based on this analysis, we then derive the optimal \on-\off durations in \Cref{sec:opt_fixed_duration}.

\subsection{Equilibrium Analysis}
\label{sec:equilibrium_analysis_fixed_duration}

We study a focal queue~$i \in \N$ and analyze the customers' joining behavior under the exogenous \on-\off durations $(L_i, \bar{L}_i)$.
With \on-\off durations fixed, the state of queue $i$, $s_i(t)$, is sufficient for queue-$i$ customers to make their joining decisions.
It also turns out to be more convenient to track the system state using the \emph{residual time} $r_i(t)$, defined as the time left in the current status $\iota_i(t)$, instead of the elapsed time $e_i(t)$. 
In this section, we thus represent the state of queue $i$ at time $t$ as $s_i(t) = (q_i(t), \iota_i(t), r_i(t))$.

For notational brevity, we omit the superscript $f$ in the waiting time function $W_i^{f}(\cdot)$ whenever it causes no ambiguity. 
Besides, we simply write $W_i(s_i(t))$ as $W_i(t)$ when the context is clear.
Additionally, we use $h(t^-):=\lim_{x\rightarrow t^-} h(x)$
and $h(t^+):=\lim_{x\rightarrow t^+} h(x)$ to denote the left- and right-limit of some real-valued function $h(\cdot)$ at $t$.

Our next result characterizes the waiting time at a given state.

\begin{lemma}[Waiting Time]
\label{lem:waiting_time}
For given \on-\off durations $(L_i,\bar{L}_i)$, a customer arriving at time $t$ with state $s_i(t)=(q_i(t),\iota_i(t),r_i(t))$ faces a waiting time
\begin{align}
    W_i(t) 
    &= \underbrace{\frac{q_i(t)}{\mu_i}}_{\text{\normalfont{\on} waiting time}} 
    \;+\; 
    \underbrace{\bigl(1-\iota_i(t)\bigr)\,r_i(t) \;+\; z_i(t)\,\bar{L}_i}_{\text{\normalfont{\off} waiting time}},
    \label{eq:waiting_time_fixed_duration}
\end{align}
where 
 \begin{equation}\label{eq:z_i(t)}
        z_i(t) = \left\lfloor \frac{q_i(t)/\mu_i + \iota_i(t) (L_i - r_i(t))}{L_i} \right\rfloor.
    \end{equation}
\end{lemma}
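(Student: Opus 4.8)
The plan is to track the \emph{cumulative active (``on'') time} that the server devotes to queue $i$ after the tagged customer arrives, and then convert that into real (clock) time. First I would argue that, because the discipline is first-come-first-served and both arrivals and service are fluid, a customer who joins queue $i$ at time $t$ has exactly $q_i(t)$ units of fluid ahead of her and this amount never increases; since the server processes queue $i$ at rate $\mu_i$ precisely during its ``on'' periods and, being non-idling, never pauses while she is waiting, she departs at the first instant $\sigma$ at which the total ``on'' time elapsed in $[t,\sigma]$ equals $q_i(t)/\mu_i =: a$. Writing $W_i(t)=\sigma-t$ and splitting $[t,\sigma]$ into its ``on'' and ``off'' portions, the ``on'' portion contributes exactly $a=q_i(t)/\mu_i$, so it only remains to show that the total ``off'' portion equals $(1-\iota_i(t))\,r_i(t)+z_i(t)\,\bar L_i$.

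Next I would use that, in the exogenous regime, the on--off schedule of queue $i$ from time $t$ onward is deterministic and completely pinned down by $(\iota_i(t),r_i(t))$: it is the residual stretch of the current status, followed by alternating blocks of lengths $L_i$ (on) and $\bar L_i$ (off). Then I split on $\iota_i(t)\in\{1,0\}$ (the case $a=0$, an empty queue, being immediate in both). If $\iota_i(t)=1$, the upcoming schedule is ``$r_i(t)$ on, $\bar L_i$ off, $L_i$ on, $\bar L_i$ off, $\dots$''; the customer first consumes the residual $r_i(t)$ of ``on'' time and, if $a>r_i(t)$, then passes through $\lceil (a-r_i(t))/L_i\rceil$ further ``on'' blocks, each preceded by exactly one complete ``off'' block, so the number of complete ``off'' blocks traversed is $\max\{0,\lceil (a-r_i(t))/L_i\rceil\}$, which (using $0\le r_i(t)\le L_i$) equals $\lfloor (a+L_i-r_i(t))/L_i\rfloor=z_i(t)$, while there is no leading ``off'' segment, consistent with the vanishing of the $(1-\iota_i(t))r_i(t)$ term. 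If $\iota_i(t)=0$, the schedule is ``$r_i(t)$ off, $L_i$ on, $\bar L_i$ off, $L_i$ on, $\dots$''; she first waits out the residual ``off'' time $r_i(t)$ and then (for $a>0$) is served during the $\lceil a/L_i\rceil$-th upcoming ``on'' block, having traversed $\lceil a/L_i\rceil-1=\lfloor a/L_i\rfloor=z_i(t)$ complete ``off'' blocks in between, for a total ``off'' time of $r_i(t)+z_i(t)\bar L_i$. In both cases, adding the ``on'' contribution $q_i(t)/\mu_i$ yields exactly \eqref{eq:waiting_time_fixed_duration}, with $z_i(t)$ as in \eqref{eq:z_i(t)}.

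The genuinely routine part is the arithmetic identifying $\max\{0,\lceil (a-r_i)/L_i\rceil\}$ and $\lceil a/L_i\rceil-1$ with the stated floor expressions. The only delicate point is the boundary case in which the accumulated ``on'' time reaches $a$ exactly at an on$\to$off switch (equivalently, $(a-r_i(t))/L_i$ or $a/L_i$ is a positive integer): there the floor convention counts that terminal ``off'' block as traversed, i.e., the tagged customer is treated as completing service at the start of the following ``on'' block. I would simply note that this is a non-generic state and that the tie-break is the one already built into the queue dynamics, so it is immaterial for the equilibrium analysis developed in the sequel. I expect the main obstacle to be phrasing step one cleanly and once and for all — why the fluid mass ahead of the tagged customer is frozen at $q_i(t)$ and why the server is certainly working at rate $\mu_i$ throughout every ``on'' interval until she departs — i.e., correctly invoking the FCFS and non-idling assumptions; after that, everything reduces to bookkeeping on the deterministic schedule.
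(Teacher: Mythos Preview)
Your proposal is correct and follows essentially the same approach as the paper: the paper's proof is literally ``this directly follows from the discussion after \Cref{lem:waiting_time},'' and that discussion is exactly the block-counting argument you lay out --- the $\on$ contribution is $q_i(t)/\mu_i$, and the $\off$ contribution is the residual off time (if any) plus one full $\bar L_i$ for each additional on-block needed. Your write-up is in fact more careful than the paper's: the paper handles the $\iota_i(t)=1$ case by a ``hypothetical system'' trick (padding the on-time by $L_i-r_i(t)$ and reducing to the $\iota_i(t)=0$ case), whereas you count blocks directly and make the $\lceil\cdot\rceil\leftrightarrow\lfloor\cdot\rfloor$ identification explicit, including the boundary tie-break that the paper leaves implicit.
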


The waiting time consists of two components: the \on waiting time $q_i(t)/\mu_i$ measures the total waiting time while the server is serving the customers ahead in queue $i$, and the \off waiting time $(1-\iota_i(t))\,r_i(t) + z_i(t)\,\bar{L}_i$ captures the remaining \off time $r_i(t)$ if the current state is \off ($\iota_i(t)=0$) plus additional $z_i(t)$ number of \off durations $\bar{L}_i$.
The number $z_i(t)$, given by \eqref{eq:z_i(t)}, depends on the system state when the customer arrives. Specifically,
\begin{enumerate}
    \item If the current state is \off ($\iota_i(t)=0$), the customer will experience $q_i(t) / (\mu_iL_i)$ (possibly fractional) \on durations before being served. By the alternation of \on-\off periods, there will be $\floor{q_i(t) / (\mu_iL_i)}$ additional \off durations in between.
    \item 
    If the current state is \on ($\iota_i(t)=1$), the \on waiting time remains $q_i(t)/\mu_i$. To compute the \off waiting time, we construct a \textit{hypothetical} system that matches
    case~(i). Specifically, we let the hypothetical \on waiting time be extended to $q_i(t)/\mu_i + L_i - r_i(t)$, the hypothetical status be \off, and the hypothetical residual time be zero. From the perspective of a customer arriving at time $t$, there is no difference between the actual system and this hypothetical system. By case~(i), we thus obtain the \off waiting time $\floor{(q_i(t)/\mu_i + L_i - r_i(t))/L_i}$.

    As an illustration, consider the case where \( q_i(t)/\mu_i = 3.5 \), \( L_i = 1 \), and \( r_i(t) = 0.2 \) with $\iota_i(t)=1$. The remaining \on time can serve \( 0.2\mu_i \) customers, leaving \( q_i(t) - r_i(t)\mu_i = 3.3\mu_i \) customers still to be served. Completing service for these \( 3.3\mu_i \) customers requires 3.3 additional \on periods, which in turn necessitate 4 full \off periods. Our formula \eqref{eq:z_i(t)} confirms this, yielding \( z_i(t) = \floor{3.5 + 1 - 0.2} = 4 \), consistent with the hypothetical system calculations above. 
    \end{enumerate}

\begin{remark}[Unique Equilibrium Joining Strategy]
\label{remark:unique_equilibrium_fixed_duration}
Notice that the waiting time $W_i(t)$ in \eqref{eq:waiting_time_fixed_duration} does not depend on the joining strategies of other customers (though the evolution of the state 
depends on these decisions). Thus, substituting \eqref{eq:waiting_time_fixed_duration} into the equilibrium \Cref{def:eq}
immediately yields a unique equilibrium joining strategy.
\end{remark}

We now proceed to characterize the resulting \emph{equilibrium outcome}, i.e., the queueing dynamics under the equilibrium joining strategy. Such a characterization is crucial for deriving the long-run average performance, as shown later. 
For technical convenience, we restrict to \emph{periodic} outcomes for each queue $i \in \mathcal{N}$, where the system state repeats in cycles, i.e., $s_i(t) = s_i(t+L_i+\bar{L}_i)$ for all (large enough) $t$. We will show later that among all periodic outcomes, there exists an (essentially) unique equilibrium outcome.

Finding a system's equilibrium outcome is challenging, as it requires analyzing the evolution of a three-dimensional state in continuous time. Nevertheless, we will explicitly characterize it using several key waiting time properties, as established later. While multiple equilibrium outcomes may exist, they all result in the same throughput---referred to as an \textit{essentially unique} equilibrium outcome. Furthermore, we will show that, except for one edge case, the equilibrium outcomes are actually unique, not just essentially unique (notice that the equilibrium strategy is unique by \Cref{remark:unique_equilibrium_fixed_duration} while the equilibrium outcome is essentially unique).

In \textit{one cycle} of the equilibrium outcome under the \on-\off durations $(L_i,\bar{L}_i)$, let $J_i(L_i, \bar{L}_i)\geq 0$
and $\bar{J}_i(L_i, \bar{L}_i)\geq 0$ denote the customers' \emph{joining duration} and \emph{not-joining duration}, respectively, 
and let $T_i(L_i, \bar{L}_i)\geq 0$ 
denote the \emph{post-clearance duration}, i.e., the additional time that the server continues serving queue $i$ since it becomes empty for the first time.
For simplicity, we write them as $J_i, \bar{J}_i$ and $T_i$ when the underlying \on-\off durations are clear. 
The post-clearance period must be part of the joining period since during the post-clearance period, arriving customers have zero waiting time and will always join.
Besides, we have $J_i + \bar{J}_i = L_i + \bar{L}_i$ by definition.

Note that we do \textit{not} assume the joining and not-joining periods to be contiguous within a single cycle. In principle, within a cycle of length \( (L_i+\bar{L}_i) \), these periods could alternate multiple times while still satisfying the condition \( J_i + \bar{J}_i = L_i + \bar{L}_i \). However, the following result establishes that these periods are indeed contiguous (\Cref{lem:fixed_duration_waitingtime}~(i)) and further illustrates key properties of the waiting time.

\begin{lemma}
\label{lem:fixed_duration_waitingtime}
For given \on-\off durations $(L_i,\bar{L}_i)$, suppose $\bar{J}_i>0$ (so customers do not always join at equilibrium).
Let $t = 0$ be the beginning of a not-joining period.
Under the equilibrium outcome:
\begin{enumerate}
    \item For $t\in (0,\bar{J}_i)$, no customer joins the system; for $t\in (\bar{J}_i, L_i+\bar{L}_i)$, all customers join the system, i.e., the joining and not-joining periods at equilibrium are contiguous.
    \item The waiting time $W_i(t)$
    has an upward jump of amount $\bar{L}_i$ at $t = 0$, i.e., $ W_i(0^+) = W_i(0^-) + \bar{L}_i$.
    \item The waiting time $W_i(t)$ is continuous in $t \in (0,\, J_i + \bar{J}_i)$ and $W_i(\bar{J}_i)=\theta_i$.
    \item The derivative of the waiting time with respect to $t$ is: 
    \begin{align*}
        W_i^\prime(t) =
        \begin{cases}
            -1, &  t \in (0, \bar{J}_i), \\[5pt]
            -1 + \rho_i, & t \in (\bar{J}_i, J_i + \bar{J}_i - T_i), \\[5pt]
            0, &  t \in (J_i + \bar{J}_i - T_i, J_i + \bar{J}_i),
        \end{cases}
    \end{align*}
    where $\rho_i := \lambda_i / \mu_i$ and when $T_i>0$, the last $T_i$ units of time in the joining period must be a post-clearance period. 
\end{enumerate}
\end{lemma}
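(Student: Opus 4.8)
The plan is to first pin down the local behavior of $W_i$ from the closed form in \Cref{lem:waiting_time}, then use periodicity together with a one-cycle flow-balance identity to prove that $W_i$ has exactly one jump per cycle, and finally read off (i)--(iv).

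\emph{Local behavior of $W_i$.} I would write $g_i(t):=q_i(t)/\mu_i+\iota_i(t)\bigl(L_i-r_i(t)\bigr)$, so that $z_i(t)=\lfloor g_i(t)/L_i\rfloor$, and use that $q_i$ is continuous, $r_i$ has slope $-1$ and resets to $L_i$ (resp.\ $\bar{L}_i$) at the start of an \on (resp.\ \off) period, and $g_i$ is non-decreasing inside each \on/\off block but drops by exactly $L_i$ at every \on$\to$\off switch. From the formula in \Cref{lem:waiting_time} this gives: the only discontinuities of $W_i$ are \emph{upward jumps of size exactly $\bar{L}_i$}, occurring when $z_i$ increases by one --- either because $g_i$ crosses an integer multiple of $L_i$, or at an \on$\to$\off switch at which $q_i/(\mu_iL_i)$ is an integer, in particular at the instant a post-clearance period ends. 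Differentiating the same formula between jumps and substituting the queue dynamics ($q_i'=\lambda_i$ when \off and joining; $q_i'=\lambda_i-\mu_i$ when \on and joining with $q_i>0$; $q_i'=-\mu_i$ when \on and not joining; $q_i'=0$ when \off and not joining; $q_i\equiv0$ on a post-clearance period) yields $W_i'=-1$ whenever nobody joins, $W_i'=\rho_i-1$ whenever everybody joins and $q_i>0$, and $W_i'=0$ on a post-clearance period (where $W_i\equiv0$). The delicate point here is the floor bookkeeping at \on$\to$\off switches: the $+\bar{L}_i$ contributed by the $(1-\iota_i)r_i$ term is annihilated by a unit drop of $z_i$ \emph{unless} $q_i/(\mu_iL_i)$ is an integer, and that failed cancellation is exactly the end-of-post-clearance jump.

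\emph{Exactly one jump per cycle.} Next I would count jumps: setting $P:=L_i+\bar{L}_i$ and $D_i(t):=t+W_i(t)$, periodicity gives $D_i(t+P)=D_i(t)+P$, so integrating $D_i'=1+W_i'$ over one cycle and adding the jumps yields $P=\int_0^P D_i'(t)\,dt+k\bar{L}_i$, with $k$ the number of jumps per cycle. By the derivative formula, $\int_0^P D_i'\,dt=\rho_i(J_i-\tilde{T}_i)+\tilde{T}_i$, where $\tilde{T}_i$ is the total post-clearance duration in a cycle; and one-cycle conservation (arrivals equal departures, $\lambda_iJ_i=\mu_i(L_i-\tilde{T}_i)+\lambda_i\tilde{T}_i$) gives $\rho_i(J_i-\tilde{T}_i)=L_i-\tilde{T}_i$. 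Hence $\int_0^P D_i'\,dt=L_i$ and $k\bar{L}_i=P-L_i=\bar{L}_i$, i.e.\ $k=1$.

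\emph{Finishing (i)--(iv).} Because $W_i$ is non-increasing away from its single jump, the set $\{t:W_i(t)>\theta_i\}$ can be entered only at that jump and is left once $W_i$ decreases back to $\theta_i$; by \Cref{def:eq} this set is precisely the not-joining interval, which is therefore unique and contiguous, proving (i) once $t=0$ is placed at the jump, and (ii) is then simply the size of that jump. For (iii), $W_i$ is continuous on $(0,J_i+\bar{J}_i)=(0,P)$ since its only jump sits at $t=0$, and $W_i(\bar{J}_i)=\theta_i$ follows from continuity at the not-joining/joining boundary together with \Cref{def:eq}. For (iv): on $(0,\bar{J}_i)$ nobody joins, so $W_i'=-1$; every post-clearance period ends with a jump, so $k=1$ forces at most one such period, giving $\tilde{T}_i=T_i$, and when $T_i>0$ its terminating jump is the jump at $t=0$, so the post-clearance period is exactly $(J_i+\bar{J}_i-T_i,\,J_i+\bar{J}_i)$; on the complementary part of the joining interval everybody joins with $q_i>0$, so $W_i'=\rho_i-1$, and on the post-clearance part $W_i'=0$. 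I expect the main obstacle to be the jump-and-continuity case analysis in the first step --- the floor-function edge cases at \on$\to$\off switches; after that, the argument is a one-line conservation count plus the monotonicity observation.
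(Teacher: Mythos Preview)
Your argument is correct and takes a genuinely different route from the paper's. The paper proves contiguity by a combinatorial argument on service epochs: using the auxiliary Lemmas~\ref{lem:waiting_time_at_t_fixed_duration}--\ref{coro:waiting_time_decreasing_same_onduration} it shows that the customer at the start of any joining block must be served at the very beginning of some \on period and the customer at the end of that block at the very end of one, so each joining block consumes at least one full \on duration $L_i$; since the total \on time per cycle is exactly $L_i$, only one such block fits. You instead count jumps: integrating $D_i(t)=t+W_i(t)$ over one cycle and invoking the flow-balance identity $\lambda_iJ_i=\mu_i(L_i-\tilde T_i)+\lambda_i\tilde T_i$ collapses the bookkeeping to $k\bar L_i=\bar L_i$, after which contiguity of the not-joining set follows for free from monotonicity of $W_i$ between jumps. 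What your route buys is a clean, self-contained proof of the post-clearance claim in~(iv): since every post-clearance period must terminate in a jump and there is only one jump per cycle, that period is forced to end at $t=0$ and hence to be the tail of the joining interval. The paper, by contrast, defers this point to \Cref{thm:equilibrium_fixedduration_exhaustive} and has to remark that the dependency is non-circular. One minor wording quibble: at an \on$\to$\off switch with $q_i/(\mu_iL_i)$ an integer, the jump is not literally because $z_i$ increases by one but because $z_i$ fails to \emph{drop} by one and so does not cancel the $+\bar L_i$ from $(1-\iota_i)r_i$ --- which is exactly what you say two sentences later, so this is only a slight inconsistency in the summary sentence, not a gap.
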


We do not specify the waiting time at the switching epochs from joining to not joining as they are measure-zero.
The upward jump in the waiting time is mainly
due to $z_i(t)$ defined in \eqref{eq:z_i(t)} whereby customer waiting incurs an additional \off duration $\bar{L}_i$ at certain points in time.
\Cref{lem:waiting_time} indicates that in one cycle of the equilibrium outcome, there is only one jump due to periodicity, and it must occur at the transition from joining to not joining. This explains \Cref{lem:fixed_duration_waitingtime}~(ii).
Besides, since the waiting time is continuous at the transition time from not joining to joining, we must have $W_i(\bar{J}_i)=\theta_i$ by the equilibrium constraints.
This explains \Cref{lem:fixed_duration_waitingtime}~(iii). %

\begin{figure*}[htbp]
  \centering
  \resizebox{.68\textwidth}{!}{\begin{tikzpicture}[font=\large, line width=1pt]

\draw (-0.55,6.8) node[left, rotate=90] {waiting time};

\draw[line width=1.6pt] (0,6) -- (3,3) --  (6,1) -- (7,1);
\draw[line width=1.6pt] (7,6) -- (10,3) -- (13,1) --(14,1);

\draw[thick,dashed] (0,0) -- (0,6.7);
\draw[thick,dashed] (3,0) -- (3,6.7);
\draw[thick,dashed] (7,0) -- (7,6.7);
\draw[thick,dashed] (10,0) -- (10,6.7);
\draw[thick,dashed] (14,0) -- (14,6.7);

\draw[thick,dashed] (6,1) -- (6,1.5);
\draw[thick,dashed] (13,1) -- (13,1.5);

\draw [to-to,blue] (0,0.2) --(3,0.2);
\draw (1.5,0.2) node[right, color=blue, above] {$\bar{J}_i$};

\draw [to-to,blue] (7,0.2) --(10,0.2);
\draw (8.5,0.2) node[right, color=blue, above] {$\bar{J}_i$};

\draw [to-to,blue] (3,0.2) --(7,0.2);
\draw (5,0.2) node[right, color=blue, above] {$J_i$};

\draw [to-to,blue] (10,0.2) --(14,0.2);
\draw (12,0.2) node[right, color=blue, above] {$J_i$};

\draw [to-to,blue] (6,1.4) --(7,1.4);
\draw (6.5,1.4) node[right, color=blue, above] {$T_i$};
\draw [to-to,blue] (13,1.4) --(14,1.4);
\draw (13.5,1.4) node[right, color=blue, above] {$T_i$};

\draw  (2.0,4.5) node {$-1$};
\draw  (9.0,4.5) node {$-1$};

\draw[above]  (5,2) node {$-1+\rho_i$};
\draw[above]  (12,2) node {$-1+\rho_i$};

\draw[thick,dashed] (0,3) -- (14,3);
\draw (0,3) node[left] {$\theta_i$};

\draw (0,6) node {\small{$\clubsuit$}};
\draw (7,6) node {\small{$\clubsuit$}};
\draw (7,1) node {\small{$\clubsuit$}};
\draw (14,1) node {\small{$\clubsuit$}};

\draw (3,3) node {$\diamondsuit$};
\draw (10,3) node {$\diamondsuit$};

\end{tikzpicture}}
  \caption{
  Illustration of the Waiting Time Dynamics under Exogenous \on-\off Durations with Strictly Positive Not-Joining Duration $\bar{J}_i$. 
  \textit{Note}. The amount of the upward jump at \small{$\clubsuit$}
is $\bar{L}_i$.
  The post-clearance duration $T_i$ can be zero.
In the other case where the not-joining duration is zero ($\bar{J}_i=0$), we have $W_i(\diamondsuit)<\theta_i$ while the pattern remains unchanged.
  }
\label{fig:waiting_time_fixed_duration}
\end{figure*}
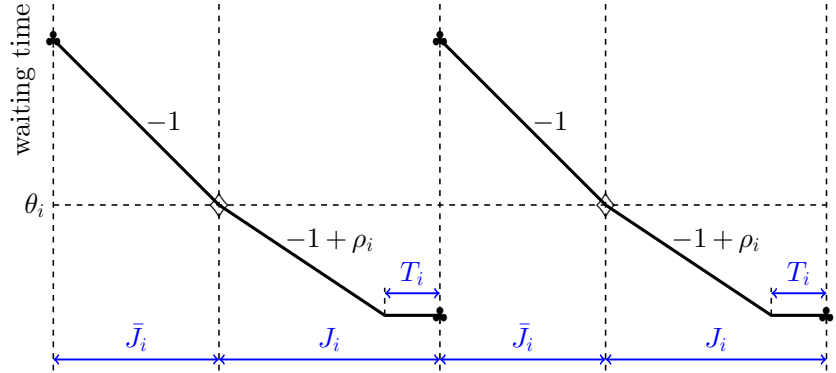

We now explain \Cref{lem:fixed_duration_waitingtime} (iv). For \( t \in (0, \bar{J}_i) \), i.e., during the not-joining period, consider a small \( \delta t > 0 \) such that \( t+\delta t \in (0, \bar{J}_i) \). The waiting time for a customer arriving at \( t + \delta t \) is given by 
$W_i(t + \delta t) = W_i(t) - \delta t$.
Since no one joins in between, this customer waits \( \delta t \) less than the one arriving at \( t \), leading to $W_i^\prime(t) = -1$. For $t \in (\bar{J}_i, L_i + \bar{L}_i - T_i)$, i.e., during the joining period but not the post-clearance period, the waiting time for a customer arriving at $t + \delta t$ is $  W_i(t + \delta t) = W_i(t) - \delta t + (\lambda_i/\mu_i) \delta t,$ where $(\lambda_i/\mu_i) \delta t$ 
reflects the additional waiting time caused by the $\lambda_i \delta t$ customers joining in between and served (at rate $\mu_i$) earlier.
This leads to $W_i^\prime(t) = -1 + \rho_i$.
For $t \in (L_i + \bar{L}_i - T_i,L_i + \bar{L}_i)$, i.e., during the post-clearance duration, since the server immediately serves all joining customers, the waiting time is always zero and so is its derivative.
A formal and detailed derivation can be found in the proof.

Let ``{\small{$\clubsuit$}}'' be the transition
time from joining to not joining and ``{\small{$\diamondsuit$}}'' the transition time from not joining to joining. 
\Cref{fig:waiting_time_fixed_duration} illustrates \Cref{lem:fixed_duration_waitingtime} with strictly positive not-joining duration $\bar{J}_i>0$.
We have $W_i(\diamondsuit)=\theta_i$ and there is a waiting-time jump at {\small{$\clubsuit$}}.

We call an equilibrium outcome \emph{exhaustive} if the server never leaves a non-empty queue. Otherwise, we say the equilibrium outcome is \emph{non-exhaustive}.
In what follows, we first discuss exhaustive equilibrium outcomes in \Cref{subsec:fixed_duration_exhaustive_outcome}. We then discuss non-exhaustive equilibrium outcomes in \Cref{subsec:fixed_duration_non_exhaustive_outcome}.
We summarize the two cases and provide further discussions in \Cref{subec:summary_fixed_duration}.

\subsubsection{Exhaustive Equilibrium Outcomes}
\label{subsec:fixed_duration_exhaustive_outcome}

In general, customers may be willing to join the system even if their service is delayed across multiple cycles, adding complexity to the equilibrium analysis.
In this section, we start from 
simpler cases where either the waiting patience is small enough so that customers do not wait for multiple cycles, or the waiting patience is large enough so that all customers are willing to join.
Under both situations, we will show that there exists a (essentially) unique equilibrium outcome that is exhaustive, i.e., the server never leaves a non-empty queue at the equilibrium outcome.

\begin{figure*}[ht]
\centering
  \subfloat[{\small{$\bar{L}_i\geq \theta_i$ and $L_i\geq \frac{\lambda_i\theta_i}{\mu_i-\lambda_i}$}} \label{subfig:exh_case1}]{  \resizebox{.49\textwidth}{!}{\begin{tikzpicture}[font=\large, line width=1pt] 

\draw (-3.4,6) node[left, rotate=90] {{\small{queue length}}};

\draw[line width=1.6pt](-3,1) -- (-2,1) -- (0,4)  -- (3,1) -- (4,1) -- (5,1) -- (7,4);

\draw[thick,dashed] (-3,0) -- (-3,6);
\draw[thick,dashed] (0,0) -- (0,6);
\draw[thick,dashed] (4,0) -- (4,6);
\draw[thick,dashed] (7,0) -- (7,6);

\draw [to-to,black] (-3,5) --(0,5);
\draw (-1.5, 5) node[right, color=black, above] {$\bar{L}_i$};
\draw [to-to,black] (0,5) --(4,5);
\draw (2,5) node[right, color=black, above] {$L_i$};
\draw [to-to,black] (4,5) --(7,5);
\draw (5.5, 5) node[right, color=black, above] {$\bar{L}_i$};

\draw	(0,4.2) node[anchor=west,red] {$\lambda_i\theta_i$};
\draw	(-3,1.2) node[anchor=east,red] {$0$};

\draw[thick,dotted] (-2,1) -- (-2,0);
\draw [thick,dotted] (5,1) --(5,0);

\draw [to-to,blue] (-2,0.2) --(4,0.2);
\draw (1,0.2) node[right, color=blue, above] {$J_i$};

\draw [to-to,blue] (-3,0.2) --(-2,0.2);
\draw (-2.5,0.2) node[right, color=blue, above] {$\bar{J}_i$};
\draw [to-to,blue] (4,0.2) --(5,0.2);
\draw (4.5,0.2) node[right, color=blue, above] {$\bar{J}_i$};

\draw[thick,dotted] (3,1) -- (3,2.2);
\draw [to-to,red] (3,2) --(4,2);
\draw (3.5,2) node[right, color=red, above] {$T_i$};

\draw [to-to,red] (-2,1) --(0,1);
\draw (-1,1) node[right, color=red, above] {$\theta_i$};

\draw [to-to,red] (5,1) --(7,1);
\draw (6,1) node[right, color=red, above] {$\theta_i$};

\draw [above] (5.8,2.5) node {\small{$\lambda_i$}};
\draw [above] (-1,2.8) node {\small{$\lambda_i$}};
\draw (2.1,2.9) node {\small{$-\mu_i+\lambda_i$}};

 \draw (-3,1) node {\small{$\clubsuit$}};
\draw (4,1) node {\small{$\clubsuit$}};

\draw (-2,1) node {$\diamondsuit$};
\draw (5,1) node {$\diamondsuit$};

\end{tikzpicture}}}
  \hfill
  \subfloat[{\small{$\bar{L}_i\geq \theta_i$ and $\frac{\lambda_i\theta_i}{\mu_i}\leq L_i<\frac{\lambda_i\theta_i}{\mu_i-\lambda_i}$}} \label{subfig:exh_case2}]{  \resizebox{.49\textwidth}{!}{\begin{tikzpicture}[font=\large, line width=1pt]

\draw[line width=1.6pt] (-3,1) -- (-2,1) -- (0,4) -- (3,3)-- (4,1) -- (5,1) -- (7,4);

\draw[thick,dashed] (0,0) -- (0,6);
\draw[thick,dashed] (-3,0) -- (-3,6);
\draw[thick,dashed] (4,0) -- (4,6);
\draw[thick,dashed] (7,0) -- (7,6);

\draw [to-to,black] (4,5) --(7,5);
\draw (5.5, 5) node[right, color=black, above] {$\bar{L}_i$};

\draw [to-to,black] (-3,5) --(0,5);
\draw (-1.5, 5) node[right, color=black, above] {$\bar{L}_i$};
\draw [to-to,black] (4,5) --(0,5);
\draw (2,5) node[right, color=black, above] {$L_i$};

\draw	(0,4.1) node[anchor=east,red] {$\lambda_i\theta_i$};
\draw	(-3,1.2) node[anchor=east,red] {$0$};

\draw[thick,dotted] (-2,1) -- (-2,0);
\draw[thick,dotted] (3,3) -- (3,0);
\draw[thick,dotted] (5,0) -- (5,1);

\draw [to-to,blue] (-2,0.2) --(3,0.2);
\draw (0.5,0.2) node[right, color=blue, above] {$J_i$};

\draw [to-to,blue] (3,0.2) --(5,0.2);
\draw (3.8,0.2) node[right, color=blue, above] {$\bar{J}_i$};

\draw [to-to,red] (5,1) --(7,1);
\draw (6,1) node[right, color=red, above] {$\theta_i$};
\draw [to-to,red] (-2,1) --(0,1);
\draw (-1,1) node[right, color=red, above] {$\theta_i$};

\draw [above] (-0.9,2.8) node {\small{$\lambda_i$}};
\draw [above]	(1.7,3.5) node {\small{$-\mu_i+\lambda_i$}};
\draw [above] (5.9,2.5) node {\small{$\lambda_i$}};
\draw (3.7,2.3) node {\small{$-\mu_i$}};

\draw (3,3) node {\small{$\clubsuit$}};

\draw (-2,1) node {$\diamondsuit$};
\draw (5,1) node {$\diamondsuit$};

\end{tikzpicture}}}

  \subfloat [{\small{$\bar{L}_i\geq \theta_i$ and $L_i<\frac{\lambda_i\theta_i}{\mu_i}$}} \label{subfig:exh_case3}]{  \resizebox{.49\textwidth}{!}{\begin{tikzpicture}[font=\large, line width=1pt] 

\draw (-3.4,6) node[left, rotate=90] {{\small{queue length}}};

\draw[line width=1.6pt](-3,1) -- (-2,1) -- (0,4) --(1,4) -- (3,1) -- (4,1) -- (6,4) --(7,4);

\draw[thick,dashed] (-3,0) -- (-3,6);
\draw[thick,dashed] (1,0) -- (1,6);
\draw[thick,dashed] (3,0) -- (3,6);
\draw[thick,dashed] (7,0) -- (7,6);

\draw [to-to,black] (-3,5) --(1,5);
\draw (-1, 5) node[right, color=black, above] {$\bar{L}_i$};
\draw [to-to,black] (1,5) --(3,5);
\draw (2,5) node[right, color=black, above] {$L_i$};
\draw [to-to,black] (3,5) --(7,5);
\draw (5, 5) node[right, color=black, above] {$\bar{L}_i$};

\draw	(1,4.2) node[anchor=west,red] {$\mu_i L_i$};
\draw	(-3,1.2) node[anchor=east,red] {$0$};

\draw[thick,dotted] (-2,1) -- (-2,0);
\draw[thick,dotted] (0,4) -- (0,0);
\draw[thick,dotted] (4,1) -- (4,0);
\draw [thick,dotted] (6,4) --(6,0);

\draw [to-to,blue] (-2,0.2) --(0,0.2);
\draw (-1,0.2) node[right, color=blue, above] {$J_i$};
\draw [to-to,blue] (4,0.2) --(6,0.2);
\draw (5,0.2) node[right, color=blue, above] {$J_i$};

\draw [to-to,blue] (0,0.2) --(4,0.2);
\draw (2,0.2) node[right, color=blue, above] {$\bar{J}_i$};

\draw [to-to,red] (-2,1) --(1,1);
\draw (-0.5,1) node[right, color=red, above] {$\theta_i$};
\draw [to-to,red] (4,1) --(7,1);
\draw (5.5,1) node[right, color=red, above] {$\theta_i$};

\draw [above] (5.3,3.2) node {\small{$\lambda_i$}};
\draw [above] (-1,2.8) node {\small{$\lambda_i$}};
\draw (2.2,2.9) node {\small{$-\mu_i$}};

\draw (6,4)  node {\small{$\clubsuit$}};
\draw  (0,4) node {\small{$\clubsuit$}};

\draw (-2,1) node {$\diamondsuit$};
\draw (4,1) node {$\diamondsuit$};

\end{tikzpicture}}}
    \hfill
\subfloat[{\small{$\bar{L}_i< \theta_i$ and $L_i\geq \frac{\lambda_i\bar{L}_i}{\mu_i-\lambda_i}$}} \label{subfig:exh_case4} ($\bar{J}_i = 0$)]{  \resizebox{.49\textwidth}{!}{\begin{tikzpicture}[font=\large, line width=1pt]

\draw[line width=1.6pt](-3,1) -- (0,4)  -- (3,1) -- (4,1) -- (7,4);

\draw[thick,dashed] (-3,0) -- (-3,6);
\draw[thick,dashed] (0,0) -- (0,6);
\draw[thick,dashed] (4,0) -- (4,6);
\draw[thick,dashed] (7,0) -- (7,6);

\draw [to-to,black] (-3,5) --(0,5);
\draw (-1.5, 5) node[right, color=black, above] {$\bar{L}_i$};
\draw [to-to,black] (0,5) --(4,5);
\draw (2,5) node[right, color=black, above] {$L_i$};
\draw [to-to,black] (4,5) --(7,5);
\draw (5.5, 5) node[right, color=black, above] {$\bar{L}_i$};

\draw	(0,4.2) node[anchor=west,red] {$\lambda_i\bar{L}_i$};
\draw	(-3,1.2) node[anchor=east,red] {$0$};

\draw [to-to,blue] (-3,0.2) --(4,0.2);
\draw (0.5,0.2) node[right, color=blue, above] {$J_i$};

\draw[thick,dotted] (3,1) -- (3,2.2);
\draw [to-to,red] (3,2) --(4,2);
\draw (3.5,2) node[right, color=red, above] {$T_i$};

\draw [above] (5.6,2.8) node {\small{$\lambda_i$}};
\draw [above] (-1.3,2.8) node {\small{$\lambda_i$}};
\draw (2.1,2.9) node {\small{$-\mu_i+\lambda_i$}};

\end{tikzpicture}}}

  \caption{Queueing Dynamics of Exhaustive Equilibrium Outcomes under \on-\off Durations $(L_i,\bar{L}_i)$.\\ 
  }
\label{fig:fixed_duration_exhaustive_equilibrium}
\end{figure*}

\begin{theorem}[Exhaustive Equilibrium Outcomes]
Given \on-\off durations $(L_i,\bar{L}_i)$: 
\label{thm:equilibrium_fixedduration_exhaustive}
\begin{enumerate}
    \item When $\bar{L}_i\geq \theta_i$, there exists a unique equilibrium outcome, which is exhaustive, as characterized in 
    Figures \ref{subfig:exh_case1}, \ref{subfig:exh_case2} and \ref{subfig:exh_case3} for different values of $L_i$.
    \item When $\bar{L}_i < \theta_i$ and $L_i > (\lambda_i \bar{L}_i)/(\mu_i-\lambda_i)$, there exists a unique equilibrium outcome, which is exhaustive, as characterized in \Cref{subfig:exh_case4}. 
    When $\bar{L}_i < \theta_i$ and $L_i = (\lambda_i \bar{L}_i)/(\mu_i-\lambda_i)$, there exists essentially unique equilibrium outcomes, and one of the equilibrium outcomes, which is exhaustive, is characterized in \Cref{subfig:exh_case4}.  %
\end{enumerate}    
\end{theorem}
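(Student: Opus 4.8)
The plan is to prove the theorem constructively: for each parameter regime I will exhibit a periodic queueing trajectory, verify directly that the induced joining behavior satisfies the equilibrium conditions of \Cref{def:eq} via the waiting-time formula in \Cref{lem:waiting_time}, and then appeal to \Cref{lem:fixed_duration_waitingtime} to argue that no other periodic outcome can be an equilibrium. The key structural input is that, by \Cref{lem:fixed_duration_waitingtime}, any equilibrium outcome with $\bar J_i>0$ must have contiguous joining/not-joining periods, a single upward jump of size $\bar L_i$ in $W_i$ at the join-to-not-join transition, continuity elsewhere, $W_i(\bar J_i)=\theta_i$, and the piecewise-constant derivative $(-1,\,-1+\rho_i,\,0)$. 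So the degrees of freedom in a candidate equilibrium collapse to essentially one scalar (e.g.\ $\bar J_i$, or equivalently the peak queue length), and I just need to pin it down.

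First I would dispose of the case $\bar L_i\ge\theta_i$. Here the post-jump value $W_i(0^+)=W_i(0^-)+\bar L_i\ge\theta_i+\text{(something nonnegative)}$, so immediately after the jump customers do not join; combined with $W_i$ decreasing at rate $1$ during the not-joining phase and hitting $\theta_i$ exactly at $t=\bar J_i$, this forces $W_i(0^+)=\theta_i+\bar J_i$ and $W_i(0^-)=\theta_i+\bar J_i-\bar L_i$. I then translate these waiting-time constraints, through \eqref{eq:waiting_time_fixed_duration}, into constraints on the queue length at the start/end of the \on\ phase and on where within the cycle the \off-to-\on\ switch occurs. Matching the queue dynamics (slope $\lambda_i$ while \off-and-joining and during the post-clearance tail, slope $-\mu_i+\lambda_i$ or $-\mu_i$ while \on, depending on whether customers are still joining) with periodicity $q_i(0)=q_i(L_i+\bar L_i)$ yields a linear system whose unique solution is exactly the trajectory drawn in \Cref{subfig:exh_case1}, \Cref{subfig:exh_case2}, or \Cref{subfig:exh_case3}, with the three sub-cases delineated by comparing $L_i$ against $\lambda_i\theta_i/\mu_i$ and $\lambda_i\theta_i/(\mu_i-\lambda_i)$ (these thresholds are precisely where the peak queue length $\lambda_i\theta_i$ is, respectively, cleared before the \on\ phase ends, cleared exactly at the end, or not cleared at all). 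Checking that the constructed outcome is exhaustive (server never idles at a nonempty queue) is immediate from the picture since the queue is empty only at isolated instants or during a post-clearance window where service continues by fiat. Uniqueness follows because \Cref{lem:fixed_duration_waitingtime} left no freedom: any other periodic equilibrium would have to satisfy the same derivative and jump constraints and hence coincide.

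For part (ii), $\bar L_i<\theta_i$: now the jump may land below $\theta_i$, i.e.\ $W_i(0^+)=W_i(0^-)+\bar L_i$ could be $\le\theta_i$, in which case there is no not-joining period at all ($\bar J_i=0$) and customers always join. I would show that when $L_i>\lambda_i\bar L_i/(\mu_i-\lambda_i)$ the always-join trajectory of \Cref{subfig:exh_case4} is consistent: the queue builds at rate $\lambda_i$ during \off, drains at rate $-\mu_i+\lambda_i$ during \on, empties strictly before the \on\ phase ends (which is exactly the inequality on $L_i$), and the resulting peak waiting time $\bar L_i<\theta_i$ so indeed everyone joins — and, crucially, the jump value $W_i(0^+)$ stays $\le\theta_i$, so no not-joining phase can open up. The boundary case $L_i=\lambda_i\bar L_i/(\mu_i-\lambda_i)$ is where the queue empties exactly at the \on-to-\off\ switch; here $W_i$ touches $\theta_i$ only in the limit, so the equilibrium condition permits (via the tie-breaking clause $f_i\in[0,1]$ when $W_i=\theta_i$) a family of outcomes that differ on a measure-zero set of joining decisions but all yield the same throughput — hence "essentially unique" — and the exhaustive representative is again \Cref{subfig:exh_case4}.

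I expect the main obstacle to be the uniqueness/essential-uniqueness bookkeeping rather than the existence constructions. Specifically: (a) ruling out periodic outcomes in which customers wait across multiple cycles — I need $\bar L_i\ge\theta_i$ (or the stated $L_i$ condition) to guarantee that the waiting time never climbs high enough, starting from a feasible periodic level, to sustain multi-cycle waiting, and this requires carefully using the derivative bounds from \Cref{lem:fixed_duration_waitingtime}~(iv) together with the single-jump property to bound the global maximum of $W_i$ over a cycle; and (b) in the $L_i=\lambda_i\bar L_i/(\mu_i-\lambda_i)$ knife-edge, correctly arguing that all the admissible tie-broken outcomes are throughput-equivalent and that exactly one of them is exhaustive. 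The arithmetic matching of slopes and periodicity in each sub-case is routine and I would relegate it to the displayed figures plus a short computation.
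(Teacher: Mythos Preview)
Your plan leans on all of \Cref{lem:fixed_duration_waitingtime}, but be aware that the paper proves \Cref{thm:equilibrium_fixedduration_exhaustive} \emph{without} that lemma, and indeed the final clause of \Cref{lem:fixed_duration_waitingtime}~(iv)---that the post-clearance window sits at the tail of the joining period---is itself justified in the paper by citing \Cref{thm:equilibrium_fixedduration_exhaustive}. So invoking that clause here is circular as written. The paper's route for part~(i) is a direct one-liner: if the queue were nonempty when the server departs, the last customer in line would wait strictly more than $\bar L_i\ge\theta_i$, contradicting equilibrium. From that anchor (queue empty at each \on-to-\off switch) the trajectory is traced forward using only the waiting-time formula of \Cref{lem:waiting_time} and the elementary monotonicity that waiting times decrease among customers served in the same \on period. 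This simultaneously dissolves your worry~(a): once the queue resets to zero each cycle, multi-cycle waiting is impossible. You could salvage your approach by reproving the needed structural pieces independently rather than quoting \Cref{lem:fixed_duration_waitingtime}, but then it essentially converges to the paper's argument.

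Your treatment of the boundary $L_i=\lambda_i\bar L_i/(\mu_i-\lambda_i)$ misidentifies the source of non-uniqueness. It is \emph{not} that outcomes differ on a measure-zero set of joining decisions via the tie-breaking clause in \Cref{def:eq}. In every equilibrium here \emph{all} customers join, so joining behavior is identical across outcomes; what is indeterminate is the \emph{level} of the periodic queue-length trajectory. Flow balance forces $T_i=0$, and the sawtooth in \Cref{subfig:exh_case4} can be shifted upward by any constant (bounded only by the waiting-time constraint, which slackens as $\theta_i$ grows), yielding a continuum of non-exhaustive equilibria all with throughput $\lambda_i$. The exhaustive one, with minimum queue level zero, is the representative singled out in the theorem.
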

The discussion of the remaining case $\bar{L}_i < \theta_i$ and $L_i < \lambda_i \bar{L}_i/(\mu_i-\lambda_i)$ will be deferred to \Cref{subsec:fixed_duration_non_exhaustive_outcome}.

\smallskip

When the \off\ duration \(\bar{L}_i\) exceeds the waiting patience \(\theta_i\), the resulting equilibria vary based on the specific \on-\off\ durations and model parameters, as illustrated in Figures~\ref{subfig:exh_case1}, \ref{subfig:exh_case2}, and \ref{subfig:exh_case3}. We have the following observations.
(1) Note that the equilibrium outcome must be exhaustive. This is because when 
$\bar{L}_i \geq \theta_i$, if there are any remaining customers when the server is leaving queue~$i$, their waiting time must exceed the patience $\theta_i$, which contradicts the equilibrium constraints.
This also implies that customers are deterred from joining the system at the moment the server is leaving queue~$i$. %
(2) The proportion of the joining duration relative to the cycle length depends on \(L_i\). When \(L_i\) is relatively large (\Cref{subfig:exh_case1}), all customers arriving after \(\diamondsuit\) (but before the server departs from queue~\(i\) again) choose to join, as they can be served before the server leaves. In this case, when queue~\(i\) becomes empty, the server continues serving, with the actual discharge rate matching the arrival rate \(\lambda_i\) (given our assumption that \(\lambda_i < \mu_i\) in the main text). For intermediate values of \(L_i\) (\Cref{subfig:exh_case2}), only a limited fraction of customers opt to join, whereas for small \(L_i\) (\Cref{subfig:exh_case3}), very few customers decide to join. (3) The durations of joining, not-joining, and potential post-clearance periods can be determined using the flow-balance equation—ensuring that the number of customers who join equals the number served—and geometric relationships.
As a result, the equilibrium outcome is completely characterized in \Cref{fig:fixed_duration_exhaustive_equilibrium}.
For instance, in \Cref{subfig:exh_case1}, we have \(\bar{J}_i = \bar{L}_i - \theta_i\), leading to \(J_i = L_i + \bar{L}_i - \bar{J}_i = L_i + \theta_i\). Additionally, the waiting time satisfies \(T_i = L_i - \lambda_i \theta_i/(\mu_i - \lambda_i)\). A similar reasoning applies to \Cref{subfig:exh_case2} and \Cref{subfig:exh_case3}. Since the post-clearance duration is zero in both cases, the service (discharge) rate remains \(\mu_i\). This implies \(\lambda_i J_i = \mu_i L_i\), yielding \(J_i = \mu_i L_i/\lambda_i\) and \(\bar{J}_i = L_i + \bar{L}_i - \mu_i L_i/\lambda_i\).

\smallskip

In the scenario where $\bar{L}_i < \theta_i$ and $L_i > \lambda_i \bar{L}_i / (\mu_i - \lambda_i)$ (\Cref{subfig:exh_case4}), the condition $L_i > \lambda_i \bar{L}_i / (\mu_i - \lambda_i)$ ensures that $\mu_i L_i > \lambda_i (L_i + \bar{L}_i)$, meaning the server has sufficient capacity to clear all customers if they choose to join. Additionally, the condition $\bar{L}_i < \theta_i$ suggests that customers exhibit relatively high patience, making them willing to join as long as the queue does not grow excessively. Given these two conditions, all customers will join the system, and queue~$i$ achieves the first-best throughput~$\lambda_i$.
A formal argument is provided in the proof in the appendix.

\smallskip

For the boundary case $L_i = \lambda_i \bar{L}_i/(\mu_i - \lambda_i)$ with $L_i < \theta_i$, it can be shown that all customers still join the system at equilibrium. This condition implies $\mu_i L_i =\lambda_i (L_i + \bar{L}_i)$, where $\lambda_i (L_i + \bar{L}_i)$ is the total number of customers joining the system in one cycle. By the flow-balance equation, we must have exactly $\mu_i L_i$ customers served each cycle. Thus, the post-clearance duration should be zero since otherwise the number of served customers $\mu_i\cdot (L_i - T_i) + \lambda_i T_i$ would be strictly less than $\mu_i L_i$.
The equilibrium outcome can be non-exhaustive; yet since all customers join, every equilibrium achieves the same throughput $\lambda_i$. For instance, if $\theta_i \to \infty$, one can arbitrarily shift the queueing dynamics in \Cref{subfig:exh_case4} (with $T_i=0$) upward without violating any equilibrium constraints. 
This is because, $L_i = \lambda_i \bar{L}_i/(\mu_i - \lambda_i)$ ensures the server can \textit{precisely} clear the queue if all customers join, maintaining system stability and ensuring a finite waiting time.

\subsubsection{Non-Exhaustive Equilibrium Outcomes}
\label{subsec:fixed_duration_non_exhaustive_outcome}

Now, we consider the remaining case where $\bar{L}_i < \theta_i$ and $L_i < \lambda_i \bar{L}_i / (\mu_i - \lambda_i)$, meaning customers have relatively high patience, but the server lacks the capacity to clear all customers if they all join. In this scenario, the equilibrium dynamics become more intricate, as some customers may still choose to enter the system despite facing multiple service cycles of delays. As we will show later, this results in non-exhaustive equilibrium outcomes.

We first show in \Cref{lem:fixed_duration_non_exhaustive_all_join_isnot_outcome} of the appendix that when $\bar{L}_i < \theta_i$ and $L_i < (\lambda_i \bar{L}_i)/(\mu_i - \lambda_i)$, the post-clearance duration is zero under any equilibrium outcomes and that all customers joining the system must not form an equilibrium, i.e., $\bar{J}_i>0$.
Thus, the waiting time dynamics follow the pattern depicted in \Cref{fig:waiting_time_fixed_duration} with $T_i=0$. %
To determine the equilibrium outcome, we must derive (i) the duration of joining \( J_i \) and not-joining \( \bar{J}_i \), and (ii) the precise timing of these periods.

\begin{figure*}[htbp]
\centering
 \subfloat[{\small Case 1}\label{subfig:nonexh_case1}]{  \resizebox{.47\textwidth}{!}{\begin{tikzpicture}[font=\large, line width=1pt] 

\draw (-3.4,6) node[left, rotate=90] {{\small{queue length}}};

\draw[thick,dashed] (0,0) -- (0,6);
\draw[thick,dashed] (-3,0) -- (-3,6);
\draw[thick,dashed] (4,0) -- (4,6);
\draw[thick,dashed] (7,0) -- (7,6);

\draw	(0,4.2) node[anchor=west,red] {$\Bar{q}_i$};
\draw	(-3,1.2) node[anchor=east,red] {$\underline{q}_i$};

\draw[line width=1.6pt] (-3,1) -- (-1,4) --(0,4) -- (1,2) -- (4,1) -- (6,4) --(7,4);

\draw [above]	(3,1.5) node {\small{$-\mu_i+\lambda_i$}};
\draw [above] (5.3,3.2) node {\small{$\lambda_i$}};
\draw [above] (-1.9,2.8) node {\small{$\lambda_i$}};
\draw (0.8,2.9) node {\small{$-\mu_i$}};

\draw [to-to,black] (4,5) --(7,5);
\draw (5.5, 5) node[right, color=black, above] {$\bar{L}_i$};

\draw [to-to,black] (-3,5) --(0,5);
\draw (-1.5, 5) node[right, color=black, above] {$\bar{L}_i$};
\draw [to-to,black] (4,5) --(0,5);
\draw (2,5) node[right, color=black, above] {$L_i$};

\draw[thick,dotted] (-1,4) -- (-1,0);
\draw[thick,dotted] (1,2) -- (1,0);
\draw [thick,dotted] (6,4) --(6,0);

\draw [to-to,red] (1,1.7) --(0,1.7);
\draw (0.5,1.7) node[right, color=red, above] {$\zeta_i$};

\draw [to-to,blue] (-1,0.2) --(1,0.2);
\draw (-0.2,0.2) node[right, color=blue, above] {$\bar{J}_i$};

\draw [to-to,blue] (6,0.2) --(1,0.2);
\draw (3.5,0.2) node[right, color=blue, above] {$J_i$};

\draw (-1,4) node {\small{$\clubsuit$}};
\draw (6,4) node {\small{$\clubsuit$}};

\draw (1,2) node {$\diamondsuit$};

\end{tikzpicture}}}
 \hfill
  \subfloat[{\small Case 2}\label{subfig:nonexh_case2}]{  \resizebox{.47\textwidth}{!}{\begin{tikzpicture}[font=\large, line width=1pt] 

\draw[thick,dashed] (1,0) -- (1,6);
\draw[thick,dashed] (-3,0) -- (-3,6);
\draw[thick,dashed] (3,0) -- (3,6);
\draw[thick,dashed] (7,0) -- (7,6);

\draw	(-3,4.2) node[anchor=east,red] {$\Bar{q}_i$};
\draw	(1.8,1.2) node[anchor=east,red] {$\underline{q}_i$};

\draw[line width=1.6pt] (-3,4) -- (-1,3) -- (0,1.5) -- (1,1) -- (3,4) -- (5,3) -- (6,1.5) -- (7,1);

\draw [above] (4.5,3.4) node {\small{$-\mu_i+\lambda_i$}};
\draw [above] (-1.5,3.4) node {\small{$-\mu_i+\lambda_i$}};
\draw [above] (-0.2,2) node {\small{$-\mu_i$}};
\draw [above] (5.8,2) node {\small{$-\mu_i$}};

\draw (1.6,2.6) node {\small{$\lambda_i$}};

\draw [to-to,black] (3,5) --(7,5);
\draw (5, 5) node[right, color=black, above] {$L_i$};

\draw [to-to,black] (-3,5) --(1,5);
\draw (-1, 5) node[right, color=black, above] {$L_i$};
\draw [to-to,black] (3,5) --(1,5);
\draw (2,5) node[right, color=black, above] {$\bar{L}_i$};

\draw[thick,dotted] (0,1.5) -- (0,0);
\draw[thick,dotted] (-1,3) -- (-1,0);
\draw [to-to,blue] (-1,0.2) --(0,0.2);
\draw (-0.5,0.2) node[right, color=blue, above] {$\bar{J}_i$};

\draw[thick,dotted] (6,1.5) -- (6,0);
\draw [to-to,blue] (6,0.2) --(5,0.2);
\draw (5.5,0.2) node[right, color=blue, above] {$\bar{J}_i$};

\draw[thick,dotted] (5,3) -- (5,0);
\draw [to-to,blue] (5,0.2) --(0,0.2);
\draw (2.5,0.2) node[right, color=blue, above] {$J_i$};

\draw [to-to,red] (-3,2.2) --(-1,2.2);
\draw (-2,2.2) node[right, color=red, above] {$\zeta_i$};
\draw [to-to,red] (3,2.2) --(5,2.2);
\draw (4,2.2) node[right, color=red, above] {$\zeta_i$};

\draw (-1,3) node {\small{$\clubsuit$}};
\draw (5,3) node {\small{$\clubsuit$}};

\draw (0,1.5) node {$\diamondsuit$};
\draw (6,1.5) node {$\diamondsuit$};

\end{tikzpicture}}}
\vspace{0.7em}
   \subfloat[{\small Case 3}\label{subfig:nonexh_case3}]{  \resizebox{.47\textwidth}{!}{\begin{tikzpicture}[font=\large, line width=1pt] 

\draw (-3.4,6) node[left, rotate=90] {{\small{queue length}}};

\draw[line width=1.6pt](-3,1) -- (-2,1) -- (0,4) --(1,4) -- (3,1) -- (4,1) -- (6,4) --(7,4);

\draw[thick,dashed] (-3,0) -- (-3,6);
\draw[thick,dashed] (1,0) -- (1,6);
\draw[thick,dashed] (3,0) -- (3,6);
\draw[thick,dashed] (7,0) -- (7,6);

\draw [to-to,black] (-3,5) --(1,5);
\draw (-1, 5) node[right, color=black, above] {$\bar{L}_i$};
\draw [to-to,black] (1,5) --(3,5);
\draw (2,5) node[right, color=black, above] {$L_i$};
\draw [to-to,black] (3,5) --(7,5);
\draw (5, 5) node[right, color=black, above] {$\bar{L}_i$};

\draw	(1,4.2) node[anchor=west,red] {$\Bar{q}_i$};
\draw	(-3,1.2) node[anchor=east,red] {$\underline{q}_i$};

\draw[thick,dotted] (-2,1) -- (-2,0);
\draw[thick,dotted] (0,4) -- (0,0);
\draw[thick,dotted] (4,2.2) -- (4,0);
\draw [thick,dotted] (6,4) --(6,0);
\draw [thick,dotted] (-2,2.2) --(-2,1);

\draw [to-to,red] (-3,2) --(-2,2);
\draw (-2.5,2) node[right, color=red, above] {$\zeta_i$};
\draw [to-to,red] (3,2) --(4,2);
\draw (3.5,2) node[right, color=red, above] {$\zeta_i$};

\draw [to-to,blue] (-2,0.2) --(0,0.2);
\draw (-1,0.2) node[right, color=blue, above] {$J_i$};
\draw [to-to,blue] (4,0.2) --(6,0.2);
\draw (5,0.2) node[right, color=blue, above] {$J_i$};

\draw [to-to,blue] (0,0.2) --(4,0.2);
\draw (2,0.2) node[right, color=blue, above] {$\bar{J}_i$};

\draw [above] (5.3,3.2) node {\small{$\lambda_i$}};
\draw [above] (-1,2.8) node {\small{$\lambda_i$}};
\draw (2.2,2.9) node {\small{$-\mu_i$}};

\draw (6,4) node {\small{$\clubsuit$}};
\draw (0,4) node {\small{$\clubsuit$}};

\draw (-2,1) node {$\diamondsuit$};
\draw (4,1) node {$\diamondsuit$};

\end{tikzpicture}}}
 \hfill
  \subfloat[{\small Case 4}\label{subfig:nonexh_case4}]{  \resizebox{.47\textwidth}{!}{\begin{tikzpicture}[font=\large, line width=1pt]

\draw[line width=1.6pt] (-3,1) -- (-2,1) -- (0,4) -- (3,3)-- (4,1) -- (5,1) -- (7,4);

\draw[thick,dashed] (0,0) -- (0,6);
\draw[thick,dashed] (-3,0) -- (-3,6);
\draw[thick,dashed] (4,0) -- (4,6);
\draw[thick,dashed] (7,0) -- (7,6);

\draw [to-to,black] (4,5) --(7,5);
\draw (5.5, 5) node[right, color=black, above] {$\bar{L}_i$};

\draw [to-to,black] (-3,5) --(0,5);
\draw (-1.5, 5) node[right, color=black, above] {$\bar{L}_i$};
\draw [to-to,black] (4,5) --(0,5);
\draw (2,5) node[right, color=black, above] {$L_i$};

\draw	(0,4.1) node[anchor=east,red] {$\Bar{q}_i$};
\draw	(-3,1.2) node[anchor=east,red] {$\underline{q}_i$};

\draw[thick,dotted] (-2,1) -- (-2,0);
\draw[thick,dotted] (3,3) -- (3,0);
\draw[thick,dotted] (5,0) -- (5,2.2);

\draw [to-to,blue] (-2,0.2) --(3,0.2);
\draw (0.5,0.2) node[right, color=blue, above] {$J_i$};

\draw [to-to,blue] (3,0.2) --(5,0.2);
\draw (3.8,0.2) node[right, color=blue, above] {$\bar{J}_i$};

\draw [thick,dotted] (-2,2.2) --(-2,1);
\draw [to-to,red] (-3,2) --(-2,2);
\draw (-2.5,2) node[right, color=red, above] {$\zeta_i$};
\draw [to-to,red] (4,2) --(5,2);
\draw (4.5,2) node[right, color=red, above] {$\zeta_i$};

\draw [above] (-0.9,2.8) node {\small{$\lambda_i$}};
\draw [above]	(1.7,3.5) node {\small{$-\mu_i+\lambda_i$}};
\draw [above] (5.9,2.5) node {\small{$\lambda_i$}};
\draw (3.7,2.3) node {\small{$-\mu_i$}};

\draw (3,3) node {\small{$\clubsuit$}};

\draw (5,1) node {$\diamondsuit$};
\draw (-2,1) node {$\diamondsuit$};

\end{tikzpicture}}}
\vspace{0.7em}
  \subfloat[{\small Case 5}\label{subfig:nonexh_case5}]{  \resizebox{.47\textwidth}{!}{\begin{tikzpicture}[font=\large, line width=1pt] 

\draw (-3.4,6) node[left, rotate=90] {{\small{queue length}}};

\draw[line width=1.6pt](-3,1) -- (-1,3) --(0,3) --(1,4) -- (3,1) -- (5,3) -- (6,3)--(7,4);

\draw[thick,dashed] (-3,0) -- (-3,6);
\draw[thick,dashed] (1,0) -- (1,6);
\draw[thick,dashed] (3,0) -- (3,6);
\draw[thick,dashed] (7,0) -- (7,6);

\draw [to-to,black] (-3,5) --(1,5);
\draw (-1, 5) node[right, color=black, above] {$\bar{L}_i$};
\draw [to-to,black] (1,5) --(3,5);
\draw (2,5) node[right, color=black, above] {$L_i$};
\draw [to-to,black] (3,5) --(7,5);
\draw (5, 5) node[right, color=black, above] {$\bar{L}_i$};

\draw	(1,4.2) node[anchor=west,red] {$\Bar{q}_i$};
\draw	(-3,1.2) node[anchor=east,red] {$\underline{q}_i$};

\draw[thick,dotted] (-1,3) -- (-1,0);
\draw[thick,dotted] (0,3) -- (0,0);
\draw[thick,dotted] (5,3) -- (5,0);
\draw [thick,dotted] (6,3) --(6,0);

\draw [to-to,red] (-3,1) --(-1,1);
\draw (-2,1) node[right, color=red, above] {$\zeta_i$};
\draw [to-to,red] (3,1) --(5,1);
\draw (4,1) node[right, color=red, above] {$\zeta_i$};

\draw [to-to,blue] (-1,0.2) --(0,0.2);
\draw (-0.5,0.2) node[right, color=blue, above] {$\bar{J}_i$};
\draw [to-to,blue] (5,0.2) --(6,0.2);
\draw (5.5,0.2) node[right, color=blue, above] {$\bar{J}_i$};
\draw [to-to,blue] (0,0.2) --(5,0.2);
\draw (2.5,0.2) node[right, color=blue, above] {$J_i$};

\draw [above] (-2.3,1.8) node {\small{$\lambda_i$}};
\draw [above] (0.4,3.5) node {\small{$\lambda_i$}};

\draw [above] (6.1,3.2) node {\small{$\lambda_i$}};
\draw [above] (3.9,2) node {\small{$\lambda_i$}};

\draw (2.3,3.3) node {\small{$-\mu_i+\lambda_i$}};

\draw (5,3) node {\small{$\clubsuit$}};
\draw (-1,3) node {\small{$\clubsuit$}};

\draw (0,3) node {$\diamondsuit$};
\draw (6,3) node {$\diamondsuit$};

\end{tikzpicture}}}
  
  \vspace{0.5em}
  \caption{Queueing Dynamics of Non-Exhaustive Equilibrium Outcomes under \on-\off Durations $(L_i,\bar{L}_i)$.
  }
\label{fig:fixed_duration_nonexhaustive_equilibrium}
\end{figure*}
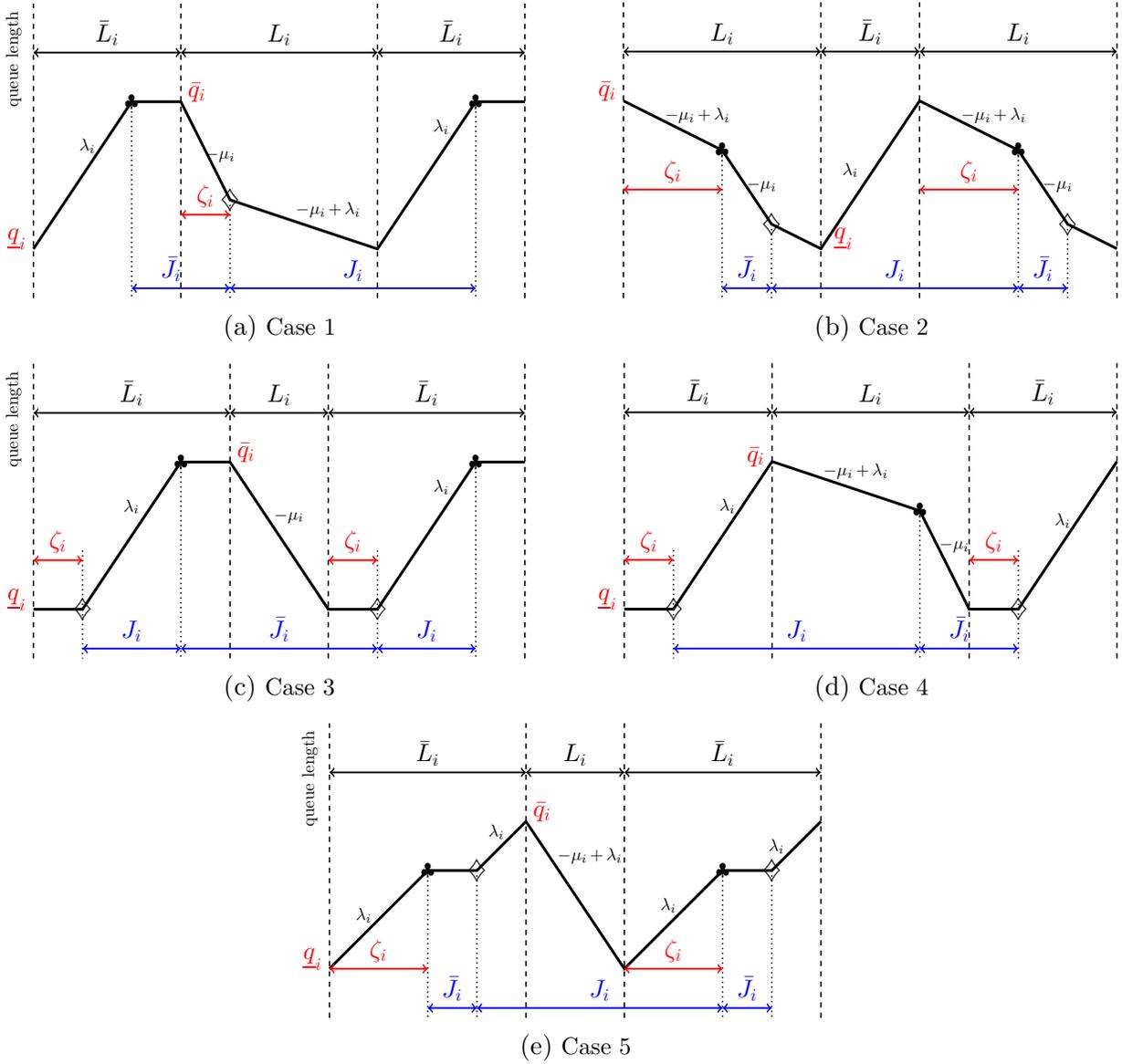
  
Regarding (i), since $T_i = 0$, the flow-balance equation becomes $\lambda_i J_i = \mu_i L_i$. This leads to $J_i = (\mu_i/\lambda_i) L_i$ and thus $\bar{J}_i = L_i + \bar{L}_i - J_i = \bar{L}_i - (\mu_i - \lambda_i) L_i/\lambda_i$.
Regarding (ii), there are five possible cases:
\begin{enumerate}[label=\emph{Case \arabic*}., leftmargin=*]
    \item The joining period starts during an \on period and ends during the subsequent \off period.
    \item The joining period starts during an \on period and ends during the next cycle's \on period. %
    \item The joining period starts during an \off period and ends during the same \off period.
    \item The joining period starts during an \off period and ends during the subsequent \on period.
    \item The joining period starts during an \off period and ends during the next cycle's \off period.
\end{enumerate}
Note that, given that $J_i = \mu_i L_i/\lambda_i  > L_i$, we can exclude the scenario where the joining period starts and ends within the same \on period. The queueing dynamics under these five cases are plotted in \Cref{fig:fixed_duration_nonexhaustive_equilibrium}.

In each case of \Cref{fig:fixed_duration_nonexhaustive_equilibrium}, to determine the equilibrium outcome, it is sufficient to know the following three quantities: the minimum queue length $\underline{q}_i$, the maximum queue length $\bar{q}_i$, and a variable $\zeta_i$ representing the duration between a switching epoch of the server's status to a switching epoch of the customers' joining decisions. %
Using simple geometry and the flow-balance equation, 
we can derive the maximum queue length $\bar{q}_i$ given $\underline{q}_i$ and $\zeta_i$. 
For example, in \Cref{subfig:nonexh_case1}, we have $\bar{q}_i = \underline{q}_i + \lambda_i \cdot ( \bar{L}_i - (\bar{J}_i - \zeta_i) )$. We can thus focus on determining $\underline{q}_i$ and $\zeta_i$.
Building on this insight, our next theorem explicitly characterizes the unique equilibrium outcome, which turns out to be non-exhaustive.

\begin{theorem}[Non-Exhaustive Equilibrium Outcomes]
\label{thm:fixed_duration_nonexhaustive_equilibrium}
Given any \on-\off durations $(L_i, \bar{L}_i)$ with $\bar{L}_i < \theta_i$ and $L_i < \lambda_i \bar{L}_i/(\mu_i - \lambda_i)$, there exists a unique equilibrium outcome, which is non-exhaustive. Let
$k_i := \floor{\theta_i/(L_i + \bar{L}_i)}$.
\begin{enumerate}
    \item If $ L_i \geq \rho_i \bar{L}_i$, the equilibrium outcome depends on which interval $k_i$ falls into:\footnote{The boundary values can belong to either case.}
    \smallskip
    \begin{center}
        \resizebox{0.7\textwidth}{!}{\begin{tikzpicture}
    \draw (1,0) -- (9,0);
  
  \foreach \x/\lab in {1/{$\frac{\theta_i}{L_i+\bar{L}_i}-1$}, 
                       3/{$\frac{\theta_i-\frac{\mu_i}{\lambda_i}L_i}{L_i+\bar{L}_i}$}, 
                       5/{$\frac{\theta_i-\bar{L}_i}{L_i+\bar{L}_i}$}, 
                       7/{$\frac{\theta_i-\frac{\mu_i-\lambda_i}{\lambda_i}L_i}{L_i+\bar{L}_i}$}, 
                       9/{$\frac{\theta_i}{L_i+\bar{L}_i}$}} {
    \draw (\x,0.1) -- (\x,-0.1) node[below] {\lab}; %
  }

  \node at (2,0.3) {\footnotesize{\textnormal{Case 1}}};
  \node at (4,0.3) {\footnotesize{\textnormal{Case 2}}};
  \node at (6,0.3) {\footnotesize{\textnormal{Case 4}}};
  \node at (8,0.3) {\footnotesize{\textnormal{Case 5}}};

\end{tikzpicture}}
    \end{center}
    \item If $ L_i < \rho_i\bar{L}_i$, again, the equilibrium outcome depends on which interval $k_i$ falls into:
    \smallskip
    \begin{center}
        \resizebox{0.7\textwidth}{!}{\begin{tikzpicture}
    \draw (1,0) -- (9,0);
  \foreach \x/\lab in {1/{$\frac{\theta_i}{L_i+\bar{L}_i}-1$}, 
                       3/{$\frac{\theta_i-\bar{L}_i}{L_i+\bar{L}_i}$}, 
                       5/{$\frac{\theta_i-\frac{\mu_i}{\lambda_i}L_i}{L_i+\bar{L}_i}$}, 
                       7/{$\frac{\theta_i-\frac{\mu_i-\lambda_i}{\lambda_i}L_i}{L_i+\bar{L}_i}$}, 
                       9/{$\frac{\theta_i}{L_i+\bar{L}_i}$}} {
    \draw (\x,0.1) -- (\x,-0.1) node[below] {\lab}; %
  }

  \node at (2,0.3) {\footnotesize{\textnormal{Case 1}}};
  \node at (4,0.3) {\footnotesize{\textnormal{Case 3}}};
  \node at (6,0.3) {\footnotesize{\textnormal{Case 4}}};
  \node at (8,0.3) {\footnotesize{\textnormal{Case 5}}};

\end{tikzpicture}}
    \end{center}
\end{enumerate}
The corresponding variables $\zeta_i$ and $\underline{q}_i$ in \Cref{fig:fixed_duration_nonexhaustive_equilibrium} are uniquely determined as follows:
\begin{enumerate}[label=Case \arabic*:, leftmargin=1.8cm]
    \item $\zeta_i = (k_i + 1)(L_i + \bar{L}_i) - \theta_i$ and $\underline{q}_i = k_i \mu_i L_i + \lambda_i (L_i - \zeta_i)$;
    \item $\zeta_i = k_i (L_i + \bar{L}_i) + (\mu_i/\lambda_i) L_i - \theta_i$ and $\underline{q}_i = k_i \mu_i L_i + \lambda_i \left( (\mu_i/\lambda_i) L_i - \bar{L}_i - \zeta_i \right)$;
    \item $\zeta_i = k_i (L_i + \bar{L}_i) + \bar{L}_i - \theta_i$ and $\underline{q}_i = k_i \mu_i L_i$;
    \item $\zeta_i = k_i (L_i + \bar{L}_i) + \bar{L}_i - \theta_i$ and $\underline{q}_i = k_i \mu_i L_i$;
    \item $\zeta_i = k_i (L_i + \bar{L}_i) + (\mu_i - \lambda_i)L_i/\lambda_i - \theta_i$ and $\underline{q}_i = k_i \mu_i L_i - \lambda_i \zeta_i$.
\end{enumerate}
\end{theorem}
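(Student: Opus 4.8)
The plan is to reduce the search for a periodic equilibrium outcome to a two-scalar linear system, solve it configuration by configuration, and then decide which configuration occurs via a feasibility analysis keyed to the integer $k_i$. The reduction stated just before the theorem already does much of the work: \Cref{lem:fixed_duration_non_exhaustive_all_join_isnot_outcome} gives $T_i=0$ and $\bar{J}_i>0$, flow balance fixes $J_i=\mu_i L_i/\lambda_i$ and $\bar{J}_i=\bar{L}_i-(\mu_i-\lambda_i)L_i/\lambda_i$, the inequality $J_i>L_i$ forces one of the five configurations of \Cref{fig:fixed_duration_nonexhaustive_equilibrium}, and in each of them the outcome is determined by $(\zeta_i,\underline{q}_i)$ together with the geometric identity expressing $\bar{q}_i$ through these two. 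So what remains is: (a) pin down $(\zeta_i,\underline{q}_i)$ in each configuration; (b) decide which configuration is realized as a function of $k_i$; (c) verify uniqueness and non-exhaustiveness. I would also record that $J_i<L_i+\bar{L}_i$ (this is exactly the hypothesis rewritten) and that $J_i\ge\bar{L}_i$ precisely when $L_i\ge\rho_i\bar{L}_i$; the latter is what lets the joining block fit inside a single \off\ window, hence makes Case~3 possible exactly when $L_i<\rho_i\bar{L}_i$, which is the source of the split into parts (i) and (ii).

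For (a), beyond flow balance there are exactly two more scalar conditions. The equilibrium condition at the $\diamondsuit$ epoch, $W_i=\theta_i$, expanded via \Cref{lem:waiting_time}, brings in the integer $z_i$ evaluated at $\diamondsuit$, namely the number of complete \off\ durations the marginal customer traverses before service; since her total wait is $\theta_i$ and one \on-\off\ pair has length $L_i+\bar{L}_i$, this integer equals $k_i$ or $k_i+1$ according to whether $\diamondsuit$ falls in an \off\ or in an \on\ window, and substituting it makes the condition affine in $(\zeta_i,\underline{q}_i)$. The second condition locks the phase: by \Cref{lem:fixed_duration_waitingtime}(ii) the waiting time has exactly one upward jump per cycle, of size $\bar{L}_i$, located at $\clubsuit$; since the floor argument in \eqref{eq:z_i(t)} increases at rate $\rho_i/L_i$ over the joining block and is reset by one at each passage from \on\ to \off, the jump can sit at $\clubsuit$ only if that argument meets an integer precisely there, which is again an affine relation in $(\zeta_i,\underline{q}_i)$. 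Solving the resulting $2\times 2$ system in each of the five configurations produces the displayed closed forms for $\zeta_i$ and $\underline{q}_i$.

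For (b), a solution $(\zeta_i,\underline{q}_i)$ is admissible only if $\zeta_i\ge 0$ and $\zeta_i$ does not exceed the length of the window (either $L_i$ or $\bar{L}_i$) across which it is measured, if $\underline{q}_i\ge 0$, and if the integer used for $z_i$ is genuinely $\floor{\cdot}$ of the relevant argument. Each of these inequalities rewrites as a one-sided bound on $k_i$ of the shape $(\theta_i-\cdots)/(L_i+\bar{L}_i)$, and I would verify that the four admissible windows for $k_i$ abut, that they jointly cover the whole range $\bigl[\theta_i/(L_i+\bar{L}_i)-1,\ \theta_i/(L_i+\bar{L}_i)\bigr]$ in which $\floor{\theta_i/(L_i+\bar{L}_i)}$ necessarily lies, and that they are ordered as in the two tables, the reordering occurring exactly at $L_i=\rho_i\bar{L}_i$. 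For (c), uniqueness is immediate once the equilibrium strategy is known to be unique (\Cref{remark:unique_equilibrium_fixed_duration}) and the two scalar conditions leave no slack; one still checks that the constructed trajectory is a genuine equilibrium, i.e.\ $W_i\le\theta_i$ on the joining block and $W_i\ge\theta_i$ on the not-joining block, which follows from the slope signs in \Cref{lem:fixed_duration_waitingtime}(iv) together with the $\bar{L}_i$ jump, whose internal consistency is exactly the hypothesis $L_i<\lambda_i\bar{L}_i/(\mu_i-\lambda_i)$. Non-exhaustiveness is then read off the dynamics: the server departs queue~$i$ at the close of an \on\ window with $\underline{q}_i>0$ customers still present, because in every case $\underline{q}_i$ either carries a term $k_i\mu_i L_i$ with $k_i\ge 1$ on its $k_i$-range, or $k_i=0$, which (since $\bar{L}_i<\theta_i$) lands only in Case~1 where $\underline{q}_i=\lambda_i(L_i-\zeta_i)>0$.

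I expect the main obstacle to be the tight coupling between the floor computation and the case partition in steps (a)--(b): whether the marginal customer's residual work carries her across an \on/\off\ boundary is what fixes the value of $z_i$ at $\diamondsuit$, and an off-by-one there reassigns the configuration; and proving that the $k_i$-windows are exactly the stated intervals — consecutive, gapless, and flipping order precisely at $L_i=\rho_i\bar{L}_i$ — is where essentially all of the remaining bookkeeping lives. Everything else is affine algebra and reading slopes off \Cref{fig:fixed_duration_nonexhaustive_equilibrium}.
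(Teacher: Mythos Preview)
Your plan is correct and matches the paper's approach: use $W_i(\diamondsuit)=\theta_i$ together with the integer condition (you phrase it as the floor argument meeting an integer at $\clubsuit$; the paper uses the equivalent statement that $m_i(\diamondsuit)$ is a nonnegative integer, via \Cref{lem:fixed_duration_non_exhaustive_joiningduration_waitingtime_integer}) to solve each configuration for $(\zeta_i,\underline{q}_i)$ in terms of the integer parameter, then let the feasibility inequalities carve the unit-length interval $[\theta_i/(L_i+\bar L_i)-1,\,\theta_i/(L_i+\bar L_i)]$ for $k_i$ and thereby select the case, and finally check $\underline{q}_i>0$. One small slip in part~(c): when $k_i=0$ the outcome can also land in Case~2, not only Case~1 (this happens when $\bar L_i<\theta_i\le \mu_i L_i/\lambda_i$), and there direct substitution gives $\underline{q}_i=\lambda_i(\theta_i-\bar L_i)>0$.
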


The main conditions we use to determine the equilibrium outcome are: (i) $W_i(\diamondsuit) = \theta_i$, meaning the waiting time of customers arriving at the transition epoch $\diamondsuit$ from not joining to joining equals their patience, as established in \Cref{lem:fixed_duration_waitingtime} (iii); (ii) \emph{feasibility}: the duration of each line segment\footnote{The term \emph{line segment} refers to a specific linear portion of the piecewise linear queueing dynamics curve depicted in \Cref{fig:fixed_duration_nonexhaustive_equilibrium}.
The \emph{duration} of a line segment denotes the time interval over which this linear segment extends.} in any equilibrium case of \Cref{fig:fixed_duration_nonexhaustive_equilibrium} must be non-negative; and (iii) \emph{integer \off waiting periods}: the term inside the floor function of $z_i(\diamondsuit)$, as defined in \eqref{eq:z_i(t)}, must be an integer. The reasoning behind the third condition is that the term inside the floor function represents the additional (possibly fractional) number of full \off periods a customer arriving at time $\diamondsuit$ must wait through, excluding the current residual \off period if $\iota_i(\diamondsuit) = 0$. Since such a customer is served exactly at the start of an \on period (as shown in \Cref{lem:fixed_duration_non_exhaustive_joiningduration_waitingtime_integer} in the Appendix), the total number of full \off periods they wait through---apart from any ongoing residual \off period---must be an integer.

Note that the second condition (feasibility) and the third condition (integer property) naturally translate into inequality constraints. A priori, it is not obvious whether these inequalities guarantee the uniqueness of the equilibrium outcome. However, as shown in \Cref{thm:fixed_duration_nonexhaustive_equilibrium}, the interval defined by these constraints has a left boundary of $\theta_i / (L_i + \bar{L}_i) - 1$ and a right boundary of $\theta_i / (L_i + \bar{L}_i)$, with the length of the interval being \textit{exactly} one. As a result, there is exactly one integer within this interval, which uniquely determines the equilibrium.\footnote{In the edge case where $\theta_i / (L_i + \bar{L}_i)$ is an integer, $k_i$ can be either $\theta_i / (L_i + \bar{L}_i) - 1$ or $\theta_i / (L_i + \bar{L}_i)$, corresponding to Case 1 and Case 5. However, these two cases coincide in this scenario---\Cref{subfig:nonexh_case1} and \Cref{subfig:nonexh_case5} degenerate to the same queueing dynamics---so the equilibrium outcome remains unique.}

\subsubsection{Summary and Discussions}
\label{subec:summary_fixed_duration}

The results from \Cref{thm:equilibrium_fixedduration_exhaustive} and \Cref{thm:fixed_duration_nonexhaustive_equilibrium} together explicitly characterize the (essentially) unique equilibrium outcome. A noteworthy commonality across all equilibrium outcomes is the presence of \emph{herding cycles}: customers alternate between a joining period, during which all arriving customers join, and a subsequent not-joining period, during which no one joins. This cyclic pattern repeats indefinitely.

The cyclic herding pattern stands in sharp contrast to earlier studies where residual times are unobservable or memoryless.
In those settings, equilibrium strategies typically reduce to a phase-dependent queue-length threshold strategy
\citep[e.g.,][]{economou_2008_ORL_markovian,manou_2014_OR_transportation_station,manou_2024_MSOM_strategic_transportation}. Customers join if the queue length is below the threshold and do not join otherwise.
In other words,
customers' utility for joining the system with a shorter queue is higher than their utility for joining the system with a longer queue. 
Such behavior is dubbed \textit{avoid the crowd} (ATC) by \cite{hassin_1997_OR_follow_aviod_the_crowd}.
The opposite behavior that customers prefer a longer queue is referred to as \textit{follow the crowd} (FTC).

By contrast, in our setting, customers sometimes join when the queue is longer (FTC-type behavior, as in the \on period shown in \Cref{subfig:nonexh_case4}), and sometimes they prefer a shorter queue (ATC-type behavior, as illustrated in the \on period shown in \Cref{subfig:nonexh_case1}). 
The key driver of these nuanced behaviors arises from the non-Markovian properties of the system. 
This complexity leads to a different joining pattern compared with those
observed in prior work.

Another interesting observation from the equilibrium outcomes is that there may be no customer joining during the \on period, see \Cref{subfig:exh_case3} and \Cref{subfig:nonexh_case3}. Although the server is actively serving, any newly arriving customers would not be served within the current \on period. 
Instead, they must wait for potentially multiple cycles until the server returns to their queue, resulting in waiting times longer than their patience.

\subsection{Optimal Exogenous Durations}
\label{sec:opt_fixed_duration}

With the equilibrium outcomes derived for given \on-\off\ durations, we now turn to finding those that maximize throughput. In this section, we focus on the case \(n \geq 2\). The case \(n = 1\) is relatively straightforward in the absence of additional operational constraints, as the server can always serve the queue. We defer the more interesting case of \(n = 1\) with added operational constraints---motivated by electric vehicle operations---to Appendix~\ref{app_subsec:optimal_exo_n=1}.

Define $L=(L_i)_{i\in \N}$ and $\bar{L}=(\bar{L}_i)_{i\in \N}$. 
Given \on-\off durations $(L,\bar{L})$, in equilibrium, the system throughput is:
\begin{align*}
  \tp(L,\bar{L}) = \frac{\sum_{i\in\N}\left[\mu_i(L_i-T_i(L_i,\bar{L}_i))+\lambda_i T_i(L_i,\bar{L}_i)\right]}{L_1 + \bar{L}_1},
\end{align*}
where $T_i(L_i,\bar{L}_i)$ is the equilibrium post-clearance duration of queue $i$.
Here, $\lambda_i T_i(L_i,\bar{L}_i)$ is the number of customers served when queue $i$ is empty in one cycle. This is because the effective service rate during the post-clearance period is $\lambda_i$ rather than $\mu_i$ as all arriving customers are served immediately.
The term $\mu_i(L_i-T_i(L_i,\bar{L}_i))$ is the number of customers served when queue $i$ is not empty in one cycle.
The denominator is the length of one cycle. 
Recall that the \off duration of queue $i$ includes the \on durations of other queues and the switchover times, i.e., $\bar{L}_i=1^\top L_{-i}+1^\top \tau$, where $L_{-i}=(L_j)_{j\in \N\setminus\{i\}}$. This implies that $L_i + \bar{L}_i = L_j + \bar{L}_j =1^\top L + 1^\top \tau$ for all $i,j \in \N$.

As observed, given \on-\off durations, although there are nine different patterns of possible equilibrium outcomes, as shown by Figures \ref{fig:fixed_duration_exhaustive_equilibrium} and \ref{fig:fixed_duration_nonexhaustive_equilibrium}, to compute the throughput $\tp(L,\bar{L})$, it is sufficient to know the equilibrium post-clearance duration $T_i(L_i,\bar{L}_i),\forall i\in \N$.
Our next result provides a closed-form expression for the equilibrium post-clearance duration that allows us to analyze and optimize the long-run average throughput.

\begin{corollary}
\label{coro:fixed_duration_post_clearance_duration}
Given any \on-\off durations $(L_i,\bar{L}_i)$, under the equilibrium outcome, the post-clearance duration is given by:
\begin{align}
\label{eq:equilibrium_fd_hat_T_i} 
    T_i(L_i,\bar{L}_i) = \max\left\{\underbrace{L_i -\frac{\lambda_i}{\mu_i-\lambda_i}\theta_i}_{\textrm{$T_i$ in \Cref{subfig:exh_case1}}}, ~ \underbrace{ L_i - \frac{\lambda_i}{\mu_i-\lambda_i}\bar{L}_i}_{\textrm{$T_i$ in \Cref{subfig:exh_case4}}} ,~0\right\}  ,
\end{align}
where the first term corresponds to the post-clearance duration under the equilibrium outcome in \Cref{subfig:exh_case1}, and the second term corresponds to \Cref{subfig:exh_case4}.
\end{corollary}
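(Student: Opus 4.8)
The plan is to derive \eqref{eq:equilibrium_fd_hat_T_i} by a case analysis over the nine equilibrium-outcome patterns established in \Cref{thm:equilibrium_fixedduration_exhaustive} and \Cref{thm:fixed_duration_nonexhaustive_equilibrium}, using the fact (noted in \Cref{subsec:fixed_duration_non_exhaustive_outcome}, via \Cref{lem:fixed_duration_non_exhaustive_all_join_isnot_outcome}) that the post-clearance duration $T_i$ is strictly positive only in the two exhaustive patterns of \Cref{subfig:exh_case1} and \Cref{subfig:exh_case4}, and is zero in all other seven patterns. So the formula amounts to: $T_i$ equals the \Cref{subfig:exh_case1} value when that regime is active, the \Cref{subfig:exh_case4} value when that regime is active, and $0$ otherwise, and the claim is that these three expressions stitch together as a pointwise maximum.

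First I would recall from the discussion following \Cref{fig:fixed_duration_exhaustive_equilibrium} the closed forms already computed: in the regime $\bar L_i \ge \theta_i$, $L_i \ge \lambda_i\theta_i/(\mu_i-\lambda_i)$ (\Cref{subfig:exh_case1}) one has $T_i = L_i - \lambda_i\theta_i/(\mu_i-\lambda_i) \ge 0$, obtained from the flow-balance/geometry argument ($\bar J_i = \bar L_i - \theta_i$, $J_i = L_i + \theta_i$, and the post-clearance segment has length $L_i$ minus the length $\lambda_i\theta_i/(\mu_i-\lambda_i)$ of the build-up-and-drain portion). In the regime $\bar L_i < \theta_i$, $L_i > \lambda_i\bar L_i/(\mu_i-\lambda_i)$ (\Cref{subfig:exh_case4}) all customers join, $\bar J_i = 0$, and the analogous geometry gives $T_i = L_i - \lambda_i\bar L_i/(\mu_i-\lambda_i) > 0$; at the boundary $L_i = \lambda_i\bar L_i/(\mu_i-\lambda_i)$ one gets $T_i = 0$, consistent with the second term vanishing. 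In every remaining regime — \Cref{subfig:exh_case2}, \Cref{subfig:exh_case3}, and all five non-exhaustive cases — $T_i = 0$.

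Next I would check that the max over the three candidate expressions selects the correct one in each parameter region, i.e.\ that whenever a given regime is active its candidate value dominates the other two. The key inequalities to verify are: (a) when $\bar L_i \ge \theta_i$ then $-\lambda_i\theta_i/(\mu_i-\lambda_i) \ge -\lambda_i\bar L_i/(\mu_i-\lambda_i)$, so the first term dominates the second — this is immediate; (b) within $\bar L_i \ge \theta_i$, the first term is positive exactly when $L_i \ge \lambda_i\theta_i/(\mu_i-\lambda_i)$, which is precisely the \Cref{subfig:exh_case1} condition, and is $\le 0$ in the \Cref{subfig:exh_case2}/\Cref{subfig:exh_case3} subcases $L_i < \lambda_i\theta_i/(\mu_i-\lambda_i)$, so the max is $0$ there, as required; (c) when $\bar L_i < \theta_i$ then $-\lambda_i\bar L_i/(\mu_i-\lambda_i) > -\lambda_i\theta_i/(\mu_i-\lambda_i)$, so the second term dominates the first, and it is positive exactly on the \Cref{subfig:exh_case4} condition $L_i > \lambda_i\bar L_i/(\mu_i-\lambda_i)$, while on the complementary region $L_i < \lambda_i\bar L_i/(\mu_i-\lambda_i)$ (the non-exhaustive regime, \Cref{thm:fixed_duration_nonexhaustive_equilibrium}) both the first and second terms are negative so the max is $0$, matching $T_i = 0$. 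Assembling (a)--(c) over the exhaustive partition $\{\bar L_i \ge \theta_i\}$ versus $\{\bar L_i < \theta_i\}$ yields \eqref{eq:equilibrium_fd_hat_T_i} on the nose.

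The main obstacle I anticipate is not any single inequality but the bookkeeping of matching regime boundaries to the boundary behaviour of the max: in particular the edge cases $\bar L_i = \theta_i$ and $L_i = \lambda_i\bar L_i/(\mu_i-\lambda_i)$, where the equilibrium outcome is only \emph{essentially} unique, must be shown to still produce the same $T_i$ (here this is easy since $T_i = 0$ at $L_i = \lambda_i\bar L_i/(\mu_i-\lambda_i)$ regardless of which equilibrium outcome is realized, and both candidate expressions are continuous across $\bar L_i = \theta_i$). The only other point requiring care is invoking $\mu_i > \lambda_i$ throughout so that $\lambda_i/(\mu_i-\lambda_i)$ is well-defined and positive (the $\mu_i \le \lambda_i$ case being handled separately in Appendix~\ref{app_sec:lambda_i_greater_mu_i}). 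With those caveats noted, the corollary follows directly from the earlier theorems with no additional analytic input.
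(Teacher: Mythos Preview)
Your proposal is correct and follows essentially the same approach as the paper: a case analysis invoking \Cref{thm:equilibrium_fixedduration_exhaustive} and \Cref{thm:fixed_duration_nonexhaustive_equilibrium} to observe that $T_i>0$ only in the regimes of \Cref{subfig:exh_case1} and \Cref{subfig:exh_case4}, recalling the closed forms there, and verifying that the max selects the right value in each region. The paper's version is slightly more compact—it notes that the two nonzero cases can be unified as $T_i = L_i - \tfrac{\lambda_i}{\mu_i-\lambda_i}\min\{\theta_i,\bar L_i\}$, which immediately gives the max of the first two terms—but your explicit inequality check (a)--(c) accomplishes the same thing.
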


Putting it all together, we have the throughput-maximization problem as follows:

\begin{subequations}
    \begin{align}
    \max_{L,\bar{L}\in \mathbb{R}^n_+}\quad &\frac{\sum_{i\in\N}\left[\mu_i(L_i-T_i(L_i,\bar{L}_i))+\lambda_i T_i(L_i,\bar{L}_i)\right]}{L_1 + \bar{L}_1} \tag{$\mathsf{P}$} \label{prob:fixed_duration}\\
    \textrm{s.t.}\quad &  \textrm{$T_i(L_i,\bar{L}_i)$ satisfies \eqref{eq:equilibrium_fd_hat_T_i}, \quad$\forall i\in\N$}, \label{constraint:hat_T_i_equilibrium_constraints} \\
     & \bar{L}_i = 1^\top L_{-i}+ 1^\top \tau ,\quad\forall i\in \N. \label{constraint:one_cycle}
\end{align}
\end{subequations}
Constraints \eqref{constraint:hat_T_i_equilibrium_constraints} impose equilibrium conditions. Constraints \eqref{constraint:one_cycle} come from the definition of the \off durations in the multi-queue setting.

The above problem formulation is not convex. Fortunately, we can reformulate it into a linear-fractional program, and then a linear program. To do that, first notice that \eqref{prob:fixed_duration} is equivalent to the following linear-fractional program \eqref{prob:lf_fixed_duration}:

\begin{align}
   \max_{\bar{L},L,T\in\mathbb{R}^n_{+}}\quad &  \frac{\mu^\top L - (\mu-\lambda)^\top T}{L_1+\bar{L}_1}  \tag{$\mathsf{LF}$}\label{prob:lf_fixed_duration}\\
\textrm{s.t.}\quad~ &    T_i\geq 0,\quad\forall i\in\N, \nonumber\\[2mm]
  & T_i \geq L_i - \frac{\lambda_i}{\mu_i-\lambda_i}\theta_i ,\quad\forall i\in\N, \nonumber \\
  & T_i \geq L_i - \frac{\lambda_i}{\mu_i-\lambda_i} \bar{L}_i ,\quad\forall i\in\N, \nonumber\\[1mm]
  & \eqref{constraint:one_cycle}. \nonumber
\end{align}
This is because the objective is decreasing in $T_i$ for all $i\in\N$. As a consequence,  
one of the first three constraints must be binding at optimality, equivalent to \eqref{constraint:hat_T_i_equilibrium_constraints}.

To further linearize \eqref{prob:lf_fixed_duration},  define auxiliary variables $g = 1/(L_1+\bar{L}_1)\in\mathbb{R}_+$, $x=L/(L_1 + \bar{L}_1)\in\mathbb{R}_+^n$, $\bar{x} = \bar{L}/(L_1+\bar{L}_1)\in\mathbb{R}_+^n$, and $y = T/(L_1+\bar{L}_1)\in\mathbb{R}_+^n$. Since $L_1+\bar{L}_1=L_1+ 1^\top L_{-1}+1^\top \tau>0$, problem \eqref{prob:lf_fixed_duration} is equivalent to the following linear program \eqref{prob:LP_fixed_duration} (see, e.g., Section 4.3.2 of \citealt{boyd_2004_convex_book}):
\begin{align}
\max_{x,\bar{x},y\in\mathbb{R}^n_{+},g\in\mathbb{R}_{+}}\quad &  
   \mu^\top x -(\mu-\lambda)^\top y\tag{$\mathsf{LP}$}\label{prob:LP_fixed_duration}\\
\textrm{s.t.}\qquad~ & y_i\geq 0,\quad \forall i\in\N, \nonumber\\[1mm]
  & y_i \geq x_i -  \frac{\lambda_i\theta_i}{\mu_i-\lambda_i} g ,\quad \forall i\in\N, \nonumber \\
  & y_i \geq x_i -  \frac{\lambda_i}{\mu_i-\lambda_i}\bar{x}_i,\quad \forall i\in\N, \nonumber \\[1mm]
   & x_i + \bar{x}_i = 1,\quad \forall i \in \N,\nonumber \\
    & \bar{x}_i = 1^\top x_{-i} + 1^\top \tau \cdot g ,\quad \forall i \in \N.\nonumber 
\end{align}
The (first three) equilibrium constraints in \eqref{prob:lf_fixed_duration} are equivalently translated into the first three linear inequalities in \eqref{prob:LP_fixed_duration}. 
The penultimate constraints are auxiliary constraints introduced from linearizing the fractional objective. 
The last constraints are the transformed constraints \eqref{constraint:one_cycle}.

After obtaining the optimal solutions of \eqref{prob:LP_fixed_duration}, $x^\ast,\bar{x}^\ast,y^\ast,g^\ast$, 
the optimal \on-\off durations $(L^\ast,\bar{L}^\ast)$ and the corresponding post-clearance durations $T_i(L_i^\ast,\bar{L}_i^\ast)$ can be recovered as:
\begin{align*}
    L_i^\ast = \frac{x_i^\ast }{ g^\ast}, \quad 
    \bar{L}_i^\ast =   \frac{\bar{x}_i^\ast}{g^\ast}, \quad  
    T_i(L_i^\ast,\bar{L}_i^\ast) = \frac{y_i^\ast}{g^\ast},\quad \forall i \in \N.
\end{align*}  

\begin{proposition}
\label{prop:LP_fixed_duration}
The throughput-maximizing \on-\off durations $(L^\ast,\bar{L}^\ast)$ can be found 
via solving the linear program \eqref{prob:LP_fixed_duration}. 
\end{proposition}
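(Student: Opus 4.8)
The plan is to verify the chain of equivalences $\eqref{prob:fixed_duration}\Leftrightarrow\eqref{prob:lf_fixed_duration}\Leftrightarrow\eqref{prob:LP_fixed_duration}$, each step preserving the optimal value together with an explicit map that carries an optimizer of one program to an optimizer of the previous one; then an optimal solution $(x^\ast,\bar{x}^\ast,y^\ast,g^\ast)$ of the linear program $\eqref{prob:LP_fixed_duration}$ yields the throughput-maximizing durations via the stated recovery $L^\ast_i=x^\ast_i/g^\ast$, $\bar{L}^\ast_i=\bar{x}^\ast_i/g^\ast$. Two bookkeeping facts come first: by constraint $\eqref{constraint:one_cycle}$ and the standing assumption $1^\top\tau>0$, the denominator satisfies $L_1+\bar{L}_1=L_1+1^\top L_{-1}+1^\top\tau\ge 1^\top\tau>0$ on the whole feasible set, so the objective of $\eqref{prob:fixed_duration}$ is well defined, and $\eqref{constraint:one_cycle}$ makes $L_i+\bar{L}_i=1^\top L+1^\top\tau$ the same for every $i$; moreover the throughput numerator rewrites as $\sum_i[\mu_i(L_i-T_i)+\lambda_i T_i]=\mu^\top L-(\mu-\lambda)^\top T$, matching $\eqref{prob:lf_fixed_duration}$.

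For $\eqref{prob:fixed_duration}\Leftrightarrow\eqref{prob:lf_fixed_duration}$, the only change is that the equilibrium identity $T_i=T_i(L_i,\bar{L}_i)$ of \Cref{coro:fixed_duration_post_clearance_duration}, namely $T_i=\max\{L_i-\lambda_i\theta_i/(\mu_i-\lambda_i),\,L_i-\lambda_i\bar{L}_i/(\mu_i-\lambda_i),\,0\}$, is relaxed in $\eqref{prob:lf_fixed_duration}$ to the three inequalities $T_i\ge 0$, $T_i\ge L_i-\lambda_i\theta_i/(\mu_i-\lambda_i)$, $T_i\ge L_i-\lambda_i\bar{L}_i/(\mu_i-\lambda_i)$. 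I would argue this relaxation is exact at optimality: since $\mu_i>\lambda_i$ and $L_1+\bar{L}_1>0$, the objective $[\mu^\top L-(\mu-\lambda)^\top T]/(L_1+\bar{L}_1)$ is strictly decreasing in each $T_i$, so any optimal solution of $\eqref{prob:lf_fixed_duration}$ pushes $T_i$ down to the smallest value allowed by the three inequalities, which is exactly the right-hand side of $\eqref{eq:equilibrium_fd_hat_T_i}$; hence it is feasible for $\eqref{prob:fixed_duration}$ with the same value. Conversely any feasible point of $\eqref{prob:fixed_duration}$ is feasible for $\eqref{prob:lf_fixed_duration}$ with equal value, so the two optimal values coincide.

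For $\eqref{prob:lf_fixed_duration}\Leftrightarrow\eqref{prob:LP_fixed_duration}$, this is the Charnes--Cooper linearization of a linear-fractional program whose denominator is positive on the entire feasible set (see, e.g., Section~4.3.2 of \citealt{boyd_2004_convex_book}). Setting $g=1/(L_1+\bar{L}_1)$ and $x=gL$, $\bar{x}=g\bar{L}$, $y=gT$ and multiplying every constraint of $\eqref{prob:lf_fixed_duration}$ through by $g>0$ turns the constant data $\theta_i$ and $1^\top\tau$ into the terms $\lambda_i\theta_i g/(\mu_i-\lambda_i)$ and $1^\top\tau\cdot g$ appearing in $\eqref{prob:LP_fixed_duration}$; the normalization $g(L_1+\bar{L}_1)=1$ becomes $x_1+\bar{x}_1=1$, which by the identity $L_i+\bar{L}_i=L_1+\bar{L}_1$ is equivalent to $x_i+\bar{x}_i=1$ for all $i$; and the objective becomes the linear form $\mu^\top x-(\mu-\lambda)^\top y$. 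Conversely, from an optimal $(x^\ast,\bar{x}^\ast,y^\ast,g^\ast)$ with $g^\ast>0$ one recovers $(L^\ast,\bar{L}^\ast,T^\ast)=(x^\ast,\bar{x}^\ast,y^\ast)/g^\ast$, feasible for $\eqref{prob:lf_fixed_duration}$ with the same value, and by the previous step this is an optimizer of $\eqref{prob:fixed_duration}$.

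The main obstacle is the treatment of the variable $g$ in this last step: a priori an optimal solution of $\eqref{prob:LP_fixed_duration}$ could lie on the face $g^\ast=0$, which corresponds to no finite choice of durations, so the recovery by division by $g^\ast$ would fail. I would handle this by the standard linear-fractional-programming fact that, since $\eqref{prob:lf_fixed_duration}$ is feasible (e.g.\ $L=0$, $\bar{L}_i=1^\top\tau$, $T=0$) and its denominator is positive throughout, the two programs share the same optimal value and an optimizer with $g^\ast>0$ can be selected; one can make this self-contained by noting that the throughput is bounded above by $1^\top\lambda$ (from $T_i\ge L_i-\lambda_i\bar{L}_i/(\mu_i-\lambda_i)$ one gets $\mu_i(L_i-T_i)+\lambda_i T_i\le\lambda_i(L_i+\bar{L}_i)$, summed and divided by $L_1+\bar{L}_1$), so the supremum is finite, and then checking that the $g=0$ slice of the LP feasible region contributes nothing beyond what $g>0$ points attain and hence can be normalized away. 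Granting this, the recovery formulas are valid and the proposition follows; the remaining verifications are routine.
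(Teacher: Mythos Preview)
Your proposal is correct and follows essentially the same route as the paper: the paper's own proof of \Cref{prop:LP_fixed_duration} is simply ``This follows from the discussions before \Cref{prop:LP_fixed_duration},'' and those discussions are precisely the two equivalences $\eqref{prob:fixed_duration}\Leftrightarrow\eqref{prob:lf_fixed_duration}$ (via monotonicity of the objective in $T_i$) and $\eqref{prob:lf_fixed_duration}\Leftrightarrow\eqref{prob:LP_fixed_duration}$ (via the Charnes--Cooper transform, citing \citealt{boyd_2004_convex_book}). You are in fact more careful than the paper in flagging the $g^\ast=0$ boundary case, which the paper does not explicitly address; your sketch for handling it (the throughput is bounded by $1^\top\lambda$, and the $g=0$ slice of \eqref{prob:LP_fixed_duration} yields at most $\max_i\lambda_i$, a value approachable by $g>0$ points) is the right idea, though you leave the last verification informal.
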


The formulation \eqref{prob:LP_fixed_duration} provides an efficient computational method for identifying the optimal \on-\off durations. To gain further structural insights, we now consider the special case in which the service rates are identical across all queues. In this scenario, we find that an \emph{optimal equilibrium outcome}, referred to as the equilibrium outcome under some optimal \on-\off durations, is exhaustive.%

\begin{theorem}
\label{thm:polling_identical_service_rates}
If all queues have the same service rate, 
an optimal equilibrium outcome of problem \eqref{prob:fixed_duration} is exhaustive. Furthermore, this optimal equilibrium outcome must be of the type shown in \Cref{subfig:exh_case1} or \Cref{subfig:exh_case4}. %
\end{theorem}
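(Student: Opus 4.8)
The plan is to exploit the linear-programming reformulation \eqref{prob:LP_fixed_duration} together with the closed-form post-clearance duration in \Cref{coro:fixed_duration_post_clearance_duration}. Set $\mu_i \equiv \mu$ for all $i$. The key observation is that an equilibrium outcome is exhaustive if and only if the post-clearance durations satisfy no binding constraint forcing $T_i$ strictly above both of the ``exhaustive'' values; more precisely, by \eqref{eq:equilibrium_fd_hat_T_i} the outcome at queue $i$ is of the non-exhaustive type only when $T_i(L_i,\bar L_i)=0$ \emph{and} simultaneously $L_i < (\lambda_i/(\mu-\lambda_i))\bar L_i$ with $\bar L_i < \theta_i$, i.e.\ when queue $i$ sits in the regime of \Cref{thm:fixed_duration_nonexhaustive_equilibrium}. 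So the statement reduces to: there is an optimal solution of \eqref{prob:LP_fixed_duration} (equivalently of \eqref{prob:lf_fixed_duration}) in which every queue $i$ either has $y_i>0$ with a geometry matching \Cref{subfig:exh_case1} (first term binding) or has $x_i \le (\lambda_i/(\mu-\lambda_i))\bar x_i$ or $\bar x_i \ge \theta_i g$ (so \Cref{subfig:exh_case4} applies), and moreover that at least one \emph{global} optimum is of the single type \ref{subfig:exh_case1} or \ref{subfig:exh_case4} uniformly.

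First I would argue that the objective $\mu\,1^\top L - (\mu-\lambda)^\top T$ over the cycle length is maximized by pushing throughput toward $\lambda_i$ on as many queues as possible. Concretely, I would start from an arbitrary optimal solution and perform an exchange argument on the ``$\on$ mass'' $L_i$: because all service rates are equal, reallocating $\on$ time from a queue in the non-exhaustive regime (where the marginal throughput gain of extra $\on$ time is only $\mu$ but the queue never empties, so the marginal return is governed by the cycle-length denominator) to a queue that can be made first-best (throughput $\lambda_i$, realized in \Cref{subfig:exh_case4}) weakly increases $\tp$; this is where the equal-rate hypothesis is essential, since it makes the ``served per unit $\on$ time'' identical across queues and lets the denominator term $L_1+\bar L_1 = 1^\top L + 1^\top\tau$ be compared cleanly. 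The second step is to show that any queue whose equilibrium outcome is non-exhaustive can be replaced — holding the cycle length and the other queues' $\on$ durations fixed, hence holding all $\bar L_j$ and the objective's denominator fixed — by shortening its $\on$ duration $L_i$ just enough to reach the boundary $L_i = (\lambda_i/(\mu-\lambda_i))\bar L_i$ of \Cref{subfig:exh_case4}, which only \emph{decreases} the $\on$ mass hence frees time that can be redistributed, and at that boundary the throughput contribution is already the maximal $\lambda_i(L_i+\bar L_i)$ per cycle; combined with Step~1 this yields an exhaustive optimal outcome. Finally, for the ``type \ref{subfig:exh_case1} or \ref{subfig:exh_case4}'' refinement I would note that \Cref{subfig:exh_case2} and \Cref{subfig:exh_case3} both have $T_i=0$ with discharge rate $\mu$ and $\bar J_i>0$, so the served-per-cycle count $\mu L_i$ is strictly below $\lambda_i(L_i+\bar L_i)$; a local perturbation increasing $L_i$ (and correspondingly the $\bar L_j$'s) moves queue $i$ toward \ref{subfig:exh_case1}/\ref{subfig:exh_case4} while the denominator grows only linearly, and I would check via the LP optimality conditions (complementary slackness on the three $y_i$-constraints) that such a solution cannot be strictly optimal unless it already coincides with the \ref{subfig:exh_case1}/\ref{subfig:exh_case4} geometry.

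The main obstacle I anticipate is Step~2's bookkeeping of the cycle-length coupling: changing $L_i$ changes $\bar L_j = 1^\top L_{-j} + 1^\top\tau$ for every $j\neq i$, which changes which case each \emph{other} queue lands in (via the thresholds $\bar L_j \lessgtr \theta_j$ and $L_j \lessgtr \lambda_j\bar L_j/(\mu-\lambda_j)$), so the exchange argument is not a single-queue local move but must be organized globally. I expect the clean way around this is to work entirely in the LP variables $(x,\bar x,y,g)$, where the constraints are linear and the objective $\mu^\top x - (\mu-\lambda)^\top y$ is linear, and to prove that the face of the feasible polytope on which the optimum is attained contains a vertex at which, for each $i$, either $y_i = x_i - (\lambda_i\theta_i/(\mu-\lambda_i))g > 0$ (type \ref{subfig:exh_case1}) or $x_i \le (\lambda_i/(\mu-\lambda_i))\bar x_i$ with $y_i=0$ (type \ref{subfig:exh_case4}); this turns the whole argument into a statement about basic optimal solutions of a structured LP plus the identification, via \Cref{thm:fixed_duration_nonexhaustive_equilibrium}'s characterization, of exactly which $(x_i,\bar x_i)$-regions correspond to non-exhaustive outcomes that must therefore be excluded at optimality.
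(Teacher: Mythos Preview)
Your framing is right: the task reduces to showing that at an optimal $(L,\bar L)$, no queue lies in the non-exhaustive regime $\bar L_i<\theta_i$, $L_i<\lambda_i\bar L_i/(\mu-\lambda_i)$ (nor in the exhaustive subcases of \Cref{subfig:exh_case2} or \Cref{subfig:exh_case3}), and you correctly isolate the cycle-length coupling as the real difficulty. But the exchange argument as sketched is broken. In Step~2 you propose to shorten $L_i$ ``just enough to reach the boundary $L_i=\lambda_i\bar L_i/(\mu-\lambda_i)$'' while holding the cycle length and the other $L_j$ fixed. This cannot work: since $\bar L_i=1^\top L_{-i}+1^\top\tau$ is determined by the \emph{other} $L_j$, holding them fixed freezes $\bar L_i$; and in the non-exhaustive regime $L_i<\lambda_i\bar L_i/(\mu-\lambda_i)$, shortening $L_i$ moves \emph{away} from that boundary, not toward it. Moreover, any genuine change in $L_i$ changes the cycle length $1^\top L+1^\top\tau$, so the ``hold fixed'' hypothesis is self-contradictory. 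The fallback LP-vertex plan in your last paragraph names the right object but supplies no mechanism for why an optimal basic solution should avoid the non-exhaustive interior.

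The paper's argument bypasses exchange reasoning entirely by using the equal-$\mu$ hypothesis much more sharply. Write $\mu_i\equiv\mu_0$. For any partition of queues into $\I$ (those with $T_i>0$, hence already of type \Cref{subfig:exh_case1} or \Cref{subfig:exh_case4}) and $\barI$ (with $T_{\barI}=0$), the throughput collapses to
\[
\mu_0 \;-\; \frac{\mu_0\,1^\top\tau + \sum_{i\in\I}(\mu_0-\lambda_i)T_i}{1^\top\tau + 1^\top L},
\]
which for fixed $(L_\I,T_\I)$ is \emph{strictly increasing in $1^\top L_{\barI}$}. So the subproblem over $L_{\barI}$, subject only to the two linear constraints per queue that encode $T_i=0$ (namely $L_i\le\lambda_i\theta_i/(\mu_0-\lambda_i)$ and $L_i\le\lambda_i\bar L_i/(\mu_0-\lambda_i)$), is simply to maximize $1^\top L_{\barI}$. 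After the substitution $z_i:=\lambda_i\theta_i/(\mu_0-\lambda_i)-L_i$ this becomes minimizing $1^\top z$ over a polyhedron $\{0\le z\le c,\ m+Mz\ge0\}$ in which $M$ is a $Z$-matrix. Such a polyhedron has a least element, that least element is optimal, and at a least element each coordinate satisfies $z_i=0$ or $(m+Mz)_i=0$---which translate precisely to the binding constraint of \Cref{subfig:exh_case1} or \Cref{subfig:exh_case4} respectively. The equal-rate assumption is exactly what collapses the objective to a monotone function of $1^\top L$ and makes the $Z$-matrix least-element argument available; your proposal does not exploit either of these structural facts.
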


Besides, when the service rates are the same, we find that an optimal equilibrium post-clearance duration can sometimes be \textit{strictly} positive (without achieving the first-best throughput). 
Although having a positive post-clearance duration reduces the effective service rate for that queue, it also increases the cycle length, thereby diluting the negative impact of switchover times. As a result, the system can achieve a higher overall throughput.

One might wonder whether it is the case that the optimal equilibrium outcome is always exhaustive, given that clearing the queue should, intuitively, reduce waiting times and attract more customers. However, this intuition does \textit{not} always hold when service rates differ across queues. 
The reason is that if a particular queue has sufficiently patient customers, even a non-exhaustive queue can attract all arriving customers of this queue. 
On the other hand, a shorter \on duration in that queue can reduce its negative externalities on other queues with higher service rates. In turn, this makes other queues more appealing to their arriving customers, ultimately improving the overall throughput.

\section{Endogenous Regime: Exhaustive Service Policies}
\label{sec:exhaustive}

We now turn to the endogenous regime, where the server adopts an exhaustive service policy and the \on-\off durations of each queue depend on customers’ joining strategies. In this section, we assume $n \geq 2$.\footnote{The endogenous \on-\off durations under an exhaustive service policy for $n=1$ are considered in \cite{guo_2011_strategic_behaviors_markovian_vacation} under a Markovian system,
where the server remains off as long as the queue is empty and returns to service once the queue length reaches a certain threshold. Our fluid model can also incorporate this single-queue situation and allow the server to stay longer even when the queue is empty. However, since the resulting insights are similar to theirs, we omit the case of $n=1$.
Specifically, when the server is off the queue, arriving customers tend to follow the crowd; in contrast, when the server is actively serving the queue, customers tend to avoid the crowd.} 
The \off duration of queue $i$ is endogenously determined by the service times of the other queues and the associated switchover times.

As mentioned earlier, an exhaustive service policy is characterized by the post-clearance durations $T = (T_i)_{i\in\N}$, where $T_i \geq 0$ is the \emph{additional} time the server stays at queue $i$ after it \emph{first} becomes empty in a cycle. %
In the special case $T = 0$, we obtain the \emph{pure exhaustive service policy} denoted by $\pe$, 
where the server leaves \emph{immediately} once the queue becomes empty. In contrast to the exogenous regime in \Cref{sec:exogenous_regime}, under exhaustive service policies, the post-clearance durations $T$ are directly controlled by the planner, while the \on and \off durations now
depend on customers’ joining strategies. Let $L_i(T)$ and $\bar{L}_i(T)$ be the \on and \off durations of queue $i$ in one cycle at equilibrium under post-clearance durations $T$, respectively. 
We have $\bar{L}_i(T) = 1^\top \tau + 1^\top L_{-i}(T)$, where $L_{-i}(T)=(L_j(T))_{j\neq i}$ represent the \on durations of other queues and $\tau=(\tau_i)_{i\in \N}$ is the vector of switchover times.
This equation says that the \off duration of queue $i$ equals the sum of all switchover times plus the total \on times of the other queues.
Similarly, let $J_i(T)$ and $\bar{J}_i(T)$ denote the customers’ joining duration and not-joining duration in one cycle at equilibrium under post-clearance durations $T$. When there is no confusion, we may omit the explicit dependence on $T$ for ease of notation.

In what follows, we first characterize customers’ equilibrium joining strategies in Section \ref{subsec:equilibrium_analysis}, and then determine the post-clearance durations $T$ that maximize throughput in Section \ref{sec:opt_ex}. Finally, in \Cref{subsec:connection}, we make the connection with the exogenous regime studied in \Cref{sec:exogenous_regime}.

\smallskip

\noindent\textit{Notation and Glossary.}
When we say $y(x)\in\mathbb{R}^n$ is increasing with $x\in\mathbb{R}^m$, we mean that for any $i\in[n]$, $y_i(x)$ is increasing with $x_j$ for all $j\in [m]$ given $x_{-j}$.
 For $q\in\mathbb{R}^n$ and $M\in\mathbb{R}^{n\times n}$, $\textsf{LCP}$($q,M$) 
 represents the following linear complementarity problem:
 \begin{align}
 \label{prob:lcp}
\text{Find}\quad &z \in \mathbb{R}^{n} \tag{\textsf{LCP}$(q,M)$} \\
\text{s.t.}\quad& z  \geq 0,~ q+Mz\geq0,  \nonumber \\
& z^\top(q+Mz)  = 0\nonumber.
\end{align}
Observe that, since $z\geq 0$ and $q+Mz\geq 0$, the last complementarity constraint is equivalent to $z_i\cdot (q+Mz)_i=0$ for all $i\in [n]$.

\subsection{Equilibrium Analysis}
\label{subsec:equilibrium_analysis}

The following result characterizes the structure of the equilibrium outcome under an exhaustive service policy. \Cref{fig:exhaustiveservice_equilibrium} below is an illustration.

\begin{lemma}[Equilibrium Structure]
\label{lem:exhaustive_equilibrium_structure}
At an equilibrium outcome under an exhaustive service policy, the following hold for any queue $i \in \N$ during each cycle:
\begin{enumerate}
    \item The beginning of the not-joining duration of queue $i$ (marked by $\clubsuit$)
    must coincide with the time epoch when the server leaves queue $i$.
    \item The not-joining period of queue $i$
    must be 
    a contiguous interval contained in the \off period of queue $i$, i.e., it is \emph{not} possible that customers start to join during an \on period.
\end{enumerate}
\end{lemma}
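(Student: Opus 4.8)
The plan is to leverage the waiting-time structure developed for the exogenous regime, since an exhaustive service policy with post-clearance durations $T$ induces, along any periodic equilibrium outcome, a well-defined pair of realized \on-\off durations $(L_i(T),\bar L_i(T))$ for each queue. Once these realized durations are fixed along the equilibrium path, the waiting-time formula in \Cref{lem:waiting_time} and the qualitative properties in \Cref{lem:fixed_duration_waitingtime} apply verbatim to queue $i$ in isolation, because the waiting time of a queue-$i$ arrival depends only on queue $i$'s own state and its own realized on-off pattern (the other queues enter only through the length of the off period, which is already absorbed into $\bar L_i(T)$). So the strategy is: (i) argue that the equilibrium outcome is periodic and hence induces fixed $(L_i(T),\bar L_i(T))$; (ii) invoke \Cref{lem:fixed_duration_waitingtime}(i) to conclude the joining and not-joining periods are each contiguous within a cycle; (iii) pin down \emph{where} in the cycle the not-joining period sits, using the exhaustive-service constraint that the server never leaves a non-empty queue.

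For part (i) of the statement, I would reason as follows. Under an exhaustive policy the server departs queue $i$ exactly when the post-clearance duration $T_i$ has elapsed after the first clearance; at that instant the queue is empty (if $T_i>0$, arrivals during the post-clearance window are served immediately, so the queue stays at zero) or, if $T_i=0$, it has just hit zero. Right after departure, the off period begins and lasts $\bar L_i(T)$, which by assumption is strictly positive (since $1^\top\tau>0$). A customer arriving an instant after the server leaves must wait essentially the full off period $\bar L_i(T)$ before any service, i.e. the waiting time jumps up by $\bar L_i(T)$ at the departure epoch — this is exactly the jump identified in \Cref{lem:fixed_duration_waitingtime}(ii), and periodicity forces it to be the unique jump in the cycle and to occur at the joining-to-not-joining transition. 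Hence the beginning of the not-joining period coincides with the server's departure. Conversely, just before departure the queue is empty (or being kept empty), so arrivals there have negligible waiting time, well below $\theta_i$, and they join; thus the not-joining period cannot start earlier than the departure epoch. Combining, $\clubsuit$ is exactly the departure time.

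For part (ii), contiguity of the not-joining period follows from \Cref{lem:fixed_duration_waitingtime}(i) applied to the realized durations $(L_i(T),\bar L_i(T))$. It remains to show the not-joining period ends before the \on period of queue $i$ begins — equivalently, that the waiting time has already dropped back to $\theta_i$ by the time the server returns. Suppose not: then customers are still not joining at the start of the on period, so at the moment service resumes the queue is empty. But an exhaustive server that finds an empty queue on arrival would leave immediately (post-clearance applies from the \emph{first} clearance within an active visit; with nothing to clear, $L_i(T)=0$ effectively), contradicting the premise that queue $i$ receives a genuine \on period in the cycle — or, in the degenerate reading, forcing $L_i(T)=0$ which contradicts $\mu_i>\lambda_i$ together with $\bar L_i(T)>0$ (a queue with a strictly positive off period and no service accumulates unbounded backlog, impossible at a periodic equilibrium). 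Either way we get a contradiction, so the not-joining period is contained in the off period.

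The main obstacle is step (i)'s bookkeeping: making precise that an exhaustive policy does induce a \emph{periodic} equilibrium outcome with well-defined $(L_i(T),\bar L_i(T))$, and that there is no pathological equilibrium in which, say, the realized on-duration of some queue varies cycle to cycle or the system never settles. I would handle this by restricting attention to periodic outcomes (as the paper does in the exogenous analysis) and then noting that once periodicity is imposed, the per-queue waiting-time machinery transfers directly; the interaction across queues is entirely mediated through the scalar $\bar L_i(T)=1^\top\tau+1^\top L_{-i}(T)$, so no genuinely multi-queue argument is needed for this structural lemma — the coupling will matter only later, when characterizing the equilibrium $L_i(T)$ via the LCP.
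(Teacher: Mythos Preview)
Your overall strategy---treating the realized durations $(L_i(T),\bar L_i(T))$ along the periodic equilibrium outcome as exogenous and then importing the exhaustive-outcome structure from \Cref{thm:equilibrium_fixedduration_exhaustive}---is exactly what the paper does, and your argument for part (i) is sound: under an exhaustive policy the queue is necessarily empty at (and just before) the departure epoch, so arrivals there have negligible wait and join; combined with the upward jump of size $\bar L_i(T)$ immediately after departure, this correctly pins $\clubsuit$ at the departure time.

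Your part (ii), however, has a genuine gap. You correctly deduce that if the not-joining period extended into the \on\ period, the queue would be empty when the server returns. But the contradiction you chase---``the server leaves immediately'' or ``$L_i(T)=0$ forces unbounded backlog''---does not land. If $T_i>0$ the server does \emph{not} leave immediately; it stays for the post-clearance duration. And if no customers join, the queue simply stays empty: there is no backlog to blow up, so the periodic-equilibrium contradiction you invoke never materializes. The correct contradiction is one step shorter and sits right where you stopped: if the queue is empty at the start of the \on\ period and the server is active, an arriving customer has waiting time zero and therefore joins, contradicting the supposition that the not-joining period is still ongoing. More generally, under an exhaustive policy any arrival during the \on\ period is guaranteed service within the current visit (the server will not depart until the queue is cleared), so the wait is just $q_i(t)/\mu_i$; since the queue at the start of the \on\ period is at most $\lambda_i\theta_i$ (those who joined during \off\ had wait at most $\theta_i$, hence joined at most $\theta_i$ before the server's return), this wait is at most $\rho_i\theta_i<\theta_i$, and customers join throughout the \on\ period. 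This is precisely the content of the paper's reduction to the Figure~\ref{subfig:exh_case1}/\ref{subfig:exh_case4} patterns of \Cref{thm:equilibrium_fixedduration_exhaustive}.
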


\begin{figure*}[htbp]
  \centering
  \resizebox{.50\textwidth}{!}{\begin{tikzpicture}[font=\large, line width=1pt] 

\draw (-3.4,6) node[left, rotate=90] {{\small{queue length}}};

\draw[line width=1.6pt](-3,1) -- (-2,1) -- (0,4)  -- (3,1) -- (4,1) -- (5,1) -- (7,4);

\draw[thick,dashed] (-3,0) -- (-3,6);
\draw[thick,dashed] (0,0) -- (0,6);
\draw[thick,dashed] (4,0) -- (4,6);
\draw[thick,dashed] (7,0) -- (7,6);

\draw [to-to,red] (-3,5) --(0,5);
\draw (-1.5, 5) node[right, color=red, above] {$\bar{L}_i(T)$};
\draw [to-to,red] (0,5) --(4,5);
\draw (2,5) node[right, color=red, above] {$L_i(T)$};
\draw [to-to,red] (4,5) --(7,5);
\draw (5.5, 5) node[right, color=red, above] {$\bar{L}_i(T)$};

\draw	(0,4.2) node[anchor=west,red] {\small{$\lambda_i\cdot \alpha_i(T) \cdot \theta_i$}};
\draw	(-3,1.2) node[anchor=east,red] {$0$};

\draw[thick,dotted] (-2,1) -- (-2,0);
\draw [thick,dotted] (5,1) --(5,0);

\draw [to-to,blue] (-2,0.2) --(4,0.2);
\draw (1,0.2) node[right, color=blue, above] {$J_i$};

\draw [to-to,blue] (-3,0.2) --(-2,0.2);
\draw (-2.5,0.2) node[right, color=blue, above] {$\bar{J}_i$};
\draw [to-to,blue] (4,0.2) --(5,0.2);
\draw (4.5,0.2) node[right, color=blue, above] {$\bar{J}_i$};

\draw[thick,dotted] (3,1) -- (3,2.2);
\draw [to-to] (3,2) --(4,2);
\draw (3.5,2) node[right, above] {$T_i$};

\draw [to-to,red] (-2,1) --(0,1);
\draw (-1,1) node[right, color=red, above] {\small{$\alpha_i(T)\cdot \theta_i$}};

\draw [to-to,red] (5,1) --(7,1);
\draw (6,1) node[right, color=red, above] {\small{$\alpha_i(T)\cdot \theta_i$}};

\draw[thick,dotted] (-2.2,1) -- (-2.2,2.2);
\draw [to-to] (-3,2) --(-2.2,2);
\draw (-2.6,2) node[right, above] {$\tau_i$};

\draw [above] (5.8,2.5) node {\small{$\lambda_i$}};
\draw [above] (-1,2.8) node {\small{$\lambda_i$}};
\draw (2.1,2.9) node {\small{$-\mu_i+\lambda_i$}};

 \draw (-3,1) node {\small{$\clubsuit$}};
\draw (4,1) node {\small{$\clubsuit$}};

\draw (-2,1) node {$\diamondsuit$};
\draw (5,1) node {$\diamondsuit$};

\end{tikzpicture}}
  \caption{Queueing Dynamics of Queue $i$ under the Exhaustive Service Policy $\e$. %
  \emph{Note}. The switchover time $\tau_i$ can be larger than $\bar{J}_i$. 
  Besides, the not-joining duration $\bar{J}_i$ can be zero, in which case $\diamondsuit$ coincides with $\clubsuit$.
  }
\label{fig:exhaustiveservice_equilibrium}
\end{figure*}

\Cref{lem:exhaustive_equilibrium_structure} implies that the pattern of herding cycles is retained,
as plotted in \Cref{fig:exhaustiveservice_equilibrium}.
It can be shown that the waiting time decreases from ${\clubsuit}$ to the next cycle’s ${\clubsuit}$. 
(The waiting time pattern is the same as the exogenous case, see \Cref{fig:waiting_time_fixed_duration}.)
This implies that when the server is \off, a higher queue length leads to a lower waiting time. In other words, FTC behavior dominates during the \off phase: a longer queue signals earlier server activation, reducing the waiting time for arriving customers and encouraging them to join.
Conversely, when the server is \on, a lower queue length leads to a lower waiting time, indicating 
ATC behavior.

We denote the duration from the epoch 
when the first customer joins during the \off period
(marked by $\diamondsuit$ in \Cref{fig:exhaustiveservice_equilibrium})
to the epoch when the server returns to queue $i$ as $\alpha_i(T)\cdot \theta_i$ for some $\alpha_i(T)\geq 0$.
We claim that knowing $\alpha(T)=(\alpha_i(T))_{i=1}^n$ is sufficient to determine the equilibrium outcome. To see this, %
note that the maximum queue length is $\lambda_i \cdot \alpha_i(T) \cdot \theta_i$, and the \on duration of queue $i$ is
\begin{align}
\label{eq:exhaustiveservice_on_duration_alpha_j_theta_j}
    L_i(T) = \frac{\lambda_i \cdot \alpha_i(T) \cdot \theta_i}{\mu_i-\lambda_i} + T_i,
\end{align}
while the \off duration of queue $i$ is
\begin{align}
\label{eq:exhaustiveservice_off_duration_alpha_j_theta_j}
  \bar{L}_i(T) = 1^\top \tau + 1^\top L_{-i}(T)
  = 1^\top \tau + \sum_{j\neq i} \left(\frac{\lambda_j \cdot \alpha_j(T) \cdot \theta_j}{\mu_j-\lambda_j} + T_j\right).
\end{align}
Since $\alpha(T)$ is sufficient to determine the equilibrium outcome, we refer to $\alpha(T)$ as the \textit{equilibrium variable} or simply the \textit{equilibrium} of the exhaustive service policy $\e$. When there is no confusion, we drop the dependence on $T$ for ease of notation.

We now describe how to solve for $\alpha$.
First, to ensure that the customer arriving at
$\diamondsuit$ joins, we must have $\alpha_i \in [0,1]$. We then have the following conditions depending on whether $\bar{J}_i$ is zero or not:
\begin{align}
\text{either } \bar{J}_i > 0:\quad &\alpha_i = 1,\quad \text{$\alpha_i\theta_i$ is strictly less than $\bar{L}_i(T)$}, \label{eq:exhaustive_alpha_either}\\
\text{or } \bar{J}_i = 0:\quad &\alpha_i \in [0,1],\quad \text{$\alpha_i\theta_i$ equals $\bar{L}_i(T)$}. \label{eq:exhaustive_alpha_or}
\end{align}
If $\bar{J}_i > 0$, the customer arriving at $\clubsuit$ must have a waiting time strictly greater than the waiting patience, and thus the customer arriving at $\diamondsuit$'s waiting time is exactly equal to the waiting patience. 
On the other hand, if $\bar{J}_i = 0$, customers always join, and the one arriving at $\diamondsuit$ (coinciding with $\clubsuit$) must have a waiting time of exactly $\bar{L}_i(T)$.
(A similar phenomenon appears in the exogenous setting; see \Cref{lem:fixed_duration_waitingtime}.)

By substituting \eqref{eq:exhaustiveservice_off_duration_alpha_j_theta_j} into \eqref{eq:exhaustive_alpha_either} and \eqref{eq:exhaustive_alpha_or}, we obtain the constraints on the equilibrium variable $\alpha$, which can be written in a more compact way in \ref{prob:alpha}. That is, any solution of the following problem \ref{prob:alpha} is an equilibrium of the exhaustive service policy $\e$:
\begin{subequations}
    \begin{align}
\text{Find}\quad &\alpha \in \mathbb{R}^n \tag{$\mathsf{EQ}(T)$}\label{prob:alpha}\\
\text{s.t.}\quad &\alpha \geq 0, \label{con:non_neg}\\
&\alpha\leq 1, 
\label{con:ir}\\
&A\alpha\leq b(T), 
\label{con:period}\\
&(1 - \alpha)^\top (b(T) - A\alpha) = 0,\label{con:comp}
\end{align}
\end{subequations}
where $b(T) \in \mathbb{R}^n$ and $A \in \mathbb{R}^{n \times n}$ are defined by
\begin{align}
\label{eq:matrix_A_b}
    b(T) = (1^\top \tau + 1^\top T)1 - T,\quad
    A_{ij} = 
    \begin{cases}
        \theta_i, & \text{if } i=j\in \N\\
        -c_j, & \text{if } i\neq j\in\N
    \end{cases},
\end{align}
with $c_j = \lambda_j\theta_j/(\mu_j-\lambda_j)=\rho_j\theta_j/(1-\rho_j)$ for all $j\in\N$.
We put $c=(c_j)_{j\in\N}$. Observe that the vector $b(T)$ depends solely on the switchover times $1^\top \tau$ and the post-clearance durations $T$, while the matrix $A$ depends only on the utilization rates $\rho$ and the waiting patience $\theta$.
From \cref{eq:exhaustiveservice_on_duration_alpha_j_theta_j}, it can be seen that $c_i\alpha=\alpha_i\rho_i\theta_i/(1-\rho_i)$ represents the \on duration of queue $i$ minus its post-clearance duration, $L_i(T) - T_i$. Substituting \eqref{eq:exhaustiveservice_off_duration_alpha_j_theta_j} into $\bar{L}_i(T)$, %
constraint \eqref{con:period} ensures $\bar{L}_i(T)\geq \alpha_i \theta_i$ (by definition, see \Cref{fig:exhaustiveservice_equilibrium}). Constraint \eqref{con:comp} stipulates the ``either or" conditions in \eqref{eq:exhaustive_alpha_either} and \eqref{eq:exhaustive_alpha_or}, i.e., $\alpha_i=1$ or $\alpha_i\theta_i=\bar{L}_i(T)$. %

In the following, we use $x_{\I}\in\mathbb{R}^{|\I|}$ to denote the subvector of $x\in\mathbb{R}^{n}$ whose components are indexed by a non-empty $\I\subseteq \N$. Likewise, $A_{\I}\in\mathbb{R}^{|\I|\times|\I|}$ is the submatrix of $A\in\mathbb{R}^{n\times n}$ whose rows and columns are indexed by $\I$.
We use $1_{\I}$ (or $0_{\I}$) to denote the $|\I|$-dimensional vector of all ones (or all zeros), and we may drop the subscript $\I$ when it is clear from the context.

Finding an equilibrium $\alpha$, i.e., a feasible solution of problem \ref{prob:alpha}, is equivalent to finding a set $\I(T)\subseteq\N$ such that 
\begin{subequations}
    \begin{align}
       &(A\alpha)_{\I(T)}=b_{\I(T)} , \label{eq:alljoingset_1}\\
       &\alpha_{\barI(T)}=1, \label{eq:alljoingset_2}
    \end{align}
\end{subequations}
where $\barI(T) := \N\setminus\I(T)$, and the induced $\alpha$ satisfies all constraints in \ref{prob:alpha}. 
We call such an $\I(T)$ as the \textit{equilibrium set} of exhaustive service policy $\e$. Intuitively, $\I(T)$ is a set of queues where customers always join ($\bar{J}_i=0,~\forall i\in\I(T)$).
We may write $\I(T)$ as $\I$ when there is no confusion.

Given $\I\subseteq\N$, if $A_{\I}$ is invertible, by \eqref{eq:alljoingset_1}--\eqref{eq:alljoingset_2} it can be shown that the corresponding unique $\alpha$ is given by
\begin{subequations}
    \begin{align}
&\alpha_{\I} = A_{\I}^{-1}\left(b_{\I}+(c_{\barI}^\top 1_{\barI}) 1_{\I}\right) = \left(1^\top \tau+1^\top T +c^\top_{\barI}1_{\barI} \right)A_{\I}^{-1}1_{\I} - A_{\I}^{-1}T_{\I}, \label{eq:alpha_m}  \\
& \alpha_{\barI} = 1_{\barI} \label{eq:alpha_m_n}.
\end{align}
\end{subequations}
The $A_{\I}^{-1}$ exists if and only if $1^\top \rho_{\I}\neq 1$ as shown in \Cref{lem:inverse_mat} of Appendix \ref{app_sub:inversematrix}, where we also provide a closed-form expression of $A_{\I}^{-1}$.
Our next result shows that at equilibrium, $1^\top \rho_{\I}\neq 1$ must hold and thus the matrix $A_{\I}$ is indeed invertible. 
\begin{lemma}
\label{lem:equilibriumset}
Given an exhaustive service policy $\e$, if an equilibrium set $\I(T)\subseteq \N$  exists, we have $1^\top \rho_{\I(T)}<1$.
\end{lemma}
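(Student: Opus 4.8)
The plan is to argue by contradiction: suppose an equilibrium set $\I = \I(T)$ exists with $1^\top \rho_{\I} \geq 1$. We will derive a violation of one of the feasibility constraints in \ref{prob:alpha}, most naturally the non-negativity constraint \eqref{con:non_neg} or the box constraint \eqref{con:ir} on $\alpha_{\I}$, by exploiting the sign structure of $A_{\I}$ and the economic meaning of the entries. Recall that $A_{\I}$ is a $Z$-matrix with diagonal entries $\theta_i > 0$ and off-diagonal entries $-c_j < 0$, and that $A_{\I}\alpha_{\I} = b_{\I} + (c_{\barI}^\top 1_{\barI}) 1_{\I}$ with $\alpha_{\barI} = 1$. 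Writing row $i \in \I$ of this system explicitly gives
\begin{align}
\theta_i \alpha_i - \sum_{j \in \I \setminus \{i\}} c_j \alpha_j = \left(1^\top\tau + 1^\top T\right) - T_i + \sum_{j \in \barI} c_j,
\nonumber
\end{align}
which rearranges to $\alpha_i \theta_i = \bar{L}_i(T) - \sum_{j \in \I,\, j \neq i} c_j(1 - \alpha_j)$, i.e. exactly the identity $\alpha_i\theta_i = \bar{L}_i(T)$ when $\alpha_{\I} \leq 1_{\I}$ (since the correction term vanishes once we also use $\alpha_j = 1$ for $j \notin \I$). So on the equilibrium set the defining equation is simply $\alpha_i \theta_i = \bar{L}_i(T)$ for all $i \in \I$, and $\bar{L}_i(T) > 0$ because $1^\top\tau > 0$.

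The key step is then a summation/averaging argument. Summing the $|\I|$ equations $\alpha_i\theta_i = \bar L_i(T)$ and substituting the expression \eqref{eq:exhaustiveservice_off_duration_alpha_j_theta_j} for $\bar L_i(T)$ in terms of the $\alpha_j$'s, we get a linear relation among $\{\alpha_i\}_{i\in\I}$ whose coefficient structure is governed by $1^\top\rho_\I$. Concretely, $\sum_{i\in\I}\alpha_i\theta_i = \sum_{i\in\I}\bar L_i(T)$ and each $\bar L_i(T) = 1^\top\tau + \sum_{j\neq i}(c_j\alpha_j + T_j)$ contains the term $\sum_{j\in\I,\,j\neq i} c_j\alpha_j$. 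Collecting the coefficient of each $c_j\alpha_j$ across the $|\I|$ rows shows it appears $|\I|-1$ times, so the $\I$-restricted system, after moving everything to one side, has the form $(\,|\I|^{-1}$-type averaging$\,)$ that forces $\sum_{i\in\I}(\theta_i - (|\I|-1)\text{-weighted }c)\alpha_i$ to equal a strictly positive constant (built from $1^\top\tau$ and the $T_j$'s and the $\barI$-contributions). When $1^\top\rho_\I \geq 1$, the relevant combination $\theta_i - \sum_{j\in\I,j\neq i} c_j \cdot(\text{something})$ has the wrong sign — more precisely, one shows $A_\I 1_\I$ has a non-positive coordinate, equivalently $\theta_i \le \sum_{j\in\I\setminus\{i\}} c_j$ cannot simultaneously fail for all $i$ — and this is incompatible with $\alpha_\I \in [0,1]^{|\I|}$ together with a strictly positive right-hand side.

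The cleanest route, which I would actually write up, uses \Cref{lem:inverse_mat} directly: $A_{\I}^{-1}$ exists iff $1^\top\rho_{\I} \neq 1$, and its closed form (to be quoted from the appendix) reveals that $A_{\I}^{-1} 1_{\I}$ has all-positive entries precisely when $1^\top\rho_{\I} < 1$ and a negative entry (or the inverse blows up) otherwise. Since $\alpha_{\I} = (1^\top\tau + 1^\top T + c_{\barI}^\top 1_{\barI}) A_{\I}^{-1} 1_{\I} - A_{\I}^{-1} T_{\I}$ with the scalar multiplier $1^\top\tau + 1^\top T + c_{\barI}^\top 1_{\barI} \geq 1^\top\tau > 0$, the case $1^\top\rho_{\I} > 1$ forces some $\alpha_i < 0$ (violating \eqref{con:non_neg}) for $T$ small enough and, more carefully, for all admissible $T$ by a monotonicity/scaling check; and the boundary case $1^\top\rho_{\I} = 1$ makes $A_{\I}$ singular, so no $\alpha_{\I}$ solving \eqref{eq:alljoingset_1} with $\alpha_{\I} \le 1_\I$ exists unless the right-hand side $b_{\I} + (c_{\barI}^\top 1_{\barI})1_{\I}$ lies in the range of $A_{\I}$, which one shows it does not because $1_{\I}^\top$ (a left null vector scaled appropriately, since $1^\top\rho_\I = 1$ means $\theta^{-1}$-weighted row sums vanish) pairs with it to give a strictly positive number while pairing with anything in the range gives zero.

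The main obstacle will be handling the two sub-cases uniformly and ruling out $1^\top\rho_{\I} = 1$ cleanly — the strict inequality versus equality distinction is where the switchover assumption $1^\top\tau > 0$ does the real work, and I expect the singular case to need the explicit left-null-vector computation rather than the inverse formula. I would isolate that as the crux and keep the $1^\top\rho_\I > 1$ case as the easier sign argument via \Cref{lem:inverse_mat}.
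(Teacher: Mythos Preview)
Your approach is genuinely different from the paper's and, while it can be completed, is far more involved than necessary. The paper gives a two-sentence capacity argument: by definition of the equilibrium set $\I$, every arrival to a queue in $\I$ joins; if $1^\top\rho_\I \geq 1$ then, accounting also for the switchover time $1^\top\tau > 0$ in the boundary case, the server cannot finish serving all those arrivals, so at least one queue in $\I$ grows without bound, its customers face infinite waiting times, and they would not all join---contradicting the definition of $\I$. No matrix inverses, no null-space computations.

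Your second paragraph actually contains the seed of a clean algebraic version of the same idea: summing the identities $\alpha_i\theta_i = \bar L_i(T)$ over $i\in\I$ and using $L_i = c_i\alpha_i + T_i$ collapses, after three lines, to $C\,(1 - 1^\top\rho_\I) = 1^\top\tau + \text{(nonnegative terms)}$ where $C>0$ is the common cycle length, which is an immediate contradiction when $1^\top\rho_\I \geq 1$ and handles both sub-cases at once. But you abandon that route mid-sentence with placeholders like ``$(|\I|^{-1}\text{-type averaging})$'' and ``(something)'' and pivot to the inverse-matrix approach. That route has real gaps as written: for $1^\top\rho_\I > 1$ you assert ``some $\alpha_i < 0$ \ldots for all admissible $T$ by a monotonicity/scaling check'' without performing it (the $-A_\I^{-1}T_\I$ term has entries of mixed sign when $\bar\rho_\I<0$ and genuinely needs inspection), and the singular-case left-null-vector argument is only sketched. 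Either finish the summation argument cleanly or adopt the paper's stability argument, which delivers the same conclusion with no linear algebra at all.
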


\Cref{lem:equilibriumset} can be intuitively understood: if $1^\top \rho_{\I}(T)<1$ is violated, the queue becomes unstable if all customers of $\I(T)$ join and the length of at least one queue in $\I(T)$ would grow to infinity. Consequently, some customers in these queues would experience infinite waiting times, contradicting the definition of the all-joining set $\I(T)$.

Our next result establishes the existence and uniqueness of equilibrium $\alpha$ and shows that it can be obtained by \Cref{alg:pivoting} in at most $\bar{n}$ steps. Here $\bar{n}$ is the maximum number of queues such that the sum of these queues' utilization rates is smaller than one, i.e., $\bar{n}=\max_{\I\subseteq\N} |\I|$ such that $\rho_{\I}^\top 1_{\I}<1$.
Note that $\bar{n}$ can be easily calculated by first sorting utilization rates in ascending order; then, find $\bar{n}$ such that $\sum_{i=1}^{\bar{n}}\rho_i<1$ and $\sum_{i=1}^{\bar{n}+1}\rho_i\geq 1$. For instance, in a system with identical arrival rate $\lambda_0$ and service rate $\mu_0$ across all queues, the parameter \(\bar{n}\) is simply given by $\min\{n,\floor{\mu_0/\lambda_0}\}$. In scenarios where the service rate is very close to the arrival rate, \Cref{alg:pivoting} can identify the equilibrium in as few as \emph{one} step.
Besides, notice that $\bar{n}$ is independent of $T$.

\begin{proposition}[Equilibrium Computation and Uniqueness]
    \label{prop:n_step}
There exists a unique equilibrium $\alpha(T)$ for any exhaustive service policy $\e$. 
Besides, \Cref{alg:pivoting} finds such a unique $\alpha(T)$ in at most $\bar{n}$ steps with closed-form updates.
\end{proposition}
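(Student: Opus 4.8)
The plan is to prove, in order, existence of an equilibrium $\alpha(T)$, its uniqueness, and the correctness and $\bar n$-iteration bound of \Cref{alg:pivoting}. For \textbf{existence}, I would first note that $b(T)>0$ coordinatewise (since $1^\top\tau>0$ and $T\ge 0$) and then view \ref{prob:alpha} as the variational inequality over the box $[0,1]^n$ for the affine map $F(\alpha)=A\alpha-b(T)$: find $\alpha\in[0,1]^n$ with $F(\alpha)^\top(\alpha'-\alpha)\ge 0$ for all $\alpha'\in[0,1]^n$. A solution exists by the standard existence theorem for variational inequalities on a nonempty compact convex set with a continuous operator. The crucial observation is that any VI solution is automatically feasible for \ref{prob:alpha}: the coordinatewise VI conditions say that for each $i$ either $(A\alpha)_i=b_i$, or $\alpha_i=0$ with $(A\alpha)_i\ge b_i$, or $\alpha_i=1$ with $(A\alpha)_i\le b_i$; the middle case is impossible because $\alpha_i=0$ forces $(A\alpha)_i=-\sum_{j\ne i}c_j\alpha_j\le 0<b_i$, so in all remaining cases $A\alpha\le b(T)$ and ``$\alpha_i<1\Rightarrow(A\alpha)_i=b_i$'', which are exactly \eqref{con:period}--\eqref{con:comp}, while $\alpha\in[0,1]^n$ gives \eqref{con:non_neg}--\eqref{con:ir}. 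Hence an equilibrium set $\I(T)$ exists, and by \Cref{lem:equilibriumset} it satisfies $1^\top\rho_{\I(T)}<1$, so $|\I(T)|\le\bar n$ and $A_{\I(T)}$ is invertible with the closed form of \Cref{lem:inverse_mat}.

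For \textbf{uniqueness}, let $\alpha,\alpha'$ be equilibria with sets $\I,\I'$, put $d=\alpha-\alpha'$, $s=c^\top d$. Writing $A=\operatorname{diag}(\theta+c)-1c^\top$ gives $(Ad)_i=(\theta_i+c_i)d_i-s$, and one checks the identity $c_i/(\theta_i+c_i)=\rho_i$. On $\I'$ we have $(Ad)_i\le 0$ (it equals $0$ on $\I\cap\I'$, and $(A\alpha')_i=b_i\ge(A\alpha)_i$ on $\barI\cap\I'$), and on $\barI'$ we have $d_i\le 0$ (it is $0$ on $\barI\cap\barI'$ and $\le 0$ on $\I\cap\barI'$ since there $\alpha'_i=1\ge\alpha_i$). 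Multiplying $(Ad)_i=(\theta_i+c_i)d_i-s$ by $\rho_i\ge 0$, summing over $i\in\I'$, and using $\sum_{i\in\barI'}c_id_i\le 0$ yields $0\ge\sum_{i\in\I'}\rho_i(Ad)_i=\sum_{i\in\I'}c_id_i-s\,1^\top\rho_{\I'}\ge s(1-1^\top\rho_{\I'})$; since $1^\top\rho_{\I'}<1$, this forces $s\le 0$, and the same computation with $\alpha,\alpha'$ swapped gives $s\ge 0$, hence $s=0$. Feeding $s=0$ back makes every inequality above an equality, forcing $(Ad)_i=0$ (hence $d_i=0$) for $i\in\I'$ and $d_i=0$ for $i\in\barI'$, i.e. $\alpha=\alpha'$.

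For \textbf{the algorithm}, \Cref{alg:pivoting} maintains a candidate equilibrium set, starting from $\emptyset$ (where the induced $\alpha$ is $1$); at step $k$, if the current $\alpha^{(k)}$ satisfies \eqref{con:period} it stops, and otherwise it adds the queue(s) whose constraint in \eqref{con:period} is violated and recomputes $\alpha$ in closed form from \eqref{eq:alpha_m}--\eqref{eq:alpha_m_n} using \Cref{lem:inverse_mat}. I would prove by induction the invariant: $\I^{(k)}\subseteq\I(T)$ and $\alpha(T)\le\alpha^{(k)}\le\alpha^{(k-1)}$. The inductive step rests on three linked facts. First, a queue $i$ violating \eqref{con:period} at $\alpha^{(k)}$ must lie in $\I(T)$: otherwise $\alpha^{(k)}_i=\alpha(T)_i=1$ and $(A(\alpha^{(k)}-\alpha(T)))_i=-\sum_{j\ne i}c_j(\alpha^{(k)}_j-\alpha(T)_j)\le 0$, contradicting $(A\alpha^{(k)})_i>b_i\ge(A\alpha(T))_i$; hence $\I^{(k+1)}\subseteq\I(T)$ and $1^\top\rho_{\I^{(k+1)}}<1$. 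Second, this inequality does double duty: it makes $A_{\I^{(k+1)}}$ invertible with $A_{\I^{(k+1)}}^{-1}\ge 0$, and, because the reduced row-$i$ equation is strictly increasing in the entering variable with slope $\theta_i(1-1^\top\rho_{\I^{(k+1)}})/\big((1-\rho_i)(1-1^\top\rho_{\I^{(k)}})\big)>0$, it forces the entering coordinate strictly below $1$, which then gives $\alpha^{(k+1)}\le\alpha^{(k)}$ coordinatewise (in particular $\alpha^{(k+1)}\le 1$). Third, comparing the system defining $\alpha^{(k+1)}$ with the one satisfied by $\alpha(T)$ restricted to $\I^{(k+1)}$—whose right-hand side is smaller by the nonnegative vector $\big(\sum_{j\notin\I^{(k+1)}}c_j(1-\alpha(T)_j)\big)1$—and using $A_{\I^{(k+1)}}^{-1}\ge 0$ gives $\alpha^{(k+1)}\ge\alpha(T)$. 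Since $|\I^{(k)}|$ strictly increases and stays $\le|\I(T)|\le\bar n$, the algorithm halts in at most $\bar n$ steps; at termination its output satisfies \eqref{con:period}, is $\ge\alpha(T)\ge 0$ and $\le 1$, and respects the complementarity built into \eqref{eq:alpha_m}--\eqref{eq:alpha_m_n}, so it is feasible for \ref{prob:alpha} and equals $\alpha(T)$ by uniqueness.

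The step I expect to be the main obstacle is assembling the algorithm's invariant—in particular the monotone chain $\alpha(T)\le\alpha^{(k+1)}\le\alpha^{(k)}$ and the accompanying fact that the entering coordinate stays below $1$, which is what keeps the greedy inclusion from overshooting $\I(T)$ and keeps the submatrix inverses nonnegative. This demands careful bookkeeping of how the right-hand side of the reduced linear system changes when a queue migrates into the active set, exploiting the Sherman–Morrison form of $A_{\I}^{-1}$ from \Cref{lem:inverse_mat} together with the identity $c_j/(\theta_j+c_j)=\rho_j$; by comparison, the variational-inequality existence argument and the summation argument for uniqueness are routine.
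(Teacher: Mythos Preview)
Your proof is correct and takes a genuinely different route from the paper. The paper reduces \ref{prob:alpha} to $\textsf{LCP}(b(T)-A1,A)$ via $z=1-\alpha$, invokes the Z-matrix theory of \cite{cottle_2009_LCPBook} (Theorem~3.11.6) for existence, proves uniqueness through LP duality by showing every equilibrium is optimal for $\max\{\omega^\top\alpha:\alpha\in\mathcal{A}(b(T),A)\}$ under \emph{every} $\omega>0$ (hence equals the greatest element), and appeals to Chandrasekaran's algorithm for the iteration bound. You instead give self-contained elementary arguments exploiting the rank-one structure $A=\operatorname{diag}(\theta+c)-1c^\top$ and the identity $c_i/(\theta_i+c_i)=\rho_i$: a VI fixed-point for existence (using $b(T)>0$ to kill the $\alpha_i=0$ boundary), a direct summation trick on $s=c^\top d$ for uniqueness, and an explicit sandwich invariant $\alpha(T)\le\alpha^{(k+1)}\le\alpha^{(k)}$ for the algorithm. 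Your approach requires no background in LCP theory and makes the $\bar n$ bound transparent; the paper's approach situates the problem within a known framework and yields the greatest-element characterization (\Cref{lem:lp_equilirbium}) as a byproduct, which the paper later uses for monotonicity in $T$. Two small points: your invariant implicitly fixes $\I(T)$ to be the \emph{maximal} equilibrium set $\{i:(A\alpha(T))_i=b_i\}$ (needed so that \Cref{lem:equilibriumset} gives $1^\top\rho_{\I^{(k+1)}}<1$), and your ``entering coordinate'' phrasing suggests a single pivot whereas \Cref{alg:pivoting} may add several queues at once---but the $c^\top\delta$ argument you sketch extends to that case without change.
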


\begin{algorithm}
\small
\caption{Finding the Equilibrium of Exhaustive Service Policy $\e$}
\begin{algorithmic}[1] %
\Require model parameters $\{\lambda_i,\mu_i,\tau_i,\theta_i: i\in\N\}$ and post-clearance durations $T$.
\Ensure equilibrium $\alpha(T)$ of $\e$.
\State Initialization: $\I\gets\emptyset$, $\nu\gets0$, and $q^{(0)}\gets b(T)-A1$. 
\Loop
\If{$q_{\barI}^{(\nu)} \geq 0$}
            \State \Return $\I$ and $\alpha(T) = (\alpha_{\I},\alpha_{\barI})$ using \eqref{eq:alpha_m}--\eqref{eq:alpha_m_n}.
        \Else
            \State $\J = \{i\in \bar{\I}\mid q_i^{(\nu)}<0\}$. 
            \Comment{\textsf{{\footnotesize{
            This is the set of not-all-joining queues violating the constraint \eqref{con:period} with $\alpha$ determined by the current all-joining set $\I$.}}}}
            \State $\I\gets \I\cup \J$, $\nu\gets\nu+1$. 
            \State Update $q_{\barI}^{(\nu)} \gets q_{\barI}^{(0)} - A_{\barI\I}A_{\I}^{-1}q_{\I}^{(0)}$, where $A_{\I}^{-1}$ is provided in closed form in \Cref{lem:inverse_mat}.
            \textsf{\footnotesize{\Comment{$A_{\I}$ is invertible at each iteration.}}}
\EndIf   
\EndLoop
\end{algorithmic}
\label{alg:pivoting}
\end{algorithm}

The development of \Cref{alg:pivoting} relies crucially on pivoting algorithms developed for solving linear complementarity problems. To see that, without constraint \eqref{con:non_neg}, %
problem \ref{prob:alpha} is equivalent to a standard LCP problem $\textsf{LCP}$($b(T)-A1, A$) by treating $z=1-\alpha$. %
Interestingly, in the proof of \Cref{prop:n_step}, we show that any $\alpha$ satisfying constraints $\eqref{con:ir}$, $\eqref{con:period}$, and $\eqref{con:comp}$ must also be non-negative. Thus, constriant \eqref{con:non_neg} can be dropped, and thus solving \ref{prob:alpha} is equivalent to solving $\textsf{LCP}$($b(T)-A1, A$).  %
In addition, note that our matrix $A$ defined in \eqref{eq:matrix_A_b} belongs to the class of $Z$-matrices, i.e., the off-diagonal elements of matrix $A$ are all non-positive.\footnote{Our matrix $A$ is a special class of $Z$-matrix, where the diagonal elements are positive. \cite{chandrasekaran_1970_LCP_Z} refers to such a matrix as $L$-matrix. The LCP problem with $Z$-matrix also appears in other operations settings, e.g., \cite{federgruen2015multi,hu_2016_OR_sequentialprice_LCP}.}

Based on the additional condition that the polyhedron composed of \eqref{con:ir} and \eqref{con:period} is non-empty (which indeed holds in our setting), this guarantees the equilibrium existence, see Theorem 3.11.6 in \cite{cottle_2009_LCPBook}.
Notice that an LCP problem with a $Z$-matrix may admit multiple solutions. The uniqueness in our problem arises from the special structure of the matrix $A$ in \eqref{eq:matrix_A_b}, namely that the elements in each column (except for the diagonal one) are the same. We provide further discussion of uniqueness later.%

Our \Cref{alg:pivoting} adapts the %
seminal Chandrasekaran's Algorithm \citep{chandrasekaran_1970_LCP_Z} which solves a general $\lcp$ with $A$ being a $Z$-matrix in at most $n$ steps. %
Compared with the Chandrasekaran Algorithm, our \Cref{alg:pivoting} simplifies the pivoting process with an analytical formula of $A_{\I}^{-1}$ (\Cref{lem:inverse_mat}), and of $\alpha_{\I}$ in \eqref{eq:alpha_m}--\eqref{eq:alpha_m_n} (in the proof, we show that the matrix $A_{\I}$ at each iteration must be invertible). The property that \Cref{alg:pivoting} terminates in at most $\bar{n}$ steps is mainly based on \Cref{lem:equilibriumset}.

We now discuss the intuition behind \Cref{alg:pivoting}. Recall that $\I$ is the set of all-joining queues, i.e., $\bar{J}_i=0, \forall i \in \I$; and conversely, $\barI$ is the set of not-all-joining queues, i.e., $\bar{J}_i>0, \forall i\in\barI$. 
\Cref{alg:pivoting} aims to return the equilibrium (all-joining) set $\I$ by expanding it with the queues in the not-joining set violating the constraint \eqref{con:period}, see lines~6--7. 
Its convergence relies on a key property: once a queue transitions from not-all-joining to all-joining, it \emph{never} reverts back during the pivoting process. We now go through the major steps of the algorithm.
\begin{enumerate}
    \item Initially, it sets $\I$ to be empty, implying that \textit{all} queues are initialized to be not-all-joining. This makes $\alpha=1$ according to \eqref{eq:alpha_m_n} and thus constraints \eqref{con:ir} are automatically satisfied for all queues. If the resulting $\alpha=1$ also satisfies constraint \eqref{con:period}, it constitutes an equilibrium.
    \item If not, the algorithm finds the set of not-all-joining queues $\J$ that violate the constraint \eqref{con:period}, i.e., $\bar{L}_j(T)\geq \alpha_j\theta_j$, in line 6 and then forms a new set $\I\cup \J$ in line 7.
    Line 8 is to update the condition about constraints $\bar{L}_j(T)\geq \alpha_j\theta_j$ under the new set.
    Interestingly, the algorithm then \textit{only} checks if $\bar{L}_j(T)\geq \alpha_j\theta_j$ is satisfied for not-all-joining queues (line 3), while disregarding constraint \eqref{con:ir} for all-joining queues. This is based on the property of matrix $A$ being a $Z$-matrix so that constraint \eqref{con:ir} for all-joining queues always holds. If $\bar{L}_j(T)\geq \alpha_j\theta_j$ is satisfied for every not-all-joining queue, the resulting $\alpha$ is deemed an equilibrium.
    \item If any of the conditions is not met, the algorithm repeats step (ii) until it reaches an equilibrium.
\end{enumerate}

We now discuss the uniqueness of the equilibrium $\alpha(T)$, which is based on the following lemma.
\begin{lemma}[Equivalence between Equilibrium and the Greatest Element]
\label{lem:lp_equilirbium}
For any post-clearance durations $T \in \mathbb{R}_+^{n}$, 
the polyhedron $\mathcal{A}(b(T),A):=\{\alpha\in\mathbb{R}^n: 0\leq \alpha\leq 1, A\alpha\leq b(T)\}$ admits a (unique) greatest element,\footnote{An element \( x \in \mathcal{X} \), where \( \mathcal{X} \subseteq \mathbb{R}^n \), is called the \emph{greatest element} of \( \mathcal{X} \) if and only if \( y \leq x \) (component-wise) for all \( y \in \mathcal{X} \).}
which coincides with the unique equilibrium $\alpha(T)$ of the exhaustive service policy $\e$.
\end{lemma}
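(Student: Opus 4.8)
The plan is to exploit the fact that $A$ is a $Z$-matrix with positive diagonal, so that $\mathcal{A}(b(T),A)$ is a box-constrained polyhedron of the form $\{0 \le \alpha \le 1,\ A\alpha \le b(T)\}$ whose defining inequalities, after the substitution $z = 1-\alpha$, become the feasible region of $\textsf{LCP}(b(T)-A1, A)$. The existence of a greatest element of such a polyhedron is a classical consequence of the $Z$-matrix structure: if $\alpha$ and $\alpha'$ both lie in $\mathcal{A}(b(T),A)$, then so does $\alpha \vee \alpha'$ (the componentwise maximum). I would verify this closure-under-join property directly. The box constraints $0 \le \cdot \le 1$ are trivially preserved under $\vee$; the content is in $A(\alpha \vee \alpha') \le b(T)$. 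Fix a row $i$ and let $\gamma = \alpha \vee \alpha'$; without loss of generality $\gamma_i = \alpha_i$. Then $(A\gamma)_i = \theta_i \alpha_i - \sum_{j \ne i} c_j \gamma_j \le \theta_i \alpha_i - \sum_{j\ne i} c_j \alpha_j = (A\alpha)_i \le b_i(T)$, where the first inequality uses $c_j \ge 0$ and $\gamma_j \ge \alpha_j$. Hence $\mathcal{A}(b(T),A)$ is a nonempty (it contains $0$, since $b(T) \ge 0$ as $1^\top\tau > 0$ and $T \ge 0$) sublattice of $[0,1]^n$ that is closed under joins; being compact, it therefore contains a greatest element $\alpha^\ast$, obtained e.g. as the join of all its elements (or, constructively, by maximizing $1^\top \alpha$ over the polyhedron, whose unique maximizer must dominate every feasible point by the join-closure argument).

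Next I would show that this greatest element $\alpha^\ast$ is an equilibrium, i.e., satisfies the complementarity condition \eqref{con:comp}: $(1-\alpha^\ast)^\top(b(T) - A\alpha^\ast) = 0$. Since both factors are nonnegative componentwise, it suffices to show that for each $i$, either $\alpha_i^\ast = 1$ or $(A\alpha^\ast)_i = b_i(T)$. Suppose not: there is an index $i$ with $\alpha_i^\ast < 1$ and $(A\alpha^\ast)_i < b_i(T)$. I would then perturb $\alpha^\ast$ by increasing only coordinate $i$ by a small $\varepsilon > 0$. Increasing $\alpha_i$ strictly increases $(A\alpha)_i$ (diagonal entry $\theta_i > 0$) but can only \emph{decrease} $(A\alpha)_k$ for $k \ne i$ (off-diagonal entries $-c_k \le 0$), so for $\varepsilon$ small enough all constraints $A\alpha \le b(T)$ remain satisfied, and $\alpha_i^\ast + \varepsilon \le 1$ keeps the box constraint. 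This produces a feasible point strictly larger than $\alpha^\ast$ in coordinate $i$, contradicting maximality. Hence $\alpha^\ast$ satisfies \eqref{con:comp}, and combined with feasibility in $\mathcal{A}(b(T),A)$ it solves \ref{prob:alpha}; by Proposition~\ref{prop:n_step} the equilibrium is unique, so $\alpha^\ast = \alpha(T)$.

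Finally I would close the loop in the other direction: any equilibrium $\alpha(T)$ is feasible for $\mathcal{A}(b(T),A)$ by constraints \eqref{con:non_neg}--\eqref{con:period}, and I must argue it \emph{is} the greatest element — but this is immediate since $\mathcal{A}(b(T),A)$ has a unique greatest element $\alpha^\ast$, which we just showed is itself an equilibrium, and equilibria are unique, forcing $\alpha(T) = \alpha^\ast$. I expect the main (though still modest) obstacle to be the first half: cleanly establishing the sublattice/join-closure property and extracting the greatest element, being careful that the maximum is attained (compactness of the polyhedron — note $\mathcal{A}(b(T),A) \subseteq [0,1]^n$ is bounded) and that the perturbation argument for \eqref{con:comp} stays inside the box. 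Everything else is bookkeeping with the explicit sign pattern of $A$ in \eqref{eq:matrix_A_b}, and the uniqueness input from Proposition~\ref{prop:n_step} lets me avoid re-deriving it here.
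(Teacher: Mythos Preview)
Your argument is correct in its substantive steps and takes a genuinely different route from the paper. You work directly with the lattice structure: the $Z$-matrix form of $A$ makes $\mathcal{A}(b(T),A)$ closed under componentwise maxima, so compactness gives a greatest element $\alpha^\ast$; a one-coordinate perturbation then shows $\alpha^\ast$ satisfies the complementarity \eqref{con:comp}. The paper instead goes through LP duality: for every $\omega\in\mathbb{R}^n_{++}$ it shows that \emph{every} solution of \ref{prob:alpha} maximizes $\omega^\top\alpha$ over $\mathcal{A}(b(T),A)$, by explicitly constructing dual variables from the equilibrium set $\I$ and verifying dual feasibility via the closed form of $A_\I^{-1}$ (Lemma~\ref{lem:inverse_mat}). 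Your route is more elementary and avoids any matrix inversion; the paper's route is heavier but delivers strictly more, namely uniqueness of the equilibrium itself.

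That last point is the one place to be careful. You close the loop by invoking Proposition~\ref{prop:n_step} for uniqueness, but in the paper's logical order the uniqueness assertion of Proposition~\ref{prop:n_step} is itself proved by citing Lemma~\ref{lem:lp_equilirbium}, so this reference would be circular. Your join-closure plus perturbation argument shows the greatest element is \emph{an} equilibrium; it does not on its own exclude equilibria strictly below $\alpha^\ast$, and for generic $Z$-matrix LCPs multiple solutions can occur---uniqueness here genuinely relies on the constant-off-diagonal-column structure of $A$ in \eqref{eq:matrix_A_b}. The paper's duality argument is exactly what supplies this: since every equilibrium maximizes every strictly positive linear functional over $\mathcal{A}(b(T),A)$, every equilibrium must be the greatest element. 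To make your proof self-contained you would need to add such a step, either by reproducing the duality computation or by a direct argument exploiting that column structure.
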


Observe that the polyhedron $\mathcal{A}(b(T),A)$ in \Cref{lem:lp_equilirbium} is composed of constraints \eqref{con:non_neg}, \eqref{con:ir}, and \eqref{con:period} 
in problem \ref{prob:alpha} but in absence of constraints \eqref{con:comp}. Although not every polyhedron admits a greatest element, the polyhedron in \Cref{lem:lp_equilirbium} has one (and must be unique) and coincides with the equilibrium.\footnote{Note that the existence of the least element (our problem is equivalent to $\textsf{LCP}(b(T) - A1, A)$ by setting $z = 1 - \alpha$ with the additional constraint $z \leq 1$, so the greatest element becomes the least element) in the LCP literature with a $Z$-matrix is well established; see Theorem~1.11.6 in \cite{cottle_2009_LCPBook}. Moreover, this least element is a feasible solution to the problem $\textsf{LCP}(b(T) - A1, A)$. However, the uniqueness of the equilibrium is new and mainly comes from the special structure of the matrix $A$ defined in \eqref{eq:matrix_A_b}, namely that all elements except for the diagonal one in each column are the same.} We prove \Cref{lem:lp_equilirbium} based on duality theory. A visualization of the greatest element for the two-queue setting is provided in \Cref{fig:greatest_element_polyhedron} in Appendix \ref{app_sec:subsec_twoqueues}.
Although \Cref{lem:lp_equilirbium} suggests that one can find the equilibrium $\alpha(T)$ by solving an LP (e.g., by maximizing $1^\top \alpha$ subject to $\alpha\in\mathcal{A}(b(T),A))$, \Cref{alg:pivoting} offers a more efficient method. %

\subsection{Optimal Exhaustive Service Policy}
\label{sec:opt_ex}

In this subsection, we proceed to characterize the structure of an optimal exhaustive service policy and provide an efficient algorithm to compute an optimal one in at most $2\bar{n}$ steps. (Recall that, by \Cref{prop:n_step}, the equilibrium of any exhaustive service policy can be found in at most $\bar{n}$ steps.)

The throughput $\tp(T)$ of an exhaustive service policy $\e$ with its equilibrium $\alpha(T)$ is
\begin{align}
\label{eq:thp_dfn}
\tp(T) := \frac{\sum_{i\in\N}\left(\mu_i\cdot c_i\alpha_i(T)  + \lambda_iT_i\right)}{1^\top \tau + \sum_{i\in\N}\left(c_i\alpha_i(T)+T_i\right)}.
\end{align}
Recall that $c_i\alpha_i(T)$ is the \on duration minus the post-clearance duration at the equilibrium outcome, %
during which the server’s effective service rate is $\mu_i$. In contrast, the effective service rate is $\lambda_i$ during the post-clearance durations.

To find throughput-maximizing post-clearance durations, we solve:
\begin{align}
\max_{T\in{\mathbb{R}^n_{+}}}&\quad \tp(T) \text{ defined in } \eqref{eq:thp_dfn}\tag{$\mathsf{P}_{\mathsf{e}}$}
 \label{prob:opt_e}\\
\text{s.t.}&\quad \text{$\alpha(T)$ is the solution of problem \ref{prob:alpha}} \nonumber.
\end{align}
Problem \eqref{prob:opt_e} belongs to the class of linear-fractional optimization problems with linear complementarity constraints, which is generally NP-hard \citep{luo_1996_mpec}.
Our approach is to use the equilibrium of the pure exhaustive service policy $\pe$ as a bridge. We first prove the monotonicity of the equilibrium variable with respect to post-clearance durations $T$ (\Cref{lem:piecewise_linear_increasing} below). Then, based on this equilibrium monotonicity and the equilibrium of the pure exhaustive service policy $\pe$, we derive the structure of the optimal solution. Finally, based on this structure and \Cref{alg:pivoting}, we provide an algorithm that finds optimal post-clearance durations in at most $2\bar{n}$ steps (\Cref{alg:opt} below).

One might expect that increasing queue $i$'s post-clearance duration would attract more customers, thus leading to a decrease in the equilibrium not-joining duration $\bar{J}_i$. Consequently, the equilibrium variable $\alpha_i(T)$ would be decreasing in $T_i$, given $T_{-i}$. Somewhat surprisingly, the opposite is true, as established below.

\begin{lemma}[Piecewise Linearity and Monotonicity of Equilibrium]
\label{lem:piecewise_linear_increasing}
The equilibrium $\alpha(T)$ is a non-decreasing piecewise linear function of $T$.
\end{lemma}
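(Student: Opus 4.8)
\emph{Proof proposal.} The plan is to leverage the greatest-element characterization in \Cref{lem:lp_equilirbium}: $\alpha(T)$ is the greatest element of the polyhedron $\mathcal{A}(b(T),A)=\{\alpha\in\mathbb{R}^n:0\le\alpha\le 1,\ A\alpha\le b(T)\}$, and to note that this polyhedron depends on $T$ only through the \emph{affine} map $b(\cdot)$. Indeed, from \eqref{eq:matrix_A_b} one has $b(T)=(1^\top\tau)\,1+(11^\top-I)T$, and since $11^\top-I$ has all entries nonnegative (zeros on the diagonal, ones off it), $T\le T'$ implies $b(T)\le b(T')$ componentwise.

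\textbf{Monotonicity.} Suppose $T\le T'$, so $b(T)\le b(T')$. Relaxing the right-hand side of $A\alpha\le b$ can only enlarge the feasible set, hence $\mathcal{A}(b(T),A)\subseteq\mathcal{A}(b(T'),A)$. Since $\alpha(T)\in\mathcal{A}(b(T),A)\subseteq\mathcal{A}(b(T'),A)$ while $\alpha(T')$ is the greatest element of $\mathcal{A}(b(T'),A)$, we conclude $\alpha(T)\le\alpha(T')$. Thus $\alpha(\cdot)$ is non-decreasing on $\mathbb{R}^n_+$.

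\textbf{Piecewise linearity.} For each $\I\subseteq\N$ with $1^\top\rho_{\I}<1$ — so that $A_{\I}$ is invertible by \Cref{lem:inverse_mat} — define $\alpha^{\I}(T)$ by the closed forms \eqref{eq:alpha_m}--\eqref{eq:alpha_m_n}, which are affine in $T$ because $b(T)$ is. Let $R_{\I}:=\{T\in\mathbb{R}^n_+:\ \alpha^{\I}(T)\text{ satisfies all constraints of }\ref{prob:alpha}\}$. By the block-elimination identity built into \eqref{eq:alpha_m} (using $\alpha^{\I}_{\barI}=1$) we have $(A\alpha^{\I}(T))_{\I}=b_{\I}(T)$, so the $\I$-block of \eqref{con:period} is automatically tight and the complementarity constraint \eqref{con:comp} holds automatically; the only remaining requirements are $0\le\alpha^{\I}_{\I}(T)\le 1$ and $(A\alpha^{\I}(T))_i\le b_i(T)$ for $i\notin\I$, which are finitely many inequalities, each affine in $T$. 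Hence each $R_{\I}$ is a polyhedron, and on $R_{\I}$ the affine point $\alpha^{\I}(T)$ is feasible for \ref{prob:alpha} and therefore equals the unique equilibrium $\alpha(T)$ (\Cref{prop:n_step}). Existence of the equilibrium (\Cref{prop:n_step}) together with \Cref{lem:equilibriumset} ensures that every $T\in\mathbb{R}^n_+$ lies in some $R_{\I}$ with $1^\top\rho_{\I}<1$; there are only finitely many such $\I$; and uniqueness forces the affine pieces to agree on any overlap $R_{\I}\cap R_{\I'}$. Consequently $\alpha$ is a (continuous) piecewise linear function of $T$, which completes the proof.

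\textbf{Main obstacle.} The delicate point is establishing that each $R_{\I}$ is genuinely polyhedral: a priori the complementarity constraint \eqref{con:comp} and the $\I$-block of \eqref{con:period} appear nonlinear, and the resolution is precisely the identity $(A\alpha^{\I}(T))_{\I}=b_{\I}(T)$ baked into \eqref{eq:alpha_m}. Making this airtight requires invoking \Cref{lem:equilibriumset}/\Cref{lem:inverse_mat} so that $A_{\I}$ is invertible exactly on the regions that contribute, and carefully checking the block-elimination bookkeeping; by contrast, the monotonicity half is essentially immediate once \Cref{lem:lp_equilirbium} is available.
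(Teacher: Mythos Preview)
Your proof is correct and follows essentially the same approach as the paper: monotonicity via the greatest-element characterization in \Cref{lem:lp_equilirbium} together with the componentwise monotonicity of $b(T)$, and piecewise linearity via the affine closed forms \eqref{eq:alpha_m}--\eqref{eq:alpha_m_n}. The paper's own proof is terse (one sentence for each half), whereas you spell out the polyhedral-region decomposition for piecewise linearity in full; this added rigor is welcome but not a different idea.
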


The piecewise linearity stems from equations \eqref{eq:alpha_m}--\eqref{eq:alpha_m_n}; that is, given an equilibrium set $\I(T)$, the function $\alpha(T)$ is linear in $T$. The non-decreasing property of $\alpha_{-i}(T)$ with respect to $T_i$ (for fixed $T_{-i}$) can be intuitively understood: a higher $T_i$ increases the waiting times for customers in the other queues, imposing a greater negative externality. Consequently, more customers in these queues may opt not to join, which implies that $\alpha_{-i}(T)$ is non-decreasing in $T_i$.

Regarding the monotonicity of $\alpha_i(T)$ with respect to $T_i$, note that since the server must clear the queue upon leaving, whether customers join or not depends primarily on the \off duration of queue $i$.
Since \(\alpha_{-i}(T)\) increases with $T_i$ as explained above, the maximum queue length of other queues would increase (see \Cref{fig:exhaustiveservice_equilibrium}), and thus the server requires more time to clear these queues' customers. This leads to an extended \off duration of queue $i$, which in turn causes more customers of queue \(i\) not to join. Therefore, \(\alpha_i(T)\) is also non-decreasing with \(T_i\).

\smallskip

Leveraging the equilibrium set $\I(0)$ of $\pe$ as a bridge, we obtain the structure of an optimal exhaustive service policy as follows.

\begin{theorem}[Structure of an Optimal Exhaustive Service Policy]
        \label{thm:structure_opt_ex}
    We have
  \begin{enumerate}
  \item If $\I(0)=\N$, the pure exhaustive service policy $\pe$ achieves the first-best throughput $1^\top \lambda$.
      \item If $\I(0)\neq \N$, for any $j\in\argmax_{i\in \bar{\I}(0)}\lambda_i$, we can set $T^\ast_{-j}=0$ without loss of optimality. 
      Furthermore, if $\I(0)=\emptyset$, either the pure exhaustive service policy $\pe$ %
      or always serving the queue with the highest arrival rate is optimal.
  \end{enumerate}
\end{theorem}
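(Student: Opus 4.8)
The plan is to use the equilibrium of the pure exhaustive service policy $\pe$ as an anchor and exploit the monotonicity established in \Cref{lem:piecewise_linear_increasing} together with the structure of the equilibrium set. For part (i), if $\I(0)=\N$, then every queue is an all-joining queue under $\pe$, meaning $\bar{J}_i=0$ for all $i$, so all arriving customers join at $T=0$. The throughput in \eqref{eq:thp_dfn} with $T=0$ becomes $\sum_i \mu_i c_i\alpha_i(0)/(1^\top\tau + \sum_i c_i\alpha_i(0))$; I would show this equals $1^\top\lambda$ by checking the flow-balance identity $\mu_i c_i\alpha_i(0)=\lambda_i L_i(0)=\lambda_i(c_i\alpha_i(0))$ — more precisely, since all customers join, the number served per cycle at queue $i$ equals $\lambda_i$ times the cycle length $1^\top\tau+\sum_j c_j\alpha_j(0)$, and summing over $i$ gives the numerator as $1^\top\lambda$ times the denominator. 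Since $1^\top\lambda$ is a trivial upper bound on throughput (one cannot serve more than arrives), $\pe$ is optimal.

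For part (ii), the core claim is that for $j\in\argmax_{i\in\bar\I(0)}\lambda_i$ we may take $T^\ast_{-j}=0$. I would argue as follows. Fix any optimal policy $T^\ast$. The throughput \eqref{eq:thp_dfn} can be rewritten as $\sum_i\lambda_i - \sum_i(\mu_i-\lambda_i)\bigl(L_i(T)-c_i\alpha_i(T)-\ldots\bigr)$ type of loss expression; more usefully, note that the numerator minus $\lambda_i$ times the denominator, summed appropriately, isolates the "capacity wasted" on each queue. Increasing $T_k$ for $k\neq j$ has two effects: it directly adds $\lambda_k T_k$ to served customers but also adds $T_k$ (net of the $c_k\alpha_k$ change) to the cycle length, and by \Cref{lem:piecewise_linear_increasing} it weakly increases every $\alpha_i$, hence every $L_i-T_i = c_i\alpha_i$, lengthening the cycle further and diluting throughput. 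The key quantitative step is to show that because $k\in\bar\I(0)$ means customers at queue $k$ do not all join even at $T=0$, raising $T_k$ cannot convert the "lost" capacity at queue $k$ into gain fast enough to offset the cycle-length dilution, \emph{unless} $k$ has the largest $\lambda$ among $\bar\I(0)$ — in which case the marginal served-customer rate $\lambda_k$ is maximal and the trade-off can tip. I would make this precise by a perturbation/exchange argument: starting from $T^\ast$, decrease $T^\ast_k$ to $0$ for some $k\neq j$ with $T^\ast_k>0$ and show throughput does not decrease, using the explicit piecewise-linear form of $\alpha(T)$ on the equilibrium set containing the perturbation and the fact that $\partial\tp/\partial T_k \le 0$ whenever $\lambda_k \le \lambda_j$ and $k$ is not the unique "boosted" queue.

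For the last sentence of part (ii), when $\I(0)=\emptyset$ no queue is all-joining under $\pe$, so $b(T)-A1\not\ge 0$ in the relevant sense and by the first part of (ii) all $T^\ast_i=0$ except possibly $T^\ast_j$ for $j=\argmax_i\lambda_i$. The residual one-dimensional problem is to optimize $\tp$ over $T_j\ge 0$ with all other components zero. Since $\alpha(T_j e_j)$ is piecewise linear and non-decreasing in $T_j$, the map $T_j\mapsto\tp(T_j e_j)$ is a ratio of two affine-on-each-piece functions, hence quasiconcave/quasiconvex on each piece, so its maximum over $[0,\infty)$ is attained either at $T_j=0$ (giving $\pe$) or in the limit $T_j\to\infty$; I would show the latter limit corresponds to the server spending essentially all its time at queue $j$, i.e., "always serving the queue with the highest arrival rate," by checking that as $T_j\to\infty$ the throughput tends to $\lambda_j$ (the served rate is dominated by the $\lambda_j T_j$ term over the $T_j$ term in numerator and denominator, with lower-order corrections), and comparing the boundary value with $\tp(0)$.

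The main obstacle I anticipate is the perturbation/exchange argument in part (ii): one must carefully control how the equilibrium set $\I(T)$ changes as $T_k$ is decreased — it may cross several pieces of the piecewise-linear function $\alpha(\cdot)$ — and show the throughput is monotone along the entire path, not just locally. This requires combining \Cref{lem:piecewise_linear_increasing} with the column-structure of $A$ (all off-diagonal entries in column $k$ equal $-c_k$) to get a clean sign for the derivative of the dilution term, and isolating exactly why the maximal-$\lambda$ queue is the unique exception.
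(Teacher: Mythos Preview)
Your proposal has the right high-level picture for part (i), but part (ii) has two genuine gaps that the paper's proof resolves with ideas you have not identified.

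\textbf{First gap (main part of (ii)).} You propose a single perturbation/exchange argument, and you correctly flag at the end that controlling how the equilibrium set changes along the path is the obstacle. The paper sidesteps this obstacle entirely via a two-step decomposition that you are missing. Step~1 shows $T^\ast_{\I(0)}=0$: one proves that for any $\J\subseteq\I(0)$ and any $T_{\barI(0)}$, the throughput $\tp(T_\J,0_{\I(0)\setminus\J},T_{\barI(0)})$ is nonincreasing in $T_\J$, by writing throughput in the loss form $\sum_i\lambda_i - \sum_{i\in\barI(0)\cup\HH}\lambda_i\cdot B(T_\J)$ (equation \eqref{eq:tp_ex_k}) and noting $B(T_\J)$ and $|\HH|$ both increase with $T_\J$. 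Step~2 is where the clean exchange happens: once $T_{\I(0)}=0$, the formula \eqref{eq:alpha_m} gives $\alpha_{\I(0)}(0_{\I(0)},T_{\barI(0)}) = (1^\top\tau + 1^\top T_{\barI(0)} + c_{\barI(0)}^\top 1)\,A_{\I(0)}^{-1}1$, i.e.\ the equilibrium depends on $T_{\barI(0)}$ \emph{only through the scalar} $1^\top T_{\barI(0)}$. Hence in the throughput expression \eqref{eq:tp_e_I}, the first two terms and the denominator of the third depend only on $1^\top T_{\barI(0)}$, while the numerator of the third is $\lambda_{\barI(0)}^\top T_{\barI(0)}$. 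For fixed sum $1^\top T_{\barI(0)}$, this is maximized by concentrating all mass on $j\in\argmax_{i\in\barI(0)}\lambda_i$. Your derivative-sign approach does not surface this ``depends only on the sum'' structure, and without Step~1 the term $-A_{\I(0)}^{-1}T_{\I(0)}$ in \eqref{eq:alpha_m} would not vanish, so the reduction would fail.

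\textbf{Second gap (the $\I(0)=\emptyset$ case).} Your argument that the one-dimensional map $T_j\mapsto\tp(T_j,0_{-j})$ is piecewise linear-fractional and hence attains its maximum at $T_j=0$ or $T_j\to\infty$ is incomplete: piecewise monotonicity allows optima at interior breakpoints. The missing observation is that $\I(0)=\emptyset$ means $\alpha(0)=1$; combined with \Cref{lem:piecewise_linear_increasing} and the constraint $\alpha\le 1$, this forces $\alpha(T)\equiv 1$ for \emph{all} $T\ge 0$. There is thus a single piece, the throughput is one linear-fractional function of $T_j$, and its derivative (computed explicitly in the paper as \eqref{eq:derivative_tp_exhaustiveservice}) has a sign independent of $T_j$, so the optimum is at an endpoint.
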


\Cref{thm:structure_opt_ex} shows that, under (one of) the optimal exhaustive service policies, \textit{at most one} queue can have a \emph{non-zero} post-clearance duration %
and it must be the queue with the highest arrival rates among all not-all-joining queues $\bar{\I}(0)$ under pure exhaustive policy $\pe$. %

\Cref{thm:structure_opt_ex} (i) is straightforward since %
$\I(0)=\N$ suggests that all customers of all queues join the system. This leads to the first best throughput $1^\top \lambda$. For the proof of the first part of \Cref{thm:structure_opt_ex} (ii), we first show that, without loss of optimality, we can set post-clearance durations of all-joining queues under the pure exhaustive service policy $\pe$ as zero, i.e., $T_{\I(0)}=0$. Based on this, the second term in \eqref{eq:alpha_m} becomes zero, and thus $\alpha_{\I(0)}$ only depends on $1^\top T_{\barI(0)}$.
More interestingly, we then show that the throughput can be written as a term \textit{only} depending on $1^\top T_{\barI(0)}$ plus a term proportional to $\lambda_{\barI(0)}^\top T_{\barI(0)}$. Thus, for any $T_{\barI(0)}$, we can construct another solution ($\tilde{T}_{j}=1^\top T_{\barI(0)},\tilde{T}_{-j}=0)$ for some $j\in\arg\max_{i\in\bar{\mathcal{I}}(0)}\lambda_{i}$ achieving (weakly) higher throughput. The second part of \Cref{thm:structure_opt_ex} (ii) is an interesting corollary of the first part: when $\I(0)=\emptyset$, without loss of optimality, we can set all queues' post-clearance durations as zero except for a queue $j$ with the \emph{highest} arrival rate. Besides, when $\I(0)=\emptyset$, i.e., $\alpha(0)=1$, by the non-decreasing property of $\alpha(T)$ (\Cref{lem:piecewise_linear_increasing}), we must have 
$\alpha(T)=1$ for any $T\geq 0$. Based on this, it can be shown that the throughput $\tp(T)$ is 
monotone (either decreasing or increasing) in $T_{j}$ given $T_{-j}=0$. As a consequence, either $T^\ast_{j}=\infty$ or $T^\ast_{j}=0$ is optimal.

\begin{example}[The Homogeneous System]
\label{example:homogeneous_system_1}
    Consider a system where all queues have identical service rates, arrival rates, and waiting patience.  Then, either the pure exhaustive service policy $\pe$ or always serving any one of the queues is an optimal exhaustive service policy. To see this, in the homogeneous system, %
    by symmetry, all elements of $\alpha(0)$ are the same. Thus, %
either $\mathcal{I}(0)=\N$ or $\mathcal{I}(0)=\emptyset$.
\end{example}

By \Cref{thm:structure_opt_ex}, to find an optimal exhaustive service policy, i.e., optimal post-clearance durations $T^\ast$, we only need to know $T_{j}^\ast$ since all other queues' post-clearance durations can be set to zero, where $j$ is a queue with the highest arrival rate among $\barI(0)$. We now introduce \Cref{alg:opt} that finds such an optimal $T_{j}^\ast$ efficiently. 

Intuitively, \Cref{alg:opt} works as follows. It begins by computing the equilibrium under the pure exhaustive service policy $\pe$, which can be efficiently obtained using \Cref{alg:pivoting}. This equilibrium reveals which queue~$j$ may have a non-zero post-clearance duration. Next, since the equilibrium $\alpha(T_{j}, 0_{-j})$ is non-decreasing in~$T_{j}$, increasing~$T_{j}$ cannot reduce the values of $\alpha_{\I(0)}(T_{j}, 0_{-j})$, while the values of $\alpha_{\barI(0)}(T_{j}, 0_{-j})$ remain fixed at one.  A key consequence is that as~$T_{j}$ increases from zero, the number of all-joining queues, denoted $|\I(T_{j}, 0_{-j})|$, cannot increase. The order in which queues transition from all-joining to not-all-joining is determined in line~4 of the algorithm. Moreover, as shown in the proof, the throughput is monotonic (though not linear) in~$T_{j}$ within any interval where $|\I(T_{j}, 0_{-j})|$ remains constant. This means that it is sufficient to evaluate the throughput only at the \emph{boundary} values of~$T_{j}$ where $|\I(T_{j}, 0_{-j})|$ changes (see lines~5--9). Finally, lines~12--15 of the algorithm check whether it is optimal to serve only one queue---specifically, queue~$j$---at all times.

\begin{algorithm}[htbp]
\caption{Computing Optimal Exhaustive Service Policy}
\small
\begin{algorithmic}[1]
\Require model parameters $\{\lambda_i, \mu_i, \tau_i, \theta_i : i \in \N\}$.
\Ensure optimal exhaustive service policy $T^\ast$.
\State Initialize: $\I \gets \I(0)$, $\tp \gets \tp(0)$ by \eqref{eq:thp_dfn} and \Cref{alg:pivoting}, $T \gets 0$.
\If {$\I = \N$} 
    \Return $T^\ast = T = 0$ \Comment{{\footnotesize{\textsf{Pure exhaustive policy is optimal.}}}}
\EndIf
\State Choose any queue $j \in \argmax_{i \in \barI} \lambda_i$. \Comment{{\footnotesize{\textsf{Select a queue with potentially positive post-clearance duration.}}}}
\State Order queues in  $\mathcal{I}(0)$ in a descending order of $(\mu_i - \lambda_i)/(\mu_i \theta_i)$, denoted by $\mathcal{K} = [k_1, k_2, \dots, k_{|\mathcal{I}(0)|}]$.
\Comment{{\footnotesize{\textsf{If $\I(0)$ is empty, skip this step and the following ``for loop'' procedure from line 5 to 11.}}}}
\For{$k$ in $\mathcal{K}$}
    \State  $T_j \gets (\mu_k \theta_k) \cdot (1 - \rho_{\I}^\top 1_{\I})/(\mu_k - \lambda_k) - 1^\top \tau - c^\top_{\barI} 1_{\barI}$. \Comment{\textsf{\footnotesize{The boundary value that exactly makes $\alpha_k=1$ (see \Cref{lem:monoton_tp_hat_T}).}}}
    \State $\I \gets \I \setminus \{k\}$. \Comment{{\footnotesize{\textsf{The new equilibrium (all-joining) set (see \Cref{lem:monoton_tp_hat_T}}}}}).
    \State Calculate equilibrium $\alpha(T_j,0_{-j})$ of $\pi(T_j,0_{-j})$ using \eqref{eq:alpha_m} and \eqref{eq:alpha_m_n}:
\begin{align*}
    \alpha_i(T_j,0_{-j})=\frac{1^\top \tau+T_j+c_{\barI}^\top 1_{\barI}}{1-\rho_{\I}^\top 1_{\I}}\cdot\frac{\mu_i-\lambda_i}{\mu_i\theta_i},~\forall i\in \I; \quad \alpha_{\barI}(T_j,0_{-j})=1_{\barI}. 
\end{align*}
\State Calculate throughput $\tp(T_j,0_{-j})$ using \eqref{eq:thp_dfn}:
\begin{align*}
     \tp(T_j,0_{-j}) = \frac{\lambda_jT_j+ \sum_{i\in\N} \mu_i c_i  \alpha_i(T_j,0_{-j})}{1^\top \tau+c^\top  \alpha(T_j,0_{-j})+T_j}.
\end{align*}
    \If{$\tp < \tp(T_j, 0_{-j})$} 
        \State Update $\tp^\ast = \tp(T_j, 0_{-j})$ and $T_j^\ast = T_j$. \Comment{{\footnotesize{\textsf{Keep the best solution so far.}}}}
    \EndIf
\EndFor
\If{$\tp^{\ast} \geq \lambda_j$}
   \State \Return $T^\ast$
\Else
   \State \Return $T_j^\ast = +\infty$, $T_{-j}^\ast = 0$, and $\tp^\ast = \lambda_j$. \Comment{{\footnotesize{\textsf{Always serving queue $j$ is optimal.}}}}
\EndIf
\end{algorithmic}
\label{alg:opt}
\end{algorithm}

\begin{theorem}%
\label{thm:opt_ex_algo}
\Cref{alg:opt} finds an optimal exhaustive service policy in at most $2\bar{n}$ steps with closed-form updates.
\end{theorem}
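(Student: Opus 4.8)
Here is my plan. I would prove the two assertions of the theorem — optimality of the policy returned by \Cref{alg:opt} and the $2\bar n$ step bound — separately, after first reducing the search to a single scalar variable via \Cref{thm:structure_opt_ex}. If $\I(0)=\N$, then \Cref{thm:structure_opt_ex}(i) already says the pure exhaustive policy $\pe$ attains the first-best throughput $1^\top\lambda$, which \Cref{alg:opt} returns immediately, and there is nothing more to do. Otherwise $\I(0)\neq\N$, and \Cref{thm:structure_opt_ex}(ii) lets us fix $T_{-j}=0$ for a queue $j$ of maximal arrival rate in $\barI(0)$ and optimize only over the scalar $t:=T_j\ge 0$. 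So it remains to show that the univariate throughput function $g(t):=\tp(t,0_{-j})$ from \eqref{eq:thp_dfn} is maximized at one of the finitely many points \Cref{alg:opt} inspects, or — when no finite maximizer beats $\lambda_j$ — that the degenerate policy ``always serve queue $j$'' returned in the final branch is optimal.

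Next I would pin down the shape of $g$. By \Cref{lem:piecewise_linear_increasing}, $t\mapsto\alpha(t,0_{-j})$ is non-decreasing and piecewise linear; combined with $\alpha\le 1$ and $\alpha_{\barI(0)}(0)=1$ this forces $\alpha_{\barI(0)}(t,0_{-j})\equiv 1$, so the equilibrium set $\I(t,0_{-j})$ is a weakly shrinking subset of $\I(0)$ as $t$ grows. On any maximal interval of $t$ on which $\I(t,0_{-j})$ is constant, equations \eqref{eq:alpha_m}--\eqref{eq:alpha_m_n} make $\alpha(t,0_{-j})$ affine in $t$, so by \eqref{eq:thp_dfn} the throughput $g$ is a ratio of two affine functions of $t$, hence monotone there, and its supremum over that interval is attained at an endpoint. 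Thus the only candidate finite maximizers are $t=0$ together with the breakpoints at which $|\I(t,0_{-j})|$ drops; and examining the leading behaviour of the numerator and denominator of \eqref{eq:thp_dfn} gives $\lim_{t\to\infty}g(t)=\lambda_j$. The crucial remaining step is to identify the breakpoints and the order of departures from $\I(0)$: from the closed form \eqref{eq:alpha_m} (using the explicit $A_{\I}^{-1}$ of \Cref{lem:inverse_mat}) one sees that for $i\in\I$, $\alpha_i(t,0_{-j})$ is proportional to $(\mu_i-\lambda_i)/(\mu_i\theta_i)$ with a proportionality constant independent of $i$, so the queues of $\I(0)$ reach $\alpha_i=1$ — and hence leave the equilibrium set — one at a time in decreasing order of $(\mu_i-\lambda_i)/(\mu_i\theta_i)$, each at the threshold value of $t$ written in line~7 of \Cref{alg:opt}, and by monotonicity of $\alpha$ none re-enters. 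This justifies exactly the sweep performed by the for loop, and shows the per-iteration updates of $\I$, $\alpha(t,0_{-j})$ and $\tp(t,0_{-j})$ are closed-form and correct.

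With the shape of $g$ established, optimality follows quickly: \Cref{alg:opt} evaluates $g$ at $t=0$ and at every breakpoint, keeps the best value $\tp^\ast$, and compares it with $\lambda_j$. If $\tp^\ast\ge\lambda_j$, the associated finite $T_j^\ast$ is a global maximizer (no interval, bounded or unbounded, can beat it), and it is returned; if $\tp^\ast<\lambda_j$, then $\sup_{t\ge 0}g(t)=\lambda_j$ is approached only as $t\to\infty$, which is the policy that never switches away from queue $j$ (equivalently $T_j^\ast=+\infty$, $T_{-j}^\ast=0$), and that policy does attain throughput $\lambda_j$, since a dedicated server with $\lambda_j<\mu_j$ keeps queue $j$ empty so every customer joins; this is exactly the final return. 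The case $\I(0)=\emptyset$ is subsumed: there are no breakpoints, $g$ is a single linear-fractional piece, so the optimum is $t=0$ or $t\to\infty$, matching the second part of \Cref{thm:structure_opt_ex}(ii). For the step bound, line~1 calls \Cref{alg:pivoting}, which by \Cref{prop:n_step} terminates in at most $\bar n$ steps and delivers $\I(0),\alpha(0),\tp(0)$; the for loop then performs $|\I(0)|$ iterations of closed-form arithmetic — crucially, with \emph{no} further call to \Cref{alg:pivoting}, since $\I$ loses exactly one element per iteration — and $|\I(0)|\le\bar n$ because \Cref{lem:equilibriumset} gives $1^\top\rho_{\I(0)}<1$. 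Hence the total is at most $\bar n+\bar n=2\bar n$ steps.

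The main obstacle I expect is the ordering-and-monotonicity claim underlying the for loop: proving that, as $t=T_j$ increases, the queues of $\I(0)$ exit the equilibrium set in strictly decreasing order of $(\mu_i-\lambda_i)/(\mu_i\theta_i)$, that none re-enters, and that $g$ is monotone with a definite sign pattern on each constant-$\I$ piece so that only the breakpoints matter. This needs careful tracking of how the closed form $\alpha_{\I}$ in \eqref{eq:alpha_m} — and hence the threshold in line~7 — changes when a single index is deleted from $\I$, and it needs ruling out interior extrema of the linear-fractional $g$. The remaining ingredients — the reduction through \Cref{thm:structure_opt_ex}, the $t\to\infty$ limit, and the step-count bookkeeping — are comparatively routine.
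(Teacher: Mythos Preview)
Your proposal is correct and follows essentially the same route as the paper: the paper packages the ordering-and-monotonicity claim you flag as the main obstacle into an auxiliary lemma (their \Cref{lem:monoton_tp_hat_T}), proving (i) that the equilibrium set at the $\ell$-th breakpoint is exactly $\I(0)\setminus\{k_1,\dots,k_\ell\}$ via the same proportionality observation $\alpha_i\propto(\mu_i-\lambda_i)/(\mu_i\theta_i)$ from \Cref{lem:inverse_mat}, and (ii) that $\tp(T_j,0_{-j})$ is monotone on each inter-breakpoint interval by computing the derivative explicitly and checking its sign is independent of $T_j$---your ``ratio of two affine functions'' argument is a slightly cleaner way to see the same thing. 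The step-count bookkeeping and the reduction through \Cref{thm:structure_opt_ex} are identical.
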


\Cref{thm:opt_ex_algo} is grounded in three pivotal properties. First, by \Cref{thm:structure_opt_ex}, we can, without loss of optimality, set $T^\ast_{-j} = 0$ for some $j\in\argmax_{i\in \bar{\I}(0)}\lambda_i$, significantly simplifying our analysis.
Second, the variable $\alpha$ (which may not yet be at equilibrium) provided in \eqref{eq:alpha_m}--\eqref{eq:alpha_m_n} exhibits a non-decreasing relationship with $T_j$. This property allows for an efficient method to calculate the equilibrium set (and the corresponding variable) under a new $T_j$, as shown in lines~7--8. 
This effectively avoids re-running \Cref{alg:pivoting} to derive  $\alpha(T_j, 0_{-j})$. 
Third, as mentioned before, it is sufficient to only check boundary values of $T_j$ that cause the equilibrium set to change. Based on Theorems \ref{thm:structure_opt_ex} and \ref{thm:opt_ex_algo}, we also provide a closed-form expression of the optimal policy when $n=2$ in Appendix \ref{app_sec:subsec_twoqueues}.

Regarding the $2\bar{n}$-step convergence, the algorithm requires at most $\bar{n}$ steps to identify the equilibrium set $\I(0)$ in line~1, as established in \Cref{prop:n_step}. Since $|\I(0)| \leq \bar{n}$, the ``for loop'' in lines~5--11 also completes in at most $\bar{n}$ steps. Consequently, \Cref{alg:opt} terminates within at most $2\bar{n}$ steps.

\Cref{fig:exhaustive_hat_T_j_numerical} illustrates how the throughput varies with the post-clearance duration $T_j$, where $j \in \argmax_{i \in \barI(0)} \lambda_i$ denotes the queue with the highest arrival rate among those eligible for a non-zero duration. As discussed earlier, \Cref{alg:opt} effectively compares throughput only at the boundary values of $T_j$ (indicated by the vertical lines), since the throughput is monotonic between consecutive boundaries. \Cref{subfig:exh-0.5} presents the case with relatively low waiting patience. Here, the equilibrium joining set under the pure exhaustive service policy is empty. For this set of parameters, the throughput decreases with increasing $T_j$, indicating that the optimal policy is the pure exhaustive policy. \Cref{subfig:exh-1.5} shows the case with moderate waiting patience, where $\I(0) = \{1, 3\}$. In this example, the throughput initially increases and then decreases, with the optimal post-clearance duration $T_j^\ast$ around $0.66$.  Finally, \Cref{subfig:exh-3.0} considers the case with higher waiting patience, showing a similar pattern to \Cref{subfig:exh-1.5}, but under larger patience levels.

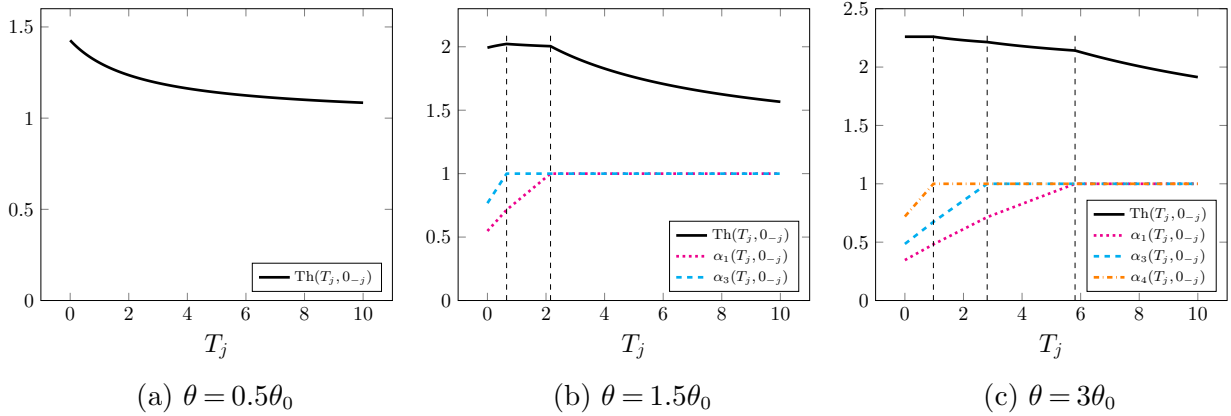
\begin{figure*}[htbp]
    \centering
    \subfloat[$\theta=0.5\theta_0$ \label{subfig:exh-0.5}]{%
        \resizebox{.32\textwidth}{!}{\begin{tikzpicture}
\begin{axis}[
    xlabel={$T_j$},
    legend style={legend pos=south east, font=\scriptsize},
    label style={font=\large},
    ymin=0,
    ymax=1.6,
    xtick={0,2,4,6,8,10},
    xticklabels={0,2,4,6,8,10}, %
    yticklabel style={/pgf/number format/assume math mode} %
]

\addplot[
    solid,
    line width=1.5pt %
] table [
    x=Hat_T_j_values,
    y=Throughput,
    col sep=space %
]{numerical_data/exhaustive_service_policy/exhaustive_data_I_empty.txt};
\addlegendentry{$\tp(T_j,0_{-j})$};

\end{axis}
\end{tikzpicture}}
    }
    \hfill
    \subfloat[$\theta=1.5\theta_0$ \label{subfig:exh-1.5}]{%
        \resizebox{.32\textwidth}{!}{\begin{tikzpicture}
\begin{axis}[
    xlabel={$T_j$},
    legend style={legend pos=south east, font=\scriptsize},
    label style={font=\large},
    ymin=0,
    ymax=2.3,
    xtick={0,2,4,6,8,10},
    xticklabels={0,2,4,6,8,10},
    yticklabel style={/pgf/number format/assume math mode}
]

\addplot[
    solid,
    line width=1.5pt
] table [
    x=Hat_T_j_values,
    y=Throughput,
    col sep=space
]{numerical_data/exhaustive_service_policy/exhaustive_data_I_2_0.txt};
\addlegendentry{$\tp(T_j,0_{-j})$};

\addplot[
    dotted,
    line width=1.5pt,
    color=magenta
] table [
    x=Hat_T_j_values,
    y=Alpha0,
    col sep=space
]{numerical_data/exhaustive_service_policy/exhaustive_data_I_2_0.txt};
\addlegendentry{$\alpha_1(T_j,0_{-j})$};

\addplot[
    dashed,
    line width=1.5pt,
    color=cyan
] table [
    x=Hat_T_j_values,
    y=Alpha2,
    col sep=space
]{numerical_data/exhaustive_service_policy/exhaustive_data_I_2_0.txt};
\addlegendentry{$\alpha_3(T_j,0_{-j})$};

\draw [dashed, line width=0.1pt] (axis cs:0.65460526,0) -- (axis cs:0.65460526,2.1);
\draw [dashed, line width=0.1pt] (axis cs:2.15460526,0) -- (axis cs:2.15460526,2.1);

\end{axis}
\end{tikzpicture}}
    }
    \hfill
    \subfloat[$\theta=3\theta_0$ \label{subfig:exh-3.0}]{%
        \resizebox{.32\textwidth}{!}{\begin{tikzpicture}
\begin{axis}[
    xlabel={$T_j$},
    legend style={legend pos=south east, font=\scriptsize},
    label style={font=\large},
    ymin=0,
    ymax=2.5,
    xtick={0,2,4,6,8,10},
    xticklabels={0,2,4,6,8,10},
    yticklabel style={/pgf/number format/assume math mode}
]

\addplot[
    solid,
    line width=1.5pt
] table [
    x=Hat_T_j_values,
    y=Throughput,
    col sep=space
]{numerical_data/exhaustive_service_policy/exhaustive_data_I_3_2_0.txt};
\addlegendentry{$\tp(T_j,0_{-j})$};

\addplot[
    dotted,
    line width=1.5pt,
    color=magenta
] table [
    x=Hat_T_j_values,
    y=Alpha0,
    col sep=space
]{numerical_data/exhaustive_service_policy/exhaustive_data_I_3_2_0.txt};
\addlegendentry{$\alpha_1(T_j,0_{-j})$};

\addplot[
    dashed,
    line width=1.5pt,
    color=cyan
] table [
    x=Hat_T_j_values,
    y=Alpha2,
    col sep=space
]{numerical_data/exhaustive_service_policy/exhaustive_data_I_3_2_0.txt};
\addlegendentry{$\alpha_3(T_j,0_{-j})$};

\addplot[
    dash dot,
    line width=1.5pt,
    color=orange
] table [
    x=Hat_T_j_values,
    y=Alpha3,
    col sep=space
]{numerical_data/exhaustive_service_policy/exhaustive_data_I_3_2_0.txt};
\addlegendentry{$\alpha_4(T_j,0_{-j})$};

\draw [dashed, line width=0.1pt] (axis cs:0.97368421,0) -- (axis cs:0.97368421,2.3);
\draw [dashed, line width=0.1pt] (axis cs:2.80921053,0) -- (axis cs:2.80921053,2.3);
\draw [dashed, line width=0.1pt] (axis cs:5.80921053,0) -- (axis cs:5.80921053,2.3);

\end{axis}
\end{tikzpicture}}
    }
    \caption{Throughput and Equilibrium of the Exhaustive Service Policy 
    $T=(T_j,T_{-j}=0)$, where $j \in \argmax_{i \in \barI(0)} \lambda_i$. \emph{Note}. %
    The figures are based on the parameters $n = 4$, $\lambda = [1.0, 1.2, 0.4, 0.2]$, $\mu = [5.0, 3.0, 2.0, 4.0]$, $1^\top \tau = 1.5$, and $\theta_0 = [3.5, 0.5, 2.5, 2.0]$.
    It turns out that in all three cases, the queue $j \in \argmax_{i \in \barI(0)} \lambda_i$ is queue $2$.
    The values of the non-plotted variables $\alpha_i$ remain fixed at~1 throughout. The equilibrium joining sets under the pure exhaustive service policy are $\I(0) = \emptyset$ in~(a), $\I(0) = \{1, 3\}$ in~(b), and $\I(0) = \{1, 3, 4\}$ in~(c).}
    \label{fig:exhaustive_hat_T_j_numerical}
\end{figure*}

It is worth emphasizing that the set of all-joining queues $\I(0)$ under the pure exhaustive service policy $\pe$, as described in \Cref{thm:structure_opt_ex} and \Cref{alg:opt}, serves as a meaningful indicator of ``patient'' queues whose customers who are more inclined to join. This is important because the patience parameters $\theta_i$ alone do not provide a complete picture. Just as an example, a high patience level $\theta_i$ combined with very low service rates $\mu_i$ can possibly be similar to a case with lower patience but higher service rates. In this sense, $\I(0)$ offers a more informative measure by capturing the interplay between patience, utilization, as well as strategic interactions among customers across different queues.

\subsection{Connection to the Exogenous Regime}
\label{subsec:connection}

We now relate our results back to the exogenous regime analyzed in \Cref{sec:exogenous_regime}. Notice that the equilibrium under an exhaustive service policy (as shown in \Cref{fig:exhaustiveservice_equilibrium}) shares the same structural characteristics as the equilibrium outcomes in Figures~\ref{subfig:exh_case1} and~\ref{subfig:exh_case4}. The key difference lies in the control mechanism: in \Cref{fig:exhaustiveservice_equilibrium}, the server controls only the post-clearance durations~$T$, which in turn endogenously determine the \on-\off\ durations. In contrast, in Figures~\ref{subfig:exh_case1} and~\ref{subfig:exh_case4}, the planner directly specifies the \on-\off\ durations, and the post-clearance durations are induced as a consequence.

Among the four exhaustive equilibrium patterns identified in the exogenous regime (see \Cref{fig:fixed_duration_exhaustive_equilibrium}), an exhaustive service policy can replicate only the patterns shown in Figures~\ref{subfig:exh_case1} and~\ref{subfig:exh_case4}. As established in \Cref{thm:polling_identical_service_rates}, when service rates are identical, the optimal equilibrium outcome in the exogenous regime corresponds precisely to these two patterns and, therefore, can be implemented through the optimal exhaustive service policy. In such cases, this leads to a significant computational advantage, as the problem can be solved efficiently using \Cref{alg:opt}, rather than through a linear program~\eqref{prob:LP_fixed_duration}.

In fact, for any equilibrium outcome induced by a given service policy, there exists an exogenous \on-\off\ duration setting that replicates the same equilibrium outcome. This idea is reminiscent of the \textit{revelation principle} in mechanism design \citep{myerson1979incentive}. Specifically, given an equilibrium outcome under some service policy, we can extract the \on-\off\ durations realized at equilibrium and use them to construct a service policy with exogenous \on-\off\ durations considered in \Cref{sec:exogenous_regime}. %

Starting from the same system state as in the original equilibrium, customers face identical waiting times, resulting in the same equilibrium outcome. Consequently, focusing on the exogenous regime involves no loss of generality, as any equilibrium outcome can be reproduced through a suitable choice of exogenous \on-\off\ durations. Nevertheless, the endogenous regime provides additional insights into customers' strategic behavior and often leads to optimal policies that are easier to compute and require fewer parameters to implement (e.g., the pure exhaustive policy).

\section{Closing Remarks}
\label{sec:closing}

In summary, this paper analyzes \on-\off systems with strategic customers under two distinct sources of \on-\off durations. In both the exogenous and endogenous regimes, equilibrium behaviors feature herding cycles, with both follow-the-crowd (FTC) and avoid-the-crowd (ATC) dynamics emerging. For the exogenous regime, we present a compact linear program to compute the optimal \on-\off durations. For the endogenous regime, we design an efficient algorithm that identifies the optimal exhaustive service policies within at most $2n$ iterations.

An interesting direction for future research is to investigate how the planner’s choice of information disclosure policy influences equilibrium outcomes and to characterize the optimal disclosure strategy. Another promising avenue is to extend our model to a network setting with multiple servers.

\ACKNOWLEDGMENT{The authors thank Li Jin from Shanghai Jiao Tong University for his support and the valuable early-stage discussions that helped shape this project. The first author acknowledges funding from the Imperial College President's PhD Scholarship programme. The third author also gratefully acknowledges the Schloss Dagstuhl Seminar 24281 \textit{Dynamic Traffic Models in Transportation Science} for inspiring ideas related to dynamic congestion games.}

\bibliographystyle{informs2014}

\bibliography{references.bib}

\newpage

\ECSwitch

\ECHead{\centering{\textsc{Online Appendix}}}

\setcounter{equation}{0}
\renewcommand{\theequation}{\Alph{section}.\arabic{equation}}%

\begin{APPENDICES}

\section{Auxiliary Results}
\label{app_sec:auxiliary_result}

In Appendix \ref{app_sec:lambda_i_greater_mu_i}, we present the results for the case $\lambda_i \geq \mu_i$. 
In Appendix \ref{app_subsec:optimal_exo_n=1}, we discuss the optimal exogenous \on-\off durations in the single-queue setting with additional operational constraints.
In Appendix \ref{app_sec:subsec_twoqueues}, we consider a two-queue example for the endogenous regime, where we provide explicit analytical characterizations of
the equilibrium and the optimal policy. Additional technical results are provided in Appendix \ref{app_sub:inversematrix}.

\subsection{The case of $\lambda_i\geq \mu_i$}
\label{app_sec:lambda_i_greater_mu_i}

\subsubsection{Equilibrium Analysis\\}

\hspace{-4mm}\noindent\textit{Exogenous Regime.} Given an exogenous \on-\off duration $(L_i,\bar{L}_i)$ with $\mu_i \leq \lambda_i$, the primary distinction from the case of $\mu_i>\lambda_i$ is that customers adopt a mixed joining strategy with probability $\frac{\mu_i}{\lambda_i}$ each time they consider joining. Despite this adjustment, the pattern of herding cycles remains intact.
The waiting time dynamics for the case of $\mu_i\leq \lambda_i$ is illustrated in \Cref{fig:waiting_time_fixed_duration_mui_smaller}.

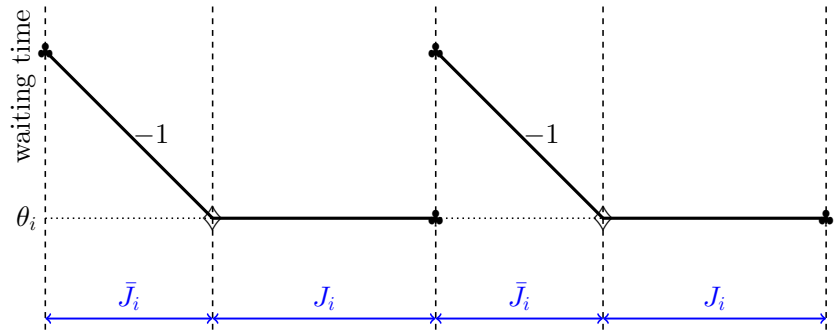
\begin{figure*}[htbp]
  \centering
  \resizebox{.68\textwidth}{!}{\begin{tikzpicture}[font=\large, line width=1pt] 

\draw (-0.4,6.9) node[left, rotate=90] {waiting time};

\draw[line width=1.6pt] (0,6) -- (3,3) -- (7,3);
\draw[line width=1.6pt] (7,6) -- (10,3) -- (14,3);

\draw[thick,dashed] (0,1) -- (0,6.8);
\draw[thick,dashed] (3,1) -- (3,6.8);
\draw[thick,dashed] (7,1) -- (7,6.8);
\draw[thick,dashed] (10,1) -- (10,6.8);
\draw[thick,dashed] (14,1) -- (14,6.8);

\draw [to-to,blue] (0,1.2) --(3,1.2);
\draw (1.5,1.2) node[right, color=blue, above] {$\bar{J}_i$};

\draw [to-to,blue] (7,1.2) --(10,1.2);
\draw (8.5,1.2) node[right, color=blue, above] {$\bar{J}_i$};

\draw [to-to,blue] (3,1.2) --(7,1.2);
\draw (5,1.2) node[right, color=blue, above] {$J_i$};

\draw [to-to,blue] (10,1.2) --(14,1.2);
\draw (12,1.2) node[right, color=blue, above] {$J_i$};

\draw  (1.9,4.5) node {$-1$};
\draw  (8.9,4.5) node {$-1$};

\draw[thick,dotted] (0,3) -- (14,3);
\draw (0,3) node[left] {$\theta_i$};

\draw (0,6) node {\small{$\clubsuit$}};
\draw (7,6) node {\small{$\clubsuit$}};
\draw (7,3) node {\small{$\clubsuit$}};
\draw (14,3) node {\small{$\clubsuit$}};

\draw (3,3) node {$\diamondsuit$};
\draw (10,3) node {$\diamondsuit$};

\end{tikzpicture}}
  \caption{Illustration of the Waiting Time Dynamics under Exogenous \on-\off Duration ($\mu_i\leq \lambda_i$). 
  \textit{Note}. The amount of the upward jump at \small{$\clubsuit$} is $\bar{L}_i$. 
  }
\label{fig:waiting_time_fixed_duration_mui_smaller}
\end{figure*}

\begin{theorem}
\label{thm:fixed_duration_equilibrium_mui_smaller}
Given an exogenous \on-\off duration $(L_i, \bar{L}_i)$ with $\mu_i\leq \lambda_i$:
\begin{enumerate}
    \item When $\bar{L}_i\geq \theta_i$, there exists a unique equilibrium outcome, which is exhaustive, as characterized in \Cref{fig:fixed_duration_exhaustive_equilibrium_mui_smaller}.
    \item When $\bar{L}_i<\theta_i$, there exists a unique equilibrium outcome, which is non-exhaustive, as characterized in \Cref{fig:fixed_duration_nonexhaustive_equilibrium_mui_smaller} with four different cases further depending on $L_i$.
    Let $k_i := \floor{\theta_i/(L_i + \bar{L}_i)}$.
\begin{enumerate}
    \item If $ L_i \geq  \bar{L}_i$, the equilibrium outcome depends on which interval $k_i$ falls into:\footnote{The boundary values can belong to either case.}
    \begin{center}
        \resizebox{0.53\textwidth}{!}{\begin{tikzpicture}
  \draw (1,0) -- (7,0);
  
  \foreach \x/\lab in {1/{$\frac{\theta_i}{L_i+\bar{L}_i}-1$}, 
                       3/{$\frac{\theta_i-L_i}{L_i+\bar{L}_i}$}, 
                       5/{$\frac{\theta_i-\bar{L}_i}{L_i+\bar{L}_i}$}, 
                       7/{$\frac{\theta_i}{L_i+\bar{L}_i}$}} {
    \draw (\x,0.1) -- (\x,-0.1) node[below] {\lab}; %
  }

  \node at (2,0.3) {\footnotesize{\textnormal{Case 1}}};
  \node at (4,0.3) {\footnotesize{\textnormal{Case 2}}};
  \node at (6,0.3) {\footnotesize{\textnormal{Case 4}}};
\end{tikzpicture}}
    \end{center}
    \item If $ L_i < \bar{L}_i$, again, the equilibrium outcome depends on which interval $k_i$ falls into:
    \begin{center}
        \resizebox{0.53\textwidth}{!}{\begin{tikzpicture}
    \draw (1,0) -- (7,0);
  \foreach \x/\lab in {1/{$\frac{\theta_i}{L_i+\bar{L}_i}-1$}, 
                       3/{$\frac{\theta_i-\bar{L}_i}{L_i+\bar{L}_i}$}, 
                       5/{$\frac{\theta_i-L_i}{L_i+\bar{L}_i}$}, 
                       7/{$\frac{\theta_i}{L_i+\bar{L}_i}$}} {
    \draw (\x,0.1) -- (\x,-0.1) node[below] {\lab}; %
  }

  \node at (2,0.3) {\footnotesize{\textnormal{Case 1}}};
  \node at (4,0.3) {\footnotesize{\textnormal{Case 3}}};
  \node at (6,0.3) {\footnotesize{\textnormal{Case 4}}};

\end{tikzpicture}}
    \end{center}
\end{enumerate}
The corresponding variables $\zeta_i$ and $\underline{q}_i$ in \Cref{fig:fixed_duration_nonexhaustive_equilibrium_mui_smaller} are uniquely determined as follows:
\begin{enumerate}[label=Case \arabic*:, leftmargin=1.8cm]
    \item $\zeta_i = (k_i + 1)(L_i + \bar{L}_i) - \theta_i$ and $\underline{q}_i = k_i \mu_i L_i + \mu_i (L_i - \zeta_i)$;
    \item $\zeta_i = k_i (L_i + \bar{L}_i) +  L_i - \theta_i$ and $\underline{q}_i = k_i \mu_i L_i + \mu_i \left(  L_i - \bar{L}_i - \zeta_i \right)$;
    \item $\zeta_i = k_i (L_i + \bar{L}_i) + \bar{L}_i - \theta_i$ and $\underline{q}_i = k_i \mu_i L_i$;
    \item $\zeta_i = k_i (L_i + \bar{L}_i) + \bar{L}_i - \theta_i$ and $\underline{q}_i = k_i \mu_i L_i$.
\end{enumerate}
\end{enumerate}
\end{theorem}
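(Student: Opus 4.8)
The plan is to reuse the two-step template behind \Cref{thm:equilibrium_fixedduration_exhaustive} and \Cref{thm:fixed_duration_nonexhaustive_equilibrium} — first pin down the waiting-time dynamics and the coarse joining pattern, then reduce the equilibrium outcome to a few scalar unknowns and show they are uniquely determined — modifying only the ingredient that distinguishes $\mu_i\le\lambda_i$: customers must randomize. First I would record the analogue of \Cref{lem:fixed_duration_waitingtime} for $\mu_i\le\lambda_i$. Since $\lambda_i\ge\mu_i$, if every customer always joined the effective arrival rate would weakly exceed the service rate, some queue length would diverge, and waiting times would eventually exceed $\theta_i$; hence $\bar J_i>0$ in every periodic equilibrium. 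Next, on any sub-interval of a joining period the waiting time cannot drop strictly below $\theta_i$ (else all customers there strictly prefer to join, and because $\lambda_i\ge\mu_i$ the induced extra inflow can only keep or raise the waiting time, never restoring indifference, contradicting periodicity), nor can it exceed $\theta_i$ (else nobody joins). Therefore customers use the symmetric mixed strategy with joining probability $\mu_i/\lambda_i$, which makes the effective arrival rate exactly $\mu_i$, holds $W_i(t)$ flat at $\theta_i$ throughout the joining period, and — via the jump of the floor term $z_i(\cdot)$ in \eqref{eq:z_i(t)}, exactly as in \Cref{lem:fixed_duration_waitingtime}(ii)--(iv) — forces an upward jump of $\bar L_i$ at the joining$\to$not-joining epoch $\clubsuit$, a decrease at unit rate over the not-joining period, and $W_i(\diamondsuit)=\theta_i$, i.e. the picture of \Cref{fig:waiting_time_fixed_duration_mui_smaller}. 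A flow-balance identity (customers joining per cycle $=$ customers served per cycle $=\mu_i\times$ the server's busy time per cycle) then ties $J_i$ to the busy time; whenever the \on period is fully busy this gives $J_i=L_i$ and $\bar J_i=\bar L_i$.

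For part (i), $\bar L_i\ge\theta_i$: the equilibrium must be exhaustive, by the same argument as in \Cref{thm:equilibrium_fixedduration_exhaustive} — a customer still present when the server departs would not be served before the next \on period, hence would wait at least $\bar L_i\ge\theta_i$ (plus the time to clear those ahead), contradicting her joining. Given exhaustiveness and the waiting-time picture, I would locate $\diamondsuit$ from $W_i(\diamondsuit)=\theta_i$ together with \Cref{lem:waiting_time}, distinguishing the cases according to how $\theta_i$ compares with $L_i$ (which governs whether the customer at $\diamondsuit$ is served within one \on period or must wait through additional cycles), and then read off the maximum and minimum queue lengths and the phase offset by geometry and flow balance; this produces \Cref{fig:fixed_duration_exhaustive_equilibrium_mui_smaller}. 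Uniqueness is immediate because, once the integer cycle count is fixed, all governing relations are equalities, and that count is pinned down by $\bar L_i\ge\theta_i$.

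For part (ii), $\bar L_i<\theta_i$: one first rules out exhaustive outcomes — if the queue were empty at a departure epoch there would be a post-clearance segment on which customers, facing zero wait, join with probability one, so the effective rate would be $\lambda_i\ge\mu_i$ and no \on period could clear the resulting backlog, breaking periodicity — hence $T_i=0$ and the outcome is non-exhaustive with the piecewise-linear shapes of \Cref{fig:fixed_duration_nonexhaustive_equilibrium_mui_smaller}, the server being busy throughout each \on period so that $J_i=L_i$ and $\bar J_i=\bar L_i$. Since the joining period then has exactly the length of an \on period, it cannot strictly contain a full \on period, hence cannot run from one \off period into a later \off period; this excludes ``Case 5'' and leaves precisely Cases 1--4, split by whether $L_i\ge\bar L_i$ or $L_i<\bar L_i$ as stated. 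In each case the unknowns reduce to the offset $\zeta_i$ and the minimum queue length $\underline q_i$ — every other quantity, in particular the maximum queue length, follows by simple geometry and flow balance, mirroring the relation $\bar q_i=\underline q_i+\mu_i(\bar L_i-(\bar J_i-\zeta_i))$ used in the analogue of \Cref{subfig:nonexh_case1}. Imposing (a) $W_i(\diamondsuit)=\theta_i$, (b) non-negativity of every segment length, and (c) integrality of the floor term in $z_i(\diamondsuit)$ — valid because a customer arriving at $\diamondsuit$ is served exactly at the start of an \on period — turns the problem into finding an integer $k_i$ in an interval with endpoints $\theta_i/(L_i+\bar L_i)-1$ and $\theta_i/(L_i+\bar L_i)$; as this interval has length exactly one it contains the unique integer $k_i=\lfloor\theta_i/(L_i+\bar L_i)\rfloor$, which selects the case and yields the stated closed forms for $\zeta_i$ and $\underline q_i$. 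The boundary subtlety when $\theta_i/(L_i+\bar L_i)$ is an integer is handled as in the footnote to \Cref{thm:fixed_duration_nonexhaustive_equilibrium}: the two candidate cases collapse to the same dynamics.

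The main obstacle is the first paragraph — rigorously establishing that the unique equilibrium behavior within a joining period is the $\mu_i/\lambda_i$ mixed strategy. This is a self-consistency statement about a continuum of infinitesimal customers whose waiting times are coupled through \Cref{lem:waiting_time}, so it requires ruling out more exotic profiles (a joining probability varying within the period, non-contiguous joining intervals) by a monotonicity argument on the waiting-time map. Once that is settled, the dichotomy between exhaustive and non-exhaustive outcomes and the per-case algebra with the floor function are essentially bookkeeping analogous to the $\mu_i>\lambda_i$ proofs.
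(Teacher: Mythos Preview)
Your proposal is correct and follows exactly the approach the paper indicates: the paper's own proof simply states that it ``follows in a manner similar to that of \Cref{thm:equilibrium_fixedduration_exhaustive} and \Cref{thm:fixed_duration_nonexhaustive_equilibrium}'' and omits all details, and your write-up is precisely a fleshed-out version of that adaptation, correctly isolating the mixed joining probability $\mu_i/\lambda_i$ (and the resulting effective arrival rate $\mu_i$, hence $J_i=L_i$, $\bar J_i=\bar L_i$, and the disappearance of Case~5) as the only substantive change. Your identification of the mixed-strategy self-consistency as the one step requiring genuine care is also apt; everything downstream---the exhaustive/non-exhaustive dichotomy via $\bar L_i\gtrless\theta_i$, the integer $k_i$ in a unit-length interval, and the per-case geometry---is indeed bookkeeping parallel to the $\mu_i>\lambda_i$ proofs.
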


\begin{figure*}[ht]
\centering
  \subfloat[{\small  $\bar{L}_i\geq \theta_i$ and $L_i\geq \frac{\lambda_i\theta_i}{\mu_i}$}]{  \resizebox{.48\textwidth}{!}{\begin{tikzpicture}[font=\large, line width=1pt] 

\draw (-3.4,6) node[left, rotate=90] {{\small{queue length}}};

\draw[line width=1.6pt] (-3,1) -- (-2,1) -- (0,4) -- (2,4)-- (4,1) -- (5,1) -- (7,4);

\draw[thick,dashed] (0,0) -- (0,6);
\draw[thick,dashed] (-3,0) -- (-3,6);
\draw[thick,dashed] (4,0) -- (4,6);
\draw[thick,dashed] (7,0) -- (7,6);

\draw [to-to,black] (4,5) --(7,5);
\draw (5.5, 5) node[right, color=black, above] {$\bar{L}_i$};

\draw [to-to,black] (-3,5) --(0,5);
\draw (-1.5, 5) node[right, color=black, above] {$\bar{L}_i$};
\draw [to-to,black] (4,5) --(0,5);
\draw (2,5) node[right, color=black, above] {$L_i$};

\draw	(0,4.1) node[anchor=east,red] {$\mu_i\theta_i$};
\draw	(-3,1.2) node[anchor=east,red] {$0$};

\draw[thick,dotted] (-2,1) -- (-2,0);
\draw[thick,dotted] (2,4) -- (2,0);
\draw[thick,dotted] (5,0) -- (5,1);

\draw [to-to,blue] (-2,0.2) --(2,0.2);
\draw (0,0.2) node[right, color=blue, above] {$J_i$};

\draw [to-to,blue] (2,0.2) --(5,0.2);
\draw (3.5,0.2) node[right, color=blue, above] {$\bar{J}_i$};

\draw [to-to,red] (5,1) --(7,1);
\draw (6,1) node[right, color=red, above] {$\theta_i$};

\draw [above] (-0.9,2.8) node {\small{$\mu_i$}};
\draw [above] (5.9,2.5) node {\small{$\mu_i$}};
\draw (3.5,2.3) node {\small{$-\mu_i$}};

\draw (2,4) node {\small{$\clubsuit$}};

\draw (-2,1) node {$\diamondsuit$};
\draw (5,1) node {$\diamondsuit$};

\end{tikzpicture}}}
  \hfill
  \subfloat[{\small  $\bar{L}_i\geq \theta_i$ and $ L_i<\frac{\lambda_i\theta_i}{\mu_i}$}]{  \resizebox{.48\textwidth}{!}{\begin{tikzpicture}[font=\large, line width=1pt]

\draw[line width=1.6pt](-3,1) -- (-2,1) -- (0,4) --(1,4) -- (3,1) -- (4,1) -- (6,4) --(7,4);

\draw[thick,dashed] (-3,0) -- (-3,6);
\draw[thick,dashed] (1,0) -- (1,6);
\draw[thick,dashed] (3,0) -- (3,6);
\draw[thick,dashed] (7,0) -- (7,6);

\draw [to-to,black] (-3,5) --(1,5);
\draw (-1, 5) node[right, color=black, above] {$\bar{L}_i$};
\draw [to-to,black] (1,5) --(3,5);
\draw (2,5) node[right, color=black, above] {$L_i$};
\draw [to-to,black] (3,5) --(7,5);
\draw (5, 5) node[right, color=black, above] {$\bar{L}_i$};

\draw	(1,4.2) node[anchor=west,red] {$\mu_i L_i$};
\draw	(-3,1.2) node[anchor=east,red] {$0$};

\draw[thick,dotted] (-2,1) -- (-2,0);
\draw[thick,dotted] (0,4) -- (0,0);
\draw[thick,dotted] (4,1) -- (4,0);
\draw [thick,dotted] (6,4) --(6,0);

\draw [to-to,blue] (-2,0.2) --(0,0.2);
\draw (-1,0.2) node[right, color=blue, above] {$J_i$};
\draw [to-to,blue] (4,0.2) --(6,0.2);
\draw (5,0.2) node[right, color=blue, above] {$J_i$};

\draw [to-to,blue] (0,0.2) --(4,0.2);
\draw (2,0.2) node[right, color=blue, above] {$\bar{J}_i$};

\draw [to-to,red] (4,1) --(7,1);
\draw (5.5,1) node[right, color=red, above] {$\theta_i$};

\draw [above] (5.3,3.2) node {\small{$\mu_i$}};
\draw [above] (-1,2.8) node {\small{$\mu_i$}};
\draw (2.2,2.9) node {\small{$-\mu_i$}};

\draw (6,4)  node {\small{$\clubsuit$}};
\draw  (0,4) node {\small{$\clubsuit$}};

\draw (-2,1) node {$\diamondsuit$};
\draw (4,1) node {$\diamondsuit$};

\end{tikzpicture}}}

  \caption{Queueing Dynamics of Exhaustive Equilibrium Outcomes under \on-\off Duration $(L_i,\bar{L}_i)$ with $\mu_i\leq \lambda_i$.}
\label{fig:fixed_duration_exhaustive_equilibrium_mui_smaller}
\end{figure*}

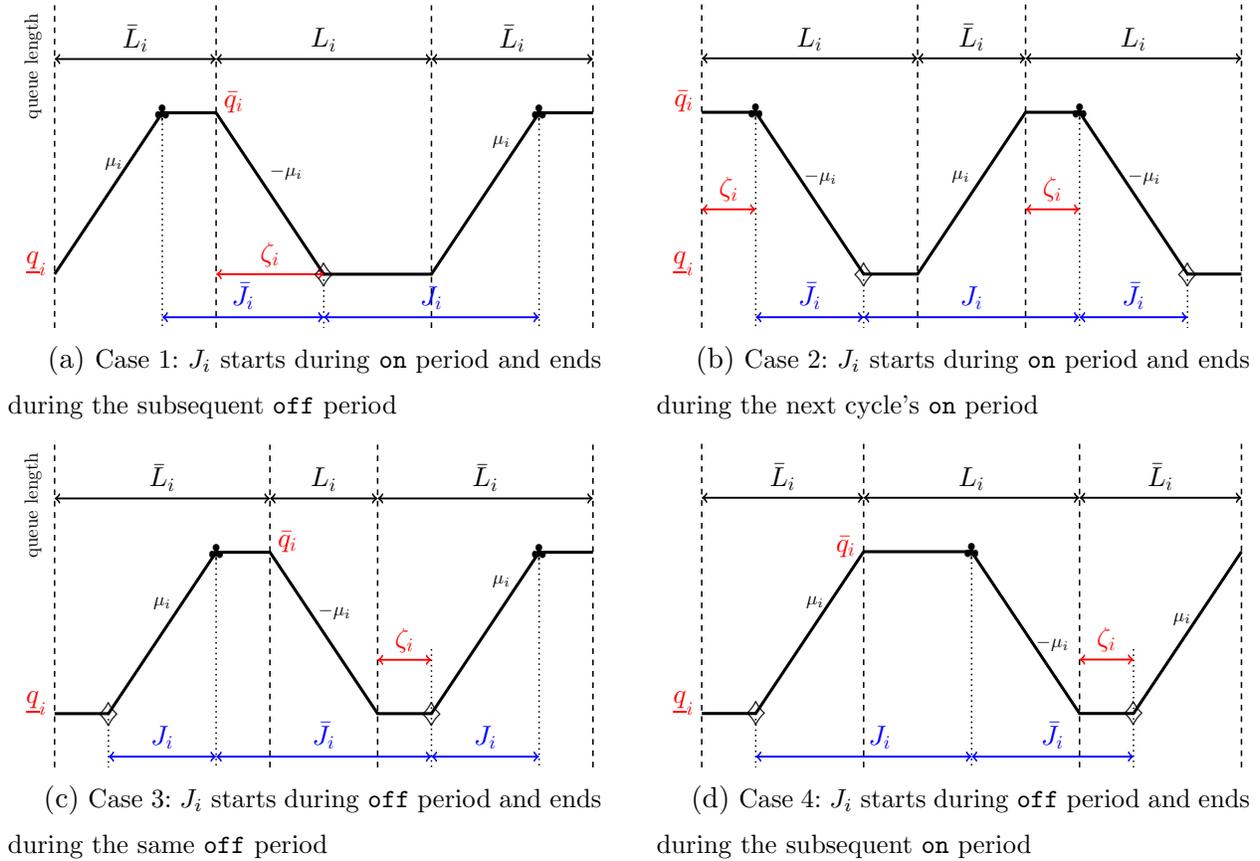
\begin{figure*}[ht]
\centering
 \subfloat[{\small{Case 1: $J_i$ starts during \on period and ends during the subsequent \off period}}]{  \resizebox{.47\textwidth}{!}{\begin{tikzpicture}[font=\large, line width=1pt] 

\draw (-3.4,6) node[left, rotate=90] {{\small{queue length}}};

\draw[thick,dashed] (0,0) -- (0,6);
\draw[thick,dashed] (-3,0) -- (-3,6);
\draw[thick,dashed] (4,0) -- (4,6);
\draw[thick,dashed] (7,0) -- (7,6);

\draw	(0,4.2) node[anchor=west,red] {$\Bar{q}_i$};
\draw	(-3,1.2) node[anchor=east,red] {$\underline{q}_i$};

\draw[line width=1.6pt] (-3,1) -- (-1,4) --(0,4) -- (2,1) -- (4,1) -- (6,4) --(7,4);

\draw [above] (5.3,3.2) node {\small{$\mu_i$}};
\draw [above] (-1.9,2.8) node {\small{$\mu_i$}};
\draw (1.3,2.9) node {\small{$-\mu_i$}};

\draw [to-to,black] (4,5) --(7,5);
\draw (5.5, 5) node[right, color=black, above] {$\bar{L}_i$};

\draw [to-to,black] (-3,5) --(0,5);
\draw (-1.5, 5) node[right, color=black, above] {$\bar{L}_i$};
\draw [to-to,black] (4,5) --(0,5);
\draw (2,5) node[right, color=black, above] {$L_i$};

\draw[thick,dotted] (-1,4) -- (-1,0);
\draw[thick,dotted] (2,1) -- (2,0);
\draw [thick,dotted] (6,4) --(6,0);

\draw [to-to,red] (2,1) --(0,1);
\draw (1,1) node[right, color=red, above] {$\zeta_i$};

\draw [to-to,blue] (-1,0.2) --(2,0.2);
\draw (0.5,0.2) node[right, color=blue, above] {$\bar{J}_i$};

\draw [to-to,blue] (6,0.2) --(2,0.2);
\draw (4,0.2) node[right, color=blue, above] {$J_i$};

\draw (-1,4) node {\small{$\clubsuit$}};
\draw (6,4) node {\small{$\clubsuit$}};

\draw (2,1) node {$\diamondsuit$};

\end{tikzpicture}}}
 \hfill
  \subfloat[{\small{Case 2: $J_i$ starts during \on period and ends during the next cycle's \on period}}]{  \resizebox{.47\textwidth}{!}{\begin{tikzpicture}[font=\large, line width=1pt] 

\draw[thick,dashed] (-3,0) -- (-3,6);
\draw[thick,dashed] (1,0) -- (1,6);
\draw[thick,dashed] (3,0) -- (3,6);

\draw[thick,dashed] (7,0) -- (7,6);

\draw	(-3,4.2) node[anchor=east,red] {$\Bar{q}_i$};
\draw	(-3,1.2) node[anchor=east,red] {$\underline{q}_i$};

\draw[line width=1.6pt] (-3,4) -- (-2,4) -- (0,1) -- (1,1) -- (3,4) -- (4,4) -- (6,1) --  (7,1);

\draw (-0.8,2.8) node {\small{$-\mu_i$}};
\draw (5.2,2.8) node {\small{$-\mu_i$}};

\draw (1.8,2.8) node {\small{$\mu_i$}};

\draw [to-to,black] (1,5) --(3,5);
\draw (2, 5) node[right, color=black, above] {$\bar{L}_i$};

\draw [to-to,black] (-3,5) --(1,5);
\draw (-1, 5) node[right, color=black, above] {$L_i$};
\draw [to-to,black] (3,5) --(7,5);
\draw (5,5) node[right, color=black, above] {$L_i$};

\draw[thick,dotted] (-2,4) -- (-2,0);
\draw[thick,dotted] (0,1) -- (0,0);
\draw[thick,dotted] (4,4) -- (4,0);
\draw[thick,dotted] (6,1) -- (6,0);

\draw [to-to,blue] (-2,0.2) --(0,0.2);
\draw (-1,0.2) node[right, color=blue, above] {$\bar{J}_i$};
\draw [to-to,blue] (4,0.2) --(6,0.2);
\draw (5,0.2) node[right, color=blue, above] {$\bar{J}_i$};

\draw [to-to,blue] (0,0.2) --(4,0.2);
\draw (2,0.2) node[right, color=blue, above] {$J_i$};

\draw [to-to,red] (-3,2.2) --(-2,2.2);
\draw (-2.5,2.2) node[right, color=red, above] {$\zeta_i$};
\draw [to-to,red] (3,2.2) --(4,2.2);
\draw (3.5,2.2) node[right, color=red, above] {$\zeta_i$};

\draw (-2,4) node {\small{$\clubsuit$}};
\draw (4,4) node {\small{$\clubsuit$}};
\draw (0,1) node {$\diamondsuit$};
\draw (6,1) node {$\diamondsuit$};
\end{tikzpicture}}}
\vspace{0.7em}
   \subfloat[{\small{Case 3: $J_i$ starts during \off period and ends during the same \off period}}]{  \resizebox{.47\textwidth}{!}{\begin{tikzpicture}[font=\large, line width=1pt] 

\draw (-3.4,6) node[left, rotate=90] {{\small{queue length}}};

\draw[line width=1.6pt](-3,1) -- (-2,1) -- (0,4) --(1,4) -- (3,1) -- (4,1) -- (6,4) --(7,4);

\draw[thick,dashed] (-3,0) -- (-3,6);
\draw[thick,dashed] (1,0) -- (1,6);
\draw[thick,dashed] (3,0) -- (3,6);
\draw[thick,dashed] (7,0) -- (7,6);

\draw [to-to,black] (-3,5) --(1,5);
\draw (-1, 5) node[right, color=black, above] {$\bar{L}_i$};
\draw [to-to,black] (1,5) --(3,5);
\draw (2,5) node[right, color=black, above] {$L_i$};
\draw [to-to,black] (3,5) --(7,5);
\draw (5, 5) node[right, color=black, above] {$\bar{L}_i$};

\draw	(1,4.2) node[anchor=west,red] {$\Bar{q}_i$};
\draw	(-3,1.2) node[anchor=east,red] {$\underline{q}_i$};

\draw[thick,dotted] (-2,1) -- (-2,0);
\draw[thick,dotted] (0,4) -- (0,0);
\draw[thick,dotted] (4,2.2) -- (4,0);
\draw [thick,dotted] (6,4) --(6,0);

\draw [to-to,red] (3,2) --(4,2);
\draw (3.5,2) node[right, color=red, above] {$\zeta_i$};

\draw [to-to,blue] (-2,0.2) --(0,0.2);
\draw (-1,0.2) node[right, color=blue, above] {$J_i$};
\draw [to-to,blue] (4,0.2) --(6,0.2);
\draw (5,0.2) node[right, color=blue, above] {$J_i$};

\draw [to-to,blue] (0,0.2) --(4,0.2);
\draw (2,0.2) node[right, color=blue, above] {$\bar{J}_i$};

\draw [above] (5.3,3.2) node {\small{$\mu_i$}};
\draw [above] (-1,2.8) node {\small{$\mu_i$}};
\draw (2.2,2.9) node {\small{$-\mu_i$}};

\draw (6,4) node {\small{$\clubsuit$}};
\draw (0,4) node {\small{$\clubsuit$}};

\draw (-2,1) node {$\diamondsuit$};
\draw (4,1) node {$\diamondsuit$};

\end{tikzpicture}}}
 \hfill
  \subfloat[{\small{Case 4: $J_i$ starts during \off period and ends during the subsequent \on period}}]{  \resizebox{.47\textwidth}{!}{\begin{tikzpicture}[font=\large, line width=1pt]

\draw[line width=1.6pt] (-3,1) -- (-2,1) -- (0,4) -- (2,4)-- (4,1) -- (5,1) -- (7,4);

\draw[thick,dashed] (0,0) -- (0,6);
\draw[thick,dashed] (-3,0) -- (-3,6);
\draw[thick,dashed] (4,0) -- (4,6);
\draw[thick,dashed] (7,0) -- (7,6);

\draw [to-to,black] (4,5) --(7,5);
\draw (5.5, 5) node[right, color=black, above] {$\bar{L}_i$};

\draw [to-to,black] (-3,5) --(0,5);
\draw (-1.5, 5) node[right, color=black, above] {$\bar{L}_i$};
\draw [to-to,black] (4,5) --(0,5);
\draw (2,5) node[right, color=black, above] {$L_i$};

\draw	(0,4.1) node[anchor=east,red] {$\Bar{q}_i$};
\draw	(-3,1.2) node[anchor=east,red] {$\underline{q}_i$};

\draw[thick,dotted] (-2,1) -- (-2,0);
\draw[thick,dotted] (2,4) -- (2,0);
\draw[thick,dotted] (5,0) -- (5,2.2);

\draw [to-to,blue] (-2,0.2) --(2,0.2);
\draw (0.3,0.2) node[right, color=blue, above] {$J_i$};

\draw [to-to,blue] (2,0.2) --(5,0.2);
\draw (3.5,0.2) node[right, color=blue, above] {$\bar{J}_i$};

\draw [to-to,red] (4,2) --(5,2);
\draw (4.5,2) node[right, color=red, above] {$\zeta_i$};

\draw [above] (-0.9,2.8) node {\small{$\mu_i$}};

\draw [above] (5.9,2.5) node {\small{$\mu_i$}};
\draw (3.5,2.3) node {\small{$-\mu_i$}};

\draw (2,4) node {\small{$\clubsuit$}};

\draw (5,1) node {$\diamondsuit$};
\draw (-2,1) node {$\diamondsuit$};

\end{tikzpicture}}}
  
  \vspace{0.5em}
  \caption{Queueing Dynamics of Non-Exhaustive Equilibrium Outcomes under \on-\off Duration $(L_i,\bar{L}_i)$ with $\mu_i\leq \lambda_i$.}
\label{fig:fixed_duration_nonexhaustive_equilibrium_mui_smaller}
\end{figure*}

\noindent \textit{Endogenous Regime.}
When there exists a queue $i \in \N$ such that $\lambda_i \geq \mu_i$, the equilibrium under an exhaustive service policy exhibits peculiar behavior. Specifically, because the server never departs from a non-empty queue under an exhaustive service policy, if $\lambda_i \geq \mu_i$ for some queue $i$, the equilibrium joining rate of this queue eventually equals the service rate. As a result, the server never clears this queue and continuously serves it. This scenario is not of interest, and we omit further discussion of the optimal exhaustive service policy in this case.

\subsubsection{Optimal Exogenous Durations}

In this section, we present an LP formulation to determine the optimal exogenous \on-\off durations. Define $\N_1$ as the set of queues for which the arrival rate is strictly less than the service rate, and let $\N_2$ be the rest of the queues.

For each queue $i \in \N_1$, the induced equilibrium post-clearance duration $T_i(L_i,\bar{L}_i)$ is given by \Cref{coro:fixed_duration_post_clearance_duration} based on the analysis in Sections \ref{sec:equilibrium_analysis_fixed_duration} and \ref{sec:opt_fixed_duration}. For each queue $i \in \N_2$, we have $T_i(L_i,\bar{L}_i)=0$ as a direct corollary of \Cref{thm:fixed_duration_equilibrium_mui_smaller}. Using the same reasoning as in the derivation of the LP formulation \eqref{prob:LP_fixed_duration}, we obtain the following LP, which finds the optimal \on-\off durations when there exist queues with arrival rates higher than their service rates.
\begin{align}
\max_{x,\bar{x}\in \mathbb{R}^n,y\in\mathbb{R}^{|\N_1|}_{+},g\in\mathbb{R}_{+}}&  
   \mu^\top x -\sum_{i\in\N_1}(\mu_i-\lambda_i)\cdot y_i  \nonumber\\
\textrm{s.t.}\qquad & y_i\geq 0,\quad \forall i\in\N_1, \nonumber\\
  & y_i \geq x_i -  \frac{\lambda_i\theta_i}{\mu_i-\lambda_i} g ,\quad \forall i\in\N_1, \nonumber \\
  & y_i \geq x_i -  \frac{\lambda_i}{\mu_i-\lambda_i}\bar{x}_i ,\quad \forall i\in\N_1, \nonumber \\
  & x_i + \bar{x}_i = 1,\quad \forall i \in \N, \nonumber \\
 & \bar{x}_i=1^\top x_{-i}+1^\top \tau \cdot g,\quad \forall i \in \N. \nonumber
\end{align}

Let $(x^\ast,\bar{x}^\ast,y^\ast,g^\ast)$ be the optimal solution of the above program. The optimal exogenous \on-\off durations $(L^\ast,\bar{L}^\ast)$ and the corresponding equilibrium post-clearance duration $T_i(L_i^\ast,\bar{L}_i^\ast)$ can then be recovered as follows:
\begin{align*}
   & L_i^\ast = \frac{x_i^\ast }{ g^\ast}, \quad 
    \bar{L}_i^\ast =   \frac{\bar{x}_i^\ast}{g^\ast},\quad \forall i \in \N;\\
   & T_i(L_i^\ast,\bar{L}_i^\ast) = \frac{y_i^\ast}{g^\ast},\quad \forall i \in \N_1 ,\quad T_i(L_i^\ast,\bar{L}_i^\ast) = 0,\quad \forall i \in \N_2.
\end{align*}

\subsection{Optimal Exogenous Durations in the Single-Queue Setting}
\label{app_subsec:optimal_exo_n=1}

We now discuss the optimal exogenous \on-\off durations in the single-queue setting that incorporates additional side constraints. We impose the following two constraints on the cycle durations:
\begin{enumerate}
\item \textit{Work limit}: 
\begin{align}
    L_i \leq L^{\textrm{max}}_i,\label{constraint:work_limit}
\end{align}
where $L^{\textrm{max}}_i > 0$ denotes the maximum allowable \on duration within a single cycle.
\item \textit{Forced off}: 
\begin{align}
    \bar{L}_i \geq \beta_i L_i,\label{constraint:forced_vacation} 
\end{align}
where $\beta_i > 0$ specifies the minimum required ratio between the \off and \on durations in a cycle.
\end{enumerate}
These constraints ensure that the \on duration remains bounded and that the \off duration is sufficiently long relative to the preceding \on period.
 For example, an electric vehicle must periodically stop operating to recharge and typically does not resume service until its battery is sufficiently replenished.

Putting it all together, in the single-queue setting the planner is to solve problem \eqref{prob:fixed_duration} with \( n=1 \), subject to the additional operational constraints \eqref{constraint:work_limit}--\eqref{constraint:forced_vacation} but without the constraints \eqref{constraint:one_cycle} (since there is only one queue).

\begin{proposition}
\label{prop:vacation_opt_duration}
For $n=1$, the following is an optimal solution:
  \begin{align*}
        L_i^\ast = \min\left\{ \frac{\lambda_i}{\mu_i-\lambda_i}\theta_i,~ L^{\textrm{max}}_i\right\} \quad \text{and} \quad \bar{L}_i^\ast = \beta_i L_i^\ast.
    \end{align*}
\begin{enumerate}
    \item When $\beta_i> (\mu_i-\lambda_i)/\lambda_i$, we have the equilibrium post-clearance duration $T_i(L_i^\ast,\bar{L}_i^\ast)=0$ and the optimal throughput $\mu_i/(1+\beta_i)$.
    \begin{enumerate}
        \item If $L^{\textrm{max}}_i\geq \lambda_i\theta_i/(\mu_i-\lambda_i)$, the corresponding optimal equilibrium outcome is exhaustive.
        \item  If $L^{\textrm{max}}_i< \lambda_i\theta_i/(\mu_i-\lambda_i)$ and $\beta_i\in ((\mu_i-\lambda_i)/\lambda_i, \theta_i/L^{\textrm{max}}_i)$, the corresponding equilibrium outcome is non-exhaustive. If $L^{\textrm{max}}_i< \lambda_i\theta_i/(\mu_i-\lambda_i)$ and $\beta_i\geq \theta_i/L^{\textrm{max}}_i$, the corresponding equilibrium outcome is exhaustive.
    \end{enumerate}
    \item  When $\beta_i\leq (\mu_i-\lambda_i)/\lambda_i$, we have $T_i(L_i^\ast,\bar{L}_i^\ast)=L_i^\ast - \lambda_i \bar{L}_i^\ast/(\mu_i-\lambda_i)$ and the system achieves the first-best throughput $\lambda_i$.
    Besides, the corresponding optimal equilibrium outcome is exhaustive.
\end{enumerate}
\end{proposition}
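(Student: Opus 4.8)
The plan is to reduce the single-queue problem to a transparent two-variable optimization and then carry out a short case analysis. Since $n=1$, \Cref{coro:fixed_duration_post_clearance_duration} gives the equilibrium post-clearance duration $T_i(L_i,\bar L_i)=\max\{L_i-c_i\theta_i,\ L_i-c_i\bar L_i,\ 0\}$ with $c_i:=\lambda_i/(\mu_i-\lambda_i)$, and the planner maximizes $\mathrm{Th}(L_i,\bar L_i)=\bigl(\mu_i L_i-(\mu_i-\lambda_i)T_i(L_i,\bar L_i)\bigr)/(L_i+\bar L_i)$ over $L_i,\bar L_i\ge 0$ subject to $L_i\le L^{\mathrm{max}}_i$ (constraint~\eqref{constraint:work_limit}) and $\bar L_i\ge\beta_i L_i$ (constraint~\eqref{constraint:forced_vacation}); constraint~\eqref{constraint:one_cycle} is vacuous. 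Because the objective is strictly decreasing in $T_i$ — the same observation used to pass to \eqref{prob:lf_fixed_duration} — we may substitute the closed form for $T_i$ and split on which of the three terms in the max is active. Throughout, $\mathrm{Th}\le\lambda_i$ always, since at most $\lambda_i(L_i+\bar L_i)$ customers arrive per cycle, and note $(\mu_i-\lambda_i)c_i=\lambda_i$.

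First I would treat the high-patience-cost regime $\beta_i>(\mu_i-\lambda_i)/\lambda_i$, i.e. $c_i\beta_i>1$. Here $\bar L_i\ge\beta_i L_i>L_i/c_i$, so $L_i-c_i\bar L_i<0$ and only the first term or $0$ can be active. In the branch $L_i\le c_i\theta_i$ one has $T_i=0$ and $\mathrm{Th}=\mu_i/(1+\bar L_i/L_i)\le\mu_i/(1+\beta_i)$. In the branch $L_i> c_i\theta_i$ (which forces $\bar L_i\ge\beta_i L_i>\theta_i$, so $T_i=L_i-c_i\theta_i$) a short computation gives $\mathrm{Th}=\lambda_i(L_i+\theta_i)/(L_i+\bar L_i)\le\frac{\lambda_i}{1+\beta_i}(1+\theta_i/L_i)<\frac{\lambda_i}{1+\beta_i}(1+1/c_i)=\mu_i/(1+\beta_i)$. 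Hence $\mu_i/(1+\beta_i)$ is an upper bound, and it is attained by $L_i^\ast=\min\{c_i\theta_i,L^{\mathrm{max}}_i\}$, $\bar L_i^\ast=\beta_i L_i^\ast$: both bracketed terms in the max are then $\le 0$, so $T_i=0$ and $\mathrm{Th}=\mu_i/(1+\beta_i)<\lambda_i$, proving optimality of the stated solution in part~(1).

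Next, the low regime $\beta_i\le(\mu_i-\lambda_i)/\lambda_i$, i.e. $c_i\beta_i\le1$: the proposed $L_i^\ast=\min\{c_i\theta_i,L^{\mathrm{max}}_i\}$, $\bar L_i^\ast=\beta_i L_i^\ast$ is feasible, has first term of the max $\le0$ and second $=L_i^\ast(1-c_i\beta_i)\ge0$, so $T_i=L_i^\ast-c_i\bar L_i^\ast=L_i^\ast-\lambda_i\bar L_i^\ast/(\mu_i-\lambda_i)$; plugging in and using $(\mu_i-\lambda_i)c_i=\lambda_i$ yields numerator $\lambda_i L_i^\ast(1+\beta_i)$, hence $\mathrm{Th}=\lambda_i$, the first-best, so it is optimal — this is part~(2). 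The exhaustive/non-exhaustive classification then follows by inserting $(L_i^\ast,\bar L_i^\ast)$ into \Cref{thm:equilibrium_fixedduration_exhaustive} and \Cref{thm:fixed_duration_nonexhaustive_equilibrium}: if $L^{\mathrm{max}}_i\ge c_i\theta_i$ then $L_i^\ast=c_i\theta_i$, $\bar L_i^\ast=\beta_i c_i\theta_i\ge\theta_i$, the \Cref{subfig:exh_case1} (exhaustive) case; if $L^{\mathrm{max}}_i< c_i\theta_i$ then $L_i^\ast=L^{\mathrm{max}}_i$ and $\bar L_i^\ast=\beta_i L^{\mathrm{max}}_i$, which is $\ge\theta_i$ exactly when $\beta_i\ge\theta_i/L^{\mathrm{max}}_i$ (exhaustive via \Cref{thm:equilibrium_fixedduration_exhaustive}(i)) and $<\theta_i$ with $L_i^\ast<c_i\bar L_i^\ast$ (equivalent to $c_i\beta_i>1$) otherwise, which is precisely the non-exhaustive regime of \Cref{thm:fixed_duration_nonexhaustive_equilibrium}; finally for $\beta_i\le(\mu_i-\lambda_i)/\lambda_i$ one has $\bar L_i^\ast\le\theta_i$ and $L_i^\ast\ge c_i\bar L_i^\ast$, matching the \Cref{subfig:exh_case4} (exhaustive) case of \Cref{thm:equilibrium_fixedduration_exhaustive}(ii).

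The main obstacle is not any single computation but the bookkeeping: keeping the split over which term of the max defining $T_i$ is active exhaustive and compatible with the feasibility constraints, and being careful that in the $L_i>c_i\theta_i$ branch the value $\mu_i/(1+\beta_i)$ is a strict supremum not attained there, so the attained optimum lives in the $T_i=0$ branch. Matching the resulting $(L_i^\ast,\bar L_i^\ast)$ to the precise figure/case in \Cref{thm:equilibrium_fixedduration_exhaustive} and \Cref{thm:fixed_duration_nonexhaustive_equilibrium}, including the boundary subcases, is then routine verification.
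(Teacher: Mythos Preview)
Your proposal is correct and takes essentially the same approach as the paper: both rely on the piecewise formula for $T_i$ from \Cref{coro:fixed_duration_post_clearance_duration}, optimize the throughput ratio over the feasible region, and then classify the resulting outcome via Theorems~\ref{thm:equilibrium_fixedduration_exhaustive} and~\ref{thm:fixed_duration_nonexhaustive_equilibrium}. Your organization---splitting first on $\beta_i\gtrless(\mu_i-\lambda_i)/\lambda_i$ and bounding the throughput directly---is a bit more streamlined than the paper's (which optimizes separately within each of the three $T_i$-branches, checks feasibility of each, and then compares), but the underlying argument is the same.
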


The relationship between $\beta_i$ and the ratio $(\mu_i-\lambda_i)/\lambda_i$ is crucial. To see that, recall from our earlier equilibrium characterization (Theorems \ref{thm:equilibrium_fixedduration_exhaustive} and \ref{thm:fixed_duration_nonexhaustive_equilibrium}) that achieving the first-best throughput requires keeping the \off duration sufficiently small relative to the \on duration. Specifically, to attain the first-best outcome, the equilibrium must be the scenario depicted in \Cref{subfig:exh_case4}, where $\bar{L}_i \leq L_i(\mu_i-\lambda_i)/\lambda_i$. Thus, if we impose the constraint \eqref{constraint:forced_vacation}, i.e., $\bar{L}_i \geq \beta_i L_i$, it follows that to achieve the first-best throughput, we must have $\beta_i \leq (\mu_i-\lambda_i)/\lambda_i$.
Intuitively, a larger $\beta_i$ restricts the planner’s ability to shorten \off durations, thus harming system throughput.

Another observation is that the optimal equilibrium outcome can be non-exhaustive. Specifically, it occurs when $ L^{\textrm{max}}_i < \lambda_i \theta_i/(\mu_i - \lambda_i)$ and $\beta_i \in ((\mu_i - \lambda_i)/\lambda_i, \theta_i/L^{\textrm{max}}_i)$. In this case, the work limit $L^{\textrm{max}}_i$ is small, resulting in a relatively short \on duration. As a result, customers cannot be cleared before the server leaves the system, leading to a non-exhaustive equilibrium outcome.

\subsection{Endogenous Regime with Two Queues}
\label{app_sec:subsec_twoqueues}

In this subsection, we focus on the endogenous regime with only two queues. 
We will first characterize the equilibrium under the pure exhaustive service policy $\pe$, which is useful in characterizing the optimal exhaustive service policy (see \Cref{alg:opt}), and then derive the optimal post-clearance duration in closed-form expressions.

Since there are two queues, the equilibrium set $\I(0)$ under the pure-exhaustive service policy $\pe$ can take one of $2^2$ possible forms, namely $\I(0)=\emptyset$, $\I(0)=\{1\}$, $\I(0)=\{2\}$, or $\I(0)=\{1,2\}$, depending on the model parameters. We completely and analytically characterize the conditions on the model parameters that lead to each possibility and determine the optimal post-clearance duration, as detailed below.

\begin{proposition}[Equilibrium and Optimal Exhaustive Service Policy for Two Queues]
\label{prop:twoqueues}
Suppose there are two queues.
\begin{enumerate}
    \item When $\theta_1\leq 1^\top \tau+\lambda_2\theta_2/(\mu_2-\lambda_2)$ and $\theta_2\leq 1^\top \tau+\lambda_1\theta_1/(\mu_1-\lambda_1)$,
    \begin{align*}
        \I(0)=\emptyset,\quad \alpha(0)=1.
    \end{align*}
    Furthermore, 
    \begin{enumerate}
        \item  when $\lambda_1> \lambda_2$: if $\theta_1<1^\top \tau+ (\lambda_2/(\mu_2-\lambda_2))(1- \mu_2/\lambda_1)\cdot \theta_2$, always serving queue $1$ is optimal; otherwise, the pure exhaustive service policy $\pe$ is optimal.
    \item    
    when $\lambda_2\geq \lambda_1$: if $\theta_2<1^\top \tau+ (\lambda_1/(\mu_1-\lambda_1))(1-\mu_1/\lambda_2)\cdot \theta_1$, always serving queue $2$ is optimal; otherwise, $\pe$ is optimal.
    \end{enumerate}

    \item When $\theta_1 > 1^\top \tau+ \lambda_2\theta_2/(\mu_2-\lambda_2)$ and $\theta_2\leq 1^\top \tau(1-\rho_2)/(1-(\rho_1+\rho_2))$,
    \begin{align}
     \I(0)=\{1\},\quad \alpha_1(0) = \left(1^\top \tau+\frac{\lambda_2}{\mu_2-\lambda_2}\theta_2\right)\cdot \frac{1}{\theta_1} ,\quad \alpha_2(0)=1.  \nonumber
    \end{align}
    Furthermore, 
    \begin{enumerate}
        \item when $\theta_2< 1^\top \tau$:  if $\lambda_2>\lambda_1$ and $\theta_2<1^\top \tau+\frac{\lambda_2-\mu_1}{\lambda_2}\cdot \frac{\lambda_1}{\mu_1-\lambda_1}\theta_1$, always serving queue $2$ is optimal; otherwise, $T_1^\ast=0$ and $T_2^\ast=\theta_1-\frac{\lambda_2}{\mu_2-\lambda_2}\theta_2-1^\top \tau$.
        \item  when $\theta_2\geq 1^\top \tau$: 
        if $\lambda_2>\mu_1$ and $\theta_2<\frac{\mu_1(\lambda_2-\lambda_1) (\mu_2-\lambda_2)}{\lambda_2\cdot \left[\mu_2(\mu_1-\lambda_1)-\mu_1(\lambda_2-\lambda_1)\right]}\cdot 1^\top \tau$,
        always serving queue $2$ is optimal; otherwise, $\pe$ is optimal.
    \end{enumerate}

    \item When $\theta_2 > 1^\top \tau+\lambda_1\theta_1/(\mu_1-\lambda_1)$ and $\theta_1\leq 1^\top \tau(1-\rho_1)/(1-(\rho_1+\rho_2))$,
    \begin{align}
     \I(0)=\{2\},\quad \alpha_2(0) = \left(1^\top \tau+\frac{\lambda_1}{\mu_1-\lambda_1}\theta_1\right)\cdot \frac{1}{\theta_2} ,\quad \alpha_1(0)=1.  \nonumber
    \end{align}
    Furthermore, 
    \begin{enumerate}
        \item when $\theta_1< 1^\top \tau$:  if $\lambda_1>\lambda_2$ and $\theta_1<1^\top \tau+\frac{\lambda_1-\mu_2}{\lambda_1}\cdot \frac{\lambda_2}{\mu_2-\lambda_2}\theta_2$, always serving queue $1$ is optimal; otherwise, $T_2^\ast=0$ and $T_1^\ast=\theta_2-\frac{\lambda_1}{\mu_1-\lambda_1}\theta_1-1^\top \tau$.
        \item  when $\theta_1\geq 1^\top \tau$: if $\lambda_1>\mu_2$ and $\theta_1<\frac{\mu_2(\lambda_1-\lambda_2) (\mu_1-\lambda_1)}{\lambda_1\cdot \left[\mu_1(\mu_2-\lambda_2)-\mu_2(\lambda_1-\lambda_2)\right]}\cdot 1^\top \tau$, always serving queue $1$ is optimal; otherwise, $\pe$ is optimal.
    \end{enumerate}

     \item When $\theta_1>1^\top \tau(1-\rho_1)/(1-(\rho_1+\rho_2))$ and $\theta_2> 1^\top \tau(1-\rho_2)/(1-(\rho_1+\rho_2))$,
    \begin{align*}
        \I(0)=\{1,2\},\quad \alpha_1(0)=\frac{1^\top \tau}{\theta_1}\cdot\frac{1-\rho_1}{1-(\rho_1+\rho_2)},\quad \alpha_2(0)=\frac{1^\top \tau}{\theta_2}\cdot\frac{1-\rho_2}{1-(\rho_1+\rho_2)}.
    \end{align*}
    Furthermore, $\pe$ is optimal.
\end{enumerate}
\end{proposition}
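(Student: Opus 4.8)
The plan is to treat the four parameter regimes separately and, within each, first verify that the claimed $\I(0)$ and $\alpha(0)$ constitute the equilibrium of the pure exhaustive policy $\pe$, and then optimize over the single scalar $T_j$ using the structure from \Cref{thm:structure_opt_ex} and \Cref{alg:opt}. For the equilibrium verification step in each case, I would instantiate the matrix $A$ and vector $b(0)$ from \eqref{eq:matrix_A_b} for $n=2$ (so $b(0)=1^\top\tau\cdot 1$ and $A$ has diagonal $(\theta_1,\theta_2)$ and off-diagonal entries $(-c_2,-c_1)$ with $c_j=\lambda_j\theta_j/(\mu_j-\lambda_j)$), and use \Cref{lem:lp_equilirbium}: the equilibrium is the greatest element of $\mathcal{A}(b(0),A)=\{\alpha:0\le\alpha\le 1,\ A\alpha\le b(0)\}$. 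For a candidate all-joining set $\I$, I compute the induced $\alpha$ from \eqref{eq:alpha_m}--\eqref{eq:alpha_m_n} and check (i) $\alpha_{\I}\le 1$ and (ii) the not-all-joining constraints $A\alpha\le b(0)$ for $i\in\barI$; by the $Z$-matrix argument recalled after \Cref{alg:pivoting}, these two checks suffice. The threshold conditions defining the four regimes (e.g. $\theta_1\le 1^\top\tau+c_2$ for $\I(0)=\emptyset$, and $\theta_1>1^\top\tau(1-\rho_1)/(1-\rho_1-\rho_2)$ for $\I(0)=\{1,2\}$) fall out exactly as the boundaries where one of these inequalities flips; invertibility of $A_{\I}$ reduces to $1^\top\rho_{\I}\ne 1$ (\Cref{lem:inverse_mat}, \Cref{lem:equilibriumset}), which explains the denominator $1-(\rho_1+\rho_2)$.

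For the optimization step, I would follow \Cref{alg:opt} specialized to $n=2$. When $\I(0)=\{1,2\}$ (Case 4), \Cref{thm:structure_opt_ex}(i) gives first-best throughput $1^\top\lambda$ directly, so $\pe$ is optimal and nothing further is needed. When $\I(0)=\emptyset$ (Case 1), \Cref{thm:structure_opt_ex}(ii) says $\alpha(T)\equiv 1$ for all $T\ge 0$, so $\tp(T_j,0_{-j})$ is a ratio of affine functions of $T_j$, hence monotone; comparing its value at $T_j=0$ (which is $\tp(0)=(1^\top\mu c\ \text{terms})/(1^\top\tau+1^\top c)$-type expression, but with $\alpha=1$ this is $(\mu_1 c_1+\mu_2 c_2)/(1^\top\tau+c_1+c_2)$) against its limit as $T_j\to\infty$ (which is $\lambda_j$) yields the dichotomy, and the stated inequality $\theta_1<1^\top\tau+(\lambda_2/(\mu_2-\lambda_2))(1-\mu_2/\lambda_1)\theta_2$ is just the algebraic condition $\lambda_j>\tp(0)$ after clearing denominators. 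When $\I(0)=\{1\}$ or $\{2\}$ (Cases 2--3), the queue $j\in\argmax_{i\in\barI(0)}\lambda_i$ is forced ($j=2$ in Case 2, $j=1$ in Case 3 since $\barI(0)$ is a singleton), and \Cref{alg:opt} evaluates $\tp$ only at the single boundary value of $T_j$ where $\alpha_j$ (the $\I(0)$-queue) hits $1$ and the equilibrium set drops to $\emptyset$. That boundary value is computed from line~7 of \Cref{alg:opt}; I would plug it into \eqref{eq:thp_dfn}, compare with $\tp(0)$ and with $\lambda_j$, and read off the three sub-cases. The split at $\theta_2<1^\top\tau$ versus $\theta_2\ge 1^\top\tau$ in Case 2 (resp. $\theta_1$ in Case 3) arises because the sign of the boundary $T_j$ value — i.e. whether $\pe$ already has $\alpha_j<1$ or the algorithm must increase $T_j$ to reach the next breakpoint — depends on that comparison.

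The main obstacle I anticipate is purely bookkeeping: carrying out the algebra that translates ``$\tp$ at the boundary $T_j$ versus $\tp(0)$ versus $\lambda_j$'' into the clean closed-form inequalities displayed in the proposition, while keeping track of which case of \Cref{fig:fixed_duration_exhaustive_equilibrium} / \Cref{fig:exhaustiveservice_equilibrium} the equilibrium outcome sits in. In particular, one must be careful about degenerate sub-cases where a boundary $T_j$ value is negative (meaning $\pe$ already lies past that breakpoint, so that breakpoint is irrelevant) or where two regime thresholds coincide; I would handle these by noting that \Cref{lem:piecewise_linear_increasing} guarantees monotonicity of $\alpha(T_j,0_{-j})$, so the ordering of breakpoints is unambiguous and only the relevant (nonnegative) ones enter. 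A secondary technical point is confirming, in each of the four regimes, that the two defining inequalities are simultaneously consistent (nonempty parameter region) and mutually exclusive across regimes — this follows from the fact, noted in \Cref{thm:fixed_duration_nonexhaustive_equilibrium} and its analogues, that the greatest-element characterization partitions parameter space, but I would spell it out for $n=2$ by checking the four inequality pairs cover all of $\{\theta\in\mathbb{R}^2_+\}$ up to boundaries.
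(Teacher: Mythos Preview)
Your plan is essentially the paper's proof: in each regime you verify $\I(0)$ and $\alpha(0)$ via the complementarity conditions of \ref{prob:alpha}, then invoke \Cref{thm:structure_opt_ex} and \Cref{alg:opt} to compare $\tp(0)$, the throughput at the single breakpoint $T_j$, and $\lambda_j$.

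One correction, though: your explanation of the split at $\theta_2\lessgtr 1^\top\tau$ in Case~(ii) is wrong. The breakpoint $T_2=\theta_1-1^\top\tau-c_2$ is \emph{always strictly positive} there---this is precisely the defining inequality $\theta_1>1^\top\tau+c_2$ of Case~(ii)---so no sign issue arises and the breakpoint is never ``irrelevant.'' (Also note that the $\alpha$ being driven to $1$ as $T_2$ grows is $\alpha_1$, the queue in $\I(0)$, not $\alpha_j=\alpha_2$, which is already $1$.) The actual source of the dichotomy is the comparison $\tp(0,T_2)\gtrless\tp(0)$: after substituting $T_2=\theta_1-1^\top\tau-c_2$ and $\alpha_1(0,T_2)=1$ into \eqref{eq:thp_dfn} and simplifying, the inequality $\tp(0,T_2)\ge\tp(0)$ reduces (using $\theta_1\ge 1^\top\tau+c_2$) exactly to $\theta_2\le 1^\top\tau$. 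With that in hand, sub-case~(a) becomes ``$\tp(0,T_2)$ versus $\lambda_2$'' and sub-case~(b) becomes ``$\tp(0)$ versus $\lambda_2$,'' and the displayed inequalities drop out. Your anticipated obstacle of negative breakpoints does not in fact occur in the two-queue setting.
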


We will visualize \Cref{prop:twoqueues} and provide further discussions about the equilibrium set $\I(0)$ and the optimal post-clearance durations in the following two subsections.
Before that, in
\Cref{fig:greatest_element_polyhedron}, we illustrate \Cref{lem:lp_equilirbium} regarding the equivalence between the greatest element and the equilibrium via an example of the pure exhaustive service policy. The conditions leading to each scenario are given by \Cref{prop:twoqueues}.

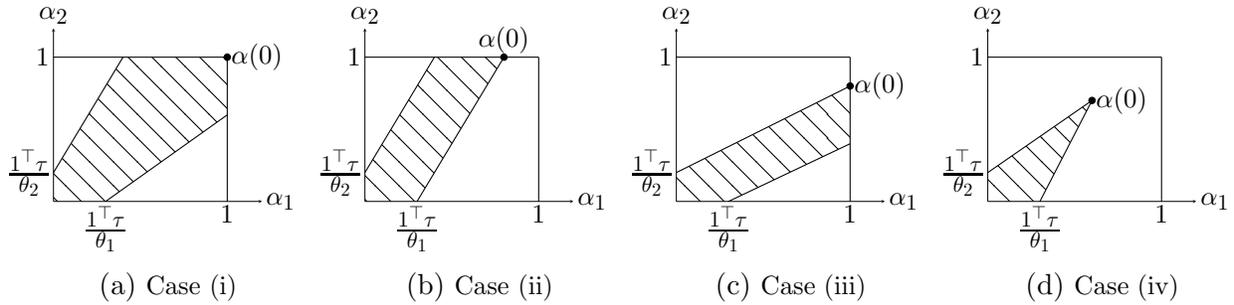
\begin{figure*}[ht]
\centering
    \subfloat[{\small{Case (i)}}]
  {  \resizebox{.24\textwidth}{!}{\begin{tikzpicture}
\begin{axis}
[
    axis lines = left,
    xlabel = \Huge\(\alpha_1\),
    ylabel = \Huge\(\alpha_2\),
    xmin=0, xmax=1.2,
    ymin=0, ymax=1.2,
    xtick=\empty,
    ytick=\empty,
    xticklabels={,,},
    yticklabels={,,},
    tick style={draw=none},
    line width=0.5pt,
    clip=false, %
    every axis x label/.style={at={(ticklabel* cs:1.0)}, anchor=west},
    every axis y label/.style={at={(ticklabel* cs:1.0)}, anchor=south}
]

\coordinate (A) at (0,0.2);
\coordinate (B) at (0.3,0);
\coordinate (E) at (0,1);
\coordinate (F) at (1,0);
\coordinate (G) at (1,1); %

\draw node[left] at (A)  {\Huge$\frac{1^\top \tau}{\theta_2}$};
\draw node[below] at (B)  {\Huge$\frac{1^\top \tau}{\theta_1}$};
\draw node[left] at (E)  {\Huge$1$};
\draw node[below] at (F)  {\Huge$1$};

\draw[line width=1pt] (E) -- (G);
\draw[line width=1pt] (F) -- (G);

\coordinate (C) at (0.4,1);
\coordinate (D) at (1,0.6);

\draw[line width=1pt] (A) -- (C);
\draw[line width=1pt] (B) -- (D);

\fill[pattern=light south west lines] 
(A) -- (C) --(G) -- (D) -- (B) --(0,0) -- cycle;

\filldraw[black] (G) circle (3pt); 
\draw node[right] at (G)  {\Huge$\alpha(0)$};

\end{axis}
\end{tikzpicture}}}
  \hfill
    \subfloat[{\small{Case (ii)}}]
  {  \resizebox{.24\textwidth}{!}{\begin{tikzpicture}
\begin{axis}
[
    axis lines = left,
    xlabel = \Huge\(\alpha_1\),
    ylabel = \Huge\(\alpha_2\),
    xmin=0, xmax=1.2,
    ymin=0, ymax=1.2,
    xtick=\empty,
    ytick=\empty,
    xticklabels={,,},
    yticklabels={,,},
    tick style={draw=none},
    line width=0.5pt,
    clip=false, %
    every axis x label/.style={at={(ticklabel* cs:1.0)}, anchor=west},
    every axis y label/.style={at={(ticklabel* cs:1.0)}, anchor=south}
]

\coordinate (A) at (0,0.2);
\coordinate (B) at (0.3,0);
\coordinate (E) at (0,1);
\coordinate (F) at (1,0);
\coordinate (G) at (1,1); %

\draw node[left] at (A)  {\Huge$\frac{1^\top \tau}{\theta_2}$};
\draw node[below] at (B)  {\Huge$\frac{1^\top \tau}{\theta_1}$};
\draw node[left] at (E)  {\Huge$1$};
\draw node[below] at (F)  {\Huge$1$};

\draw[line width=1pt] (E) -- (G);
\draw[line width=1pt] (F) -- (G);

\coordinate (C) at (0.4,1);
\coordinate (D) at (0.8,1);

\draw[line width=1pt] (A) -- (C);
\draw[line width=1pt] (B) -- (D);

\fill[pattern=light south west lines] 
(A) -- (C) --(G) -- (D) -- (B) --(0,0) -- cycle;

\filldraw[black] (D) circle (3pt); 
\draw node[above] at (D)  {\Huge$\alpha(0)$};

\end{axis}
\end{tikzpicture}}}
  \hfill
        \subfloat[{\small{Case (iii)}}]
    {  \resizebox{.24\textwidth}{!}{\begin{tikzpicture}
\begin{axis}
[
    axis lines = left,
    xlabel = \Huge\(\alpha_1\),
    ylabel = \Huge\(\alpha_2\),
    xmin=0, xmax=1.2,
    ymin=0, ymax=1.2,
    xtick=\empty,
    ytick=\empty,
    xticklabels={,,},
    yticklabels={,,},
    tick style={draw=none},
    line width=0.5pt,
    clip=false, %
    every axis x label/.style={at={(ticklabel* cs:1.0)}, anchor=west},
    every axis y label/.style={at={(ticklabel* cs:1.0)}, anchor=south}
]

\coordinate (A) at (0,0.2);
\coordinate (B) at (0.3,0);
\coordinate (E) at (0,1);
\coordinate (F) at (1,0);
\coordinate (G) at (1,1); %

\draw node[left] at (A)  {\Huge$\frac{1^\top \tau}{\theta_2}$};
\draw node[below] at (B)  {\Huge$\frac{1^\top \tau}{\theta_1}$};
\draw node[left] at (E)  {\Huge$1$};
\draw node[below] at (F)  {\Huge$1$};

\draw[line width=1pt] (E) -- (G);
\draw[line width=1pt] (F) -- (G);

\coordinate (C) at (1,0.8);
\coordinate (D) at (1,0.4);

\draw[line width=1pt] (A) -- (C);
\draw[line width=1pt] (B) -- (D);

\fill[pattern=light south west lines] 
(A) -- (C) --(G) -- (D) -- (B) --(0,0) -- cycle;

\filldraw[black] (C) circle (3pt); 
\draw node[right] at (C)  {\Huge$\alpha(0)$};

\end{axis}
\end{tikzpicture}}}
  \hfill
        \subfloat[{\small{Case (iv)}}]
    {  \resizebox{.24\textwidth}{!}{\begin{tikzpicture}
\begin{axis}
[
    axis lines = left,
    xlabel = \Huge\(\alpha_1\),
    ylabel = \Huge\(\alpha_2\),
    xmin=0, xmax=1.2,
    ymin=0, ymax=1.2,
    xtick=\empty,
    ytick=\empty,
    xticklabels={,,},
    yticklabels={,,},
    tick style={draw=none},
    line width=0.5pt,
    clip=false, %
    every axis x label/.style={at={(ticklabel* cs:1.0)}, anchor=west},
    every axis y label/.style={at={(ticklabel* cs:1.0)}, anchor=south}
]

\coordinate (A) at (0,0.2);
\coordinate (B) at (0.3,0);
\coordinate (E) at (0,1);
\coordinate (F) at (1,0);
\coordinate (G) at (1,1); %

\draw node[left] at (A)  {\Huge$\frac{1^\top \tau}{\theta_2}$};
\draw node[below] at (B)  {\Huge$\frac{1^\top \tau}{\theta_1}$};
\draw node[left] at (E)  {\Huge$1$};
\draw node[below] at (F)  {\Huge$1$};

\draw[line width=1pt] (E) -- (G);
\draw[line width=1pt] (F) -- (G);

\coordinate (C) at (0.6,0.7);
\coordinate (D) at (0.6,0.7);

\draw[line width=1pt] (A) -- (C);
\draw[line width=1pt] (B) -- (D);

\fill[pattern=light south west lines] 
(A) -- (C) --(G) -- (D) -- (B) --(0,0) -- cycle;

\filldraw[black] (D) circle (3pt); 
\draw node[right] at (D)  {\Huge$\alpha(0)$};

\end{axis}
\end{tikzpicture}}}
  \caption{Equilibrium Variable $\alpha(0)$ under the Pure Exhaustive Service Policy $\pe$ as the Greatest Element of the Polyhedron $\mathcal{A}(b(0),A)$ Defined in \Cref{lem:lp_equilirbium} with $n=2$. 
  } 
  \label{fig:greatest_element_polyhedron}
\end{figure*}

\subsubsection{How Equilibrium Varies with Waiting Patience}

In this subsection, we examine in detail the equilibrium variables under the pure exhaustive service policy. \Cref{fig:twoqueue_pe_eq} visualizes \Cref{prop:twoqueues} about the equilibrium set $\I(0)$ for different waiting patience parameters.

In \Cref{subfig:twoqueue_pe_eq_a}, we have $\rho_1 + \rho_2 < 1$, while in \Cref{subfig:twoqueue_pe_eq_b}, $\rho_1 + \rho_2 \geq 1$. Here, $\rLL$ denotes the regime where both queues have relatively low waiting patience, resulting in $\I(0)=\emptyset$, i.e., the not-joining duration $\bar{J}_i>0$ for both queues. In the $\rHL$ regime, where queue 1’s customers have relatively high waiting patience and queue 2’s customers have relatively low waiting patience, we have $\I(0)=\{1\}$. That is, all arrivals of queue 1 join, while queue 2 still exhibits a not-joining duration $\bar{J}_2>0$. Similarly, in the $\rLH$ regime, $\I(0)=\{2\}$, so all arrivals of queue 2 join, and queue 1 experiences a not-joining duration $\bar{J}_1>0$. Finally, in the $\rHH$ regime, where both queues have relatively high waiting patience, we have $\I(0)=\{1,2\}$, i.e., all arrivals of both queues join, which is the first-best outcome. Importantly, when $\rho_1 + \rho_2 < 1$, it is possible to achieve the first-best throughput $\lambda_1+\lambda_2$, as noted in \Cref{lem:equilibriumset};
whereas this does not occur in \Cref{subfig:twoqueue_pe_eq_b} with $\rho_1+\rho_2\geq 1$.

\begin{figure*}[ht]
\centering
  \subfloat[{\small{$\rho_1+\rho_2<1$}} \label{subfig:twoqueue_pe_eq_a}]{  \resizebox{.52\textwidth}{!}{\begin{tikzpicture}
\begin{axis}[
    axis lines = left,
    xlabel = \(\theta_1\),
    ylabel = \(\theta_2\),
    xmin=0, xmax=10.5,
    ymin=0, ymax=10.5,
    xtick=\empty,
    ytick=\empty,
    xticklabels={,,},
    yticklabels={,,},
    tick style={draw=none},
    line width=0.5pt,
    clip=false, %
    every axis x label/.style={at={(ticklabel* cs:1.0)}, anchor=west},
    every axis y label/.style={at={(ticklabel* cs:1.0)}, anchor=south}
]

\addplot [
    domain=0:2.5, 
    samples=100, 
    color=black,
    line width=1pt
]
{x+1.5};

\addplot [
    domain=1.5:2.5, 
    samples=100, 
    color=black,
    line width=1pt
]
{4*x-6};

\addplot [
    domain=2.5:10, 
    samples=100, 
    color=black,
    line width=1pt
]
{4};

\draw [line width=1pt] (axis cs:2.5,4) -- (axis cs:2.5,10);

\draw [line width=0.5pt,dashed] (axis cs:0,4) -- (axis cs:2.5,4);
\draw [line width=0.5pt,dashed] (axis cs:2.5,0) -- (axis cs:2.5,4);

\draw node[below] at (axis cs:3.2,0)  {$\frac{1^\top \tau(1-\rho_1)}{1-(\rho_1+\rho_2)}$};
\draw node[left] at (axis cs:0,4)  {$\frac{1^\top \tau(1-\rho_2)}{1-(\rho_1+\rho_2)}$};
\draw node[left] at (axis cs:-0.2,1.5)  {$1^\top \tau$};
\draw node[below] at (axis cs:1.3,-0.2)  {{\small{$1^\top \tau$}}};

\draw node[above] at (axis cs:1,2.7)  {\small$\ell_2$};
\draw node[right] at (axis cs:1.5,0.5)  {\small$\ell_1$};

\draw node at (axis cs:1,1.5)  {\small$\rLL$};
\draw node at (axis cs:1.25,6)  {\small$\rLH$};
\draw node at (axis cs:6,2)  {\small$\rHL$};
\draw node at (axis cs:6,7)  {\small$\rHH$};

\filldraw[black] (axis cs:2.5,4) circle (2pt); 
\draw node[right,above] at (axis cs:2.8,4)  {\small$\mathsf{Q}$};

\end{axis}
\end{tikzpicture}}}
  \hfill
\subfloat[{\small{$\rho_1+\rho_2\geq 1$}} \label{subfig:twoqueue_pe_eq_b}]{  \resizebox{.46\textwidth}{!}{\begin{tikzpicture}
\begin{axis}[
    axis lines = left,
    xlabel = \(\theta_1\),
    ylabel = \(\theta_2\),
    xmin=0, xmax=10.5,
    ymin=0, ymax=10.5,
    xtick=\empty,
    ytick=\empty,
    xticklabels={,,},
    yticklabels={,,},
    tick style={draw=none},
    line width=0.5pt,
    clip=false, %
    every axis x label/.style={at={(ticklabel* cs:1.0)}, anchor=west},
    every axis y label/.style={at={(ticklabel* cs:1.0)}, anchor=south}
]

\addplot [
    domain=0:4.5, 
    samples=100, 
    color=black,
    line width=1pt, %
]
{2*x+1};

\addplot [
    domain=1:23/3, 
    samples=100, 
    color=black,
    line width=1pt, %
]
{1.5*x-1.5};

\draw node[above left] at (axis cs:3.8,8)  {$\ell_2$};

\draw node[above] at (axis cs:6.5,8.5)  {$\ell_1$};

\draw node[left] at (axis cs:-0.3,1)  {$1^\top \tau$};
\draw node[below] at (axis cs:1,-0.3)  {$1^\top \tau$};

\draw node at (axis cs:3,4.5)  {\small$\rLL$};

\draw node at (axis cs:1.3,7)  {\small$\rLH$};

\draw node at (axis cs:7,2.5)  {\small$\rHL$};

\end{axis}
\end{tikzpicture}}}
  \caption{Equilibrium of the Pure Exhaustive Service Policy $\pe$ under Different Waiting Patience. \textit{Note}. $(\mathsf{L}\mathsf{L})$: $\I(0)=\emptyset$; \quad $(\mathsf{H}\mathsf{L})$: $\I(0)=\{1\}$; \quad $(\mathsf{L}\mathsf{H})$: $\I(0)=\{2\}$; \quad $(\mathsf{H}\mathsf{H})$: $\I(0)=\{1,2\}$.
The slope of the line $\ell_1$ is $(\mu_2-\lambda_2)/\lambda_2$, and of the line $\ell_2$ is $\lambda_1/(\mu_1-\lambda_1)$.
  }
  \label{fig:twoqueue_pe_eq}
\end{figure*}
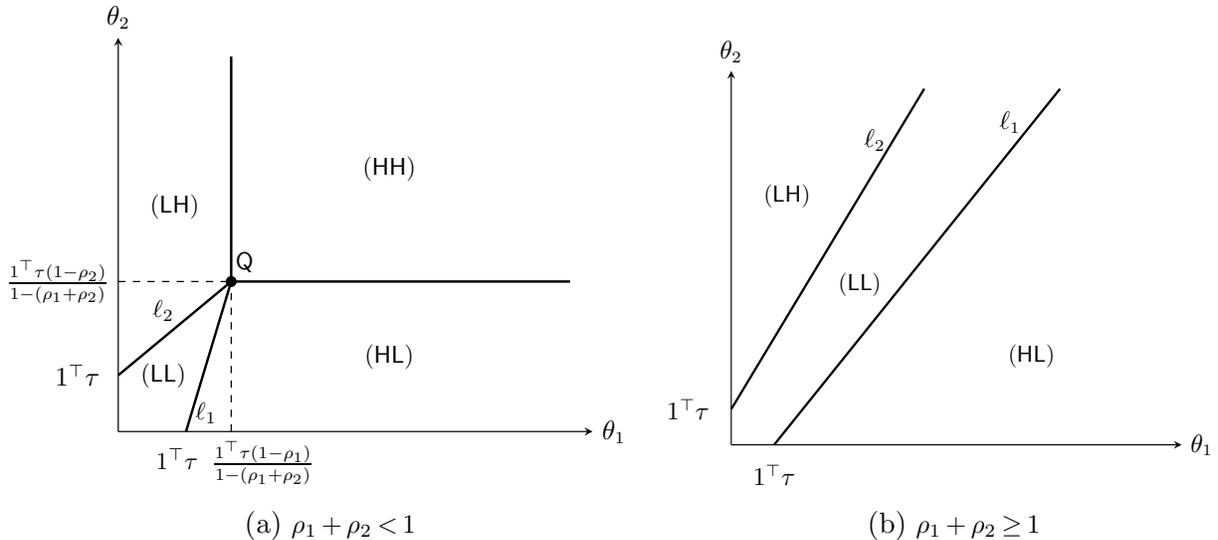

Different regions in \Cref{fig:twoqueue_pe_eq} cover all possible waiting patience values, which confirms the existence of an equilibrium for any given set of model parameters. Along the boundaries, multiple equilibrium sets $\I(0)$ may exist; however, they induce the same equilibrium variable $\alpha(0)$. For example, at point $\mathsf{Q}$ in \Cref{subfig:twoqueue_pe_eq_a}, there are four distinct equilibrium sets $\I(0)$. In fact, at $\mathsf{Q}$, we have $\alpha=1$ and $A\alpha=1^\top\tau1$, so all constraints \eqref{con:ir} and \eqref{con:period} are binding. Thus, any subset of $\N=\{1,2\}$ qualifies as an equilibrium set, yet the equilibrium variable $\alpha(0)$ remains unique.

\begin{corollary} \label{coro:twoqueue_pureex_utilizationrates} 
For $n=2$, under the pure exhaustive service policy, given the same waiting patience for both queues, if the queue with the lower utilization rate $\rho_i$ is all-joining at equilibrium, i.e., the equilibrium not-joining duration of this queue is zero, then the queue with a higher utilization rate must also be all-joining.
\end{corollary}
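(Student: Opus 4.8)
The plan is to work directly from the characterization of $\I(0)$ given in Proposition~\ref{prop:twoqueues}, specializing it to the case of equal waiting patience $\theta_1=\theta_2=:\theta$. Write $\rho_i=\lambda_i/\mu_i$ and $c_i=\lambda_i\theta_i/(\mu_i-\lambda_i)=\rho_i\theta/(1-\rho_i)$; the key observation is that $c_i$ is strictly increasing in $\rho_i$ when $\theta$ is held fixed, so ``lower utilization rate'' is equivalent to ``smaller $c_i$''. Without loss of generality suppose $\rho_1\le \rho_2$, so that $c_1\le c_2$; we must show that if queue~$1$ is all-joining at equilibrium (i.e.\ $1\in\I(0)$), then queue~$2$ is too (i.e.\ $2\in\I(0)$), which amounts to ruling out the case $\I(0)=\{1\}$. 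By Proposition~\ref{prop:twoqueues}(ii), $\I(0)=\{1\}$ requires $\theta_1>1^\top\tau+\lambda_2\theta_2/(\mu_2-\lambda_2)$, i.e.\ $\theta>1^\top\tau+c_2$.

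The heart of the argument is then a symmetric comparison: since $c_1\le c_2$, we have $1^\top\tau+c_1\le 1^\top\tau+c_2<\theta$, and therefore the condition defining the region $\rHH$ in Proposition~\ref{prop:twoqueues}(iv)—or at least the condition for $2\in\I(0)$—should be forced. Concretely, I would check that $\theta>1^\top\tau+c_2$ together with $\rho_1\le\rho_2$ implies both inequalities $\theta>1^\top\tau(1-\rho_1)/(1-(\rho_1+\rho_2))$ and $\theta>1^\top\tau(1-\rho_2)/(1-(\rho_1+\rho_2))$ (when $\rho_1+\rho_2<1$), placing us in the $\rHH$ region where $\I(0)=\{1,2\}$; and in the case $\rho_1+\rho_2\ge 1$, that the region $\rHL$ (which would give $\I(0)=\{1\}$) is incompatible with $\rho_1\le\rho_2$ and $\theta_1=\theta_2$. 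The computation reduces to showing $1^\top\tau+c_2\ge 1^\top\tau(1-\rho_2)/(1-(\rho_1+\rho_2))$, i.e.\ $c_2\ge 1^\top\tau\big[(1-\rho_2)/(1-(\rho_1+\rho_2))-1\big]=1^\top\tau\,\rho_1/(1-(\rho_1+\rho_2))$; but this need not hold for all parameters, so instead I would argue directly from the monotone structure of the polyhedron $\mathcal{A}(b(0),A)$ and the greatest-element characterization (Lemma~\ref{lem:lp_equilirbium}).

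A cleaner route, which I would actually pursue, avoids case analysis entirely and uses Lemma~\ref{lem:lp_equilirbium} together with the $Z$-matrix structure of $A$. Recall $\I(0)=\{i: \alpha_i(0)=1\}$ where $\alpha(0)$ is the greatest element of $\{\alpha: 0\le\alpha\le 1,\ A\alpha\le b(0)\}$ with $b(0)=1^\top\tau\cdot 1$. Suppose $1\in\I(0)$, so $\alpha_1(0)=1$. Row~$2$ of the constraint $A\alpha\le b(0)$ reads $\theta_2\alpha_2-c_1\alpha_1\le 1^\top\tau$, i.e.\ $\theta\alpha_2\le 1^\top\tau+c_1$ (using $\alpha_1=1$ and $\theta_1=\theta_2=\theta$). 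Meanwhile, the fact that $\alpha_1(0)=1$ is feasible means row~$1$, $\theta\cdot 1 - c_2\alpha_2\le 1^\top\tau$, holds, giving $\alpha_2\ge(\theta-1^\top\tau)/c_2$. The goal is to show $\alpha_2(0)=1$; by the greatest-element property it suffices to show that setting $\alpha_2=1$ (keeping $\alpha_1=1$) remains feasible, i.e.\ $\theta-c_1\le 1^\top\tau$ and $\theta-c_2\le 1^\top\tau$. Since $c_1\le c_2$, the second inequality is the binding one. Here I would invoke that $1\in\I(0)$ already forces (via the structure of the equilibrium, specifically that queue~$1$ being all-joining with a nonempty $\barI$ would require queue~$2$'s constraint to be the binding one, combined with $c_1\le c_2$) the inequality $\theta-c_2\le 1^\top\tau$.

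The main obstacle I anticipate is making the last step rigorous without a brute-force enumeration: one must show that when $c_1\le c_2$, the configuration $\I(0)=\{1\}$ (queue~$1$ all-joining, queue~$2$ not) is self-contradictory, because the queue with the \emph{larger} $c$ value generates a larger negative off-diagonal entry and hence a looser constraint on its own $\alpha$, so it should be at least as easy for queue~$2$ to be all-joining as for queue~$1$. I would formalize this by directly comparing the two defining inequalities of the $\rHL$ region against those of $\rHH$ under the substitution $\theta_1=\theta_2$, $c_1\le c_2$, and checking that the $\rHL$ region collapses; alternatively, by an exchange/monotonicity argument on the greatest element showing $\alpha_2(0)\ge\alpha_1(0)$ whenever $c_2\ge c_1$ and $\theta_1=\theta_2$. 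This monotonicity of the greatest element in the ``easier'' coordinate is the crux and should follow from writing out the two-dimensional polyhedron explicitly and using that its greatest element lies at the intersection of the two slanted constraints (or at the corner $(1,1)$), an elementary but slightly delicate $2\times 2$ computation.
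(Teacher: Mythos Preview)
Your opening approach---specialize Proposition~\ref{prop:twoqueues} to $\theta_1=\theta_2=\theta$ and rule out $\I(0)=\{1\}$ when $\rho_1\le\rho_2$---is exactly the paper's approach (the paper's proof is the single line ``This is implied by Proposition~\ref{prop:twoqueues}''). The issue is that you do not carry it through and instead retreat to more elaborate machinery that you also leave incomplete.

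The stumble is in how you try to derive a contradiction from the $\rHL$ conditions. You attempt to show $1^\top\tau+c_2\ge 1^\top\tau(1-\rho_2)/(1-\rho_1-\rho_2)$ as a parameter-only inequality, but $c_2=\rho_2\theta/(1-\rho_2)$ depends on $\theta$, so this is the wrong reduction. The clean route is to substitute $c_2=\rho_2\theta/(1-\rho_2)$ into the first condition of Proposition~\ref{prop:twoqueues}(ii), $\theta>1^\top\tau+c_2$, and solve for $\theta$: this forces $\rho_2<1/2$ (else the inequality is vacuous) and yields
\[
\theta>\frac{1^\top\tau\,(1-\rho_2)}{1-2\rho_2}.
\]
Since $\rho_1\le\rho_2<1/2$ gives $\rho_1+\rho_2<1$, the second condition of Proposition~\ref{prop:twoqueues}(ii) reads $\theta\le 1^\top\tau(1-\rho_2)/(1-\rho_1-\rho_2)$. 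For both to hold simultaneously one needs $1-\rho_1-\rho_2<1-2\rho_2$, i.e.\ $\rho_1>\rho_2$, contradicting $\rho_1\le\rho_2$. That is the entire argument; no greatest-element or exchange reasoning is needed. Your later routes via Lemma~\ref{lem:lp_equilirbium} are not wrong in spirit, but you correctly sense they require a nontrivial closing step, and that step is harder than the two-line computation you abandoned.
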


Corollary~\ref{coro:twoqueue_pureex_utilizationrates} may seem counterintuitive, since a queue with a lower utilization rate is typically expected to have shorter waiting times. However, under the pure exhaustive service policy, the server spends more time in a queue with a higher utilization rate because it takes longer to clear that queue. Consequently, the server remains in the high-utilization-rate queue for a larger fraction of time, which leads to shorter waiting times for its customers. In effect, Corollary~\ref{coro:twoqueue_pureex_utilizationrates} implies that the externality imposed on the low-utilization-rate queue by the high-utilization-rate queue is greater than the reverse.

\subsubsection{How Optimal Exhaustive Service Policy Varies with Waiting Patience}

\Cref{fig:twoqueue_opt_ex} visualizes how the optimal exhaustive service policy, i.e., the optimal post-clearance duration, varies with waiting patience. We focus on the case where $\rho_1+\rho_2>1$ in both Figures \ref{subfig:twoqueue_opt_ex_a} and \ref{subfig:twoqueue_opt_ex_b}; the case $\rho_1+\rho_2 \leq 1$ is similar but does not have the red region.

In the black regime, which corresponds to parts of the $\rLL$ and $\rHL$ regions in \Cref{fig:twoqueue_pe_eq}, the optimal exhaustive service policy always serves the queue with the higher arrival rate (queue 2). In the magenta regime, a subset of the $\rLH$ region in \Cref{fig:twoqueue_pe_eq}, the optimal policy sets $T_1^\ast=\theta_2 - 1^\top \tau - c_2$ and $T_2^\ast=0$ to force $\alpha_2(0)$ to equal one. In the red regime, which encompasses parts of the $\rLL$, $\rLH$, $\rHL$, and the entire $\rHH$ regions in \Cref{fig:twoqueue_pe_eq}, the pure exhaustive service policy $\pe$ is optimal. Finally, in the blue regime (if it exists), a subset of the $\rHL$ region in \Cref{fig:twoqueue_pe_eq}, the optimal policy requires $T_1^\ast=0$ and $T_2^\ast=\theta_1 - 1^\top \tau - c_2$ so that $\alpha_1(0)=1$.

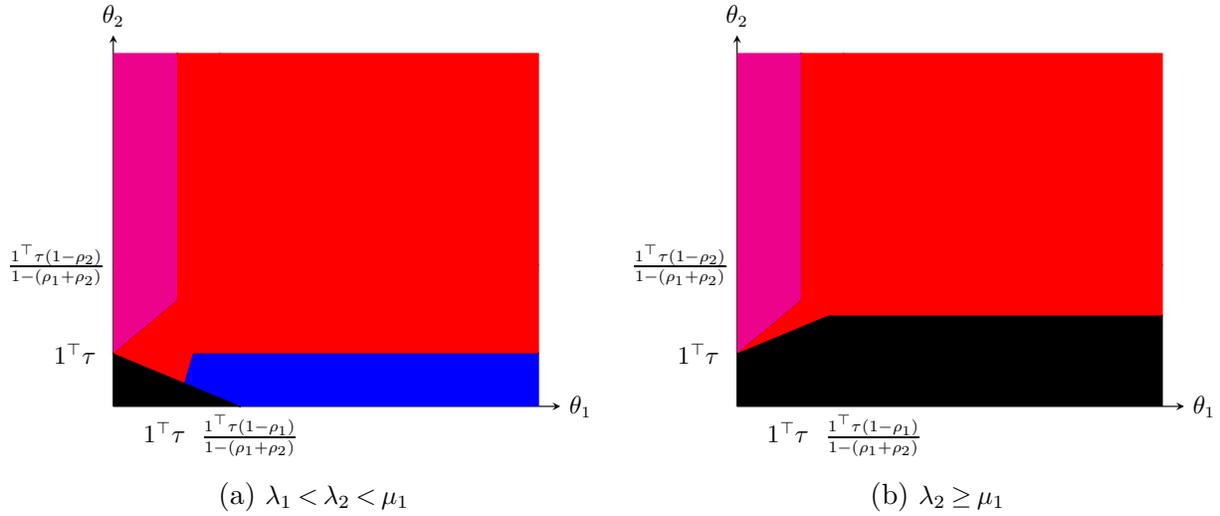
\begin{figure*}[ht]
\centering
  \subfloat[{\small{$\lambda_1<\lambda_2<\mu_1$}} 
  \label{subfig:twoqueue_opt_ex_a}]{  \resizebox{.49\textwidth}{!}{\begin{tikzpicture}
\begin{axis}[
    axis lines = left,
    xlabel = \(\theta_1\),
    ylabel = \(\theta_2\),
    xmin=0, xmax=10.5,
    ymin=0, ymax=10.5,
    xtick=\empty,
    ytick=\empty,
    xticklabels={,,},
    yticklabels={,,},
    tick style={draw=none},
    line width=0.5pt,
    clip=false, %
    every axis x label/.style={at={(ticklabel* cs:1.0)}, anchor=west},
    every axis y label/.style={at={(ticklabel* cs:1.0)}, anchor=south}
]

\addplot [
    domain=0:1.5, 
    samples=100, 
    color=black,
    line width=1pt
]
{x+1.5};
\addplot [
    domain=1.5:2.5, 
    samples=100, 
    color=black,
    line width=0.5pt,
    dashed
]
{x+1.5};

\addplot [
    domain=5/3:15/8, 
    samples=100, 
    color=black,
    line width=1pt
]
{4*x-6};
\addplot [
    domain=1.5:5/3, 
    samples=100, 
    color=black,
    line width=0.5pt,
    dashed
]
{4*x-6};
\addplot [
    domain=15/8:2.5, 
    samples=100, 
    color=black,
    line width=0.5pt,
    dashed
]
{4*x-6};

\addplot [
    domain=0:3, 
    samples=100, 
    color=black,
    line width=1pt
]
{-0.5*x+1.5};

\addplot [
    domain=2.5:10, 
    samples=100, 
    color=black,
    line width=0.5pt,
    dashed
]
{4};

\draw [dashed,line width=0.5pt] (axis cs:2.5,4) -- (axis cs:2.5,10);

\draw [line width=1pt] (axis cs:1.5,3) -- (axis cs:1.5,10);
\draw [line width=0.5pt,dashed] (axis cs:1.5,3) -- (axis cs:1.5,0);
\draw [line width=1pt] (axis cs:15/8,1.5) -- (axis cs:10,1.5);
\draw [line width=0.5pt,dashed] (axis cs:15/8,1.5) -- (axis cs:0,1.5);

\draw [line width=0.5pt,dashed] (axis cs:0,4) -- (axis cs:2.5,4);
\draw [line width=0.5pt,dashed] (axis cs:2.5,0) -- (axis cs:2.5,4);

\draw node[below] at (axis cs:3.2,0)  {$\frac{1^\top \tau(1-\rho_1)}{1-(\rho_1+\rho_2)}$};
\draw node[left] at (axis cs:0,4)  {$\frac{1^\top \tau(1-\rho_2)}{1-(\rho_1+\rho_2)}$};
\draw node[left] at (axis cs:-0.2,1.5)  {$1^\top \tau$};
\draw node[below] at (axis cs:1.2,-0.2)  {$1^\top \tau$};

\draw node[above] at (axis cs:0.8,2.4)  {\small$\ell_2$};
\draw node[right] at (axis cs:1.8,2.6)  {\small$\ell_1$};
\draw node at (axis cs:0.5,1)  {\small$\ell_3$};

\fill[black,opacity=0.5] 
(axis cs:0,0) --(axis cs:3,0) --(axis cs:0,1.5)  -- cycle;

\fill[blue,opacity=0.5] 
(axis cs:10,0) --(axis cs:10,1.5) --(axis cs:15/8,1.5) --(axis cs:5/3,2/3) --(axis cs:3,0)  -- cycle;

\fill[red,opacity=0.5] 
(axis cs:0,1.5) -- (axis cs:5/3,2/3) -- (axis cs:15/8,1.5)
-- (axis cs:10,1.5) -- (axis cs:10,10) -- (axis cs:1.5,10)  -- (axis cs:1.5,3)    -- cycle;

\fill[magenta,opacity=0.5] 
(axis cs:1.5,3) --(axis cs:1.5,10)--(axis cs:0,10) --(axis cs:0,1.5)  -- cycle;

\end{axis}

\end{tikzpicture}}}
  \hfill
\subfloat[{\small{$\lambda_2\geq \mu_1$}}
 \label{subfig:twoqueue_opt_ex_b}]{  \resizebox{.49\textwidth}{!}{\begin{tikzpicture}
\begin{axis}[
    axis lines = left,
    xlabel = \(\theta_1\),
    ylabel = \(\theta_2\),
    xmin=0, xmax=10.5,
    ymin=0, ymax=10.5,
    xtick=\empty,
    ytick=\empty,
    xticklabels={,,},
    yticklabels={,,},
    tick style={draw=none},
    line width=0.5pt,
    clip=false, %
    every axis x label/.style={at={(ticklabel* cs:1.0)}, anchor=west},
    every axis y label/.style={at={(ticklabel* cs:1.0)}, anchor=south}
]

\addplot [
    domain=0:1.5, 
    samples=100, 
    color=black,
    line width=1pt
]
{x+1.5};
\addplot [
    domain=1.5:2.5, 
    samples=100, 
    color=black,
    line width=0.5pt,
    dashed
]
{x+1.5};

\addplot [
    domain=1.5:2.5, 
    samples=100, 
    color=black,
     line width=0.5pt,
    dashed
]
{4*x-6};

\addplot [
    domain=0:15/7, 
    samples=100, 
    color=black,
    line width=1pt
]
{0.5*x+1.5};

\addplot [
    domain=15/7:10, 
    samples=100, 
    color=black,
    line width=0.75pt
]
{18/7};

\addplot [
    domain=2.5:10, 
    samples=100, 
    color=black,
    line width=0.5pt,
    dashed
]
{4};

\draw [dashed,line width=0.5pt] (axis cs:2.5,4) -- (axis cs:2.5,10);

\draw [line width=1pt] (axis cs:1.5,3) -- (axis cs:1.5,10);
\draw [line width=0.5pt,dashed] (axis cs:1.5,3) -- (axis cs:1.5,0);

\draw [line width=0.5pt,dashed] (axis cs:0,4) -- (axis cs:2.5,4);
\draw [line width=0.5pt,dashed] (axis cs:2.5,0) -- (axis cs:2.5,4);

\draw node[below] at (axis cs:3.2,0)  {$\frac{1^\top \tau(1-\rho_1)}{1-(\rho_1+\rho_2)}$};
\draw node[left] at (axis cs:0,4)  {$\frac{1^\top \tau(1-\rho_2)}{1-(\rho_1+\rho_2)}$};
\draw node[left] at (axis cs:-0.2,1.5)  {$1^\top \tau$};
\draw node[below] at (axis cs:1.2,-0.2)  {$1^\top \tau$};

\draw node[above] at (axis cs:0.8,2.4)  {\small$\ell_2$};
\draw node[right] at (axis cs:1.5,1.5)  {\small$\ell_1$};
\draw node[above] at (axis cs:0.5,0.8)  {\small$\ell_3$};

\fill[black,opacity=0.5] 
(axis cs:0,0) --(axis cs:10,0) --(axis cs:10,18/7) --(axis cs:15/7,18/7) --(axis cs:0,1.5)  -- cycle;

\fill[red,opacity=0.5] 
(axis cs:0,1.5) -- (axis cs:15/7,18/7) -- (axis cs:10,18/7) -- (axis cs:10,10) -- (axis cs:1.5,10)  -- (axis cs:1.5,3)    -- cycle;

\fill[magenta,opacity=0.5] 
(axis cs:1.5,3) --(axis cs:1.5,10)--(axis cs:0,10) --(axis cs:0,1.5)  -- cycle;

\end{axis}
\end{tikzpicture}}}
  \caption{Optimal Exhaustive Service Policy under Different Waiting Patience ($\rho_1+\rho_2>1$). \textit{Note}. Black regime: $T_2^\ast=+\infty, T_1^\ast=0$; 
Magenta regime: $T_1^\ast=\theta_2-1^\top \tau - c_2,T_2^\ast=0$;
Red regime: $T^\ast=0$; 
Blue regime: $T_1^\ast=0,T_2^\ast=\theta_1-1^\top\tau-c_2$.
The slope of the line $\ell_1$ is $(\mu_2-\lambda_2)/\lambda_2$, the slope of the line $\ell_2$ is $\lambda_1/(\mu_1-\lambda_1)$, and the slope of $\ell_3$ is $((\lambda_2-\mu_1)/\lambda_2)\cdot(\lambda_1/(\mu_1-\lambda_1))$. 
Lines $\ell_1$ and $\ell_2$ are the same as the ones in \Cref{fig:twoqueue_pe_eq}.
}
\label{fig:twoqueue_opt_ex}
\end{figure*}

Comparing \Cref{subfig:twoqueue_opt_ex_a} and ($\lambda_2 < \mu_1$)  \Cref{subfig:twoqueue_opt_ex_b} ($\lambda_2 \geq \mu_1$) reveals notable changes: the blue regime disappears, the red regime shrinks, the black regime expands, while the magenta regime remains unchanged. 
Another interesting observation from \Cref{fig:twoqueue_opt_ex} is that the boundary
of the optimal $T^\ast$ appears piecewise linear, despite the nonlinear nature of throughput. As shown in the proof of \Cref{prop:twoqueues}, certain nonlinear terms cancel out under equilibrium conditions, resulting in a piecewise linear relationship between $T^\ast$ and the waiting patience $\theta$.

\begin{remark}
For $n=2$, the optimal post-clearance durations $T^\ast$ are piecewise linear in the waiting patience $\theta$, given all other model parameters.
\end{remark}

\subsection{The Inverses of Matrix $A$}
\label{app_sub:inversematrix}

\begin{lemma}%
\label{lem:inverse_mat}
Given any non-empty set $\I\subseteq \N$, the matrix $A_{\I}$, where $A$ is defined in \eqref{eq:matrix_A_b}, is invertible if and only if $1^\top\rho_{\I} \neq 1$.
If $A_{\I}$ is invertible, the inverse matrix $A_{\I}^{-1}$ is given as follows: 
\begin{subequations}
        \begin{align}
    \textrm{for $i=j\in\I$:\quad}&A_{\I}^{-1}(i,j) = \frac{\mu_i-\lambda_i}{\mu_i\theta_i}\cdot\frac{1}{\bar{\rho}_{\I}}\cdot \left(1 - \sum_{k\in\I\setminus\{i\}}\rho_k\right),\label{eq:invermatrix_1} \\
    \textrm{for $i\neq j\in\I$:\quad}&A_{\I}^{-1}(i,j) = \frac{\mu_i-\lambda_i}{\mu_i\theta_i}\cdot \frac{\rho_j}{\bar{\rho}_{\I}},\label{eq:invermatrix_2}
    \end{align}
\end{subequations}
where $\bar{\rho}_{\I}:=1-1^\top\rho_{\I}$. Besides, the sum of elements in the $i^{\textrm{th}}$ row of $A_{\I}^{-1}$ is
    \begin{align*}
     \frac{\mu_i-\lambda_i}{\mu_i\theta_i}\cdot \frac{1}{\bar{\rho}_{\I}}.
    \end{align*}
\end{lemma}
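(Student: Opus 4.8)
\textbf{Proof proposal for \Cref{lem:inverse_mat}.}

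The plan is to first establish the invertibility criterion and then verify the closed-form inverse by direct multiplication, exploiting the rank-one-plus-diagonal structure of $A_{\I}$. Observe that the submatrix $A_{\I}$ can be written as $A_{\I} = D_{\I} - \mathbf{1}_{\I} c_{\I}^{\top} + \operatorname{diag}(c_{\I})$ where $D_{\I} = \operatorname{diag}(\theta_i)_{i\in\I}$; more cleanly, writing $c_i = \rho_i\theta_i/(1-\rho_i)$ and noting the diagonal entry is $\theta_i = c_i + \theta_i(1-\rho_i)/(1-\rho_i)\cdot$\,(something), it is more transparent to factor out the row-scaling $E_{\I} := \operatorname{diag}\bigl((\mu_i-\lambda_i)/(\mu_i\theta_i)\bigr)_{i\in\I} = \operatorname{diag}\bigl((1-\rho_i)/\theta_i\bigr)_{i\in\I}$. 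Indeed, I would show that $E_{\I} A_{\I} = I_{\I} - \rho_{\I}\mathbf{1}_{\I}^{\top} + (\text{correction on the diagonal})$; a short computation gives $(E_{\I}A_{\I})_{ii} = (1-\rho_i)$ and $(E_{\I}A_{\I})_{ij} = -(1-\rho_i)\rho_j/(1-\rho_j)\cdot$\,hmm — so the cleanest route is instead to scale columns. Let me scale on the right: set $F_{\I} = \operatorname{diag}(1-\rho_j)_{j\in\I}^{-1}\cdot(\text{?})$. The key structural fact to extract is simply that $A_{\I}$ equals a diagonal matrix plus a rank-one matrix of the form $(\text{column vector})\,\mathbf{1}^{\top}$ after suitable diagonal scaling, so that the Sherman–Morrison formula applies.

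Concretely, I would proceed as follows. First, rewrite $A_{\I} = \operatorname{diag}(\theta_i + c_i)_{i\in\I} - \mathbf{1}_{\I}c_{\I}^{\top}$, since the diagonal entry is $\theta_i = (\theta_i + c_i) - c_i$ and the off-diagonal entry in column $j$ is $-c_j$; this is a diagonal-plus-rank-one form. Second, apply the Sherman–Morrison formula: $\operatorname{diag}(\theta_i+c_i)^{-1}$ is immediate, and $\mathbf{1}_{\I}^{\top}\operatorname{diag}(\theta_i+c_i)^{-1}c_{\I} = \sum_{k\in\I} c_k/(\theta_k+c_k) = \sum_{k\in\I}\rho_k = \mathbf{1}^{\top}\rho_{\I}$, using $c_k/(\theta_k+c_k) = \rho_k$ (which follows from $c_k = \rho_k\theta_k/(1-\rho_k)$, so $\theta_k + c_k = \theta_k/(1-\rho_k)$ and hence $c_k/(\theta_k+c_k) = \rho_k$). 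Therefore $A_{\I}$ is invertible iff $1 - \mathbf{1}^{\top}\rho_{\I} \neq 0$, i.e. $1^{\top}\rho_{\I}\neq 1$, establishing the first claim. Third, Sherman–Morrison then yields
\[
A_{\I}^{-1} = \operatorname{diag}\!\Bigl(\tfrac{1-\rho_i}{\theta_i}\Bigr)_{i\in\I} + \frac{1}{1-\mathbf{1}^{\top}\rho_{\I}}\operatorname{diag}\!\Bigl(\tfrac{1-\rho_i}{\theta_i}\Bigr)_{i\in\I}\mathbf{1}_{\I}c_{\I}^{\top}\operatorname{diag}\!\Bigl(\tfrac{1-\rho_j}{\theta_j}\Bigr)_{j\in\I},
\]
and simplifying the $(i,j)$ entry using $(1-\rho_i)/\theta_i = (\mu_i-\lambda_i)/(\mu_i\theta_i)$ and $c_j(1-\rho_j)/\theta_j = \rho_j$ gives exactly \eqref{eq:invermatrix_1}–\eqref{eq:invermatrix_2}: for $i\ne j$ the entry is $\tfrac{\mu_i-\lambda_i}{\mu_i\theta_i}\cdot\tfrac{\rho_j}{\bar\rho_{\I}}$, while for $i=j$ one collects $\tfrac{\mu_i-\lambda_i}{\mu_i\theta_i}\bigl(1 + \tfrac{\rho_i}{\bar\rho_{\I}}\bigr) = \tfrac{\mu_i-\lambda_i}{\mu_i\theta_i}\cdot\tfrac{\bar\rho_{\I}+\rho_i}{\bar\rho_{\I}} = \tfrac{\mu_i-\lambda_i}{\mu_i\theta_i}\cdot\tfrac{1-\sum_{k\in\I\setminus\{i\}}\rho_k}{\bar\rho_{\I}}$, matching the claim. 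Finally, for the row-sum, I would add the entries of row $i$: $\tfrac{\mu_i-\lambda_i}{\mu_i\theta_i}\cdot\tfrac{1}{\bar\rho_{\I}}\bigl[(1-\sum_{k\ne i}\rho_k) + \sum_{j\ne i}\rho_j\bigr] = \tfrac{\mu_i-\lambda_i}{\mu_i\theta_i}\cdot\tfrac{1}{\bar\rho_{\I}}$, as stated. An alternative and fully self-contained way to finish is to verify directly that the claimed $A_{\I}^{-1}$ times $A_{\I}$ equals the identity by computing the $(i,\ell)$ entry of the product and splitting into the cases $\ell = i$ and $\ell\ne i$; the sums that appear telescope using $\sum_{j\in\I}\rho_j$ and $\bar\rho_{\I} = 1 - \sum_{j\in\I}\rho_j$.

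The main obstacle is purely bookkeeping: keeping the diagonal scalings $\operatorname{diag}((1-\rho_i)/\theta_i)$ straight on the correct side and correctly simplifying the composite fraction $c_j(1-\rho_j)/\theta_j = \rho_j$ at the right moment, since a misplaced factor of $(1-\rho_j)$ or $\theta_j$ propagates through the whole computation. There is no conceptual difficulty — Sherman–Morrison is exactly the right tool for a diagonal-plus-rank-one matrix — so the proof is essentially a careful algebraic verification, and the row-sum identity is a one-line corollary once the entries are in hand.
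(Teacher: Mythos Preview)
Your proposal is correct and takes a genuinely different route from the paper. The paper simply writes down the candidate matrix $B$ with entries \eqref{eq:invermatrix_1}--\eqref{eq:invermatrix_2} and verifies $AB=\mathbb{I}$ by brute-force entry-by-entry computation, splitting into $i=j$ and $i\ne j$ and cancelling sums of $\rho_k$'s. Your approach instead recognizes the structural identity $A_{\I}=\operatorname{diag}(\theta_i+c_i)_{i\in\I}-\mathbf{1}_{\I}c_{\I}^{\top}$ and applies Sherman--Morrison. This buys you two things the paper's proof does not: first, it \emph{derives} the inverse rather than merely checking a formula pulled from thin air; second, the invertibility criterion $1^{\top}\rho_{\I}\ne 1$ falls out immediately from the matrix determinant lemma $\det(A_{\I})=\det\bigl(\operatorname{diag}(\theta_i+c_i)\bigr)\cdot\bar\rho_{\I}$, whereas the paper's proof only establishes the ``if'' direction (it exhibits an inverse when $\bar\rho_{\I}\ne 0$ but never argues singularity when $\bar\rho_{\I}=0$). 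The paper's approach is more elementary in that it requires no named lemma, only arithmetic; yours is shorter, more conceptual, and actually proves the full ``iff''. Your alternative suggestion at the end---direct verification of the product---is precisely what the paper does.
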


\medskip

\section{Proofs}
\label{app:pf}

\subsection{Proofs in \Cref{sec:exogenous_regime}}

\begin{proof}{\textbf{Proof of \Cref{lem:waiting_time}.}}
This directly follows from the discussion after \Cref{lem:waiting_time}.  \Halmos
\end{proof}

\medskip

\begin{proof}{\textbf{Proof of \Cref{lem:fixed_duration_waitingtime}.}}

Define the \off waiting time at time $t$ as:
\begin{align*}
 W_i^{\off}(t)
 = (1-\iota_i(t))\cdot r_i(t) + z_i(t)\cdot \bar{L}_i.
\end{align*}
By the waiting time equation \eqref{eq:waiting_time_fixed_duration}, we have
\begin{align}
\label{eq:waiting_off_waiting}
 W_i(t) = \frac{q_i(t)}{\mu_i} + W_i^{\off}(t).   
\end{align}

We use the following results, whose proofs are provided immediately after the proof of \Cref{lem:fixed_duration_waitingtime}. We use customer arriving at time $t^+$ ($t^-$) to denote the customer arriving immediately after (before) time $t$.

\begin{lemma}
\label{lem:waiting_time_at_t_fixed_duration}
The following statements hold true.
 \begin{enumerate}
\item If the waiting time is not continuous at time $t$, then $W_i(t^+)=W_i(t^-) + \bar{L}_i$.
\item 
The waiting time $W_i(t)$ is continuous at time $t$ if and only if the \off waiting time $W_i^{\off}(t)$ is continuous at time $t$.
\item The \off waiting time $W_i^{\off}(t)$ is continuous at time $t$ if and only if customers arriving at $t^-$ and $t^+$ would be served in the same \on duration should they join the system.
 \end{enumerate}
\end{lemma}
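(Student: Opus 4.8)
\textbf{Proof plan for Lemma~\ref{lem:waiting_time_at_t_fixed_duration}.}

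The three statements are tightly linked, and I would prove them essentially simultaneously by unpacking the definition of $W_i^{\off}(t)$ in terms of the quantity $z_i(t)$ from \eqref{eq:z_i(t)}. The guiding observation is that the only source of discontinuity in the waiting-time map \eqref{eq:waiting_time_fixed_duration} is the floor function defining $z_i(t)$: the \on waiting time $q_i(t)/\mu_i$ is continuous (the queue length $q_i$ evolves continuously, even across \on/\off switches, since arrivals and services are fluid flows), and the term $(1-\iota_i(t))\,r_i(t)$ is continuous wherever $\iota_i$ is constant and, at an \on/\off switch, the residual time $r_i$ jumps only from $0$ (end of \on) paired with the start of an \off period or vice versa, which I will need to handle explicitly. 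So the plan is: (i) identify the argument of the floor in \eqref{eq:z_i(t)} as a continuous function of $t$ except at \on/\off switch epochs, and track its behavior there; (ii) translate a unit jump in $z_i$ into a jump of exactly $\bar L_i$ in $W_i^{\off}$ and hence in $W_i$; (iii) interpret ``$z_i$ increases by one'' as ``the \on period in which an arriving customer is ultimately served has shifted by one cycle.''

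For part (iii), which is really the conceptual heart, I would argue via the hypothetical-system reduction already spelled out in the discussion after \Cref{lem:waiting_time}: a customer arriving at $t$ is served at the completion of $q_i(t)/\mu_i$ units of cumulative \on service counted forward from $t$ (with the residual-\on adjustment when $\iota_i(t)=1$). The index of the \on period in which this service completes is governed precisely by how many full \on durations $L_i$ fit into that cumulative demand, i.e., by $\lfloor (q_i(t)/\mu_i + \iota_i(t)(L_i - r_i(t)))/L_i \rfloor = z_i(t)$ together with the current position within the \on-\off alternation. Thus $W_i^{\off}(t^-) = W_i^{\off}(t^+)$ if and only if the customers arriving just before and just after $t$ have their service-completion \on period unchanged — which is the claim in (iii). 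Once (iii) is established, (ii) is immediate: two customers served in the same \on period but arriving an infinitesimal time apart see \off waiting times differing only through the (continuous) queue-length contribution absorbed already into $q_i/\mu_i$, so $W_i^{\off}$ is continuous there; conversely if they are served one \on period apart, $z_i$ jumps by exactly one and $W_i^{\off}$ jumps by $\bar L_i$. Part (i) then follows because, by periodicity and monotonicity of the cumulative-demand argument, $z_i$ can only jump by $+1$ (never more, never down) as $t$ increases past such an epoch, giving $W_i(t^+) = W_i(t^-) + \bar L_i$; and part (ii) is just the statement that $W_i$ and $W_i^{\off}$ differ by the continuous term $q_i(t)/\mu_i$.

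The main obstacle I anticipate is the careful bookkeeping at an \on/\off switching epoch, where both $\iota_i$ and $r_i$ are themselves discontinuous: I must verify that the argument of the floor in \eqref{eq:z_i(t)} nonetheless has a well-defined one-sided behavior there and that the hypothetical-system construction (which converts an \on-state arrival into an equivalent \off-state arrival with inflated \on demand and zero residual) matches up continuously across the switch. Concretely, as $t \to t_0^-$ with $\iota_i(t)=1$ and $r_i(t)\to 0$, the argument tends to $q_i(t_0)/\mu_i + L_i - 0 = q_i(t_0)/\mu_i + L_i$ giving contribution $\lfloor q_i(t_0)/\mu_i/L_i \rfloor + 1$ from the $+L_i$; while as $t\to t_0^+$ with $\iota_i(t)=0$ and $r_i(t)\to \bar L_i$, the argument is $q_i(t_0)/\mu_i$, giving floor $\lfloor q_i(t_0)/\mu_i/L_i\rfloor$, but now there is an \emph{extra} residual \off term $r_i\bar L_i \to \bar L_i$ in $W_i^{\off}$; so the two contributions to $W_i^{\off}$ coincide in the limit and $W_i^{\off}$ (hence $W_i$) is continuous at a switch from \on to \off, consistent with the claim that the sole genuine jumps occur where the service-completion \on period changes. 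I would present this limit check as a short explicit computation. Everything else is bookkeeping around the floor function and the fluid-continuity of $q_i$.
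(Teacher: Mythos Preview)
Your approach is essentially the same as the paper's: both decompose $W_i = q_i/\mu_i + W_i^{\off}$ and use continuity of $q_i$ for part~(ii), identify the floor in $z_i(t)$ as the sole source of discontinuity for part~(i), and handle the \on/\off switch epochs by explicit one-sided limit computations (the paper also writes out the \off-to-\on case, which is analogous to and slightly simpler than the \on-to-\off case you detail). The paper's treatment of part~(iii) is terser than your hypothetical-system interpretation but carries the same content.
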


\begin{lemma}
\label{lem:waiting_time_derivative}
 At any time $t$ where $W_i^\prime(t)$ exists, 
 \begin{align}
 \nonumber
  W_i^\prime(t)=
  \begin{cases}
     -1 + \rho_i \cdot \mathbbm{1}\{f_i(t)=1\}<0,
     \quad & \textrm{if $t$ does not belong to a post-clearance period},\\
   0,  & \textrm{otherwise}.
  \end{cases}
 \end{align}

\end{lemma}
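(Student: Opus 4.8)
\textbf{Proof plan for \Cref{lem:waiting_time_derivative}.}

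The plan is to compute the right-derivative of $W_i(t)$ directly from its decomposition \eqref{eq:waiting_off_waiting}, $W_i(t) = q_i(t)/\mu_i + W_i^{\off}(t)$, using a small time increment $\delta t > 0$. The key observation is that, at any point $t$ where the waiting time is differentiable, \Cref{lem:waiting_time_at_t_fixed_duration} guarantees that $W_i^{\off}$ is continuous at $t$, and moreover that customers arriving at $t^-$ and $t^+$ are served within the same \on period; hence for sufficiently small $\delta t$ the customer arriving at $t + \delta t$ is served in the same \on period as the one arriving at $t$, so $z_i$ does not jump and $W_i^{\off}(t+\delta t) - W_i^{\off}(t)$ contributes nothing of order $\delta t$ beyond the residual-time drift. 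First I would treat the non-post-clearance case: by FCFS and the flow dynamics of the fluid model, the queue contents faced by the customer arriving at $t+\delta t$ differ from those faced by the customer at $t$ by the net flow over $[t, t+\delta t]$, namely $+\lambda_i \delta t \cdot \mathbbm{1}\{f_i(t) = 1\}$ arrivals that joined, minus $\mu_i \delta t$ served while the server is \on (or minus the residual-\off-time decrease while \off, which exactly cancels in $W_i^{\off}$). Carefully accounting separately for whether $\iota_i(t) = 1$ (server \on) or $\iota_i(t) = 0$ (server \off) at $t$, and using that $W_i^{\off}$ is locally constant in the differentiable regime, yields $W_i^\prime(t) = -1 + \rho_i \mathbbm{1}\{f_i(t) = 1\}$ in both subcases.

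For the post-clearance case, the argument is simpler: during a post-clearance period the queue is empty and the server serves arriving customers immediately, so $q_i(t) = 0$ and $W_i^{\off}(t) = 0$ on an open interval around $t$, giving $W_i(t) \equiv 0$ and hence $W_i^\prime(t) = 0$. The strict negativity $-1 + \rho_i \mathbbm{1}\{f_i(t)=1\} < 0$ in the first case follows immediately from the standing assumption $\mu_i > \lambda_i$, i.e. $\rho_i < 1$ (and trivially $-1 < 0$ when $f_i(t) = 0$).

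The main obstacle I anticipate is the bookkeeping around the discrete jumps of $z_i(t)$: the floor function in \eqref{eq:z_i(t)} means $W_i^{\off}$ is piecewise constant with jumps of size $\bar{L}_i$, so one must argue cleanly that at any differentiability point these jumps are locally absent. This is exactly what \Cref{lem:waiting_time_at_t_fixed_duration}(i)--(iii) is designed to handle — at a jump point the waiting time is discontinuous (jump $\bar{L}_i$), so differentiability rules it out — but making the $\delta t \to 0^+$ and $\delta t \to 0^-$ limits rigorous requires checking that the "same \on period" property persists on a one-sided neighborhood, not just at the single instant $t$. The second subtlety is handling the transition epoch $\iota_i$ changes from \off to \on or vice versa; there the formula for $W_i^{\off}$ switches branches, but one checks these are measure-zero and generically not differentiability points anyway (or the left/right derivatives are handled by the appropriate one-sided version of the same computation). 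Once these continuity facts are in hand, the derivative computation itself is a routine linear calculation in $\delta t$.
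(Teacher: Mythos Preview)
Your approach is essentially the paper's: both differentiate the waiting-time formula \eqref{eq:waiting_time_fixed_duration} term by term, using that the floor term $z_i(t)$ has zero derivative wherever $W_i'$ exists, and split into \on/\off and post-clearance cases.

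One slip to fix: $W_i^{\off}(t)$ is \emph{not} locally constant when $\iota_i(t)=0$, since it contains the residual term $r_i(t)$, which drifts at rate $-1$; only $z_i(t)$ is locally constant. In the \off case $q_i'(t)/\mu_i=\rho_i\,\mathbbm{1}\{f_i(t)=1\}$ carries no $-1$ term (no service while \off), and it is precisely $r_i'(t)=-1$ inside $W_i^{\off}$ that supplies it, giving $W_i'(t)=\rho_i\,\mathbbm{1}\{f_i(t)=1\}-1$. You allude to this drift earlier (``residual-\off-time decrease''), so the error is in your summary sentence rather than the underlying idea; just make sure the written proof tracks $r_i'$ rather than asserting $W_i^{\off}$ is constant.
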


\begin{lemma}
\label{coro:waiting_time_decreasing_same_onduration}
If a customer arriving at time $t_1$ is served in the same \on period as a customer arriving at a later time $t_2>t_1$, then $W_i(t)$ is decreasing in $t\in [t_1,t_2]$.
\end{lemma}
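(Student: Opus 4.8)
The plan is to reduce the statement to Lemma \ref{lem:waiting_time_derivative} together with the continuity characterization in Lemma \ref{lem:waiting_time_at_t_fixed_duration}. The key observation is that the only way $W_i(t)$ can fail to be decreasing on $[t_1,t_2]$ is either (a) a discontinuity (an upward jump), or (b) a sub-interval on which $W_i'(t)=0$, which by Lemma \ref{lem:waiting_time_derivative} can only happen during a post-clearance period. I will argue that the hypothesis — that a customer arriving at $t_1$ and one arriving at $t_2>t_1$ are served in the \emph{same} \on period — rules out both (a) and (b) on the open interval $(t_1,t_2)$, and hence $W_i$ is (weakly, in fact strictly off any post-clearance stretch) decreasing there; continuity at the endpoints then gives monotonicity on the closed interval.

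First I would rule out jumps. By Lemma \ref{lem:waiting_time_at_t_fixed_duration}(i)--(iii), $W_i$ is discontinuous at some $t\in(t_1,t_2)$ exactly when customers arriving at $t^-$ and $t^+$ would be served in different \on periods, i.e.\ $z_i$ increments at $t$. But the customer arriving at $t_1$ is served in the same \on period as the one arriving at $t_2$; since the \off-waiting-time / service-period assignment is monotone in arrival time (a later arrival is served no earlier, hence in the same or a later \on period), every customer arriving at an intermediate time $t\in(t_1,t_2)$ is also served in that very same \on period. Consequently $z_i(t)$, and more precisely the index of the \on period in which the arriving customer is served, is constant on $(t_1,t_2)$, so there is no jump of $W_i$ on $(t_1,t_2)$. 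This also shows $W_i$ is continuous at $t_1$ from the right and at $t_2$ from the left, and combined with the periodic-outcome setup it extends to continuity on the closed interval $[t_1,t_2]$.

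Next I would handle the derivative. On $(t_1,t_2)$, away from the (at most finitely many, by piecewise-linearity of the dynamics) points where $W_i'$ fails to exist, Lemma \ref{lem:waiting_time_derivative} gives $W_i'(t)\le 0$, with $W_i'(t)<0$ unless $t$ lies in a post-clearance period. I claim no point of $(t_1,t_2)$ lies in a post-clearance period: during a post-clearance period the queue is empty and every arriving customer has zero waiting time and is served \emph{immediately}, i.e.\ within the \on period currently in progress. If an intermediate arrival at $t\in(t_1,t_2)$ were served in the current (post-clearance) \on period, then by the common-service-period conclusion of the previous paragraph the $t_1$-arrival and the $t_2$-arrival would be served in that same \on period — which is fine — but then the $t_1$-arrival itself would have to be served in that \on period while arriving strictly before $t$; tracing the queue-length dynamics (the queue must return to zero before the post-clearance stretch begins, and the $t_1$-arrival must be cleared before the server departs) forces $t_1$ to lie in the same post-clearance stretch, and likewise $t_2$, so the \emph{entire} interval $[t_1,t_2]$ is inside one post-clearance period; but then $W_i\equiv 0$ on $[t_1,t_2]$, which is (weakly) decreasing and the claim holds trivially. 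In the remaining (generic) case, no point of $(t_1,t_2)$ is post-clearance, so $W_i'(t)<0$ wherever it exists; since $W_i$ is continuous on $[t_1,t_2]$ and piecewise linear, a continuous piecewise-linear function with nonpositive (indeed negative, off a measure-zero set) slope is non-increasing, giving the claim.

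The main obstacle I anticipate is the careful bookkeeping in the second paragraph: making rigorous the monotonicity of the "\on period in which a customer is served" as a function of arrival time, and in particular showing that the \emph{absence} of a jump of $W_i$ strictly between $t_1$ and $t_2$ follows from the two endpoint customers sharing a service period. This is intuitively clear from the FCFS discipline and the picture in \Cref{fig:waiting_time_fixed_duration} (the only jump in a cycle is at the joining-to-not-joining transition, and it moves the service assignment one \on period forward), but writing it down cleanly requires invoking Lemma \ref{lem:waiting_time} to express $z_i(t)$ and checking that $z_i$ cannot increment on $(t_1,t_2)$ without pushing the $t_2$-arrival into a strictly later \on period than the $t_1$-arrival, contradicting the hypothesis. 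Once that monotonicity is pinned down, the rest is a routine combination of Lemmas \ref{lem:waiting_time_at_t_fixed_duration} and \ref{lem:waiting_time_derivative}.
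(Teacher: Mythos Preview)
Your approach is essentially the paper's: invoke Lemma~\ref{lem:waiting_time_at_t_fixed_duration}(ii)--(iii) to get continuity on $(t_1,t_2)$ (intermediate arrivals share the same \on period by FCFS, so no jump), and then Lemma~\ref{lem:waiting_time_derivative} to get a non-positive (left-)derivative everywhere, yielding weak monotonicity. The paper only establishes and only needs the weak version, so your separate case analysis for post-clearance stretches---while not wrong---is unnecessary; the concern you flag about monotonicity of the service-period assignment is exactly what the paper dispatches by ``repeatedly applying'' Lemma~\ref{lem:waiting_time_at_t_fixed_duration}(ii)--(iii).
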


\medskip

We now prove each part of \Cref{lem:fixed_duration_waitingtime}. 
We may sometimes use \textit{customer $t$} to refer to a customer arriving at time $t$ when there is no confusion.

\begin{enumerate}

\item 

Without loss of generality, assume that in one cycle there are $K\geq 1$ consecutive contiguous not-joining and joining periods:
\[
\bar{\J}_i^{(1)},\,\J_i^{(1)},\,\bar{\J}_i^{(2)},\,\J_i^{(2)},\,\ldots,\,\bar{\J}_i^{(K)},\,\J_i^{(K)},
\]
where $\bar{\J}_i^{(k)}$ and $\J_i^{(k)}$ denote the $k^{\mathsf{th}}$ not-joining and joining \textit{periods} in one cycle (assuming that $t=0$ marks the beginning of a not-joining period), respectively. In other words, the total not-joining duration in one cycle is $\bar{J}_i = \sum_{k=1}^K \bar{J}_i^{(k)}$ and the total joining duration is $J_i = \sum_{k=1}^K J_i^{(k)}$, where $\bar{J}_i^{(k)}$ and $J_i^{(k)}$ denote the $k^{\mathsf{th}}$ not-joining and joining \textit{durations}, respectively. Proving \Cref{lem:fixed_duration_waitingtime} (i) is equivalent to showing that $K=1$.

\smallskip
\noindent
\underline{\textit{Claim}}: The customer arriving at $\bar{t} = \bar{J}_i^{(1)}$ must be served at the \emph{beginning} of some \on period, denoted as $\LL_i^{(\bar{t})}$.

To see why, note that by the definition of $\bar{t}$, the customer arriving at $\bar{t}^+$ joins the system, while the one arriving at $\bar{t}^-$ does not since we assume $t=0$ is the beginning of a not-joining period. Hence $W_i(\bar{t}^+)\le W_i(\bar{t}^-)$. By \Cref{lem:waiting_time_at_t_fixed_duration} (i), the waiting time must be continuous at $\bar{t}$, or else $W_i(\bar{t}^+)$ would exceed $W_i(\bar{t}^-)$ by $\bar{L}_i$, which contradicts $W_i(\bar{t}^+)\le W_i(\bar{t}^-)$. Besides, $W_i(\bar{t}) = \theta_i$ since the time $\bar{t}$ is the transition from not joining to joining.

Suppose instead that customer $\bar{t}$ is \emph{not} served at the beginning of some $\LL_i^{(\bar{t})}$. Then, there must be another customer, arriving earlier than the customer $\bar{t}$, who is served at the beginning of $\LL_i^{(\bar{t})}$. Denote by $\tilde{t}$ the arrival time of this earlier customer. By \Cref{coro:waiting_time_decreasing_same_onduration}, $W_i(t)$ is strictly decreasing for $t\in(\tilde{t}, \bar{t})$. Since $W_i(\bar{t})=\theta_i$, any arrival in $(\tilde{t},\bar{t})$ would have waiting time strictly less than $\theta_i$. However, this contradicts the fact that customers arriving in $(0,\bar{t})$ did not join since $(0,\bar{t})$ is the not-joining period $\bar{\J}_i^{(1)}$.
Therefore, the customer arriving at $\bar{t}=\bar{J}_i^{(1)}$ must be served at the beginning of $\LL_i^{(\bar{t})}$.

Next, define $\hat{t} := \bar{J}_i^{(1)} + J_i^{(1)}$, which is the transition 
time from the first joining period $\J_i^{(1)}$ to the second not-joining period $\bar{\J}_i^{(2)}$.
Since the customer arriving at $\hat{t}^+$ does not join, while the one at $\hat{t}^-$ does, we have $W_i(\hat{t}^-)\le \theta_i < W_i(\hat{t}^+)$ and, by \Cref{lem:waiting_time_at_t_fixed_duration} (i), $W_i(\hat{t}^+)=W_i(\hat{t}^-)+\bar{L}_i$. By \Cref{lem:waiting_time_at_t_fixed_duration} (iii), customer $\hat{t}^-$ and customer $\hat{t}^+$ (if she joins) are served in different \on periods.
To make this happen, the customer $\hat{t}^-$ must be served at the \emph{end} of some \on period.

Combining these observations, serving the arrivals from $\bar{t}$ to $\hat{t}$ (i.e., those who joined during the first joining period $\J_i^{(1)}$) requires \emph{at least one complete} \on period, i.e., requires at least $L_i$ units of \on time. 
On the other hand, by the periodicity assumption,
serving all customers who arrive during the \textit{entire} joining periods uses \textit{exactly} $L_i$ units of \on time.  
Therefore, there can only be one contiguous joining period in one cycle, i.e.,  $K=1$. 

\item 
At time $t=0^-$, customers join, but at $t=0^+$, they do not (since $t=0$ is the beginning of the not-joining period). Hence $W_i(0^+)> \theta_i \ge W_i(0^-)$. By \Cref{lem:waiting_time_at_t_fixed_duration} (i), any upward jump in $W_i(t)$ must be exactly $\bar{L}_i$, so $W_i(0^+) = W_i(0^-)+\bar{L}_i$.

\item 
From part (i), all customers who join during the joining period are served within the same \on period. Repeated applications of \Cref{lem:waiting_time_at_t_fixed_duration} (ii) and (iii) then imply that $W_i(t)$ is continuous in $t \in (\bar{J}_i,\,\bar{J}_i + J_i)$.

We now verify the continuity at $t=\bar{J}_i$ and in $(0,\bar{J}_i)$:

\begin{enumerate}
\item 
At $t=\bar{J}_i^-$, customers do not join, while at $t=\bar{J}_i^+$ they do. Thus, $W_i(\bar{J}_i^+)\le W_i(\bar{J}_i^-)$, and by \Cref{lem:waiting_time_at_t_fixed_duration} (i), $W_i(\bar{J}_i^+)=W_i(\bar{J}_i^-)$ (otherwise there would be an upward jump of $\bar{L}_i$).

\item 
From part~(i), the first joining customer arriving at $(\bar{J}_i,\bar{J}_i + J_i)$, i.e., the contiguous joining period, is served at the beginning of an \on period, and the last one is served at the end of the same \on period. By the first-come-first-serve rule, any customer arriving during the not-joining period $(0,\bar{J}_i)$ (if they were to join) would also be served at the beginning of some \on period, ensuring the continuity of $W_i(t)$ by \Cref{lem:waiting_time_at_t_fixed_duration} (iii). 
\end{enumerate}

\item 
The derivative part in
\Cref{lem:fixed_duration_waitingtime} (iv) follows directly from \Cref{lem:waiting_time_derivative}.
The result that any non-empty post-clearance period must lie at the end of the joining period follows from \Cref{thm:equilibrium_fixedduration_exhaustive}, which characterizes the exhaustive equilibrium outcome and confirms this property. Note that the proof of \Cref{thm:equilibrium_fixedduration_exhaustive} does not rely on \Cref{lem:fixed_duration_waitingtime}.
\Halmos
\end{enumerate}

\end{proof}

\medskip

\begin{proof}{\textbf{Proof of \Cref{lem:waiting_time_at_t_fixed_duration}.}}
We prove each result separately.
\begin{enumerate}
    \item By \Cref{lem:waiting_time}, if $t$ is not a transition epoch of the server's status, the term $q_i(t)/\mu_i + (1-\iota_i(t))\cdot r_i(t)$ is continuous at time $t$. Then, any discontinuity of the waiting time in \eqref{eq:waiting_time_fixed_duration} can only come from $z_i(t)$, which involves the floor function. 
    For the time $t$ that is not the transition epoch of the server's status, the term inside the floor function is continuous at time $t$.
    Then, if $z_i(t^+)\neq z_i(t^-)$, we must have $z_i(t^+) = z_i(t^-)+1$. This leads to $W_i(t^+) = W_i(t^-)+\bar{L}_i$ if the waiting time is discontinuous at time $t$.

We now discuss the cases when $t$ is a transition time of the server's status.
\begin{enumerate}
    \item  Consider the transition time $t$ from \on to \off, i.e., $(\iota_i(t^-),r_i(t^-))=(1,0)$ and $(\iota_i(t^+),r_i(t^+)) = (0,\bar{L}_i)$. Since the queue length is continuous, i.e., $q_i(t^-)=q_i(t^+)$, by \eqref{eq:waiting_time_fixed_duration}, we have
\begin{align}
\label{eq:waitingtime_leftlimit_on_off_1}
 W_i(t^-)
 &=\frac{q_i(t^-)}{\mu_i} + (1-1)\cdot 0+ \underbrace{\left(\left\lfloor\frac{q_i(t^-)}{\mu_i L_i}+ 1\right\rfloor \right)}_{z_i(t^-)}\cdot \bar{L}_i
 =\frac{q_i(t)}{\mu_i}+
 \underbrace{\left(\left\lfloor\frac{q_i(t^-)}{\mu_i L_i}\right\rfloor + 1\right)\cdot \bar{L}_i}_{\textrm{\off waiting time}}
 ,
\end{align}
and
\begin{align}
\label{eq:waitingtime_leftlimit_on_off_2}
W_i(t^+)
=\frac{q_i(t^+)}{\mu_i} +  (1-0)\cdot \bar{L}_i +  \underbrace{\left\lfloor\frac{q_i(t^+)}{\mu_i L_i}\right\rfloor}_{z_i(t^+)}\cdot \bar{L}_i
=\frac{q_i(t)}{\mu_i}+
\underbrace{\left(\left\lfloor\frac{q_i(t^+)}{\mu_i L_i}\right\rfloor + 1\right)\cdot \bar{L}_i}_{\textrm{\off waiting time}}.
\end{align}
Combining \eqref{eq:waitingtime_leftlimit_on_off_1} and \eqref{eq:waitingtime_leftlimit_on_off_2}, if $W_i(t^+)\neq W_i(t^-)$, we must have $\left\lfloor\frac{q_i(t^-)}{\mu_i L_i}\right\rfloor \neq \left\lfloor\frac{q_i(t^+)}{\mu_i L_i}\right\rfloor$. 
By the property of the floor function, we have $\left\lfloor\frac{q_i(t^+)}{\mu_i L_i}\right\rfloor = \left\lfloor\frac{q_i(t^-)}{\mu_i L_i}\right\rfloor + 1$. 
This leads to $W_i(t^+)=W_i(t^-) + \bar{L}_i$.

\item Consider the transition time $t$ from \off to \on, i.e., $(\iota_i(t^-),r_i(t^-))=(0,0)$ and $(\iota_i(t^+),r_i(t^+))=(1,L_i)$. 
By \eqref{eq:waiting_time_fixed_duration}, we have
\begin{align}
\label{eq:waitingtime_leftlimit_off_on_1}
 W_i(t^-)
 &=\frac{q_i(t^-)}{\mu_i} + (1-0)\cdot 0+ \underbrace{\left\lfloor\frac{q_i(t^-)}{\mu_i L_i}\right\rfloor}_{z_i(t^-)}\cdot \bar{L}_i
 =\frac{q_i(t)}{\mu_i}+
 \underbrace{\left\lfloor\frac{q_i(t^-)}{\mu_i L_i}\right\rfloor\cdot \bar{L}_i}_{\textrm{\off waiting time}},
\end{align}
and 
\begin{align}
\label{eq:waitingtime_leftlimit_off_on_2}
 W_i(t^+)
 &=\frac{q_i(t^+)}{\mu_i} + (1-1)\cdot L_i+ \underbrace{\left\lfloor\frac{q_i(t^+)}{\mu_i L_i} + \frac{1\cdot (L_i-L_i)}{L_i}\right\rfloor}_{z_i(t^+)}\cdot \bar{L}_i
 =\frac{q_i(t)}{\mu_i}+
 \underbrace{\left\lfloor\frac{q_i(t^+)}{\mu_i L_i}\right\rfloor\cdot \bar{L}_i}_{\textrm{\off waiting time}}.
\end{align}
Combining \eqref{eq:waitingtime_leftlimit_off_on_1} and \eqref{eq:waitingtime_leftlimit_off_on_2}, if $W_i(t^+)\neq W_i(t^-)$, we have $\left\lfloor\frac{q_i(t^+)}{\mu_i L_i}\right\rfloor \neq \left\lfloor\frac{q_i(t^-)}{\mu_i L_i}\right\rfloor$. 
This leads to $\left\lfloor\frac{q_i(t^+)}{\mu_i L_i}\right\rfloor = \left\lfloor\frac{q_i(t^-)}{\mu_i L_i}\right\rfloor + 1$.
We thus have $W_i(t^+)=W_i(t^-) + \bar{L}_i$.
\end{enumerate}
Putting it all together, we have \Cref{lem:waiting_time_at_t_fixed_duration} (i).

\item \Cref{lem:waiting_time_at_t_fixed_duration} (ii) follows directly from \eqref{eq:waiting_off_waiting} and the fact that $q_i(t)$ is always continuous.

\item \Cref{lem:waiting_time_at_t_fixed_duration} (iii) follows from the reasoning of \Cref{lem:waiting_time}: the only source of discontinuity in $W_i(t)$ is that the customer arriving at $t^+$ must wait for one more \off duration compared with a customer arriving at $t^-$. This is reflected in the floor function in $z_i(t)$ given by \eqref{eq:z_i(t)}.\Halmos
\end{enumerate}
\end{proof}

\medskip

\begin{proof}{\textbf{Proof of \Cref{lem:waiting_time_derivative}.}}
We first prove the first part.
For any $t$ not in a post-clearance period, we have
\begin{align}
\label{eq:q_i_derivative}
 q_i^\prime(t) = -\mu_i\,\iota_i(t) + \lambda_i\,\mathbbm{1}\{f_i(t)=1\}.
\end{align}
This is because when the server is serving queue $i$, the service rate is $\mu_i$, leading to a decrease of $-\mu_i$ per unit time. When a customer arriving at time $t$ joins the system (i.e., $f_i(t)=1$), there is an increment of rate $\lambda_i$. Because $z_i(t)$ in \eqref{eq:z_i(t)} involves a floor operator, its derivative (if exists) is zero, so $z_i^\prime(t)=0$.

Using \eqref{eq:q_i_derivative} and $z_i^\prime(t)=0$, when $\iota_i(t)=1$, differentiating \eqref{eq:waiting_time_fixed_duration} with respect to $t$ gives
\begin{align*}
 W_i^\prime(t) = \frac{q_i^\prime(t)}{\mu_i} 
 = -1 + \rho_i\,\mathbbm{1}\{f_i(t)=1\}.
\end{align*}
When $\iota_i(t)=0$, differentiating \eqref{eq:waiting_time_fixed_duration} with respect to $t$ yields
\begin{align*}
 W_i^\prime(t) 
 = \frac{q_i^\prime(t)}{\mu_i} + r_i^\prime(t)
 = \rho_i\,\mathbbm{1}\{f_i(t)=1\} - 1,
\end{align*}
where the last equality holds because $r_i^\prime(t)=-1$. Since $\lambda_i<\mu_i$ by our assumption in the main text, we have $W_i^\prime(t)<1$. 

The second part is straightforward since, during the post-clearance duration, the waiting time is always zero, and so is its derivative. This completes the proof.
\Halmos
\end{proof}

\medskip

\begin{proof}{\textbf{Proof of \Cref{coro:waiting_time_decreasing_same_onduration}.}}
By repeatedly applying \Cref{lem:waiting_time_at_t_fixed_duration} (ii) and (iii) to $t\in(t_1,t_2)$, we see that $W_i(t)$ is continuous over this interval. 
On the other hand, by following the same argument used in the proof of \Cref{lem:waiting_time_derivative}, one can show that the left-hand derivative $W'(t^-)$ for $t \in (t_1, t_2)$, which exists everywhere, %
is non-positive. Combining these observations, we conclude that $W_i(t)$ is (weakly) decreasing in $t \in (t_1, t_2)$.
\Halmos
\end{proof}

\medskip

\begin{proof}{\textbf{Proof of \Cref{thm:equilibrium_fixedduration_exhaustive}.}}
We prove the results case by case.

\begin{enumerate}
    \item \textit{When $\bar{L}_i \geq \theta_i$, i.e., Figures \ref{subfig:exh_case1}--\ref{subfig:exh_case3}.}

    We claim that at equilibrium, when the server leaves queue~$i$, the queue length must be zero. Suppose not, and let the remaining queue length be $\delta q > 0$. Then, for the last customer among these $\delta q$ customers, the waiting time must be greater than $\bar{L}_i + \delta q/\mu_i > \bar{L}_i \geq \theta_i$, contradicting the equilibrium constraint. Thus, the equilibrium outcome must be exhaustive.

Then, using the waiting time formula \eqref{eq:waiting_time_fixed_duration}, one can show that the waiting time decreases from the epoch when the server leaves queue $i$ until time $\diamondsuit$. At $\diamondsuit$, the waiting time is exactly $\theta_i$ by \Cref{lem:fixed_duration_waitingtime}\,(iii). Hence, no arriving customers during these intervals join.

Following time $\diamondsuit$, if an arriving customer, assuming they join, can be served before the server leaves queue $i$ again, they would be willing to join since $W_i(\diamondsuit)=\theta_i$ and, by \Cref{coro:waiting_time_decreasing_same_onduration}, the waiting time decreases. Thus, customers will join until time $\clubsuit$, which satisfies that the customer arriving at $\diamondsuit$ is served at the beginning of an \on period while the customer arriving at $\clubsuit$ is served at the end of the same \on period. In other words, customers arriving during $(\diamondsuit,\clubsuit)$ require exactly one complete \on duration to be served. Customers arriving after $\clubsuit$ (but before the server leaves queue $i$ again) will not join; if they did, they would not be served before the server departs, resulting in a non-zero queue length. This contradicts the argument in the first paragraph. %

    \item \textit{When $\bar{L}_i < \theta_i$ and $L_i \geq \lambda_i \bar{L}_i/(\mu_i - \lambda_i)$, i.e., Figure \ref{subfig:exh_case4}.}

We first show that, in this case, all customers join the system at equilibrium.
From the condition $L_i \geq \lambda_i \bar{L}_i/(\mu_i - \lambda_i)$, it follows that 
$\mu_i L_i \geq \lambda_i\bigl(L_i + \bar{L}_i\bigr)$, so the server can clear all customers even if they all join.
Suppose the queue length is zero when the server leaves queue $i$. Since $\bar{L}_i < \theta_i$, the customer arriving at that moment is willing to join. Because the server can clear the queue with all customers joining, then by \Cref{coro:waiting_time_decreasing_same_onduration}, the waiting time decreases from the time the server leaves queue $i$ until it leaves again. Consequently, all customers are indeed willing to join. Thus, $J_i = L_i + \bar{L}_i$ and $\bar{J}_i = 0$, and the first-best throughput $\lambda_i$ is achieved. This leads to an essentially unique equilibrium outcome.

We now show that the equilibrium outcome is unique when $L_i > \lambda_i \bar{L}_i/(\mu_i - \lambda_i)$.
We claim that when $L_i > \lambda_i \bar{L}_i/(\mu_i - \lambda_i)$, the equilibrium post-clearance duration $T_i$ must be strictly positive. To see this, suppose $T_i = 0$. Then the number of customers served in one cycle is $\mu_i L_i$, which exceeds the number of customers joining in one cycle, $\lambda_i(L_i + \bar{L}_i)$ by the condition $L_i > \frac{\lambda_i \bar{L}_i}{\mu_i - \lambda_i}$.
Hence, the post-clearance duration $T_i$ must be strictly positive and can be uniquely determined by the flow-balance equation:
\begin{align*}
  \lambda_i\cdot (L_i+\bar{L}_i) \;=\; \mu_i\cdot (L_i - T_i) \;+\; \lambda_i\cdot T_i.
\end{align*}
Since the post-clearance duration $T_i$ is strictly positive and unique, we have the unique equilibrium outcome, which is exhaustive, as plotted in \Cref{subfig:exh_case4}.\Halmos

\end{enumerate}

\end{proof}

\medskip

\begin{proof}{\textbf{Proof of Theorem~\ref{thm:fixed_duration_nonexhaustive_equilibrium}.}}

Define
\begin{align}
\label{eq:m_i(t)}
 m_i(t) :=  \frac{q_i(t)/\mu_i - \iota_i(t)\cdot r_i(t)}{L_i}.
\end{align}
That is, $m_i(t)$ is the term inside the floor function in \eqref{eq:z_i(t)} minus $\iota_i(t)\in \{0,1\}$.
Since $\floor{x+\iota_i(t)}=\floor{x}+\iota_i(t)$ for any $x$, by the definition of $m_i(t)$, we have $z_i(t)=\floor{m_i(t)}+\iota_I(t)$.
Then, we can rewrite the waiting time in \eqref{eq:waiting_time_fixed_duration} by $m_i(t)$ as follows:
\begin{align}
\label{eq:waiting_time_m_i_t}
 W_i(t) = \frac{q_i(t)}{\mu_i} + (1-\iota_i(t))\cdot r_i(t) +   
 (\floor{m_i(t)} + \iota_i(t))\cdot \bar{L}_i. 
\end{align}

We use the following lemmas, whose proofs can be found immediately after the proof of Theorem~\ref{thm:fixed_duration_nonexhaustive_equilibrium}. %

\begin{lemma}
\label{lem:fixed_duration_non_exhaustive_all_join_isnot_outcome}
For an exogenous \on-\off duration $(L_i, \bar{L}_i)$ with $\bar{L}_i < \theta_i$ and $L_i < \lambda_i \bar{L}_i/(\mu_i - \lambda_i)$, it is not an equilibrium outcome for all customers to join the system, i.e., $\bar{J}_i(L_i,\bar{L}_i)>0$. Moreover, at any equilibrium outcome, the post-clearance duration must be zero.
\end{lemma}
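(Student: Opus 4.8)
The plan is to prove the two assertions separately, starting with the claim that ``all customers joining'' cannot be an equilibrium outcome when $\bar{L}_i<\theta_i$ and $L_i<\lambda_i\bar{L}_i/(\mu_i-\lambda_i)$. Suppose, for contradiction, that there is a periodic equilibrium outcome in which every arriving customer joins, so $\bar{J}_i=0$ and the joining rate is $\lambda_i$ at all times. In one cycle of length $L_i+\bar{L}_i$, the number of customers arriving (and joining) is $\lambda_i(L_i+\bar{L}_i)$, while the maximum number the server can discharge in the single \on window of length $L_i$ is at most $\mu_i L_i$. The hypothesis $L_i<\lambda_i\bar{L}_i/(\mu_i-\lambda_i)$ is algebraically equivalent to $\mu_i L_i<\lambda_i(L_i+\bar{L}_i)$, so strictly more customers join each cycle than can be served each cycle. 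Hence the queue length cannot be periodic: $q_i(t+L_i+\bar{L}_i)\ge q_i(t)+\bigl(\lambda_i(L_i+\bar{L}_i)-\mu_i L_i\bigr)>q_i(t)$, so $q_i$ grows without bound. By the waiting-time formula \eqref{eq:waiting_time_fixed_duration}, $W_i(t)\ge q_i(t)/\mu_i\to\infty$, so eventually $W_i(t)>\theta_i$, contradicting the equilibrium condition (Definition~\ref{def:eq}) that a joining customer has waiting time at most $\theta_i$. This contradicts the assumption that all customers join, so $\bar{J}_i>0$. (Strictly, since we restrict to periodic outcomes, the cleanest phrasing is: no periodic outcome with $\bar{J}_i=0$ can satisfy the equilibrium constraints, because the flow-balance identity $\lambda_i J_i=\mu_i(L_i-T_i)+\lambda_i T_i\le\mu_i L_i$ would force $J_i\le\mu_i L_i/\lambda_i<L_i+\bar{L}_i$, i.e.\ $\bar{J}_i>0$, already a contradiction with $\bar{J}_i=0$.)

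For the second assertion—that the post-clearance duration is zero at \emph{any} equilibrium outcome in this parameter regime—the plan is to use the flow-balance equation together with the bound just obtained. At any periodic equilibrium, the number of customers served per cycle equals the number who join per cycle: $\lambda_i J_i=\mu_i(L_i-T_i)+\lambda_i T_i$. Since by Lemma~\ref{lem:fixed_duration_waitingtime}(i) the joining period is a contiguous interval of length $J_i\le L_i+\bar{L}_i$, and since we have just shown $\bar{J}_i>0$, in fact $J_i<L_i+\bar{L}_i$. Rearranging the flow-balance equation gives $\lambda_i J_i=\mu_i L_i-(\mu_i-\lambda_i)T_i$. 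If $T_i>0$, then $\lambda_i J_i<\mu_i L_i$, i.e.\ $J_i<\mu_i L_i/\lambda_i$; but we also need the served customers during the post-clearance period to arrive during that period, which forces the post-clearance window (length $T_i$) to lie inside the \on window, hence $T_i\le L_i$ and the number served equals $\mu_i(L_i-T_i)+\lambda_i T_i<\mu_i L_i$. The key point is that the total joining duration must satisfy $J_i\ge \mu_i L_i/\lambda_i$: the served customers who join outside the post-clearance period number $\mu_i(L_i-T_i)$ and are served at rate $\mu_i$, so they occupy $(L_i-T_i)$ units of \on time; adding the post-clearance customers shows the joining period must be long enough to supply $\mu_i(L_i-T_i)+\lambda_i T_i$ customers arriving at rate $\lambda_i$, i.e.\ $J_i=\bigl(\mu_i(L_i-T_i)+\lambda_i T_i\bigr)/\lambda_i$. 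Combining with $J_i\le L_i+\bar{L}_i$ gives $\mu_i(L_i-T_i)+\lambda_i T_i\le\lambda_i(L_i+\bar{L}_i)$, i.e.\ $\mu_i L_i-\lambda_i(L_i+\bar{L}_i)\le(\mu_i-\lambda_i)T_i$. The left side is negative by hypothesis, so this is no contradiction by itself; the contradiction must come from combining $T_i>0$ with the fact that a strictly positive post-clearance period means the queue is cleared strictly before the server leaves, and one then re-derives that the waiting time at the moment the server departs is exactly zero, whereas periodicity and $\bar{J}_i>0$ force a strictly positive residual that grows across cycles.

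I expect the main obstacle to be the second assertion: ruling out $T_i>0$ requires care because, unlike the exhaustive cases of Theorem~\ref{thm:equilibrium_fixedduration_exhaustive}, here we have not yet pinned down the queueing dynamics, so the argument must work for \emph{all} conceivable periodic equilibria. The cleanest route is probably: (a) show that $T_i>0$ forces the queue to empty within the \on window and then the server to keep serving at effective rate $\lambda_i$ until it leaves, so the queue length is $0$ at the departure epoch; (b) then the waiting time at that epoch is $\bar{L}_i<\theta_i$, so by Lemma~\ref{lem:fixed_duration_waitingtime}(ii)–(iii) and the continuity/jump structure, the not-joining period $\bar{J}_i$ would have to be $0$, contradicting the first assertion; (c) hence $T_i=0$. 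This leverages exactly the waiting-time jump lemma already proved, and keeps the argument short. I would present steps (a)–(c) as the body of the proof of the second claim, and the flow-balance/stability argument as the body of the first claim, with the algebraic equivalence $L_i<\lambda_i\bar{L}_i/(\mu_i-\lambda_i)\iff\mu_i L_i<\lambda_i(L_i+\bar{L}_i)$ stated once at the outset.
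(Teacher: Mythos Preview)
Your plan is correct and your final route (a)--(c) for the second assertion is exactly the paper's argument: if $T_i>0$ then the queue is empty when the server departs, the waiting time there is $\bar L_i<\theta_i$, and since waiting times only decrease thereafter all subsequent customers join, contradicting the first part. Drop the flow-balance digression in the middle (it does not close the argument, as you yourself note), and for step (b) cite the decreasing-waiting-time fact directly (the paper uses Lemma~\ref{coro:waiting_time_decreasing_same_onduration}) rather than Lemma~\ref{lem:fixed_duration_waitingtime}(ii)--(iii), since the latter is stated under the hypothesis $\bar J_i>0$ that you are trying to contradict.
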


\begin{lemma}
\label{lem:fixed_duration_non_exhaustive_joiningduration_waitingtime_integer}
For an exogenous \on-\off duration $(L_i, \bar{L}_i)$ with $\bar{L}_i < \theta_i$ and $L_i < \lambda_i \bar{L}_i/(\mu_i - \lambda_i)$, the following hold at equilibrium:
\begin{enumerate}
    \item $J_i = (\mu_i/\lambda_i) L_i$ and $\bar{J}_i = \bar{L}_i - ((\mu_i - \lambda_i)/\lambda_i) L_i$.
    \item $W_i(\diamondsuit) = \theta_i$, i.e., the waiting time of customers arriving at the transition epoch from not joining to joining exactly equals their patience.
    \item $k_i := m_i(\diamondsuit)$ must be a non-negative integer; that is, the first customer joining the system during a joining period must be the first to be served in some \on period.
\end{enumerate}
\end{lemma}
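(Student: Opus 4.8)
The overall plan is to dispatch parts (i) and (ii) quickly from results already in hand and to concentrate on part (iii). For part (i), \Cref{lem:fixed_duration_non_exhaustive_all_join_isnot_outcome} gives $T_i=0$, so the server discharges at the full rate $\mu_i$ throughout each on period; a flow-balance identity over one cycle of the periodic equilibrium outcome then equates the joining mass $\lambda_i J_i$ (every arrival in the joining period joins, by \Cref{lem:fixed_duration_waitingtime}(i)) with the served mass $\mu_i L_i$, giving $J_i=(\mu_i/\lambda_i)L_i$ and hence $\bar J_i=(L_i+\bar L_i)-J_i=\bar L_i-\tfrac{\mu_i-\lambda_i}{\lambda_i}L_i$, which is strictly positive precisely under the standing hypothesis $L_i<\lambda_i\bar L_i/(\mu_i-\lambda_i)$, consistent with $\bar J_i>0$ from \Cref{lem:fixed_duration_non_exhaustive_all_join_isnot_outcome}. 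Part (ii) is then immediate: since $\bar J_i>0$, \Cref{lem:fixed_duration_waitingtime} applies, the not-joining and joining periods are contiguous, $\diamondsuit$ is the unique not-joining-to-joining transition (at $t=\bar J_i$ if $t=0$ marks the start of the not-joining period), and \Cref{lem:fixed_duration_waitingtime}(iii) gives $W_i(\diamondsuit)=\theta_i$; equivalently, a customer arriving just before $\diamondsuit$ declines and one just after joins, so $W_i$ cannot jump at $\diamondsuit$ (a jump would be of size $\bar L_i>0$ by \Cref{lem:waiting_time_at_t_fixed_duration}(i)), hence $W_i$ is continuous there and the equilibrium conditions force $W_i(\diamondsuit)=\theta_i$.

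For part (iii) I would first prove the stated interpretation: the customer arriving at $\diamondsuit$ is served exactly at the start of some on period. Consider the last customer of the preceding joining period, arriving at $t=0^-$. Because $W_i$ has an upward jump of $\bar L_i$ at $t=0$ (\Cref{lem:fixed_duration_waitingtime}(ii)), \Cref{lem:waiting_time_at_t_fixed_duration}(iii) implies that this customer and the one arriving at $t=0^+$ would be served in consecutive on periods; together with continuity of $q_i$, this forces the customer at $0^-$ to finish service at the \emph{end} of some on period $\LL$. Since nobody joins on $(0,\bar J_i)$ (\Cref{lem:fixed_duration_waitingtime}(i)), the customer at $\diamondsuit$ is the immediate FCFS successor (among customers actually in queue) of the customer at $0^-$, and is therefore served at the start of the on period immediately following $\LL$.

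It remains to convert this into the arithmetic claim and to check nonnegativity. Writing $\phi_i(t):=m_i(t)+\iota_i(t)$ for the argument of the floor in \eqref{eq:z_i(t)}, being served at an on-period boundary is equivalent to $\phi_i(\diamondsuit)\in\mathbb{Z}$, i.e.\ $m_i(\diamondsuit)\in\mathbb{Z}$ (with then $z_i(\diamondsuit)=m_i(\diamondsuit)+\iota_i(\diamondsuit)$); alternatively one can argue directly that $\phi_i$ is nondecreasing on each joining stretch, piecewise constant on the not-joining interval apart from a possible unit downward step at an on-to-off switch, and equal to an integer at $t=0$ (since $z_i=\lfloor\phi_i\rfloor$ jumps up by one there). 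For nonnegativity of $k_i:=m_i(\diamondsuit)$: since $z_i\ge 0$ always, the only way $m_i(\diamondsuit)$ could be negative is $m_i(\diamondsuit)=-1$ with $\iota_i(\diamondsuit)=1$, which forces $q_i(\diamondsuit)=0$ and $z_i(\diamondsuit)=0$, hence $W_i(\diamondsuit)=0$, contradicting $W_i(\diamondsuit)=\theta_i>0$; so $k_i\in\{0,1,2,\dots\}$. The step I expect to be the main obstacle is the bookkeeping inside part (iii) that pins the service of the $\diamondsuit$-customer exactly to an on-period boundary — in particular, verifying that the epoch $t=0$ is not itself a server-status-change epoch, so that the entire $\bar L_i$-jump of $W_i$ there is carried by $z_i$, and handling the degenerate subcase that arises in the nonnegativity check.
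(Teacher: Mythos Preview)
Your proposal is correct and follows essentially the same route as the paper: parts (i) and (ii) are dispatched identically via $T_i=0$ plus flow balance and \Cref{lem:fixed_duration_waitingtime}(iii), respectively. For part (iii), the paper simply invokes the claim (established inside the proof of \Cref{lem:fixed_duration_waitingtime}(i)) that the $\diamondsuit$-customer is served at the beginning of some \on\ period, whereas you re-derive that same fact from the $\bar L_i$-jump at $t=0$ and FCFS ordering, and you additionally supply an explicit nonnegativity check for $k_i$ that the paper's proof omits.
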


Our proof proceeds in three steps. First, for any given case in \{1, 2, 3, 4, 5\} presumed to represent the equilibrium outcome, we establish the existence and uniqueness of the variables $\zeta_i$ and $\underline{q}_i$, explicitly given by the formulas in Theorem~\ref{thm:fixed_duration_nonexhaustive_equilibrium}. Consequently, for each assumed equilibrium case, the corresponding equilibrium outcome is uniquely determined. Second, we demonstrate that exactly one of the five cases emerges as the equilibrium. Finally, we confirm that this unique equilibrium outcome is non-exhaustive.

In the following, we use the term \emph{line segment} to refer to a specific linear portion of the piecewise linear queueing dynamics curve depicted in \Cref{fig:fixed_duration_nonexhaustive_equilibrium}.
The \emph{duration} of a line segment denotes the time interval over which this linear segment extends. 

The equilibrium is determined primarily by the conditions outlined in Lemma~\ref{lem:fixed_duration_non_exhaustive_joiningduration_waitingtime_integer}, along with the requirement that each line segment's duration in the corresponding case of Figure~\ref{fig:fixed_duration_nonexhaustive_equilibrium} remains non-negative.

\medskip

\noindent\underline{\textit{Step 1: Determining the Unique $\zeta_i$ and $\underline{q}_i$ for Each Case}}

\medskip

\noindent\textbf{Case 1:} For Case~1 to constitute an equilibrium, each line segment's duration in \Cref{subfig:nonexh_case1} must be non-negative. Specifically, we require $0 \leq \zeta_i \leq L_i$ and that the duration of the increasing line during the \off period lies within the interval $[0, \bar{L}_i]$. This duration is explicitly given by $J_i - (L_i - \zeta_i) = \zeta_i + ((\mu_i - \lambda_i)/\lambda_i) L_i$. To summarize, the conditions are
\begin{align}
    & 0 \leq \zeta_i \leq L_i, \label{eq:fd_eqcondition1_case1} \\
    & 0 \leq \zeta_i + \frac{\mu_i - \lambda_i}{\lambda_i} L_i \leq \bar{L}_i. \label{eq:fd_eqcondition2_case1}
\end{align}
By \eqref{eq:m_i(t)}, we have
\begin{align*}
    k_i := m_i(\diamondsuit)= \frac{\underline{q}_i - \lambda_i (L_i - \zeta_i)}{\mu_i L_i}.
\end{align*}
This leads to
\begin{align}
\label{eq:fd_underq_k_xi_case1}
    \underline{q}_i = k_i \mu_i L_i + \lambda_i (L_i - \zeta_i).
\end{align}
The waiting time $W_i(\diamondsuit)$, according to equation~\eqref{eq:waiting_time_m_i_t}, is
\begin{align*}
    W_i(\diamondsuit) & = \frac{\underline{q}_i + (\mu_i - \lambda_i)(L_i - \zeta_i)}{\mu_i} + (k_i + 1) \bar{L}_i \\
    & = k_i L_i + (L_i - \zeta_i) + (k_i + 1) \bar{L}_i \\
    & = (k_i + 1)(L_i + \bar{L}_i) - \zeta_i,
\end{align*}
where the second equality follows from equation~\eqref{eq:fd_underq_k_xi_case1}. By Lemma~\ref{lem:fixed_duration_non_exhaustive_joiningduration_waitingtime_integer} (ii), we have $W_i(\diamondsuit) = \theta_i$, so
\begin{align}
\label{eq:fd_xi_k_case1}
    \zeta_i = (k_i + 1)(L_i + \bar{L}_i) - \theta_i.
\end{align}
Substituting equation~\eqref{eq:fd_xi_k_case1} into inequalities~\eqref{eq:fd_eqcondition1_case1} and~\eqref{eq:fd_eqcondition2_case1}, we obtain
\begin{align}
\label{eq:k_i_eqcondition_case1}
    \frac{\theta_i}{L_i + \bar{L}_i} - 1 \leq k_i \leq \frac{\theta_i}{L_i + \bar{L}_i} - \frac{\max\{\bar{L}_i, (\mu_i/\lambda_i) L_i\}}{L_i + \bar{L}_i}.
\end{align}
One can verify that the difference between the right-hand side and the left-hand side is strictly smaller than one.  Since $k_i$ must be a non-negative integer (Lemma~\ref{lem:fixed_duration_non_exhaustive_joiningduration_waitingtime_integer} (iii)), there either does not exist or exists a unique non-negative integer $k_i$ satisfying \eqref{eq:k_i_eqcondition_case1}.
The corresponding $\underline{q}_i$ and $\zeta_i$ are given by \eqref{eq:fd_underq_k_xi_case1} and \eqref{eq:fd_xi_k_case1}, respectively.
\medskip

\noindent\textbf{Case 2:} Similarly, for Case~2 to be an equilibrium outcome (see \Cref{subfig:nonexh_case2}), %
it requires that $0 \leq \zeta_i \leq L_i - \bar{J}_i$ and $\bar{J}_i \leq L_i$. This leads to 
\begin{align}
    & 0 \leq \zeta_i \leq \frac{\mu_i}{\lambda_i} L_i - \bar{L}_i, \label{eq:fd_eqcondition1_case2} \\
    & L_i \geq \frac{\lambda_i}{\mu_i} \bar{L}_i. \label{eq:fd_eqcondition2_case2}
\end{align}

By \eqref{eq:m_i(t)}, we have $k_i:=m_i(\diamondsuit)$ in this case as follows
\begin{align*}
    k_i = \left(\frac{\underline{q}_i+(\mu_i-\lambda_i)x_i}{\mu_i} - x_i  \right)\bigg/L_i,
\end{align*}
where $x_i = L_i - \bar{J}_i - \zeta_i$ representing the duration of the third line segment during the \on period in \Cref{subfig:nonexh_case2}. This leads to
\begin{align}
\label{eq:fd_underq_k_xi_case2}
    \underline{q}_i = k_i \mu_i L_i + \lambda_i \left( \frac{\mu_i}{\lambda_i} L_i - \bar{L}_i - \zeta_i \right).
\end{align}
The waiting time is
\begin{align*}
    W_i(\diamondsuit) & = \frac{\underline{q}_i + (\mu_i - \lambda_i) x_i}{\mu_i} + (k_i + 1) \bar{L}_i \\
    & = k_i (L_i + \bar{L}_i) + \frac{\mu_i}{\lambda_i} L_i - \zeta_i,
\end{align*}
using equation~\eqref{eq:fd_underq_k_xi_case2}. Since $W_i(\diamondsuit) = \theta_i$, we get
\begin{align}
\label{eq:fd_xi_k_case2}
    \zeta_i = k_i (L_i + \bar{L}_i) + \frac{\mu_i}{\lambda_i} L_i - \theta_i.
\end{align}
Substituting equation~\eqref{eq:fd_xi_k_case2} into inequality~\eqref{eq:fd_eqcondition1_case2}, we have
\begin{align}
\label{eq:k_i_eqcondition_case2}
    \frac{\theta_i - (\mu_i/\lambda_i) L_i}{L_i + \bar{L}_i} \leq k_i \leq \frac{\theta_i - \bar{L}_i}{L_i + \bar{L}_i} .
\end{align}
Again, the difference between the right-hand side and the left-hand side is strictly smaller than one. Thus, there either does not exist or exists a unique non-negative integer $k_i$ satisfying this inequality.
The corresponding $\underline{q}_i$ and $\zeta_i$ are given by \eqref{eq:fd_underq_k_xi_case2} and \eqref{eq:fd_xi_k_case2}, respectively.
\medskip

\noindent\textbf{Case 3:} Similarly, for Case~3 to be an equilibrium outcome (see \Cref{subfig:nonexh_case3}), it
requires that $0\leq \zeta_i\leq \bar{L}_i-J_i$ and $J_i\leq \bar{L}_i$. This leads to 
\begin{align}
    & 0 \leq \zeta_i \leq \bar{L}_i - \frac{\mu_i}{\lambda_i} L_i, \label{eq:fd_eqcondition1_case3} \\
    & L_i \leq \frac{\lambda_i}{\mu_i} \bar{L}_i. \label{eq:fd_eqcondition2_case3}
\end{align}

The $k_i:=m_i(\diamondsuit)$ defined in \eqref{eq:m_i(t)} in this case becomes
\begin{align*}
    k_i = \frac{\underline{q}_i}{\mu_iL_i}.
\end{align*}
This leads to
\begin{align}
\label{eq:fd_underq_k_xi_case3}
    \underline{q}_i = k_i \mu_i L_i.
\end{align}
The waiting time is
\begin{align*}
    W_i(\diamondsuit) & = \frac{\underline{q}_i}{\mu_i} + (\bar{L}_i - \zeta_i) + k_i \bar{L}_i \\
    & = k_i (L_i + \bar{L}_i) + \bar{L}_i - \zeta_i,
\end{align*}
using equation~\eqref{eq:fd_underq_k_xi_case3}. Since $W_i(\diamondsuit) = \theta_i$, we obtain
\begin{align}
\label{eq:fd_xi_k_case3}
    \zeta_i = k_i (L_i + \bar{L}_i) + \bar{L}_i - \theta_i.
\end{align}
Substituting equation~\eqref{eq:fd_xi_k_case3} into inequality~\eqref{eq:fd_eqcondition1_case3}, we have
\begin{align}
\label{eq:k_i_eqcondition_case3}
    \frac{\theta_i - \bar{L}_i}{L_i + \bar{L}_i}  \leq k_i \leq \frac{\theta_i - (\mu_i/\lambda_i) L_i}{L_i + \bar{L}_i}.
\end{align}
The difference between the right-hand side and the left-hand side is strictly smaller than one. Thus, there either does not exist or exists a unique non-negative integer $k_i$ satisfying this inequality.
The corresponding $\underline{q}_i$ and $\zeta_i$ are given by \eqref{eq:fd_underq_k_xi_case3} and \eqref{eq:fd_xi_k_case3}, respectively.

\medskip

\noindent\textbf{Case 4:} Similarly, for Case~4 to be an equilibrium outcome (see \Cref{subfig:nonexh_case4}), it requires that $0\leq \zeta_i\leq \bar{L}_i$ and $0\leq \bar{J}_i-\zeta_i\leq L_i$. This leads to
\begin{align*}
    & 0 \leq \zeta_i \leq \bar{L}_i, \\
    & 0 \leq \bar{L}_i - \frac{\mu_i - \lambda_i}{\lambda_i} L_i - \zeta_i \leq L_i,
\end{align*}
which yields
\begin{align}
\label{eq:fd_eqcondition_case4}
    \max\left\{ 0, \bar{L}_i - \frac{\mu_i}{\lambda_i} L_i \right\} \leq \zeta_i \leq \bar{L}_i - \frac{\mu_i - \lambda_i}{\lambda_i} L_i.
\end{align}
Similar to Case 3, we have
\begin{align}
\label{eq:fd_underq_k_xi_case4}
    k_i = \frac{\underline{q}_i}{\mu_iL_i}~\Rightarrow~\underline{q}_i = k_i \mu_i L_i.
\end{align}
The waiting time $W_i(\diamondsuit)$ in \eqref{eq:waiting_time_m_i_t}
is calculated as follows,
\begin{align*}
    W_i(\diamondsuit) = \frac{\underline{q}_i}{\mu_i} + (\bar{L}_i - \zeta_i) + k_i\bar{L}_i= z(L_i+\bar{L}_i)+\bar{L}_i - \zeta_i,
\end{align*}
where the second equality holds by \eqref{eq:fd_underq_k_xi_case4}. By \Cref{lem:fixed_duration_non_exhaustive_joiningduration_waitingtime_integer} (ii) that  $W_i(\diamondsuit) = \theta_i$, we find
\begin{align}
\label{eq:fd_xi_k_case4}
    \zeta_i = k_i \cdot (L_i + \bar{L}_i) +\bar{L}_i - \theta_i.
\end{align}

Substituting equation~\eqref{eq:fd_xi_k_case4} into inequality~\eqref{eq:fd_eqcondition_case4}, we have
\begin{subequations}
\begin{align}
    & \text{If } \frac{\mu_i}{\lambda_i} L_i \geq \bar{L}_i: \quad \frac{\theta_i - \bar{L}_i}{L_i + \bar{L}_i} \leq k_i \leq \frac{\theta_i - \frac{\mu_i - \lambda_i}{\lambda_i} L_i}{L_i + \bar{L}_i} ; \label{eq:k_i_eqcondition1_case4} \\
    & \text{If } \frac{\mu_i}{\lambda_i} L_i < \bar{L}_i: \quad \frac{\theta_i - \frac{\mu_i}{\lambda_i} L_i}{L_i + \bar{L}_i}  \leq k_i \leq \frac{\theta_i - \frac{\mu_i - \lambda_i}{\lambda_i} L_i}{L_i + \bar{L}_i} . \label{eq:k_i_eqcondition2_case4}
\end{align}
\end{subequations}
Again, the right-hand side and the left-hand side in both \eqref{eq:k_i_eqcondition1_case4} and \eqref{eq:k_i_eqcondition2_case4} are strictly smaller than one. Thus, there either does not exist or exists a unique non-negative integer $k_i$ satisfying \eqref{eq:k_i_eqcondition1_case4} or \eqref{eq:k_i_eqcondition2_case4}.
The corresponding $\underline{q}_i$ and $\zeta_i$ are given by \eqref{eq:fd_underq_k_xi_case4} and \eqref{eq:fd_xi_k_case4}, respectively.
\medskip

\noindent\textbf{Case 5:} Finally, for Case~5 to be an equilibrium, it requires that $0\leq \zeta_i\leq \bar{L}_i-\bar{J}_i$. This leads to
\begin{align}
\label{eq:fd_eqcondition_case5}
    0 \leq \zeta_i \leq \frac{\mu_i - \lambda_i}{\lambda_i} L_i.
\end{align}
We have
\begin{align}
\label{eq:fd_underq_k_xi_case5}
    k_i = \frac{\underline{q}_i+\lambda_i \zeta_i}{\mu_iL_i}~\Rightarrow~\underline{q}_i = k_i \mu_i L_i - \lambda_i\zeta_i.
\end{align}
The waiting time is
\begin{align*}
    W_i(\diamondsuit) & = \frac{\underline{q}_i + \lambda_i \zeta_i}{\mu_i} + (\bar{L}_i - \bar{J}_i - \zeta_i) + k_i \bar{L}_i \\
    & = k_i (L_i + \bar{L}_i) + \frac{\mu_i - \lambda_i}{\lambda_i} L_i - \zeta_i,
\end{align*}
using equation~\eqref{eq:fd_underq_k_xi_case5}. Setting $W_i(\diamondsuit) = \theta_i$, we find
\begin{align}
\label{eq:fd_xi_k_case5}
    \zeta_i = k_i (L_i + \bar{L}_i) + \frac{\mu_i - \lambda_i}{\lambda_i} L_i - \theta_i.
\end{align}
Substituting equation~\eqref{eq:fd_xi_k_case5} into inequality~\eqref{eq:fd_eqcondition_case5}, we have
\begin{align}
\label{eq:k_i_eqcondition_case5}
    \frac{\theta_i - \frac{\mu_i - \lambda_i}{\lambda_i} L_i}{L_i + \bar{L}_i}  \leq k_i \leq \frac{\theta_i}{L_i + \bar{L}_i}.
\end{align}
Again, the difference between the right-hand side and the left-hand side is strictly smaller than one. Thus, there either does not exist or exists a unique non-negative integer $k_i$ satisfying this inequality.
The corresponding $\underline{q}_i$ and $\zeta_i$ are given by \eqref{eq:fd_underq_k_xi_case5} and \eqref{eq:fd_xi_k_case5}, respectively.
\medskip

\noindent\underline{\textit{Step 2: The Uniqueness of the Equilibrium Case}}

\medskip

From \eqref{eq:k_i_eqcondition_case1}, \eqref{eq:k_i_eqcondition_case2}, \eqref{eq:k_i_eqcondition_case3}, \eqref{eq:k_i_eqcondition1_case4}--\eqref{eq:k_i_eqcondition2_case4} and \eqref{eq:k_i_eqcondition_case5}, we have the equilibrium conditions regarding $k_i$ outlined in Theorem~\ref{thm:fixed_duration_nonexhaustive_equilibrium} parts~(i) and~(ii).
Notice that the interval of constraints about $k_i$ has a left boundary of $\theta_i/(L_i+\bar{L}_i)-1$ and a right boundary of $\theta_i/(L_1+\bar{L}_i)$, with the length of the interval being \textit{exactly} one.
This confirms the existence of the integer $k_i$.
Besides, in each case, given $k_i$, by the analysis in Step 1, the variables $\zeta_i$ and $\underline{q}_i$ are uniquely determined. 
To conclude the proof, we are left to show the uniqueness of the equilibrium outcome. 
Notice that if $k_i$ falls on the boundary between two consecutive cases, these two cases correspond to the same equilibrium outcome. Thus, the boundary value between two consecutive cases can belong to either case.

When $\frac{\theta_i}{L_i + \bar{L}_i}$ is not an integer, there exists a unique integer $k_i$ within the interval $[\frac{\theta_i}{L_i + \bar{L}_i} - 1,\frac{\theta_i}{L_i + \bar{L}_i}]$ since the difference between $\frac{\theta_i}{L_i + \bar{L}_i} - 1$ and $\frac{\theta_i}{L_i + \bar{L}_i}$ is exactly $1$.
This ensures the uniqueness of $k_i$ and thus the uniqueness of the equilibrium case. 

When $\frac{\theta_i}{L_i + \bar{L}_i}$ is an integer, $k_i$ can be $\frac{\theta_i}{L_i + \bar{L}_i}$ or $\frac{\theta_i}{L_i + \bar{L}_i}-1$, corresponding to Case~1 (\Cref{subfig:nonexh_case1}) and Case~5 (\Cref{subfig:nonexh_case5}), respectively.
However, in this situation, the queueing dynamics under these two cases coincide.
To see this, when $\frac{\theta_i}{L_i + \bar{L}_i}$ is an integer, $\zeta_i$ in \Cref{subfig:nonexh_case1} becomes zero, and $\zeta_i$ in \Cref{subfig:nonexh_case5} becomes $\bar{L}_i - \bar{J}_i$, resulting in the duration of the third line segment during the \off period in \Cref{subfig:nonexh_case5} being zero. Therefore, Figures \ref{subfig:nonexh_case1} and \ref{subfig:nonexh_case5} coincide.
Thus, when $\frac{\theta_i}{L_i + \bar{L}_i}$ is an integer, $k_i\in \{\frac{\theta_i}{L_i + \bar{L}_i},\frac{\theta_i}{L_i + \bar{L}_i}-1\}$ can be chosen arbitrarily without affecting the equilibrium outcome.

\medskip

\noindent\underline{\textit{Step 3: The Equilibrium Outcome is Non-Exhaustive}}

\medskip

We now show that the equilibrium outcome is non-exhaustive, i.e., $\underline{q}_i > 0$. If $k_i$ is a strictly positive integer, from our earlier expressions for $\underline{q}_i$, it follows that $\underline{q}_i > 0$.

When $k_i = 0$, the equilibrium outcome cannot be Case~3 or Case~4, since in these cases $\zeta_i = \bar{L}_i - \theta_i < 0$ (because $\theta_i > \bar{L}_i$), which contradicts the requirement that $\zeta_i \geq 0$. Similarly, Case~5 is not possible when $k_i = 0$ because $\zeta_i = \frac{\mu_i - \lambda_i}{\lambda_i} L_i - \theta_i < 0$ (since $L_i < \frac{\lambda_i \bar{L}_i}{\mu_i - \lambda_i}$ implies $\frac{\mu_i - \lambda_i}{\lambda_i} L_i < \bar{L}_i$, and thus smaller than $\theta_i$). Therefore, the equilibrium must be either Case~1 or Case~2. When $k_i = 0$, these cases yield $\underline{q}_i > 0$ by the expression of $\underline{q}_i$ in \Cref{thm:fixed_duration_nonexhaustive_equilibrium}.
This completes the proof. \Halmos
\end{proof}

\medskip

\begin{proof}{\textbf{Proof of \Cref{lem:fixed_duration_non_exhaustive_all_join_isnot_outcome}.}}
We prove the two statements separately.
\begin{enumerate}
    \item \textit{All customers joining cannot be an equilibrium outcome.} Given $\bar{L}_i < \theta_i$ and $L_i < \frac{\lambda_i \bar{L}_i}{\mu_i - \lambda_i}$,  it follows that $\mu_i L_i < \lambda_i (L_i + \bar{L}_i)$. If all arriving customers were to join, the queue length would grow to infinity. Therefore, an outcome where all customers join cannot be an equilibrium.
    \item \textit{The post-clearance duration must be zero at equilibrium.}
Suppose, for contradiction, that at equilibrium, the post-clearance duration $T_i>0$. Then the server finishes serving all customers before leaving the system. Since $\bar{L}_i < \theta_i$, a customer arriving at the time when the server immediately leaves would still be willing to join, because their waiting time would be less than $\theta_i$. This would lead to subsequent customers also joining since the waiting time is decreasing by \Cref{coro:waiting_time_decreasing_same_onduration},
resulting in all customers joining the system. However, as established in the first part, this cannot be an equilibrium outcome. Therefore, the post-clearance duration $T_i$ must be zero at equilibrium. \Halmos
\end{enumerate}
\end{proof}

\medskip

\begin{proof}{\textbf{Proof of \Cref{lem:fixed_duration_non_exhaustive_joiningduration_waitingtime_integer}.}}

We now prove each result in \Cref{lem:fixed_duration_non_exhaustive_joiningduration_waitingtime_integer} separately.

\begin{enumerate}
    \item By Lemma~\ref{lem:fixed_duration_non_exhaustive_all_join_isnot_outcome}, we know that the post-clearance duration must be zero. Then, the flow-balance equation becomes $\lambda_i J_i = \mu_i L_i$, which leads to $J_i = (\mu_i/\lambda_i) L_i$. Therefore, $\bar{J}_i = L_i + \bar{L}_i - J_i = \bar{L}_i - \frac{\mu_i - \lambda_i}{\lambda_i} L_i$.

    \item This follows from \Cref{lem:fixed_duration_waitingtime} (iii).
    
    \item 
Recall that the term inside the floor function of $z_i(\diamondsuit)$, as defined in \eqref{eq:z_i(t)}, represents the additional (possibly fractional) number of full \off periods a customer arriving at time $\diamondsuit$ must wait through, excluding the current residual \off period if $\iota_i(\diamondsuit) = 0$. Since such a customer is served exactly at the start of an \on period by the proof of \Cref{lem:waiting_time} (i), the total number of complete \off periods they wait through---apart from any ongoing residual \off period---must be an integer.
    This leads to the conclusion that the term inside the floor function of $z_i(\diamondsuit)$ must be an integer and thus $k_i = m_i(\diamondsuit)$ is an integer, where $m_i(\diamondsuit)$ is defined in \eqref{eq:m_i(t)}.\Halmos
\end{enumerate}
\end{proof}

\medskip

\begin{proof}{\textbf{Proof of \Cref{coro:fixed_duration_post_clearance_duration}.}}
By \Cref{thm:equilibrium_fixedduration_exhaustive} and \Cref{thm:fixed_duration_nonexhaustive_equilibrium}, the post-clearance duration $T_i(L_i,\bar{L}_i)$ is non-zero only in the cases Figures \ref{subfig:exh_case1} and \ref{subfig:exh_case4}. In the case of \Cref{subfig:exh_case1}, we have 
\begin{align*}
  \textrm{if $\bar{L}_i\geq \theta_i$ and $L_i\geq \frac{\lambda_i\theta_i}{\mu_i-\lambda_i}$},\quad
  T_i(L_i,\bar{L}_i) = L_i - \frac{\lambda_i\theta_i}{\mu_i-\lambda_i}.
\end{align*}
In the case of \Cref{subfig:exh_case4}, we have
\begin{align*}
 \textrm{if $\bar{L}_i< \theta_i$ and $L_i\geq \frac{\lambda_i\bar{L}_i}{\mu_i-\lambda_i}$}, \quad
T_i(L_i,\bar{L}_i)=L_i - \frac{\lambda_i \bar{L}_i}{\mu_i-\lambda_i}. 
\end{align*}
Thus, the equilibrium post-clearance duration in the cases Figures \ref{subfig:exh_case1} and \ref{subfig:exh_case4} can be written as:
\begin{align*}
 T_i(L_i,\bar{L}_i)=L_i - \frac{\lambda_i}{\mu_i-\lambda_i}\min\{ \theta_i, \bar{L}_i\}
 =
\max\left\{L_i -\frac{\lambda_i}{\mu_i-\lambda_i}\theta_i, ~  L_i - \frac{\lambda_i}{\mu_i-\lambda_i}\bar{L}_i\right\}   .    
\end{align*}
Combining $ T_i(L_i,\bar{L}_i)\geq 0$, we have \eqref{eq:equilibrium_fd_hat_T_i}.
\Halmos   
\end{proof}

\medskip

\begin{proof}{\textbf{Proof of \Cref{prop:LP_fixed_duration}.}}
This follows from the discussions before \Cref{prop:LP_fixed_duration}. \Halmos
\end{proof}

\medskip

\begin{proof}{\textbf{Proof of \Cref{thm:polling_identical_service_rates}.}}
Recall that when $n\geq 2$, the \off duration of queue $i$ equals the \on durations of the other queues plus the switchover times, i.e., $\bar{L}_i = 1^\top \tau + 1^\top L_{-i}$. Thus, given the \on durations $L$, the \off durations are uniquely determined. For this reason, in what follows, we will work with \on durations for convenience.

If, under the optimal \on-\off durations, the equilibrium post-clearance duration $T_i>0$ for some $i\in \N$, then by Theorems \ref{thm:equilibrium_fixedduration_exhaustive} and \ref{thm:fixed_duration_nonexhaustive_equilibrium} the equilibrium outcome of queue $i$ must be either \Cref{subfig:exh_case1} or \Cref{subfig:exh_case4}. Thus, we do not need to worry about the situation in which $T_i>0$.

When $T_i=0$, the equilibrium outcome can be non-exhaustive (see \Cref{fig:fixed_duration_nonexhaustive_equilibrium}) or can take other forms of exhaustive outcomes (see \Cref{subfig:exh_case2} and \Cref{subfig:exh_case3}). To conclude the proof, we now show that when restricting to zero equilibrium post-clearance durations, the optimal \on-\off durations must induce an equilibrium outcome corresponding either to \Cref{subfig:exh_case1} or to \Cref{subfig:exh_case4}. Formally, for any (possibly empty) set $\I\subseteq\N$, given any $L_\I$ with $T_{\I}>0$, the optimal $L_{\barI}$, when restricted to $T_{\barI}=0$, must induce an equilibrium outcome corresponding to either \Cref{subfig:exh_case1} or \Cref{subfig:exh_case4}, where $\barI:=\N\setminus \I$ and $L_{\I}$ denotes the subvector of $L\in \mathbb{R}^{|\N|}$ indexed by $\I$ (and similarly for $T_\I$, $L_{\barI}$, and $T_{\barI}$).

When the service rates are identical (denoted by $\mu_0$) and $T_{\barI}=0$, the throughput formula in problem~\eqref{prob:lf_fixed_duration} simplifies to
\begin{align}
\frac{\mu^\top L - \sum_{i\in\N}(\mu_0 - \lambda_i) T_i}{1^\top \tau + 1^\top L}  
&= \frac{\mu_0 (1^\top \tau + 1^\top L) - \mu_0 1^\top \tau - \sum_{i\in\N} (\mu_0 - \lambda_i) T_i}{1^\top \tau + 1^\top L} \nonumber \\
&= \mu_0 - \frac{\mu_0 1^\top \tau + \sum_{i\in\N} (\mu_0 - \lambda_i) T_i}{1^\top \tau + 1^\top L} \nonumber\\
&= \mu_0 - \frac{\mu_0 1^\top \tau + \sum_{i\in\I} (\mu_0 - \lambda_i) T_i}{1^\top \tau + 1^\top L}. \nonumber
\end{align}

Given any feasible $(L_{\I},T_{\I})$ (note that the feasible region of $(L_{\I},T_{\I})$ depends on the choice of $L_{\barI}$; we drop the explicit dependence here for simplicity), the throughput maximization problem is
\begin{align}
    \max_{L \in \mathbb{R}_+^{|\barI|}}\quad & \mu_0 - \frac{\mu_0 1^\top \tau + \sum_{i\in\I} (\mu_0 - \lambda_i) T_i}{1^\top \tau + 1^\top L} \nonumber \\
    \text{s.t.} \quad & L_i \leq \frac{\lambda_i}{\mu_i - \lambda_i} \theta_i, \quad \forall i \in \barI, \label{constraint1:polling_fd_hat_T=0} \\
    & L_i - \frac{\lambda_i}{\mu_i - \lambda_i} \cdot 1^\top L_{-i} \leq \frac{\lambda_i}{\mu_i - \lambda_i} \cdot 1^\top \tau, \quad \forall i \in \barI. \label{constraint2:polling_fd_hat_T=0}
\end{align}
Constraints \eqref{constraint1:polling_fd_hat_T=0}--\eqref{constraint2:polling_fd_hat_T=0} guarantee $T_{\barI}=0$ at equilibrium (since we restrict to the space of \on-\off durations that induce $T_{\barI}=0$).
To see this, to guarantee $T_{\barI}=0$ at equilibrium, by \Cref{coro:fixed_duration_post_clearance_duration}, we require, for any $i\in \barI$, $L_i-\frac{\lambda_i}{\mu_i-\lambda_i}\theta_i\leq 0$ (i.e., the constraint \eqref{constraint1:polling_fd_hat_T=0}) and $L_i-\frac{\lambda_i}{\mu_i-\lambda_i}\bar{L}_i\leq 0$ (i.e., constraint \eqref{constraint2:polling_fd_hat_T=0}), where we used the equation $\bar{L}_i=1^\top\tau + 1^\top L_{-i}$.

Let $z_i := \frac{\lambda_i}{\mu_i - \lambda_i} \theta_i - L_i$, i.e., $L_i = \frac{\lambda_i}{\mu_i - \lambda_i} \theta_i - z_i$ for all $i \in \barI$. Then we can rewrite the objective function as:
\begin{align}
    \mu_0 - \frac{\mu_0 1^\top \tau + \sum_{i\in\I} (\mu_0 - \lambda_i) T_i}{1^\top \tau + 1^\top L_{\I} - 1^\top z + \sum_{i\in\barI} \frac{\lambda_i}{\mu_i - \lambda_i} \theta_i }, \nonumber
\end{align}
which is decreasing in $z_i$ for all $i \in \barI$.

Constraint \eqref{constraint1:polling_fd_hat_T=0} is equivalent to $z_i \geq 0$, and constraint \eqref{constraint2:polling_fd_hat_T=0} becomes:
\begin{align}
    z_i - \frac{\lambda_i}{\mu_i - \lambda_i} \cdot 1^\top z_{\barI_{-i}} + \frac{\lambda_i}{\mu_i - \lambda_i} \left(1^\top \tau + 1^\top L_\I - \theta_i + \sum_{j \in \barI_{-i}} \frac{\lambda_j}{\mu_j - \lambda_j} \theta_j\right) \geq 0, \quad \forall i \in \barI. \nonumber
\end{align}
Since $L_i \geq 0$, by the definition of $z_i$, we have $z_i \leq \frac{\lambda_i}{\mu_i - \lambda_i} \theta_i$ for all $i \in \barI$. Collecting these results, the feasible region of $z$ is given by 
\[
\mathcal{Z} := \{ z \in \mathbb{R}^{|\barI|} \mid 0 \leq z \leq c, \ m + M z \geq 0 \},
\]
where
\begin{align}
    c_i &= \frac{\lambda_i}{\mu_i - \lambda_i} \theta_i, \quad \forall i \in \barI, \nonumber \\
    m_i &= \frac{\lambda_i}{\mu_i - \lambda_i} \left(1^\top \tau + 1^\top L_\I - \theta_i + \sum_{j \in \barI_{-i}} \frac{\lambda_j}{\mu_j - \lambda_j} \theta_j\right), \quad \forall i \in \barI, \nonumber
\end{align}
and the matrix $M\in \mathbb{R^{|\barI|\times |\barI|}}$ is defined as:
\begin{align*}
    M_{ij} =
    \begin{cases}
        1, & \text{if } i = j, \\
        -\dfrac{\lambda_i}{\mu_i - \lambda_i}, & \text{if } i \ne j.
    \end{cases}
\end{align*}

Note that $M$ is a $Z$-matrix, meaning its off-diagonal elements are non-positive. Also, $\mathcal{Z}$ is non-empty, as it includes $z = c$. Then, by Theorem 3.11.6 in \cite{cottle_2009_LCPBook}, the set $\mathcal{Z}$ has a unique least element. Since the throughput decreases in $z$, the (unique) least element must be the optimal solution.
It is straightforward to show that for this least element, either $z_i = 0$ or $(m + M z)_i = 0$ for all $i\in \barI$, meaning that for each $i$, either constraint \eqref{constraint1:polling_fd_hat_T=0} or constraint \eqref{constraint2:polling_fd_hat_T=0} is binding. The binding constraint \eqref{constraint1:polling_fd_hat_T=0} corresponds to \Cref{subfig:exh_case1} with $T_i=0$, and the binding constraint \eqref{constraint2:polling_fd_hat_T=0} corresponds to \Cref{subfig:exh_case4} with $T_i=0$. This concludes the proof.
\Halmos
\end{proof}

\subsection{Proofs in \Cref{sec:exhaustive}}
\begin{proof}{\textbf{Proof of \Cref{lem:exhaustive_equilibrium_structure}.}}
We focus on a focal queue $i \in \N$. Consider the joining strategies $f_{-i}$ of the other queues (which need not be equilibrium strategies). Given the strategies $f_{-i}$ and the post-clearance durations $T_{-i}$ of the other queues, the \on durations of these queues are fixed, which in turn fixes the \off duration of queue $i$. Consequently, the equilibrium analysis of queue $i$ reduces to the case with an exogenous \off duration. Besides, queue $i$ exhibits an exhaustive equilibrium outcome by the definition of the exhaustive service policy.
Then, following an analysis analogous to that in \Cref{thm:equilibrium_fixedduration_exhaustive} for the exhaustive equilibrium outcome under exogenous \on--\off durations, it can be shown that the customer arriving at the epoch when the server departs from queue $i$ experiences the longest waiting time, with waiting times decreasing thereafter. This observation implies (i) and (ii) of \Cref{lem:exhaustive_equilibrium_structure}.
\Halmos
\end{proof}

\medskip

\begin{proof}{\textbf{Proof of \Cref{lem:equilibriumset}.}}
We write $\I(T)$ as $\I$ for simplicity. By the definition of the equilibrium (all-joining) set $\I$, all arrivals to the queues in $\I$ join the system. If $1^\top \rho_{\I} > 1$, it is clear that the server cannot finish serving all these arriving customers. If $1^\top \rho_{\I} = 1$, then because there are switchover times $1^\top \tau>0$, during which the server is idle, the server again cannot complete service for all these arriving customers in queues $i \in \I$. Consequently, the length of at least one queue in $\I$ grows to infinity. This implies that some customers in these queues experience infinite waiting times, contradicting the definition of the all-joining set $\I$.
\Halmos
\end{proof}

\medskip

\begin{proof}{\textbf{Proof of \Cref{prop:n_step}.}}

We first show that problem \ref{prob:alpha} admits feasible solutions, and that we can discard constraint \eqref{con:non_neg} without losing feasibility. 
Without constraint \eqref{con:non_neg}, problem \ref{prob:alpha} is equivalent to a standard LCP problem $\textsf{LCP}(q(T), A)$ by setting $z = 1-\alpha$, where $q(T) = b(T) - A1$. 
By Theorem 3.11.6 in \cite{cottle_2009_LCPBook}, if the polyhedron 
\begin{align*}
\mathsf{POL}(q(T),A) := \{z \in \mathbb{R}^n : Az + q(T) \geq 0,  z \geq 0\}
\end{align*}
with $Z$-matrix $A$ is non-empty, then it admits a least element, which is a feasible solution of $\textsf{LCP}(q(T), A)$. 
One can easily verify that $z = 1$ satisfies $A1 + q(T) \geq 0$, so $\mathsf{POL}(q(T),A)$ is indeed non-empty. 
As a result, there is a least element $z^\ast$ in $\mathsf{POL}(q(T),A)$, and this $z^\ast$ is feasible for $\textsf{LCP}(q(T), A)$. 
By the definition of the least element, we have $z^\ast \leq 1$ since $z = 1$ belongs to $\mathsf{POL}(q(T), A)$. 
Hence, the corresponding $\alpha^\ast := 1 - z^\ast \geq 0$ satisfies constraint \eqref{con:non_neg}. 

Based on the above argument, we conclude that problem \ref{prob:alpha} admits a feasible solution $\alpha^\ast$. 
Moreover, if we consider this solution $\alpha^\ast$, we can discard constraint \eqref{con:non_neg} in problem \ref{prob:alpha} without loss of feasibility. 
On the other hand, as shown in \Cref{lem:lp_equilirbium}, problem \ref{prob:alpha} has a unique solution. 
Thus, problem \ref{prob:alpha} is equivalent to the one obtained by removing \eqref{con:non_neg}, i.e., $\textsf{LCP}(q(T), A)$.

Our \Cref{alg:pivoting} is adapted from \cite{chandrasekaran_1970_LCP_Z} with two major changes: 
first, our algorithm does not check feasibility, since our problem always admits a feasible solution (as established above). 
Second, instead of solving a system of linear equations, our algorithm (see line 8) directly uses the closed-form inverse matrix $A_{\I}^{-1}$ (see \Cref{lem:inverse_mat}), implicitly assuming $A_{\I}$ is invertible at each iteration. Since Chandrasekaran's algorithm solves $\textsf{LCP}(q(T), A)$ and problem \ref{prob:alpha} is equivalent to $\textsf{LCP}(q(T), A)$ (as shown before), it remains to verify that $A_{\I}$ is indeed invertible for non-empty $\I$ at each iteration.

 By \Cref{lem:equilibriumset}, any equilibrium set $\I(T)$ satisfies $1^\top \rho_{\I(T)} < 1$. Consequently, there can be at most $\bar{n}$ steps for termination. Furthermore, because the set $\I$ strictly grows at each iteration, we have $1^\top \rho_{\I} < 1^\top \rho_{\I(T)} < 1$ for $\I$ at any intermediate iteration. By \Cref{lem:inverse_mat}, this implies that $A_{\I}$ is invertible at each iteration. This completes the proof. \Halmos
\end{proof}

\medskip

\begin{proof}{\textbf{Proof of \Cref{lem:lp_equilirbium}.}}

Recall that the problem \ref{prob:alpha} admits feasible solutions as shown in the proof of \Cref{prop:n_step}. 
We now prove the uniqueness. Specifically, we will show the following.
\begin{enumerate}
\item There exists an $\omega \in \mathbb{R}^n_{++}$ such that any feasible solution of problem~\ref{prob:alpha} is an optimal solution of problem~\ref{prob:lp_greatest_element} defined below.
\item For this choice of $\omega$, problem~\ref{prob:lp_greatest_element} admits a unique optimal solution.
\end{enumerate}
\begin{align}
     \max_{\alpha \in \mathcal{A}(b(T),A)}\quad& \omega^\top \alpha \tag{$\mathsf{LP}$($T$)}\label{prob:lp_greatest_element}.
\end{align}
To prove (i) and (ii), it suffices to establish that \emph{any} feasible solution of problem~\ref{prob:alpha} is an optimal solution of problem~\ref{prob:lp_greatest_element} under \emph{any} $\omega \in \mathbb{R}^n_{++}$. Indeed, if this holds, then by picking a particular feasible solution of problem~\ref{prob:alpha}, we conclude that problem~\ref{prob:lp_greatest_element} has the \emph{same} optimal solution %
for \emph{all} $\omega \in \mathbb{R}^n_{++}$. In fact, this implies that the polyhedron $\mathcal{A}(b(T), A)$ admits a greatest element, which must be unique.

In what follows, given $\I,\J\subseteq\N$, we denote by $1_{\I\J}$ (or $0_{\I\J}$) a matrix of dimension $|\I|\times |\J|$ with all entries equal to $1$ (or $0$). Given $\I,\J\subseteq \N$, let $A_{\I\J}\in\mathbb{R}^{|\I|\times |J|}$ denote the submatrix of $A\in\mathbb{R}^{n\times n}$ whose entries are those of $A$ indexed by the rows in $\I$ and the columns in $\J$. When $\I$ and $\J$ are identical, we write $A_{\I\I}$ as $A_{\I}$ for simplicity, as used in the main body. Besides, we use $\mathbb{I}$ to denote the $n\times n$ identity matrix.
Sometimes, we will use the subscript $n$ to explicitly specify its dimension.

To prove that any equilibrium $\alpha(T)$, i.e., any feasible solution of problem \ref{prob:alpha}, is an optimal solution of problem \ref{prob:lp_greatest_element} with an arbitrarily given $\omega\in \mathbb{R}^n_{++}$, by the strong duality of LP, it is sufficient to show that the objective of problem \ref{prob:lp_greatest_element}, termed primal objective, under any equilibrium $\alpha(T)$ matches some feasible dual objective of \ref{prob:lp_greatest_element}. For simplicity, we drop $T$ in $\alpha(T)$ and $b(T)$.

For any equilibrium set $\I$ of the exhaustive service policy $\e$, \Cref{lem:equilibriumset} implies $1^\top \rho_{\I} < 1$, which in turn makes $A_{\I}$ invertible by \Cref{lem:inverse_mat}. Hence, the equations \eqref{eq:alpha_m}--\eqref{eq:alpha_m_n} describing the corresponding $\alpha$ is valid. In fact, the corresponding $\alpha$ in \eqref{eq:alpha_m}--\eqref{eq:alpha_m_n} can be written as
\begin{align}
\nonumber
    \begin{pmatrix}
        \alpha_\I\\
        \alpha_{\Bar{\I}}
    \end{pmatrix}
    = 
    \begin{bmatrix}
        A_\I & A_{\I\bar{\I}}\\
  0_{\bar{\I}\I} & \mathbb{I}_{\bar{\I}}
    \end{bmatrix}^{-1}\begin{pmatrix}
        b_{\I}\\
        1_{\bar{\I}}
    \end{pmatrix}.
\end{align}
Then, the primal objective under the above $\alpha$ induced by the equilibrium set $\I$ is
\begin{align}
    \omega^\top \begin{pmatrix}
        \alpha_\I\\
        \alpha_{\Bar{\I}}
    \end{pmatrix} = \omega^\top  \begin{bmatrix}
        A_\I & A_{\I\bar{\I}}\\
  0_{\bar{\I}\I} & \mathbb{I}_{\bar{\I}}
    \end{bmatrix}^{-1}\begin{pmatrix}
        b_{\I}\\
        1_{\bar{\I}}
    \end{pmatrix}.
    \label{eq:pimary_obj}
\end{align}

Consider the dual problem of \ref{prob:lp_greatest_element} as follows:
\begin{align}
\min_{x,y\in\mathbb{R}^n}\quad &\begin{pmatrix}
b  \\
    1_n
\end{pmatrix}^\top \begin{pmatrix}
   x \\
   y 
\end{pmatrix} \tag{\textrm{DUAL}($T$)} \label{prob:dual_lp}\\
\textrm{s.t.}\quad & \begin{bmatrix}
   A \\
    \mathbb{I}_n
\end{bmatrix}^\top \begin{pmatrix}
   x \\
   y 
\end{pmatrix} \geq \omega, \nonumber\\
& x,y \geq 0\nonumber.
\end{align}
In what follows, we first construct a dual variable (which may not be dual-feasible) and show the corresponding dual objective equals the primal objective \eqref{eq:pimary_obj}. Then, we prove that the constructed dual variable is indeed dual-feasible.
\smallskip

We construct a dual variable in the following way: let $x_{\barI}=0$ and $y_{\I}=0$, and make the first constraint in \ref{prob:dual_lp} binding to solve the rest of the variables $x_{\barI}$ and $y_{\I}$, i.e.,
\begin{align}
   \begin{bmatrix}
        A\\
        \mathbb{I}
    \end{bmatrix}^\top   \begin{pmatrix}
       x \\
  y
    \end{pmatrix}=\omega.
\nonumber
\end{align}
In other words, $A^\top x+\mathbb{I}^\top y=\omega$, where
\begin{align}
    \nonumber
    A^\top x =
    \begin{bmatrix}
        A_{\I} & A_{\I\barI}\\[0ex]
        A_{\barI \I} & A_{\barI}
    \end{bmatrix}^\top
    \begin{pmatrix}
        x_{\I}\\
       x_{\barI}
    \end{pmatrix} =  \begin{bmatrix}
    A_{\I}^\top &  A_{\barI \I}^\top 
    \\[1ex] %
    A_{\I\barI}^\top    & A_{\barI}^\top
\end{bmatrix}
    \begin{pmatrix}
        x_{\I}\\
       x_{\barI}
    \end{pmatrix}
    =
        \begin{bmatrix}
       A_{\I}^\top~ x_{\I}\\[1ex]
     A_{\I\barI}^\top~  x_{\I}
    \end{bmatrix}.
\end{align}
The last equality holds by $x_{\barI}=0$. Plugging $x_{\barI}=0$, $y_{\I}=0$ and the above equation into $A^\top x+\mathbb{I}^\top y=\omega$, we have
\begin{align*}
     \begin{bmatrix}
        A_{\I} & A_{\I\barI}\\[0ex]
       0_{\barI \I} & \mathbb{I}_{\barI}
    \end{bmatrix}^\top
    \begin{pmatrix}
        x_{\I}\\
        y_{\barI}
    \end{pmatrix}
 = \omega.
\end{align*}
This yields
\begin{align}
    \label{eq:dual_solu}
   \begin{pmatrix}
        x_{\I}\\
        y_{\barI}
    \end{pmatrix} =       \left(\begin{bmatrix}
        A_{\I} & A_{\I\barI}\\[0ex]
       0_{\barI \I} & \mathbb{I}_{\barI}
    \end{bmatrix}^{-1}\right)^\top \omega.
\end{align}
Plugging $x_{\barI}=0$, $y_{\I}=0$ and \eqref{eq:dual_solu} into \ref{prob:dual_lp}, we get 
the corresponding dual objective, under the dual variable constructed above, as follows:
\begin{align}
 \begin{pmatrix}
 b  \\
    1_n
\end{pmatrix}^\top \begin{pmatrix}
   x \\
   y 
\end{pmatrix}  =  \begin{pmatrix}
 b_{\I}   \\
    1_{\barI}
\end{pmatrix}^\top \left(\begin{bmatrix}
        A_{\I} & A_{\I\barI}\\[0ex]
       0_{\barI \I} & \mathbb{I}_{\barI}
    \end{bmatrix}^{-1}\right)^\top \omega.
 \label{eq:dual_obj}   
\end{align}
Observe that, \eqref{eq:dual_obj} is the same as \eqref{eq:pimary_obj}.

\smallskip

We now turn to prove that the constructed dual variable is indeed dual-feasible. By the construction that we make the first constraint in \ref{prob:dual_lp} binding, we are left to show $x,y\geq 0$. 
Specifically, since we construct $x_{\barI}=0$ and $y_{\I}=0$, we are left to show \eqref{eq:dual_solu} $\geq 0$. Since $\omega>0$, it is sufficient to show that all elements of the inverse matrix in \eqref{eq:dual_solu} are non-negative. By Schur Complement, the inverse matrix can be simplified as 
\begin{align}
 \label{eq:proof_inversematrix_dual}
    \begin{bmatrix}
        A_\I & A_{\I\bar{\I}}\\
  0_{\bar{\I}\I} & \mathbb{I}_{\bar{\I}}
    \end{bmatrix}^{-1}=
  \begin{bmatrix}
        A_\I^{-1} & -A_\I^{-1}  A_{\I\barI}\\
  0_{\bar{\I}\I} & \mathbb{I}_{\bar{\I}}
    \end{bmatrix} .
\end{align}
Since $1^\top \rho_{\I}<1$ for any equilibrium set $\I$ by \Cref{lem:equilibriumset}, the inverse matrix $A_{\I}^{-1}$ given by \eqref{eq:invermatrix_1}--\eqref{eq:invermatrix_2} must be non-negative.
Besides, considering all elements in $A_{\I\bar{\I}}$ are negative by the definition \eqref{eq:matrix_A_b} of the matrix $A$, we conclude that all elements in \eqref{eq:proof_inversematrix_dual} are non-negative. This completes the proof. \Halmos
\end{proof}

\medskip

\begin{proof}{\textbf{Proof of \Cref{lem:piecewise_linear_increasing}.}}
The piecewise linear property comes from \eqref{eq:alpha_m}--\eqref{eq:alpha_m_n}. Since $b(T)$ is increasing with $T$, it is easy to show that the greatest element of the polyhedron $\mathcal{A}(b(T),A)$
is non-decreasing with $T$. Since this greatest element is the unique equilibrium $\alpha(T)$ by \Cref{lem:lp_equilirbium}, we conclude that $\alpha(T)$ is non-decreasing with $T$. \Halmos
\end{proof}

\medskip

\begin{proof}{\textbf{Proof of \Cref{thm:structure_opt_ex}.}}

Since customers will not renege after they join the system, the throughput can also be written as the long-run average number of customers joined:
\begin{align}
\label{eq:tp_in}
 \tp(T)=   \sum_{i\in\N}\lambda_i  - \frac{\sum_{i\in\N}\lambda_i \bar{J}_i(T)}{1^\top \tau+\sum_{i\in\N}\left(c_i\alpha_i(T)+T_i\right)}.
\end{align}
Using the geometric relationship illustrated in \Cref{fig:exhaustiveservice_equilibrium}, we have $\bar{J}_i(T)=\bar{L}_i(T) - \alpha_i(T)\theta_i$ with $\bar{L}_i(T)$ given by \eqref{eq:exhaustiveservice_off_duration_alpha_j_theta_j}.
We will use the above formula later.

We prove each case separately. To simplify notation, we write $\I(0)$ as $\I$. 

\medskip
\noindent\underline{\textit{Proof of \Cref{thm:structure_opt_ex} (i)}}
\smallskip

The first part is straightforward: if $\I = \N$, then customers from all queues choose to join the system. Consequently, under these conditions, the pure exhaustive service policy $\pe$ achieves the first-best throughput $1^\top \lambda$.

\medskip

\noindent\underline{\textit{Proof of the First Part of \Cref{thm:structure_opt_ex} (ii)}}
\smallskip

Our proof involves two steps. We first show that for any non-empty $\I\subset \N$, $T_{\I}^\ast=0$, i.e., the post-clearance durations of queues, where all customers join under $\pe$, can be set at zero without loss of optimality.
Note that, if $\I$ is an empty set, we can skip this step.
We then proceed to prove that for any $\I\subset \N$, $T^\ast_{\barI\setminus\{j\}}=0$, where $j\in\argmax_{i\in\barI}\lambda_i$, without loss of optimality. %

\smallskip

\noindent\underline{\textit{Step 1: $T_{\I}^\ast=0$ if $\I$ is non-empty}}

\smallskip
Let $\tp(0_{\I},T_{\barI})$ be the throughput under the exhaustive service policy with post-clearance durations $(0_{\I},T_{\barI})$, and $\tp(T_\J,T_{\barI})$ be the throughput with post-clearance durations $(T_\J,T_{\I\setminus\J}=0,T_{\barI})$ for some $\J\subseteq \I$. 
We will show that
\begin{align*}
 \tp(0_{\I},T_{\barI})\geq \tp(T_\J,T_{\barI})   
\end{align*}
for any $\J\subseteq \I$ and any $T_\J,T_{\barI}\geq 0$.  

For brevity, we drop $T_{\I\setminus\J}=0$ in $(T_\J,T_{\I\setminus\J}=0,T_{\barI})$. 
Let $\alpha(T_\J,T_{\barI})$ be the equilibrium of the exhaustive service policy $(T_\J,T_{\barI})$. Also, let $\bar{J}_i(T_\J,T_{\barI}),~i\in\N$ be the not-joining durations under $(T_\J,T_{\barI})$.
In addition, given $(T_\J,T_{\barI})$, let $\HH =\I\setminus \I(T_\J,T_{\barI})$ be the set of additional queues with strictly positive not-joining durations under $(T_\J,T_{\barI})$ compared with the one under the pure exhaustive service policy $\pe$. 
We have $\alpha_{\HH}(T_\J,T_{\barI})=1$ and $\alpha_{\I\setminus\HH}(T_\J,T_{\barI})<1$. 
Note that $\HH$ could be empty and $\HH$ depends on $(T_\J,T_{\barI})$.

\smallskip
Since $\alpha_{\barI}(0)=1$ by \eqref{eq:alpha_m_n} and the fact that $\alpha(T_\J,T_{\barI})$ is non-decreasing with $(T_\J,T_{\barI})$ by \Cref{lem:piecewise_linear_increasing}, we must have
\begin{align}
\label{eq:pf_alpha_not_joiningqueueus}
  \alpha_{\barI}(T_\J,T_{\barI})=1.  
\end{align}
Besides, by the definition of the set $\HH$, we have $\bar{J}_{\I\setminus\HH}(T_\J,T_{\barI})=0$ (since $\alpha_{\I\setminus\HH}(T_\J,T_{\barI})<1$). 
The not-joining durations $\bar{J}_{\barI\cup \HH}(T_\J,T_{\barI})$ under $(T_\J,T_{\barI})$ can be derived by the periodicity,
\begin{align*}
    \bar{J}_i(T_\J,T_{\barI}) + \alpha_i(T_\J,T_{\barI})\cdot  \theta_i &\overset{(a)}{=} \bar{J}_i(T_\J,T_{\barI}) +  \theta_i\\
    &\overset{(b)}{=} 1^\top \tau + c_{\I}^\top \alpha_{\I}(T_\J,T_{\barI}) + 1^\top_\J T_\J + \sum_{j\in\barI\setminus\{i\}}(c_j+T_j),\quad \forall i\in\barI\cup\HH. 
\end{align*}
The equality $(a)$ holds by \eqref{eq:pf_alpha_not_joiningqueueus} %
and the definition of $\HH$.
The equality $(b)$ holds by \eqref{eq:exhaustiveservice_off_duration_alpha_j_theta_j}, see \Cref{fig:exhaustiveservice_equilibrium} for an illustration.
Thus, we have
\begin{align}
\bar{J}_i(T_\J,T_{\barI})  = 1^\top \tau + c_{\I}^\top \alpha_{\I}(T_\J,T_{\barI}) +1_\J^\top T_\J +  1_{\barI}^\top T_{\barI}+c_{\barI}^\top 1_{\barI} -  c_i - T_i- \theta_i, ~\forall i \in \barI\cup\HH \nonumber.
\end{align}
Plugging the above term and $\bar{J}_{\I\setminus\HH}(T_\J,T_{\barI})=0$ into \eqref{eq:tp_in}, the throughput becomes
\begin{align}
\tp(T_\J,T_{\barI})
& = \sum_{i\in\N}\lambda_i - \sum_{i\in\barI\cup\HH}\lambda_i\cdot \frac{1^\top \tau + c_{\I}^\top \alpha_{\I}(T_\J,T_{\barI}) +1_\J^\top T_\J + 1_{\barI}^\top T_{\barI}+ c_{\barI}^\top 1_{\barI} -  c_i -  \theta_i-T_i}{1^\top \tau+c_{\I}^\top \alpha_{\I}(T_\J,T_{\barI})+ 1^\top_\J T_\J + 1_{\barI}^\top T_{\barI}+ c_{\barI}^\top 1_{\barI} } \nonumber\\
& = \sum_{i\in\N}\lambda_i - \sum_{i\in\barI\cup\HH}\lambda_i\cdot \underbrace{\left(1 - \frac{c_i+\theta_i+T_i}{1^\top \tau+c_{\I}^\top \alpha_{\I}(T_\J,T_{\barI})+ 1^\top_\J T_\J +1_{\barI}^\top T_{\barI}+ c_{\barI}^\top 1_{\barI}  }\right)}_{:=B(T_\J)} \label{eq:tp_ex_k}    
.
\end{align}
Since $\alpha_{\I}(T_\J,T_{\barI})$ is non-decreasing with $T_\J$ given any $T_{\barI}$ by \Cref{lem:piecewise_linear_increasing}, the cardinality of $\HH$ is also non-decreasing with $T_\J$ by the definition of $\HH$, and
the term $B(T_{\J})$, which is positive, is increasing with $T_{\J}$.
Thus, the second summation term in \eqref{eq:tp_ex_k} increases with $T_\J$ for any given $T_{\barI}$. As a result, we have $\tp(T_\J,T_{\barI}) \leq \tp(0_{\J},T_{\barI})$.

\smallskip

\noindent\underline{\textit{Step 2: $T^\ast_{\barI\setminus \{j\}}=0$, where $j\in \argmax_{i\in\barI}\lambda_i$}}

\smallskip
For now, we consider a non-empty all-joining set $\I\subset \N$.
By the analysis of step 1, we can set $T_\I=0$ without loss of optimality. Then, the throughput in \eqref{eq:tp_ex_k} becomes:
\begin{align}
\tp(0_\I,T_{\barI}) =& \left(\sum_{i\in \I\setminus\HH} \lambda_i \right)+ 
\left(\frac{\sum_{i\in\barI\cup\HH}\lambda_i(c_i+\theta_i)}{1^\top \tau+c^\top_{\I}\alpha_{\I}(0_{\I},T_{\barI})+1^\top_{\barI}T_{\barI}+c^\top_{\barI}1_{\barI}}  \right)\\
&+ \frac{\lambda_{\barI}^\top T_{\barI}}{1^\top \tau+c^\top_{\I}\alpha_{\I}(0_{\I},T_{\barI})+1^\top_{\barI}T_{\barI}+c^\top_{\barI}1_{\barI}} ,  \label{eq:tp_e_I}
\end{align}
where we use $T_{\J}=0$ and $T_{\HH}=0$ in \eqref{eq:tp_ex_k} since $\J,\HH\subseteq \I$ and we set $T_{\I}=0$. 
Observe that, given $T_\I=0$, $\alpha_\I(0_\I,T_{\barI})$ \emph{only} depends on $1^\top_{\barI}T_{\barI}$ by \eqref{eq:alpha_m}. This leads to the property that the set $\HH$ also \emph{only} depends on $1^\top_{\barI}T_{\barI}$. Thus, the first two terms and the denominator of the third term in \eqref{eq:tp_e_I} \emph{only} depend on $1^\top_{\barI}T_{\barI}$. 
Then, for any $T_{\barI}$ with throughput \eqref{eq:tp_e_I}, we can construct another post-clearance durations $\tilde{T}_{\barI}=(\tilde{T}_j=1^\top_{\barI}T_{\barI},0_{\barI\setminus\{j\}})$ to achieve a (weakly) higher throughput. 
Specifically, under $\tilde{T}_{\barI}$, all terms in \eqref{eq:tp_e_I} except for the numerator of the last term remain the same since  $1^\top \tilde{T}_{\barI}=1^\top_{\barI}T_{\barI}$ by the construction.
However, we have $\lambda_{\barI}^\top T_{\barI}\leq \lambda_j1^\top_{\barI}T_{\barI}=\lambda^\top_{\barI}\tilde{T}_{\barI}$ by the definition of $j\in \argmax_{i\in \barI}\lambda_i$, leading to $\tp(0_\I,T_{\barI})\leq \tp(0_\I,\tilde{T}_{\barI})$.

When $\I$ is an empty set, $\HH$ is also empty by definition. Thus, the first term in \eqref{eq:tp_e_I} disappears. Moreover, since $\HH$ is empty, it is independent of $T_{\barI}$. Then, by following a similar argument as above, it is easy to show that $\tp(0_\I,T_{\barI})\leq \tp(0_\I,\tilde{T}_{\barI})$, with $\tilde{T}_{\barI}=(\tilde{T}_j=1^\top_{\barI}T_{\barI},0_{\barI\setminus\{j\}})$.

\medskip
\noindent\underline{\textit{Proof of the Second Part of \Cref{thm:structure_opt_ex} (ii)}}

\smallskip

If  $\I=\emptyset$, i.e., $\alpha(0)=1$, due to the monotonicity of $\alpha(T)$ by \Cref{lem:piecewise_linear_increasing}, we know that for any $T\geq 0$, the equilibrium $\alpha(T)=1$. Then, the throughput of $\e$ in this case becomes
\begin{align*}
    \tp(T) = \frac{u^\top c+ \lambda^\top T}{1^\top \tau+1^\top c+1^\top T}.
\end{align*}

When $\I=\emptyset$, by \Cref{thm:structure_opt_ex} (ii), it is without loss of optimality to set $T_{-j}=0$, where $j$ is the queue with the largest arrival rate among all queues since $\barI(0)=\N$. Setting $T_{-j}=0$, we have the partial derivative of throughput with respect to $T_j$ as follows:
\begin{align}
\label{eq:derivative_tp_exhaustiveservice}
\frac{\partial \tp(T_j,T_{-j}=0) }{\partial T_j }  = \frac{\lambda_j\cdot 1^\top \tau-\sum_{i\in\N}(\mu_i-\lambda_i)c_i}{(1^\top \tau+1^\top c+T_j)^2},
\end{align}
whose sign is independent of $T_j$.
Thus, when the numerator is positive, the throughput is always increasing with $T_j$. In this case, always serving the queue with the largest arrival rate is optimal. Otherwise, the throughput is decreasing with $T_j$, leading to $T_j^\ast=0$. This completes the proof. \Halmos
\end{proof}

\medskip

\begin{proof}{\textbf{Proof of \Cref{thm:opt_ex_algo}.}}

We first demonstrate that \Cref{alg:opt} terminates in at most $2\bar{n}$ steps. Then, we show that the algorithm indeed finds an optimal exhaustive service policy.

The algorithm takes at most $\bar{n}$ steps to derive the equilibrium set $\I(0)$ in line 1, by \Cref{prop:n_step}. Since $\I(0) \leq \bar{n}$ (as noted in \Cref{lem:equilibriumset}), the ``for loop" in lines~5--11 takes at most $\bar{n}$ steps. This completes the proof that \Cref{alg:opt} terminates in at most $2\bar{n}$ steps.

We now proceed to show that \Cref{alg:opt} indeed finds an optimal exhaustive service policy. 
By the structural result in \Cref{thm:structure_opt_ex}, all queues have zero post-clearance durations, except possibly for queue \( j \in \argmax_{i \in \barI(0)} \lambda_i \). Thus, in line 1, we first derive the equilibrium (all-joining) set $\I(0)$ of the pure exhaustive service policy.  This can be done by \Cref{alg:pivoting}.
If $\I(0)=\N$, i.e., all customers of all queues are willing to join, the pure exhaustive service policy $\pe$ achieves the first-best $1^\top \lambda$. 
This is what line 2 does.
If the equilibrium set $\I(0)$ is empty, the algorithm skips line 4 and the subsequent ``for loop'' procedure. By the second part of \Cref{thm:structure_opt_ex} (ii), either the pure exhaustive service policy or always serving the queue with the highest arrival rate is optimal. This is implemented in lines~12--13. 
If $\I(0) \neq \N$ and non-empty, then by \Cref{thm:structure_opt_ex}, only queue $j$ can have a positive post-clearance duration. The selection of queue $j$ is done in line 3 based on \Cref{thm:structure_opt_ex}. According to \Cref{lem:monoton_tp_hat_T} below, we know that the $\mathcal{K}$ is the order that queues transition from all-joining, i.e., zero not-joining duration, to not-all-joining as $T_j$ increases. 
Again by \Cref{lem:monoton_tp_hat_T}, it is sufficient to check the throughput only at the boundary values, as done in line 6. The new equilibrium is then easily computed in lines~7--8 also based on \Cref{lem:monoton_tp_hat_T}. Lines~10--11 compare the throughput at the new boundary value with the maximum throughput found at previous boundary values. Finally, lines~12--15 check whether always serving queue $j$ is optimal.\Halmos

\end{proof}

\medskip

\begin{lemma}
\label{lem:monoton_tp_hat_T}
Consider a non-empty $\I(0)\subset \N$.
For $\ell=1,2,\dots,|\I(0)|$ and queue $j\in \argmax_{i\in \barI(0)}\lambda_i$, let $T_j^{(k_\ell)}$ be the value given by line $6$ in \Cref{alg:opt} when the queue index $k_\ell \in \mathcal{K}$, where $\mathcal{K}$ is given by line $4$ in \Cref{alg:opt}.
Let $T_j^{(k_0)}:=0$. 
Then, 
\begin{enumerate}
    \item the equilibrium set $\I(T_j=T_j^{(k_\ell)},0_{-j})=\I(0)\setminus\{k_1,\dots,k_\ell\}$ for any $\ell=1,2,\dots,|\I(0)|$;
    \item the throughput $\tp(T_j,0_{-j})$ is monotone in $T_j\in[T_j^{(k_\ell)},T_j^{(k_{\ell+1})})$ for any $\ell=0,1,2,\dots, |\I(0)|-1$.
\end{enumerate}
\end{lemma}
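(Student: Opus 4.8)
The plan is to analyze how the (not-necessarily-equilibrium) candidate variable $\alpha(T_j,0_{-j})$ given by the closed-form expressions \eqref{eq:alpha_m}--\eqref{eq:alpha_m_n} evolves as $T_j$ increases, and to translate this into statements about the true equilibrium set and throughput. First I would recall from \Cref{lem:piecewise_linear_increasing} that $\alpha(T)$ is non-decreasing and piecewise linear in $T$, so that as $T_j$ grows from zero, the coordinates $\alpha_i(T_j,0_{-j})$ for $i \in \I(0)$ can only increase, while $\alpha_{\barI(0)}(T_j,0_{-j})$ stays fixed at one. Since every $\alpha_i$ is capped at one by constraint \eqref{con:ir}, a queue $i \in \I(0)$ leaves the all-joining set precisely when its candidate value $\alpha_i(T_j,0_{-j})$ (computed as if $\I$ were still the all-joining set) would exceed one; at that threshold the equilibrium set loses queue $i$. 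Using the explicit formula from line~9 of \Cref{alg:opt},
\begin{align*}
\alpha_i(T_j,0_{-j}) = \frac{1^\top\tau + T_j + c_{\barI}^\top 1_{\barI}}{1-\rho_{\I}^\top 1_{\I}} \cdot \frac{\mu_i-\lambda_i}{\mu_i\theta_i},
\end{align*}
one sees that, for a fixed all-joining set $\I$, the coordinate that hits $1$ first as $T_j$ increases is the one maximizing $(\mu_i-\lambda_i)/(\mu_i\theta_i)$, which is exactly the ordering $\mathcal{K}$ defined in line~4. This gives part~(i): solving $\alpha_{k_\ell}(T_j,0_{-j})=1$ for the current set $\I = \I(0)\setminus\{k_1,\dots,k_{\ell-1}\}$ yields the boundary value $T_j^{(k_\ell)}$ in line~6, and for $T_j \in [T_j^{(k_\ell)}, T_j^{(k_{\ell+1})})$ the equilibrium set is $\I(0)\setminus\{k_1,\dots,k_\ell\}$. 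I would need to verify carefully that once a queue exits $\I$ it never re-enters (which follows from monotonicity of $\alpha$) and that the remaining coordinates still satisfy $\alpha \le 1$ after the exit; the $Z$-matrix structure of $A$ and the closed-form inverse in \Cref{lem:inverse_mat} are what guarantee this, exactly as in the correctness argument for \Cref{alg:pivoting}.

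For part~(ii), within any interval $[T_j^{(k_\ell)}, T_j^{(k_{\ell+1})})$ the equilibrium set $\I$ is constant, so $\alpha_{\I}(T_j,0_{-j})$ is an affine function of $T_j$ by \eqref{eq:alpha_m} (with $T_{\I}=0$, so the term $A_{\I}^{-1}T_{\I}$ vanishes and only the scalar multiple of $A_{\I}^{-1}1_{\I}$ survives). Substituting into the throughput formula \eqref{eq:thp_dfn}, I would write $\tp(T_j,0_{-j})$ as a ratio of two affine functions of $T_j$ on this interval — a linear-fractional function. A linear-fractional function of a single variable is monotone (strictly increasing, strictly decreasing, or constant) on any interval not containing a pole of the denominator; since the denominator $1^\top\tau + c^\top\alpha(T_j,0_{-j}) + T_j$ is strictly positive throughout (all terms non-negative, $1^\top\tau>0$), there is no pole, so $\tp(\cdot,0_{-j})$ is monotone on $[T_j^{(k_\ell)}, T_j^{(k_{\ell+1})})$. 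This is the payoff that justifies why \Cref{alg:opt} only needs to evaluate the throughput at the breakpoints.

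The main obstacle I anticipate is part~(i) — specifically, proving rigorously that the piecewise-linear candidate $\alpha(T_j,0_{-j})$ computed via \eqref{eq:alpha_m}--\eqref{eq:alpha_m_n} with the \emph{nominal} set $\I(0)\setminus\{k_1,\dots,k_\ell\}$ actually coincides with the true equilibrium $\alpha(T_j,0_{-j})$ (the greatest element of $\mathcal{A}(b(T),A)$ from \Cref{lem:lp_equilirbium}) on the claimed interval, and that the transition order is governed \emph{exactly} by $(\mu_i-\lambda_i)/(\mu_i\theta_i)$ rather than by some more complicated interaction among queues. The key technical input is that when $T_{\I(0)}=0$ and $T_j$ increases, the scalar $\sigma(T_j) := (1^\top\tau + T_j + c_{\barI}^\top 1_{\barI})/(1-\rho_{\I}^\top 1_{\I})$ multiplies all coordinates $\alpha_i$, $i\in\I$, by the common positive scaling, while $c_{\barI}^\top 1_{\barI}$ and $\rho_{\I}^\top 1_{\I}$ remain constant within an interval; so the relative order in which the $\alpha_i$ reach one is fixed by the coefficients $(\mu_i-\lambda_i)/(\mu_i\theta_i)$ and does not change as $\I$ shrinks (removing a queue from $\I$ only rescales $\sigma$, preserving the order among the survivors). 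Combined with \Cref{lem:equilibriumset} (which ensures $1-\rho_{\I}^\top 1_{\I}>0$ at every stage, keeping the formulas well-defined) and the uniqueness from \Cref{lem:lp_equilirbium}, this pins down the equilibrium set on each interval and completes the argument. I would close by noting that $T_j^{(k_\ell)}$ is increasing in $\ell$ because removing queue $k_\ell$ from $\I$ strictly increases $1-\rho_{\I}^\top 1_{\I}$ and changes $c_{\barI}^\top 1_{\barI}$, and the solved threshold for the next queue is strictly larger — a short monotonicity check to confirm the intervals are non-degenerate and correctly ordered.
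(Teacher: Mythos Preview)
Your proposal is correct and follows essentially the same route as the paper: for part~(i) both you and the paper exploit the explicit formula $\alpha_i = \sigma(T_j)\cdot(\mu_i-\lambda_i)/(\mu_i\theta_i)$ to see that the queue with largest $(\mu_i-\lambda_i)/(\mu_i\theta_i)$ is the first to hit the cap $\alpha_i=1$, and both verify feasibility of the candidate $\alpha$ against constraints \eqref{con:ir}--\eqref{con:period} (the paper does this explicitly via an Observation that at the boundary $k_\ell$ may be included or excluded, plus a direct check that $(A\alpha)_{\barI}\le b_{\barI}$ using $\alpha_i(T_j^{(k_1)},0_{-j})>\alpha_i(0)$). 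For part~(ii) your linear-fractional argument is a cleaner packaging of what the paper does by brute force: the paper writes out $\partial\tp/\partial T_j$ explicitly and observes that the $T_j$-terms in the numerator cancel, leaving a quantity of fixed sign divided by a positive square---which is of course exactly the derivative identity $(ad-bc)/(cT_j+d)^2$ for a ratio of affines.
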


\begin{proof}{\textbf{Proof of \Cref{lem:monoton_tp_hat_T}.}}
The general idea behind \Cref{lem:monoton_tp_hat_T} is that when we increase $T_j$ from zero (i.e., from the pure exhaustive service policy $\pe$), the equilibrium variable $\alpha_{\I(0)}(T_j,0_{-j})$ under $(T_j,0_{-j})$ will gradually increase toward one, while $\alpha_{\barI(0)}(0,T_{-j})=1_{\barI(0)}$ will not change since they have already achieved the possibly maximum value $1$ (by monotonicity in \Cref{lem:piecewise_linear_increasing}).
The value $T_j^{(k_\ell)}$ is the one where $\alpha_{k_\ell}(T_j^{(k_\ell)},0_{-j})$ \textit{exactly} reaches one, while %
all other $\alpha_{k_{\ell^\prime}}(T_j^{(k_\ell)},0_{-j})$ ($\ell^\prime > \ell$) remain below one.

\smallskip

\noindent\underline{\textit{Proof of \Cref{lem:monoton_tp_hat_T} (i)}}

We use the following observation to assist our proof.
\begin{observation}
\label{obs:same_eq_set}
If for some queue $i$, we have $\alpha_i = 1$ and $(A\alpha)_i = b_i(T)$ at equilibrium (i.e., both constraints \eqref{con:ir} and \eqref{con:period} are binding for queue $i$), then we can either include or exclude queue $i$ from the equilibrium set $\I(T)$. Both choices result in the same value of the equilibrium variable $\alpha(T)$, since it is unique by \Cref{prop:n_step}.
\end{observation}

We provide the proof for the case $\ell = 1$; other cases follow similarly. Based on \Cref{obs:same_eq_set}, to verify that the equilibrium set $\I(T_j = T_j^{(k_1)}, 0_{-j}) = \I(0) \setminus \{k_1\}$, it is sufficient to show that the $\I(0)$-induced variable $\alpha$, i.e., \eqref{eq:alpha_m}--\eqref{eq:alpha_m_n} with $\I=\I(0)$ and $T=(T_j = T_j^{(k_1)}, 0_{-j})$, not only satisfies the equilibrium conditions, i.e., the solution of problem \ref{prob:alpha} with $T=(T_j = T_j^{(k_1)}, 0_{-j})$, but also ensures that $\alpha_{k_1} = 1$. For simplicity, we will write $\I(0)$ as $\I$. The variable $\alpha$ induced by set $\I$ under $(T_j = T_j^{(k_1)}, 0_{-j})$ is:
\begin{align}
\label{eq:pf_alpha_T_j}
    \alpha_i = \frac{1^\top \tau + T_j^{(k_1)} + c_{\barI}^\top 1_{\barI}}{1 - \rho_{\I}^\top 1_{\I}} \cdot \frac{\mu_i - \lambda_i}{\mu_i \theta_i}, \quad \forall i \in \I; \quad \alpha_{\barI} = 1_{\barI},
\end{align}
where the first equation uses \Cref{lem:inverse_mat} regarding the closed-form expression for $A^{-1}_\I 1_\I$ appearing in \eqref{eq:alpha_m}.

By the proof of \Cref{prop:n_step}, we can discard constraint \eqref{con:non_neg} in problem \ref{prob:alpha}. Thus, it is sufficient to verify that the variable $\alpha$ in \eqref{eq:pf_alpha_T_j} satisfies \eqref{con:ir}, \eqref{con:period}, \eqref{con:comp}, and $\alpha_{k_1} = 1$. 
Observe that, constraint \eqref{con:comp} is automatically satisfied since the $\I$-induced $\alpha$ is derived by setting $\alpha_{\barI} = 1$ and $(A\alpha)_\I = b_\I(T_j^{(k_1)}, 0_{-j})$.

For any $i \in \I$, we have 
\begin{align*}
\alpha_i = \underbrace{\frac{1^\top \tau + T_j^{(k_1)} + c_{\barI}^\top 1_{\barI}}{1 - \rho_{\I}^\top 1_{\I}}}_{\textrm{same for all $i \in \I$}} \cdot \frac{\mu_i - \lambda_i}{\mu_i \theta_i} \overset{(a)}{\leq} \frac{1^\top \tau + T_j^{(k_1)} + c_{\barI}^\top 1_{\barI}}{1 - \rho_{\I}^\top 1_{\I}} \cdot \frac{\mu_{k_1} - \lambda_{k_1}}{\mu_{k_1} \theta_{k_1}} \overset{(b)}{=} 1,
\end{align*}
where $(a)$ holds because $k_1 \in \argmax_{i \in \I} \frac{\mu_i - \lambda_i}{\mu_i \theta_i}$ and $(b)$ holds by substituting the expression for $T_j^{(k_1)}$. 
As a result, we have $\alpha_\I \leq 1$ and $\alpha_{k_1} = 1$.
Considering $\alpha_{\barI}=1$, we conclude that the variable $\alpha$ in \eqref{eq:pf_alpha_not_joiningqueueus} satisfies \eqref{con:ir} and $\alpha_{k_1}=1$.

Since $(A\alpha)_\I = b_\I(T_j^{(k_1)}, 0_{-j})$, we are left to show that $(A\alpha)_{\barI} \leq b_{\barI}(T_j^{(k_1)}, 0_{-j})$, Since $\alpha_{\barI} = 1$, $(A\alpha)_{\barI} \leq b_{\barI}(T_j^{(k_1)}, 0_{-j})$ is equivalent to
\begin{align}
\label{eq:to_prove_period}
    \theta_i - c^\top_{\barI \setminus \{i\}} 1 - c^\top_{\I} \alpha_{\I}  \leq 1^\top \tau + T_j^{(k_1)}, \quad \forall i \in \barI.
\end{align}
Besides, since $\I$ is also the equilibrium set of the pure exhaustive service policy $\pe$, we have $(A\alpha(0))_{\barI} \leq b_{\barI}(0)$, i.e.,
\begin{align}
\label{eq:condition_pure_ex}
    \theta_i - c^\top_{\barI \setminus \{i\}} 1 - c^\top_{\I} \alpha_{\I}(0) \leq 1^\top \tau, \quad \forall i \in \barI.
\end{align}
Additionally, we have
\begin{align*}
    \alpha_i = \frac{1^\top \tau + T_j^{(k_1)} + c_{\barI}^\top 1_{\barI}}{1 - \rho_{\I}^\top 1_{\I}} \cdot \frac{\mu_i - \lambda_i}{\mu_i \theta_i} > \alpha_i(0) = \frac{1^\top \tau + c_{\barI}^\top 1_{\barI}}{1 - \rho_{\I}^\top 1_{\I}} \cdot \frac{\mu_i - \lambda_i}{\mu_i \theta_i}, \quad \forall i \in \I,
\end{align*}
and thus \eqref{eq:condition_pure_ex} implies \eqref{eq:to_prove_period}. %

\smallskip

\noindent\textit{\underline{Proof of \Cref{lem:monoton_tp_hat_T} $(ii)$}}

It is sufficient to show that for any $\ell = 0, 1, 2, \dots, |\I(0)| - 1$, the sign of the partial derivative $\partial \tp(T_j, 0_{-j}) / \partial T_j$ does \textit{not} depend on $T_j \in [T_j^{(k_\ell)}, T_j^{(k_{\ell+1})})$.

Based on (i) and its proof, we know that when $T_j \in [T_j^{(k_\ell)}, T_j^{(k_{\ell+1})})$, the equilibrium (all-joining) set $\I(T_j, 0_{-j}) = \I(0) \setminus \{k_1, \dots, k_\ell\}$ remains unchanged. For simplicity, put $\HH =\I(0)\setminus \I(T_j, 0_{-j}) $ as the set of additional queues that are not all-joining under the exhaustive service policy $(T_j,0_{-j})$ compared with the ones under $\pe$.
We have, $\alpha_{\HH}(T_j,0_{-j})=1$ and $\alpha_{\I(0)\setminus\HH}(T_j,0_{-j})<1$.
In what follows, we write $\I(0)$ as $\I$ to simplify notation. 

Given $\HH$ under $(T_j, 0_{-j})$, by \eqref{eq:tp_e_I}, we have the throughput \eqref{eq:tp_in} as follows
\begin{align}
\nonumber
 \tp(T_j, 0_{-j})=&\left( \sum_{i\in \I \setminus \HH} \lambda_i\right) +\left( \frac{\sum_{i \in \barI \cup \HH} \lambda_i (c_i + \theta_i)}{1^\top \tau + c^\top_{\I \setminus \HH} \alpha_{\I \setminus \HH}(T_j, 0_{-j}) + c^\top_{\barI \cup \HH} 1 + T_j}\right)  \\
 &+ \frac{\lambda_j T_j}{1^\top \tau + c^\top_{\I \setminus \HH} \alpha_{\I \setminus \HH}(T_j, 0_{-j}) + c^\top_{\barI \cup \HH} 1 + T_j}.
\end{align}

By \eqref{eq:alpha_m}, we have
\begin{align*}
\alpha_{\I \setminus \HH}(T_j, 0_{-j}) &= T_j A_{\I \setminus \HH}^{-1} 1 + (1^\top \tau + c^\top_{\barI \cup \HH} 1) A_{\I \setminus \HH}^{-1} 1 \nonumber \\
&= \frac{T_j}{1 - \rho_{\I \setminus \HH}^\top 1} \cdot 1^\top \frac{\mu_{\I \setminus \HH} - \lambda_{\I \setminus \HH}}{\mu_{\I \setminus \HH} \odot \theta_{\I \setminus \HH}} + (1^\top \tau + c^\top_{\barI \cup \HH} 1) \cdot 1^\top \frac{\mu_{\I \setminus \HH} - \lambda_{\I \setminus \HH}}{\mu_{\I \setminus \HH} \odot \theta_{\I \setminus \HH}},
\end{align*}
where the second equality holds by \Cref{lem:inverse_mat}. In the above equation, $x \odot y$ denotes the \emph{element-wise} product of two vectors $x, y \in \mathbb{R}^n$. Similarly, with a little abuse of notation, in the above equation, $\frac{\mu_{\I \setminus \HH} - \lambda_{\I \setminus \HH}}{\mu_{\I \setminus \HH} \odot \theta_{\I \setminus \HH}}$ represents the \emph{element-wise} division. Therefore, we have
\begin{align*}
 c^\top_{\I \setminus \HH} \alpha_{\I \setminus \HH}(T_j, 0_{-j}) = \frac{\rho_{\I \setminus \HH}^\top 1}{1 - \rho_{\I \setminus \HH}^\top 1} T_j + (1^\top \tau + c^\top_{\barI \cup \HH} 1) \cdot \rho_{\I \setminus \HH}^\top 1.
\end{align*}

Based on this formula, the partial derivative can be calculated as follows:
\begin{align}
\frac{\partial \tp(T_j,0_{-j})}{\partial T_j}& = -\frac{1}{\Delta^2} \left[\frac{\rho_{\I\setminus\HH}^\top 1}{1-\rho_{\I\setminus\HH}^\top 1}+1\right]\cdot\left[\lambda^{\top}_{\barI\cup\HH}(c_{\barI\cup\HH}+\theta_{\barI\cup\HH})\right]\nonumber \\
&\quad  + \frac{1}{\Delta^2}\left[\lambda_jT_j\cdot\left(\frac{\rho_{\I\setminus\HH}^\top 1}{1-\rho_{\I\setminus\HH}^\top 1}+1\right)+\lambda_j\cdot\left(1^\top \tau+(1^\top \tau+c^\top_{\barI\cup\HH}1)\cdot \rho_{\I\setminus\HH}^\top 1 +c^\top_{\barI\cup\HH}1\right)\right]\nonumber\\
&\quad -  \frac{1}{\Delta^2}\cdot \lambda_jT_j\cdot \left(\frac{\rho_{\I\setminus\HH}^\top 1}{1-\rho_{\I\setminus\HH}^\top 1}+1\right)\nonumber\\
& =\frac{1}{\Delta^2}\cdot \left(\lambda_j\cdot\Bigg(1^\top \tau+(1^\top \tau+c^\top_{\barI\cup\HH}1)\cdot \rho_{\I\setminus\HH}^\top 1 +c^\top_{\barI\cup\HH}1\right) \nonumber \\ 
&\qquad\qquad - \left[\frac{\rho_{\I\setminus\HH}^\top 1}{1-\rho_{\I\setminus\HH}^\top 1}+1\right]\cdot\left[\lambda^{\top}_{\barI\cup\HH}(c_{\barI\cup\HH}+\theta_{\barI\cup\HH})\right]\Bigg),  
\end{align}
where $\Delta=1^\top \tau+c^\top_{\I\setminus\HH}\alpha_{\I\setminus\HH}(T_j,0_{-j})+c^\top_{\barI\cup\HH}1+T_j$. Observe that the sign of the partial derivative is \textit{independent} of $T_j$. This completes the proof.
\Halmos
\end{proof}

\medskip

\subsection{Proofs in Appendix \ref{app_sec:auxiliary_result}}

\begin{proof}{\textbf{Proof of \Cref{thm:fixed_duration_equilibrium_mui_smaller}.}}
The proof follows in a manner similar to that of \Cref{thm:equilibrium_fixedduration_exhaustive} and \Cref{thm:fixed_duration_nonexhaustive_equilibrium}. For brevity, we omit the details.\Halmos
\end{proof}

\medskip

\begin{proof}{\textbf{Proof of \Cref{prop:vacation_opt_duration}.}}
There are three possibilities of $T_i(L_i,\bar{L}_i)$ as stated in \Cref{coro:fixed_duration_post_clearance_duration}.
We will derive the optimal solution in each case separately and then compare them to find the globally optimal solution.

\smallskip

\noindent\underline{\textit{Case 1: $T_i(L_i,\bar{L}_i)=0$.}}
In this case, we have throughput as follows:
\begin{align*}
   \tp(L_i,\bar{L}_i) =  \frac{\mu_iL_i}{L_i+\bar{L}_i} = \frac{\mu_i}{1 + \bar{L}_i/L_i}.
\end{align*}
Maximizing throughput is thus equivalent to minimizing $\bar{L}_i/L_i$.
The constraints to make $T_i(L_i,\bar{L}_i)=0$ as the equilibrium case are
\begin{align*}
 &0 \geq L_i - \frac{\lambda_i}{\mu_i-\lambda_i}\theta_i,\\
 &0 \geq L_i - \frac{\lambda_i}{\mu_i-\lambda_i}\bar{L}_i.
\end{align*}
Besides, we have the constraints \eqref{constraint:work_limit} and \eqref{constraint:forced_vacation}. 
Since the throughout is decreasing with $\bar{L}_i/L_i$, we have $\bar{L}_i^\ast= \max\{\beta_i,~\frac{\mu_i-\lambda_i}{\lambda_i}\}L_i^\ast$. 
The \on duration $L_i$ can be chosen freely. One possible choice is $L_i^\ast=\min\left\{\frac{\lambda_i}{\mu_i-\lambda_i}\theta_i,~ L^{\textrm{max}}_i\right\}$.
The corresponding throughput is
\begin{align}
\label{eq:vacational_opt_solution_case1}
  \tp(L_i,\bar{L}_i) = 
  \begin{cases}
    \frac{\mu_i}{1+\beta_i},\quad & \textrm{$\beta_i>\frac{\mu_i-\lambda_i}{\lambda_i}$}, \\
    \lambda_i, & \textrm{$\beta_i\leq \frac{\mu_i-\lambda_i}{\lambda_i}$}.  
  \end{cases}
\end{align}

\smallskip

\noindent\underline{\textit{Case 2: $T_i(L_i,\bar{L}_i)=L_i-\frac{\lambda_i}{\mu_i-\lambda_i}\theta_i$.}}
In this case, we have throughput as follows
\begin{align*}
   \tp(L_i,\bar{L}_i) = \frac{\lambda_i\cdot \left(L_i - \frac{\lambda_i}{\mu_i-\lambda_i}\theta_i\right) + \mu_i\cdot \frac{\lambda_i}{\mu_i-\lambda_i}\theta_i}{L_i+\bar{L}_i} = \frac{\lambda_iL_i+\lambda_i\theta_i}{L_i+\bar{L}_i} = \lambda_i - \frac{\lambda_i\cdot (\bar{L}_i - \theta_i)}{L_i+\bar{L}_i}.
\end{align*}
The equilibrium constraints are
\begin{align*}
 L_i-\frac{\lambda_i}{\mu_i-\lambda_i}\theta_i \geq 0 \quad &\Longleftrightarrow \quad L_i\geq \frac{\lambda_i}{\mu_i-\lambda_i}\theta_i,\\
 L_i-\frac{\lambda_i}{\mu_i-\lambda_i}\theta_i\geq L_i - \frac{\lambda_i}{\mu_i-\lambda_i}\bar{L}_i \quad &\Longleftrightarrow \quad \bar{L}_i\geq \theta_i,
\end{align*}
and \eqref{constraint:work_limit}--\eqref{constraint:forced_vacation}. 
\begin{enumerate}
    \item \textit{When $L^{\textrm{max}}_i<\frac{\lambda_i \theta_i}{\mu_i-\lambda_i}$}, there is no feasible solution. That is, in this case, there is no $(L_i,\bar{L}_i)$ to induce the equilibrium variable $T_i(L_i,\bar{L}_i)=L_i-\frac{\lambda_i\theta_i}{\mu_i-\lambda_i}$.
    \item \textit{When $L^{\textrm{max}}_i\geq \frac{\lambda_i \theta_i}{\mu_i-\lambda_i}$ and $\frac{1}{\beta_i}>\frac{L^{\textrm{max}}_i}{\theta_i}$}, there is still no feasible solution.
    \item \textit{When $L^{\textrm{max}}_i\geq \frac{\lambda_i \theta_i}{\mu_i-\lambda_i}$ and $\frac{1}{\beta_i}\leq \frac{L^{\textrm{max}}_i}{\theta_i}$}:
    \begin{enumerate}
        \item \textit{if $\frac{1}{\beta_i}< \frac{\lambda_i}{\mu_i-\lambda_i}$}, in this case, one can check that $L_i^\ast = \frac{\lambda_i}{\mu_i-\lambda_i}\theta_i$ and $\bar{L}_i^\ast=\beta_i L_i^\ast$. The throughput becomes $\frac{\mu_i}{1+\beta_i}$, which is the same as the one in \eqref{eq:vacational_opt_solution_case1};
        \item \textit{if $\frac{1}{\beta_i}\geq \frac{\lambda_i}{\mu_i-\lambda_i}$}, in this case, $\bar{L}_i^\ast=\theta_i$ with any feasible $L_i^\ast$ is optimal, and it has achieved the first-best throughput $\lambda_i$.
    \end{enumerate}
\end{enumerate}

\smallskip

\noindent\underline{\textit{Case 3: $T_i(L_i,\bar{L}_i)=L_i-\frac{\lambda_i}{\mu_i-\lambda_i}\bar{L}_i$.}}
In this case, we have throughput as follows
\begin{align*}
   \tp(L_i,\bar{L}_i) = \frac{\lambda_i\cdot \left(L_i - \frac{\lambda_i}{\mu_i-\lambda_i}\bar{L}_i\right) + \mu_i\cdot \frac{\lambda_i}{\mu_i-\lambda_i}\bar{L}_i}{L_i+\bar{L}_i} = \frac{\lambda_i(L_i+\bar{L}_i)}{L_i+\bar{L}_i} = \lambda_i.
\end{align*}
The equilibrium constraints are
\begin{align}
 L_i-\frac{\lambda_i}{\mu_i-\lambda_i}\bar{L}_i \geq 0 \quad &\Longleftrightarrow \quad L_i\leq \frac{\lambda_i}{\mu_i-\lambda_i}\bar{L}_i,  \label{contraint:vacation_case3_1}\\
 L_i-\frac{\lambda_i}{\mu_i-\lambda_i}\bar{L}_i\geq L_i - \frac{\lambda_i}{\mu_i-\lambda_i}\theta_i \quad &\Longleftrightarrow \quad \bar{L}_i\leq \theta_i \label{contraint:vacation_case3_2},
\end{align}
and \eqref{constraint:work_limit}--\eqref{constraint:forced_vacation} . 

One can easily verify that when $\beta_i>\frac{\mu_i-\lambda_i}{\lambda_i}$, no $(L_i,\bar{L}_i)$ satisfies \eqref{contraint:vacation_case3_1}, \eqref{contraint:vacation_case3_2}, \eqref{constraint:work_limit} and \eqref{constraint:forced_vacation}. When $\beta_i\leq \frac{\mu_i-\lambda_i}{\lambda_i}$, any $(L_i,\bar{L}_i)$ satisfying \eqref{contraint:vacation_case3_1}, \eqref{contraint:vacation_case3_2}, \eqref{constraint:work_limit} and \eqref{constraint:forced_vacation} is an optimal solution, which achieves the first-best throughput $\lambda_i$. 
One possible choice is $L_i^\ast=\min\{\frac{\lambda_i}{\mu_i-\lambda_i}\theta_i,~L^{\textrm{max}}_i\}$ and $\bar{L}_i^\ast=\beta_i L_i^\ast$. The other choice is $L_i^\ast=\min\{\frac{\lambda_i}{\mu_i-\lambda_i}\theta_i,~L^{\textrm{max}}_i\}$ and $\bar{L}_i^\ast=\frac{\mu_i-\lambda_i}{\lambda_i} L_i^\ast$.
\medskip

Combining the above discussions and \Cref{coro:fixed_duration_post_clearance_duration}, we have the optimal solutions, the equilibrium-induced post-clearance duration, and the throughput as stated in \Cref{prop:vacation_opt_duration}.

We now investigate whether the equilibrium outcomes are exhaustive or non-exhaustive.

\begin{enumerate}
    \item When \(\beta_i > \frac{\mu_i - \lambda_i}{\lambda_i}\), based on the above analysis, the optimal \on-\off durations satisfy
    \begin{align*}
        L_i^\ast = \frac{\bar{L}_i^\ast}{\beta_i} < \frac{\lambda_i}{\mu_i - \lambda_i} \bar{L}_i^\ast.
    \end{align*}
    By \Cref{thm:equilibrium_fixedduration_exhaustive} and \Cref{thm:fixed_duration_nonexhaustive_equilibrium}, we have
    \begin{itemize}[--]
        \item If \(\bar{L}_i^\ast \geq \theta_i\), the equilibrium outcome is exhaustive, as depicted in Figures \ref{subfig:exh_case1}--\ref{subfig:exh_case3}. 
        \item If \(\bar{L}_i^\ast < \theta_i\), the equilibrium outcome is non-exhaustive, as illustrated in \Cref{fig:fixed_duration_nonexhaustive_equilibrium}.
    \end{itemize}
Whether the optimal \off duration $\bar{L}_i^\ast$ is greater or smaller than the waiting patience $\theta_i$ further depends on $L^{\textrm{max}}_i$.
    We further distinguish the following subcases:
    \begin{enumerate}
        \item If \(L^{\textrm{max}}_i \geq \frac{\lambda_i \theta_i}{\mu_i - \lambda_i}\), then
        \[
            L_i^\ast = \frac{\lambda_i \theta_i}{\mu_i - \lambda_i}.
        \]
        This implies
        \[
            \bar{L}_i^\ast = \beta_i L_i^\ast = \beta_i \cdot \frac{\lambda_i \theta_i}{\mu_i} > \theta_i,
        \]
        resulting in an exhaustive equilibrium outcome.

        \item If \(L^{\textrm{max}}_i < \frac{\lambda_i \theta_i}{\mu_i - \lambda_i}\) and \(\beta_i \in \bigl(\tfrac{\mu_i - \lambda_i}{\lambda_i}, \tfrac{\theta_i}{L^{\textrm{max}}_i}\bigr)\), then
        \[
            \bar{L}_i^\ast = \beta_i L_i^\ast = \beta_i L^{\textrm{max}}_i < \frac{\theta_i}{L^{\textrm{max}}_i}L^{\textrm{max}}_i = \theta_i,
        \]
        yielding a non-exhaustive equilibrium outcome.

        \item If \(L^{\textrm{max}}_i < \frac{\lambda_i \theta_i}{\mu_i - \lambda_i}\) and \(\beta_i \geq \tfrac{\theta_i}{L^{\textrm{max}}_i}\), then
        \[
            \bar{L}_i^\ast = \beta_i L_i^\ast = \beta_i L^{\textrm{max}}_i \geq \frac{\theta_i}{L^{\textrm{max}}_i} L^{\textrm{max}}_i = \theta_i,
        \]
        leading to an exhaustive equilibrium outcome.
    \end{enumerate}

    \item When \(\beta_i \leq \frac{\mu_i - \lambda_i}{\lambda_i}\), the system achieves the first-best throughput. This corresponds to the scenario in \Cref{subfig:exh_case4}, which is an exhaustive equilibrium outcome. \Halmos
\end{enumerate}
\end{proof}

\medskip

\begin{proof}{\textbf{Proof of \Cref{prop:twoqueues}.}}
We will prove each case separately.  

\smallskip

\noindent\underline{\textit{Case (i): $\I(0)=\emptyset$}}
\smallskip

Given $\I(0)=\emptyset$, we have $\alpha(0)=1$. To make it an equilibrium, we need to guarantee $A\alpha(0)\leq b(0)$, i.e., $A1\leq b(0)$.
This leads to the equilibrium condition stated in (i). By \Cref{thm:structure_opt_ex}, either $\pe$ is an optimal exhaustive service policy, or always serving the queue with the highest arrival rate is optimal. Thus, we only need to compare the throughput $\tp(0)$ with $\max\{\lambda_1,\lambda_2\}$. 
We provide the proof of $\lambda_1>\lambda_2$; the other case is similar.

The throughput $\tp(0)$ in this case is
\begin{align*}
\tp(0) = \frac{\frac{\mu_1\lambda_1}{\mu_1-\lambda_1}\theta_1+\frac{\mu_2\lambda_2}{\mu_2-\lambda_2}\theta_2}{1^\top \tau+\frac{\lambda_1\theta_1}{\mu_1-\lambda_1}+\frac{\lambda_2\theta_2}{\mu_2-\lambda_2}}.
\end{align*}
Suppose that $\lambda_1>\tp(0)$, i.e., always serving queue $1$ is better than $\pe$, we have
\begin{align*}
    \lambda_1>\tp(0) \quad &\Longleftrightarrow \quad \lambda_1>\frac{\frac{\mu_1\lambda_1}{\mu_1-\lambda_1}\theta_1+\frac{\mu_2\lambda_2}{\mu_2-\lambda_2}\theta_2}{1^\top \tau+\frac{\lambda_1\theta_1}{\mu_1-\lambda_1}+\frac{\lambda_2\theta_2}{\mu_2-\lambda_2}}\\
    & \Longleftrightarrow \quad 1^\top \tau + \frac{\lambda_2\theta_2}{\mu_2-\lambda_2}\cdot \left(1-\frac{\mu_2}{\lambda_1}\right) > \theta_1.
\end{align*}
One can verify that the intersection of the above inequality with the equilibrium conditions of $\I(0)=\emptyset$ is non-empty. This leads to the result that when $\lambda_1>\lambda_2$, if the above inequality holds, always serving queue $1$ is optimal; otherwise, $\pe$ is an optimal exhaustive service policy.

\smallskip

\noindent\underline{\textit{Case (ii): $\I(0)=\{1\}$}}

\smallskip

Assuming $\I(0)=\{1\}$, by \eqref{eq:alpha_m}--\eqref{eq:alpha_m_n}, we have the $\alpha(0)$ stated in \Cref{prop:twoqueues} (ii). To make it an equilibrium, we need to guarantee $\alpha_1\leq 1$ and $(A\alpha)_2\leq b_2(0)$. This leads to the condition in (ii). 
Using the throughput formula \eqref{eq:tp_in}, we have  $\tp(0)$ with the equilibrium set $\I(0)=\{1\}$ as follows
\begin{align}
\label{eq:pf_twoqueue_tp_pure}
    \tp(0) = \lambda_1+\frac{\lambda_2\cdot(c_2+\theta_2)}{1^\top \tau+c_1\alpha_1+c_2} = \lambda_1 + \frac{\lambda_2\cdot\frac{\mu_2\theta_2}{\mu_2-\lambda_2}}{\frac{\mu_1}{\mu_1-\lambda_1}\cdot \left(1^\top \tau+\frac{\lambda_2\theta_2}{\mu_2-\lambda_2}\right)}.
\end{align}
By \Cref{thm:structure_opt_ex} and \Cref{thm:opt_ex_algo}, since $\barI(0)=\{2\}$, it is sufficient to compare the above throughput with $\lambda_2$, i.e., always serving queue $2$, and $\tp(0,T_2)$ with $T_2$ given by line $6$ in \Cref{alg:opt}, i.e.,
\begin{align*}
    T_2 = \frac{\mu_1\theta_1}{\mu_1-\lambda_2}\cdot (1-\rho_1) - 1^\top \tau-c_2 = \theta_1-1^\top \tau-c_2.
\end{align*}
We have $\tp(0,T_2)$ as follows
\begin{align}
    \tp(0,T_2) &= \lambda_1+\frac{\lambda_2\cdot(c_2+\theta_2)+\lambda_2T_2}{1^\top \tau+c_1+c_2+T_2} = \lambda_1+\frac{\lambda_2\cdot(c_2+\theta_2)+\lambda_2\cdot (\theta_1-1^\top \tau-c_2)}{1^\top \tau+c_1+c_2+\theta_1-1^\top \tau-c_2}\nonumber\\
    & = \lambda_1+\frac{\lambda_2\cdot(\theta_1+\theta_2-1^\top \tau)}{c_1+\theta_1}.\label{eq:pf_twoqueue_tp_T2}
\end{align}

To summarize, we have three candidate policies that may be optimal: the pure exhaustive policy $\pe$, $T = (0, T_2=\theta_1 - 1^\top\tau - c_2)$, and the policy always serving queue~2. 
We first present a condition regarding the comparison between $\tp(0,T_2)$ and $\tp(0)$.
We have
\begin{align*}
   \tp(0,T_2) \geq  \tp(0) \quad &\Longleftrightarrow \quad \frac{\lambda_2\cdot(\theta_1+\theta_2-1^\top \tau)}{c_1+\theta_1} \geq \frac{\lambda_2\cdot\frac{\mu_2\theta_2}{\mu_2-\lambda_2}}{\frac{\mu_1}{\mu_1-\lambda_1}\cdot \left(1^\top \tau+\frac{\lambda_2\theta_2}{\mu_2-\lambda_2}\right)} \\
   &\Longleftrightarrow \quad \frac{\theta_1+\theta_2-1^\top \tau}{\frac{\mu_1\theta_1}{\mu_1-\lambda_1}} \geq \frac{\frac{\mu_2\theta_2}{\mu_2-\lambda_2}}{\frac{\mu_1}{\mu_1-\lambda_1}\cdot \left(1^\top \tau+\frac{\lambda_2\theta_2}{\mu_2-\lambda_2}\right)}\\
&\Longleftrightarrow \quad 1 + \frac{\theta_2-1^\top \tau}{\theta_1} \geq \frac{\frac{\mu_2\theta_2}{\mu_2-\lambda_2}}{ 1^\top \tau+\frac{\lambda_2\theta_2}{\mu_2-\lambda_2}}\\
&\Longleftrightarrow \quad \frac{\theta_2-1^\top \tau}{\theta_1} \geq \frac{\theta_2-1^\top \tau}{ 1^\top \tau+\frac{\lambda_2\theta_2}{\mu_2-\lambda_2}}.
\end{align*}
Since $\I(0)=\{1\}$ already requires $\theta_1\geq1^\top \tau+\frac{\lambda_2\theta_2}{\mu_2-\lambda_2}$, the above inequality holds if and only if $\theta_2\leq 1^\top \tau$. That is to say, given $\I(0)=\{1\}$, $\tp(0,T_2)\geq \tp(0)$ if and only if $\theta_2\leq 1^\top \tau$. 

\smallskip

In what follows, we discuss the case of $\theta_2\leq 1^\top \tau$ and $\theta_2\geq 1^\top \tau$ separately. 
Note that $\theta_2=1^\top \tau$ can belong to either case.

\begin{enumerate}[(a)]

    \item \textit{The case of $\theta_2\leq 1^\top \tau$.}
    In this situation, based on the above argument, we have $\tp(0,T_2) \geq \tp(0)$, so we can rule out the pure exhaustive policy from the potential optimal policies.
Thus, we only need to compare $\tp(0,T_2)$ with $\lambda_2$.

Suppose now that always serving queue~2 is optimal, i.e., $\lambda_2 > \tp(0,T_2)$.
Because $\tp(0,T_2)$ is strictly larger than $\lambda_1$ (due to \eqref{eq:pf_twoqueue_tp_T2} and the condition $\theta_2 \leq 1^\top\tau$), we must have $\lambda_2 > \lambda_1$.
Under this condition, $\lambda_2 > \tp(0,T_2)$ is equivalent to
\begin{align*}
    \lambda_2 >  \tp(0,T_2) \quad &\Longleftrightarrow \quad \lambda_2 > \lambda_1+\frac{\lambda_2\cdot(\theta_1+\theta_2-1^\top \tau)}{\frac{\mu_1\theta_1}{\mu_1-\lambda_1}} \\
    & \Longleftrightarrow \quad \frac{\mu_1\theta_1}{\mu_1-\lambda_1} - \frac{\lambda_1}{\lambda_2}\cdot \frac{\mu_1\theta_1}{\mu_1-\lambda_1} > \theta_1+\theta_2-1^\top \tau\\
    & \Longleftrightarrow \quad \frac{\lambda_2-\mu_1}{\lambda_2}\cdot \frac{\lambda_1\theta_1}{\mu_1-\lambda_1} + 1^\top \tau> \theta_2.
\end{align*}
This concludes that when $\theta_2\leq 1^\top \tau$: if $\lambda_2>\lambda_1$ and the above inequality holds, always serving queue $2$ is optimal; otherwise, $T_1=0$ and $T_2=\theta_1-1^\top \tau-c_2$ is optimal.

\item \textit{The case of $\theta_2\geq 1^\top \tau$.}
 Here, from the previous argument, we know $\tp(0) \geq \tp(0,T_2)$. Hence, we only need to compare the pure exhaustive service policy $\pe$ and the policy that always serves queue~2.
Suppose now that $\lambda_2 > \tp(0)$. 
Because $\tp(0)$ is strictly larger than $\lambda_1$ (see \eqref{eq:pf_twoqueue_tp_pure}), we must then have $\lambda_2 > \lambda_1$. Under this condition,
\begin{align}
    \lambda_2 >  \tp(0) \quad &\Longleftrightarrow \quad \lambda_2 > \lambda_1+\frac{\lambda_2\cdot\frac{\mu_2\theta_2}{\mu_2-\lambda_2}}{\frac{\mu_1}{\mu_1-\lambda_1}\cdot \left(1^\top \tau+\frac{\lambda_2\theta_2}{\mu_2-\lambda_2}\right)} \nonumber \\
    & \Longleftrightarrow \quad \frac{(\lambda_2-\lambda_1)\cdot \mu_1}{\mu_1-\lambda_1} > \frac{\lambda_2\cdot\frac{\mu_2\theta_2}{\mu_2-\lambda_2}}{1^\top \tau+\frac{\lambda_2\theta_2}{\mu_2-\lambda_2}} \nonumber\\
     & \Longleftrightarrow \quad  \theta_2 < \frac{\mu_1(\lambda_2-\lambda_1) (\mu_2-\lambda_2)}{\lambda_2\cdot \left[\mu_2(\mu_1-\lambda_1)-\mu_1(\lambda_2-\lambda_1)\right]}\cdot 1^\top \tau. \label{eq:pf_twoqueue_inequality_case2}
\end{align}
Since $\theta_2\geq 1^\top \tau$, to make \eqref{eq:pf_twoqueue_inequality_case2} valid, we must have 
\begin{align*}
 & \quad \frac{\mu_1(\lambda_2-\lambda_1) (\mu_2-\lambda_2)}{\lambda_2\cdot \left[\mu_2(\mu_1-\lambda_1)-\mu_1(\lambda_2-\lambda_1)\right]} >1\\
 \overset{(a)}{\Longleftrightarrow}& \quad
 \mu_1(\lambda_2-\lambda_1)(\mu_2-\lambda_2) > \lambda_2\Bigl[\mu_2(\mu_1-\lambda_1)-\mu_1(\lambda_2-\lambda_1)\Bigr] \\
\Longleftrightarrow &\quad
\mu_1\mu_2(\lambda_2-\lambda_1) - \mu_1\lambda_2(\lambda_2-\lambda_1) >\lambda_2\mu_2(\mu_1-\lambda_1) - \lambda_2\mu_1(\lambda_2-\lambda_1)\\
 \Longleftrightarrow & \quad
 \mu_1\mu_2(\lambda_2-\lambda_1) > \lambda_2\mu_2(\mu_1-\lambda_1) \\
\Longleftrightarrow &\quad
\mu_1<\lambda_2,
\end{align*}
where $(a)$ holds since the denominator $\lambda_2\cdot \left[\mu_2(\mu_1-\lambda_1)-\mu_1(\lambda_2-\lambda_1)\right]>0$; otherwise, \eqref{eq:pf_twoqueue_inequality_case2} cannot hold since the numerator $\mu_1(\lambda_2-\lambda_1)(\mu_2-\lambda_2)$ is positive because $\lambda_2>\lambda_1$ and $\mu_2>\lambda_2$.
Based on the above arguments and considering $\mu_1>\lambda_1$, we
conclude that when $\theta_2\geq 1^\top \tau$, if $\lambda_2>\mu_1$ and \eqref{eq:pf_twoqueue_inequality_case2} holds, always serving queue $2$ is optimal; otherwise, $\pe$ is an optimal exhaustive service policy.

\end{enumerate}

\smallskip

\noindent\underline{\textit{Case (iii): $\I(0)=\{2\}$}}
\smallskip

This is symmetric to Case (ii).

\smallskip

\noindent\underline{\textit{Case (iv): $\I(0)=\{1,2\}$}}
\smallskip

When $\I(0)=\{1,2\}=\N$, we have $\alpha(0)$ given by \eqref{eq:alpha_m}.
To make it an equilibrium, we need to guarantee $\alpha(0)\leq 1$. This leads to the condition in (iv). Since in this case, $\pe$ has achieved the first-best throughput, $\pe$ is one of the optimal exhaustive service policies. This completes the proof. \Halmos
\end{proof}

\medskip

\begin{proof}{\textbf{Proof of \Cref{coro:twoqueue_pureex_utilizationrates}.}}
    This is implied by \Cref{prop:twoqueues}. \Halmos
\end{proof}

\medskip

\begin{proof}{\textbf{Proof of \Cref{lem:inverse_mat}.}}
We show the case of $\I=\N$. Other cases are similar.
Let \(B\in \mathbb{R}^{n\times n} \) with $B_{ij}$ given by \eqref{eq:invermatrix_1}--\eqref{eq:invermatrix_2} under $\I=\N$. 
Observe that \eqref{eq:invermatrix_1}--\eqref{eq:invermatrix_2} are valid unless the denominator $\bar{\rho}_{\I}=1-1^\top \rho_{\I}=0$.
To conclude the proof, it is sufficient to show that \(A B = B A = \mathbb{I}\), where $\mathbb{I}$ is the identity matrix.
We provide the proof of $AB=\mathbb{I}$. The other side $BA=\mathbb{I}$ is similar.

Consider \((A B)_{ij} = \sum_{k} A_{ik}\,B_{kj}\). We split into two cases.

\begin{enumerate}
\item \textit{Case \(i = j\).}
We have
\begin{align*}
(A B)_{ii}
&=\;
\sum_{k} A_{ik}\,B_{ki}
\\
&=\;
A_{ii}\,B_{ii}
\;+\;
\sum_{k \neq i} A_{ik}\,B_{ki}.
\end{align*}
By definition, \(A_{ii} = \theta_i\) and \(A_{ik} = -\frac{\lambda_k\theta_k}{\mu_k-\lambda_k}\) for \(k \neq i\); and 
\[
B_{ii}
=
\frac{\mu_i - \lambda_i}{\mu_i\,\theta_i}\,
\frac{1}{\bar{\rho}}
\Bigl(1 - \sum_{\ell \neq i} \rho_{\ell}\Bigr),
\quad
B_{ki}
=
\frac{\mu_k - \lambda_k}{\mu_k\,\theta_k}\,
\frac{\rho_i}{\bar{\rho}}, \quad \text{for } k \neq i,
\]
where \(\bar{\rho} = 1 - 1^\top \rho\).  Hence
\begin{align*}
(A B)_{ii}
&=
\theta_i \cdot
\biggl[
  \frac{\mu_i - \lambda_i}{\mu_i\,\theta_i}\,\frac{1}{\bar{\rho}}
  \Bigl(1 - \sum_{\ell \neq i} \rho_\ell\Bigr)
\biggr]
\;+\;
\sum_{k \neq i} \bigl(-\frac{\lambda_k\theta_k}{\mu_k-\lambda_k}\bigr)\,\cdot
\biggl[
  \frac{\mu_k - \lambda_k}{\mu_k\,\theta_k}\,\frac{\rho_i}{\bar{\rho}}
\biggr]\\
& = \frac{1-\rho_i}{\bar{\rho}}
  \Bigl(1 - \sum_{\ell \neq i} \rho_\ell\Bigr)
 - \frac{\rho_i}{\bar{\rho}}\sum_{k\neq i}\rho_k \\
&\overset{(a)}{=}
\frac{1}{\bar{\rho}}\left[(1-\rho_i)(\bar{\rho}+\rho_i) - \rho_i (1-\bar{\rho}-\rho_i)\right]\\
& = \frac{1}{\bar{\rho}}\left[(1-\rho_i)\bar{\rho} + \rho_i\bar{\rho}\right]\\
& = 1,
\end{align*}
where $(a)$ holds since \(
1 - \sum_{\ell \neq i} \rho_\ell
= \bar{\rho} + \rho_i
\)
and
\(\sum_{k \neq i} \rho_k = (1 - \bar{\rho}) - \rho_i\).

\item \textit{Case \(i \neq j\).}
We have
\begin{align*}
(A B)_{ij}
&=
\sum_{k} A_{ik}\,B_{kj}
\\
&=
A_{ii}\,B_{ij}
\;+\;
A_{ij}\,B_{jj}
\;+\;
\sum_{k \neq i, k \neq j} A_{ik}\,B_{kj}\\
& = \theta_i \cdot \frac{\mu_i - \lambda_i}{\mu_i\,\theta_i}\,\frac{\rho_j}{\bar{\rho}}
+ \left(-\frac{\lambda_j\theta_j}{\mu_j-\lambda_j}\right)\cdot \frac{\mu_j - \lambda_j}{\mu_j\,\theta_j}\,
\frac{1}{\bar{\rho}}
\Bigl(1 - \sum_{\ell \neq j} \rho_\ell\Bigr)
+ \sum_{k\neq i,k\neq j} \left(-\frac{\lambda_k\theta_k}{\mu_k-\lambda_k}\right)\cdot \frac{\mu_k - \lambda_k}{\mu_k\,\theta_k}\,\frac{\rho_j}{\bar{\rho}} \\
& = \frac{\rho_j}{\bar{\rho}}(1-\rho_i) 
- \frac{\rho_j}{\bar{\rho}}\Bigl(1 - \sum_{\ell \neq j} \rho_\ell\Bigr) 
- \frac{\rho_j}{\bar{\rho}}\sum_{k\neq i,k\neq j}\rho_k\\
& = \frac{\rho_j}{\bar{\rho}} \left[
1-\rho_i - \Bigl(1 - \sum_{\ell \neq j} \rho_\ell\Bigr)  - \sum_{k\neq i,k\neq j}\rho_k
\right]\\
& = \frac{\rho_j}{\bar{\rho}}\left[
\sum_{\ell \neq j,\ell \neq i}\rho_\ell - \sum_{k\neq i,k\neq j}\rho_k
\right]\\
& = 0.
\end{align*}
\end{enumerate}

This confirms that the matrix composed by \eqref{eq:invermatrix_1}--\eqref{eq:invermatrix_2} with $\I=\N$ is indeed the inverse matrix $A^{-1}$.
Then, it can be easily shown that the sum of elements in the $i^{\textrm{th}}$ row of $A_{\I}^{-1}$ is $ \frac{\mu_i-\lambda_i}{\mu_i\theta_i}\cdot \frac{1}{\bar{\rho}_{\I}}.$
This completes the proof.
\Halmos
\end{proof}

\medskip

\section{Reward Maximization}
\label{app_sec:reward_maximization}

We now discuss how our results can be extended to the situation where serving customers from different queues generates different rewards, and the planner wants to maximize \textit{long-run average reward}, shorthanded as \textit{reward.}
Let $r_i>0$ be the reward to the planner per served customer of queue $i$.
Let $r=(r_i)_{i\in \N}$.
If the throughput of queue $i$ is $\lambda_i^\ast$, then the long-run average reward from queue $i$ is $r_i\cdot \lambda_i^\ast$.

First, observe that all equilibrium analysis results for both exogenous and endogenous regimes remain valid under reward maximization.
We now adapt our throughput maximization results to reward maximization.

Consider first \Cref{prop:LP_fixed_duration}, which provides an LP formulation for finding the throughput maximizing $\on$-$\off$ durations in the exogenous regime. For reward maximization, this proposition still holds with the following modification: replace the original objective function in problem \eqref{prob:LP_fixed_duration} with
\begin{align}
\nonumber
    \sum_{i \in \N} \left( \mu_i \cdot r_i x_i - (\mu_i - \lambda_i) \cdot r_i y_i \right).
\end{align}
This adjusted LP formulation yields the reward-maximizing $\on$-$\off$ durations.

\medskip

We now turn to the endogenous regime. Our first revised result characterizes the optimal exhaustive service policy, analogous to \Cref{thm:structure_opt_ex}.
\begin{theorem}[Structure of the Reward Maximizing Exhaustive Service Policy]
        \label{thm:structure_opt_ex_RM}
When maximizing the long-run average reward:
  \begin{enumerate}
  \item If $\I(0)=\N$, the pure exhaustive service policy $\pe$ achieves the first-best reward $r^\top \lambda$.
      \item If $\I(0)\neq \N$, for any $j\in\argmax_{i\in \bar{\I}(0)}r_i\lambda_i$, we can set $T^\ast_{-j}=0$ without loss of optimality. 
      Furthermore, if $\I(0)=\emptyset$, either the pure exhaustive service policy $\pe$
      or always serving the queue with the highest $r_i\lambda_i$ is optimal.
  \end{enumerate}
\end{theorem}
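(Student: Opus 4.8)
\textbf{Proof plan for Theorem~\ref{thm:structure_opt_ex_RM}.} The strategy is to mimic the proof of \Cref{thm:structure_opt_ex} almost verbatim, checking that every place where the arrival rate vector $\lambda$ appeared in the throughput objective can be replaced by the entry-wise product $r\odot\lambda$ without breaking the argument. The key observation is that all the equilibrium machinery---the characterization of $\alpha(T)$ as the greatest element of $\mathcal{A}(b(T),A)$ (\Cref{lem:lp_equilirbium}), the closed-form updates \eqref{eq:alpha_m}--\eqref{eq:alpha_m_n}, the monotonicity of $\alpha(T)$ in $T$ (\Cref{lem:piecewise_linear_increasing}), and the equilibrium set $\I(0)$ of the pure exhaustive policy---does not depend on the objective at all. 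Only the objective itself changes: the reward of policy $\e$ is
\begin{align}
\label{eq:reward_dfn_RM}
\mathrm{Rew}(T) = \frac{\sum_{i\in\N}\bigl(\mu_i r_i c_i\alpha_i(T) + r_i\lambda_i T_i\bigr)}{1^\top\tau + \sum_{i\in\N}\bigl(c_i\alpha_i(T)+T_i\bigr)},
\end{align}
which, exactly as in the proof of \Cref{thm:structure_opt_ex}, can be rewritten via ``reward = long-run average reward of joined customers'' as
\begin{align}
\label{eq:reward_in_RM}
\mathrm{Rew}(T) = \sum_{i\in\N} r_i\lambda_i - \frac{\sum_{i\in\N} r_i\lambda_i\,\bar{J}_i(T)}{1^\top\tau + \sum_{i\in\N}\bigl(c_i\alpha_i(T)+T_i\bigr)}.
\end{align}

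Concretely I would proceed in the same three blocks as the original proof. First, part (i): if $\I(0)=\N$ then all customers of all queues join, so $\bar{J}_i\equiv 0$ for all $i$ and \eqref{eq:reward_in_RM} gives $\mathrm{Rew}(\pe)=\sum_i r_i\lambda_i=r^\top\lambda$, which is trivially an upper bound; done. Second, the first part of (ii): I would reproduce Step 1 (set $T_{\I(0)}^\ast=0$ without loss of optimality) and Step 2 (concentrate the remaining post-clearance mass on a single queue $j$). Step 1 goes through unchanged because the argument there only uses that the denominator and the numerator-minus-$(r_i\lambda_i T_i)$-piece are monotone in the post-clearance durations of all-joining queues, and that monotonicity is purely about $\alpha(T)$ and $\bar J(T)$, not about which nonnegative weights multiply $\bar J_i$. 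In Step 2, after setting $T_{\I(0)}=0$, the reward takes the form ``constant term depending only on $1^\top T_{\barI(0)}$, plus $\dfrac{(r\odot\lambda)_{\barI(0)}^\top T_{\barI(0)}}{(\text{denominator depending only on }1^\top T_{\barI(0)})}$''; since $(r\odot\lambda)_{\barI(0)}^\top T_{\barI(0)} \le \bigl(\max_{i\in\barI(0)} r_i\lambda_i\bigr)\cdot 1^\top T_{\barI(0)}$ with equality achievable by shifting all mass onto any $j\in\argmax_{i\in\barI(0)} r_i\lambda_i$, the same construction $\tilde T_j = 1^\top T_{\barI(0)},\ \tilde T_{-j}=0$ yields weakly higher reward. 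This is the step to write out carefully, tracking the $r_i$ factors, but it is structurally identical to \eqref{eq:tp_e_I} in the original proof.

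Third, the second part of (ii): when $\I(0)=\emptyset$, monotonicity of $\alpha(T)$ forces $\alpha(T)\equiv 1$ for all $T\ge 0$, so the reward becomes $\mathrm{Rew}(T) = \dfrac{r^\top(\mu\odot c) + (r\odot\lambda)^\top T}{1^\top\tau + 1^\top c + 1^\top T}$; restricting to $T_{-j}=0$ (legitimate by the first part, with $j\in\argmax_i r_i\lambda_i$ since now $\barI(0)=\N$) and differentiating in $T_j$ gives a derivative whose numerator is $r_j\lambda_j\cdot 1^\top\tau - \sum_{i\in\N}(\mu_i-\lambda_i)r_i c_i$---a quantity independent of $T_j$---so $\mathrm{Rew}$ is monotone in $T_j$ and the optimum is at $T_j^\ast=0$ (pure exhaustive) or $T_j^\ast=+\infty$ (always serve queue $j$). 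The only genuinely new bookkeeping, and the place where an error could creep in, is making sure the ``highest arrival rate'' criterion is everywhere correctly upgraded to ``highest $r_i\lambda_i$'' (equivalently, the reward rate of the queue), since that is the coefficient multiplying $T_j$ in the numerator of both \eqref{eq:tp_e_I}-analogue and the derivative; none of the other substitutions $\lambda_i\rightsquigarrow r_i\lambda_i$ interact with the denominator, which is why the nonlinear terms still cancel exactly as in the throughput case. I would therefore flag this coefficient-tracking as the ``main obstacle,'' though in truth it is routine once one writes \eqref{eq:reward_in_RM} and follows the template.
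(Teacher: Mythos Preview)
Your proposal is correct and follows essentially the same approach as the paper's own proof: both establish part (i) trivially, handle the first part of (ii) via the same two steps (first zero out $T_{\I(0)}$, then concentrate mass on $j\in\argmax_{i\in\barI(0)}r_i\lambda_i$ using that everything except the $(r\odot\lambda)_{\barI(0)}^\top T_{\barI(0)}$ term depends only on $1^\top T_{\barI(0)}$), and dispatch the second part of (ii) by the same sign-of-derivative argument. The paper's write-up is slightly terser but structurally identical, including the exact derivative numerator you wrote down.
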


\Cref{thm:opt_ex_algo}, which presents a $2\bar{n}$-step algorithm for identifying throughput-maximizing exhaustive service policies, remains applicable in the context of reward maximization, subject to the following modifications to \Cref{alg:opt}:
\begin{enumerate}
    \item In Line 1: Store initial reward instead of throughput under the pure exhaustive service policy.
    \item In Line 3: Select $j \in \argmax_{i \in \bar{\I}(0)} r_i\lambda_i$.
    \item In Line 9: Compute reward instead of throughput.
    \item In Lines~10--15: Compare reward values rather than throughput values.
\end{enumerate}

\medskip

\begin{proof}{\textbf{Proof of \Cref{thm:structure_opt_ex_RM}.}}
We prove each case separately. For simplicity, let $\I(0)$ be written as $\I$. The proof is similar to the proof of \Cref{thm:structure_opt_ex} regarding the structure of the throughput-maximization exhaustive service policy.

\medskip

\noindent\underline{\textit{Proof of \Cref{thm:structure_opt_ex_RM} (i)}}

\smallskip

If $\I=\N$, then all customers join all queues. In this case, the policy $\pe$ achieves the first-best reward $r^\top \lambda$.

\medskip

\noindent\underline{\textit{Proof of the First Part of \Cref{thm:structure_opt_ex_RM} (ii)}}

\smallskip

We prove this in two steps. First, if $\I$ is non-empty, then $T_\I^\ast=0$. Second, $T^\ast_{\barI\setminus\{j\}}=0$ where $j\in\argmax_{i\in\barI}r_i\lambda_i$.

\smallskip

\noindent\underline{\textit{Step 1: $T_{\I}^\ast=0$ if $\I$ is non-empty}}

\smallskip

Let $\R(T_\J,T_{\barI})$ be the reward under the exhaustive service policy $(T_\J, T_{\I\setminus\J}=0, T_{\barI})$ for some $\J\subseteq \I$. Let $\alpha_{\HH}(T_\J,T_{\barI})\in [0,1]^n$ be the equilibrium variables under this policy. Similar to \eqref{eq:tp_ex_k}, we have
\begin{align*}
\R(T_\J,T_{\barI}) 
= \sum_{i\in\N}r_i\lambda_i 
- \sum_{i\in \barI\cup\HH}r_i\lambda_i 
\left(1 - \frac{c_i+\theta_i+T_i}{1^\top \tau 
+ c_{\I}^\top \alpha_{\I}(T_\J,T_{\barI}) 
+ 1^\top_\J T_\J + 1_{\barI}^\top T_{\barI} 
+ c_{\barI}^\top 1_{\barI}} \right),
\end{align*}
where $\HH\subseteq \I$ is the set with $\alpha_{\HH}(T_\J,T_{\barI})=1$. From the argument in the proof of \Cref{thm:structure_opt_ex}, both $\alpha_{\I}(T_\J,T_{\barI})$ and the cardinality of $\HH$ are non-decreasing in $T_\J$ (given any $T_{\barI}$). Hence, $\R(T_\J,T_{\barI})$ is non-decreasing in $T_\J$. Therefore, we can set $T_\J=0$ without loss of optimality.

\smallskip

\noindent\underline{\textit{Step 2: $T^\ast_{\barI\setminus\{j\}}=0$}}

\smallskip

We provide the proof for the case when $\I\subset \N$ is non-empty. The case when $\I$ is empty is analogous.
From step 1, we can set $T^\ast_\I=0$ without loss of optimality. Similar to \eqref{eq:tp_e_I}, the reward under $(0_{\I},T_{\barI})$ is
\begin{align}
\R(0_\I,T_{\barI}) 
= \left(\sum_{i\in \I\setminus\HH} r_i\lambda_i \right)
+ \left(\frac{\sum_{i\in\barI\setminus\HH}r_i\lambda_i(c_i+\theta_i)}{1^\top \tau
+ c^\top_{\I}\alpha_{\I}(0_\I,T_{\barI}) 
+ 1^\top_{\barI}T_{\barI} + c^\top_{\barI}1_{\barI}} \right)
+ \frac{\sum_{i\in\barI}r_i\lambda_iT_i}{1^\top \tau 
+ c^\top_{\I}\alpha_{\I}(0_\I,T_{\barI}) 
+ 1^\top_{\barI}T_{\barI}+c^\top_{\barI}1_{\barI}}.  \nonumber
\end{align}
By the proof of \Cref{thm:structure_opt_ex}, given $T_\I=0$, both $\alpha_\I(0_\I,T_{\barI})$ and $\HH$ depend only on $1^\top_{\barI}T_{\barI}$. Hence, the first two terms and the denominator of the third term depend only on $1^\top_{\barI}T_{\barI}$. Because
\[
\sum_{i\in\barI}r_i\lambda_iT_i 
\le \max_{i\in\barI}(r_i\lambda_i)\cdot \sum_{i\in\barI}T_i 
= r_j\lambda_j \cdot 1^\top_{\barI}T_{\barI}
\quad \text{for } j\in\argmax_{i\in\barI}r_i\lambda_i,
\]
the reward under $\tilde{T}=(\tilde{T}_j=1^\top_{\barI}T_{\barI},0_{\barI\setminus \{j\}})$ is (weakly) greater than $\R(0_\I,T_{\barI})$ for any $T_{\barI}$. Thus $T^\ast_{\barI\setminus\{j\}}=0$.

\medskip

\noindent\underline{\textit{Proof of the Second Part of \Cref{thm:structure_opt_ex_RM} (ii)}}

\smallskip

When $\I=\emptyset$, from the first part of \Cref{thm:structure_opt_ex_RM} (ii), it is without loss of optimality to set $T_{-j}=0$, where $j\in\argmax_{i\in\barI}r_i\lambda_i$. Setting $T_{-j}=0$, similar to \eqref{eq:derivative_tp_exhaustiveservice}, the partial derivative of the reward with respect to $T_j$ is
\[
\frac{\partial \R(T_j,T_{-j}=0)}{\partial T_j}
= \frac{r_j\lambda_j\cdot 1^\top \tau 
- \sum_{i\in\N}r_i(\mu_i-\lambda_i)c_i}
{(1^\top \tau + 1^\top c + T_j)^2},
\]
whose sign does not depend on $T_j$.
If the numerator is positive, the reward is always increasing in $T_j$, so it is optimal to serve queue $j$ indefinitely. Otherwise, the reward decreases with $T_j$, giving $T_j^\ast=0$, i.e., the pure exhaustive service policy $\pe$ is optimal. This completes the proof.\ \Halmos
\end{proof}

\end{APPENDICES}

\end{document}